\documentclass[12pt]{article}

\usepackage[margin=1in]{geometry}
\usepackage{amsmath,amssymb,amsthm,mathtools}
\usepackage{mathrsfs}
\usepackage{bbm}
\usepackage{enumitem}
\usepackage{xcolor}
\usepackage{comment}
\usepackage{textcomp, gensymb}
\usepackage{gensymb}
\usepackage{enumitem}
\usepackage{graphicx}
\usepackage{scrextend}
\usepackage{blindtext}
\usepackage{caption}
\usepackage{pifont}
\usepackage{url}
\usepackage{authblk}
\usepackage{subcaption}
\usepackage{circuitikz}
\usepackage{listings}
\usepackage{color}
 \usepackage{microtype}
\usepackage{graphicx}
\usepackage{multirow}
\usepackage{hyperref}
\usepackage[nameinlink,capitalize]{cleveref}
\usepackage{dsfont}
\usepackage{natbib}
\usepackage[skip=4pt plus 1pt]{parskip}
\usepackage{setspace}
\usepackage{booktabs}
\usepackage{float}
\usepackage{aliascnt}
\usepackage{color}
\allowdisplaybreaks
\DeclareMathOperator{\Markov}{Markov}
\hypersetup{
  colorlinks=true,
  linkcolor=blue!60!black,
  citecolor=blue!60!black,
  urlcolor=blue!60!black
}

\newtheoremstyle{spaced}
  {12pt}
  {12pt}
  {\itshape}
  {}
  {\bfseries}
  {.}
  {.5em}
  {}

\theoremstyle{spaced}
\newtheorem{theorem}{Theorem}[section]

\newaliascnt{proposition}{theorem}
\newtheorem{proposition}[proposition]{Proposition}
\aliascntresetthe{proposition}

\newaliascnt{lemma}{theorem}
\newtheorem{lemma}[lemma]{Lemma}
\aliascntresetthe{lemma}

\newaliascnt{corollary}{theorem}
\newtheorem{corollary}[corollary]{Corollary}
\aliascntresetthe{corollary}

\newaliascnt{assumption}{theorem}

\aliascntresetthe{assumption}

\newaliascnt{definition}{theorem}
\newtheorem{definition}[definition]{Definition}
\aliascntresetthe{definition}

\theoremstyle{remark}

\newaliascnt{remark}{theorem}
\newtheorem{remark}[remark]{Remark}
\aliascntresetthe{remark}

\usepackage[nameinlink,capitalize]{cleveref}

\crefname{assumption}{Assumption}{Assumptions}
\Crefname{assumption}{Assumption}{Assumptions}

\crefname{equation}{equation}{equations}
\Crefname{equation}{Equation}{Equations}

\crefname{lemma}{Lemma}{Lemmas}
\Crefname{lemma}{Lemma}{Lemmas}

\crefname{theorem}{Theorem}{Theorems}
\Crefname{theorem}{Theorem}{Theorems}

\crefname{proposition}{Proposition}{Propositions}
\Crefname{proposition}{Proposition}{Propositions}

\crefname{corollary}{Corollary}{Corollaries}
\Crefname{corollary}{Corollary}{Corollaries}

\crefname{definition}{Definition}{Definitions}
\Crefname{definition}{Definition}{Definitions}

\crefname{remark}{Remark}{Remarks}
\Crefname{remark}{Remark}{Remarks}

\newcommand{\E}{\mathbb E}
\newcommand{\N}{\mathbb N}
\newcommand{\R}{\mathbb R}
\newcommand{\calP}{\mathcal P}
\newcommand{\calX}{\mathcal X}
\newcommand{\calF}{\mathcal F}
\newcommand{\bbP}{\mathbb P}
\newcommand{\ind}{\mathbbm 1}
\newcommand{\KL}{\mathrm{D}_{\text{KL}}}
\newcommand{\kl}{\mathrm{kl}}

\newcommand{\eps}{\varepsilon}
\newcommand{\Pinf}[1]{\mathbb P_{#1}^{\infty}}
\newcommand{\Einf}[1]{\mathbb E_{#1}^{\infty}}
\newcommand{\Pchg}[3]{\mathbb P_{#1,#2,#3}}
\newcommand{\defeq}{\mathrel{\mathop:}=}
\newcommand{\CADD}{\mathrm{CADD}}

\title{Non-partitioned e-detectors for nonparametric sequential change detection}
\author[1]{Aytijhya Saha}
\author[2]{Aaditya Ramdas}

\affil[1]{Massachusetts Institute of Technology.  \texttt{aytijhya@mit.edu}}
\affil[2]{Carnegie Mellon University. \texttt{aramdas@cmu.edu}}
\date{\today}

\begin{document}
\maketitle

\begin{abstract}
We study the problem of sequential change detection over a general class of probability distributions ($\mathcal P$), where both the pre-change and post-change distributions are unknown and belong to $\mathcal P$. We do not assume a pre-specified partition of $\mathcal P$ into pre- and post-change families. We propose a general class of sequential change detectors obtained by aggregating point-null e-processes over possible changepoints and taking an infimum over candidate no-change distributions. The weights in the aggregation scheme determine whether they attain average run length (ARL) control and probability-of-false-alarm (PFA) control. Under suitable assumptions, we prove that our methods achieve first-order asymptotically optimal detection delay. Concrete examples include sub-Gaussian and bounded mean changes, Gaussian mean changes with unknown variance, as well as changes in
Markov transition matrices. 
\end{abstract}

\section{Introduction}
We consider the following problem of sequential change detection (SCD): for a Polish space $\mathcal{X}$, let $(\calX,\mathcal B)$ be its Borel space. Let $\calP$ be a composite class of probability laws on $\calX$. We are interested in detecting change from one unknown (pre-change) distribution $P\in\calP$ to another (post-change) distribution $Q\in\calP$. That is, under the alternative, for some unknown time $T\in\mathbb N$ and $P,Q\in\calP$,  where $P \neq Q$,
\[
  X_1,\ldots,X_T\stackrel{iid}{\sim} P,
  \qquad
  X_{T+1},X_{T+2},\ldots\stackrel{iid}{\sim} Q.
\]
The no-change null is the composite null
\[
  H_0:\quad X_1,X_2,\ldots\stackrel{iid}{\sim} R
  \quad\text{for some unknown }R\in\calP.
\]

The \underline{non-partitioned SCD problem} stated above does not require any pre-specified partitioning of $\calP$ into
pre- and post-change distribution classes. 

In contrast, the classical \underline{partitioned SCD problem} assumes that \[ P\in\calP_0, \qquad Q\in\calP_1, \qquad \calP_0\cap\calP_1=\varnothing, \] where $\calP_0\subset \calP$ and $\calP_1\subset \calP$ are specified in advance. This partition implicitly encodes prior knowledge about the nature or direction of the change and is crucial for the design of many classical procedures \citep{shiryaev1963optimum,roberts1966comparison,siegmund1995using}, including the recently emerged nonparametric change detection framework, e-detectors \citep{shin2022detectors}. 

Despite its practical importance, the non-partitioned SCD problem is relatively underexplored in the literature. To illustrate the distinction, consider the sub-Gaussian mean change problem, but do not know whether the mean increases or decreases. The partitioned approaches are not applicable for these problems.
Our goal is to construct sequential composite e-detectors that are applicable to broad classes of distributions $\calP$, while achieving theoretically optimal detection delays.

For simplicity, we focus on i.i.d.\ observations in the main text. However, the framework itself does not require independence: in \Cref{sec:markov-example} in the appendix, we show the construction for dependent observations and develop explicit ARL- and PFA-valid e-detectors for changes in the transition matrix of a two-state Markov chain.

The main idea of this paper is simple: we reduce the change-detection problem to a family of point-null e-processes. For every candidate no-change distribution $R\in\calP$, suppose we have an e-process that tests
\[ H_{0,R}:\qquad X_1,X_2,\ldots \stackrel{iid}{\sim} R,\]
against its complement
\[ H_{0,R^c}:\qquad X_1,X_2,\ldots \stackrel{iid}{\sim} S, \text{ for some } S\in \calP\setminus \{R\}.\]
See \Cref{subsec:background} for a formal definition of an e-process.
We then aggregate these point-null e-processes over all possible starting times and take an infimum over $R\in\calP$, obtaining a Shiryaev--Roberts-type composite e-detector. 

Classical information-theoretic lower bounds for quickest change detection, including \cite{lai1998information}, concern Pollak’s or Lorden’s worst-case delay under a positive asymptotic log-likelihood information rate. Composite extensions likewise require the post-change law to be positively separated in KL divergence from the no-change class. These results do not directly apply to our fully non-partitioned setting, because the post-change law
$Q$ itself is an admissible no-change law, so that $\inf_{R\in\calP}\KL(Q\|R)=0$.
Moreover, classical worst-case CADD degenerates because an immediate change to $Q$ is indistinguishable from the no-change law $Q^\infty$. To our knowledge, information-theoretic lower bounds for fixed-changepoint delay in this fully overlapping setting have not previously been developed.


\subsection{Performance metrics}
\label{subsec:metrics}
A sequential change detector is a stopping time $\tau$ with respect to the natural filtration
\[
\mathcal F_t=\sigma(X_1,\ldots,X_t).
\]
We start by introducing a few notations.
 For $R\in\calP$, write
\[
  \Pinf{R}=R^{\infty}
\]
for the no-change law under which $X_1,X_2,\ldots$ are i.i.d. $R$.  For $P,Q\in\calP$ and an integer $T\ge 1$, write
\[
  \mathbb P_{P,Q}^{T}\defeq
  P^T\otimes Q^{\infty}
\]
for the law under which
\[
  X_1,\ldots,X_T\sim P,
  \qquad
  X_{T+1},X_{T+2},\ldots\sim Q,
\]
independently across time.  

The performance of a detector is characterized by two competing objectives: controlling false alarms under the no-change regime and minimizing the detection delay after a change occurs.

\paragraph{Average Run Length (ARL).}
The average run length to false alarm (ARL2FA or ARL in short) is defined as
\[
\mathrm{ARL}(\tau)
:=
\inf_{R\in\calP}
\mathbb E_R^\infty[\tau].
\]
A detector is said to satisfy ARL level $A$ if
$\mathrm{ARL}(\tau)\ge A.$
Thus, larger values of $A$ correspond to fewer false alarms.

In contrast, another strand of the literature emphasizes controlling the false alarm rate instead of ARL, often at the cost of a longer detection delay.

\paragraph{Probability of False Alarm (PFA).} This is defined as
\[
\mathrm{PFA}(\tau)
:=
\sup_{R\in\calP}
\mathbb P_R^\infty(\tau<\infty).
\]
For a target level $\alpha\in(0,1)$, a detector is said to be PFA-valid at level $\alpha$ if
$\mathrm{PFA}(\tau)\le \alpha.$

We propose methods under both ARL and PFA metrics. To assess performance under the changepoint alternative, it is standard to consider the following measure of detection delay.

\paragraph{Conditional Average Detection Delay (CADD).}
Suppose the true data-generating distribution is
\[
X_1,\ldots,X_T\stackrel{iid}{\sim}P,
\qquad
X_{T+1},X_{T+2},\ldots
\stackrel{iid}{\sim}Q,
\]
for some $P,Q\in\calP$ with $P\neq Q$ and changepoint $T\in\mathbb N$.
Then, the conditional average detection delay (CADD) is defined as
\[
\mathrm{CADD}_{P,Q}^{T}(\tau)
:=
\mathbb E_{P,Q}^{T}
\left[
\tau-T
\,\middle|\,
\tau>T
\right].
\]
This quantity measures the expected delay after the changepoint, conditioned on not having stopped prematurely.
Note that it is different from Pollak’s conditional average delay to detection, which considers the supremum of the above quantity over all $T\in\N$.
\paragraph{High Probability Detection Delay Bound.}
For detectors controlling PFA, CADD is not the ``right" metric because if
$\sup_{R\in\mathcal{P}}\bbP_R^\infty(\tau<\infty)\le \alpha<1,$
then $\bbP_Q^\infty(\tau=\infty)\geq 1-\alpha>0,$ since $Q\in\mathcal{P}$.
Let us assume that $P$ and $Q$ are mutually absolutely continuous. Then $\bbP^T_{P,Q}$ and $\bbP^\infty_{Q}$ are also mutually absolutely continuous. Thus, $\bbP^T_{P,Q}(\tau=\infty)>0$, for every fixed finite $T$.  Hence, the CADD, $
\mathbb E_{P,Q}^{T}
\left[
\tau-T
\,\middle|\,
\tau>T
\right],$
is infinite.  This is unavoidable for any globally PFA-valid procedure.
Thus, under PFA control, the natural asymptotic delay statement is a high-probability delay statement, as in \Cref{thm:subg-upper-pfa}.

The goal in SCD is to construct detectors with small detection delay while satisfying prescribed false-alarm constraints.

\subsection{Background and related works}
\label{subsec:background}
\paragraph{E-processes.}
E-processes are a basic tool for safe anytime-valid inference
\citep{ramdas2022game,ramdas2024hypothesis}.  A nonnegative adapted process
$(M_t)_{t\ge0}$ is an e-process for a class $\mathcal Q$ if, for every stopping
time $\tau$,
\[
  \sup_{R\in\mathcal Q}\E_R[M_\tau]\le1.
\]
For composite testing problems, growth-rate criteria range from optimizing
against a fixed alternative (GRO), through minimax growth (GROW), to
pointwise relative growth optimality (REGROW) \citep{grunwald2024safe}.
\citet{ram2026power} prove that every weakly compact null class of i.i.d. laws
on a Polish space admits a power-one REGROW e-process against its complement.
In particular, singleton nulls supply the point-null primitives used here.

\paragraph{E-detectors.}
\citet{shin2022detectors} introduced e-detectors as the change-detection
counterpart of e-processes.  A nonnegative adapted process $(E_t)$ is an
e-detector under a no-change law $R$ if
\[
  \E_R[E_\tau]\le \E_R[\tau]
\]
for every stopping time $\tau$.  Thresholding at $A$ then gives both
$\E_R\tau_A\ge A$ and the local bound
$\bbP_R(\tau_A\le m)\le m/A$.  Restarting an e-process at each time and
summing the restarted processes produces an e-Shiryaev--Roberts detector;
max/reset constructions give e-CUSUM analogues.  Mixtures, baseline
increments, and pruning can trade statistical power against computational
cost while preserving finite-sample ARL guarantees
\citep{shin2022detectors}.  Recent developments include sharp bounded-mean
change-detection theory \citep{ram2026boundedmeans}, post-detection
localization \citep{saha2025post}, and applications to universal quantum
measurements \citep{zecchin2026quantum}.  Our contribution differs from the
standard e-detector setup in one structural respect: the pre- and post-change
families are not disjoint, so we aggregate point-null e-processes and then
minimize over the candidate common law.

\paragraph{Separated SCD.}
The classical literature begins with CUSUM and Shiryaev--Roberts procedures
\citep{page1954continuous,shiryaev1963optimum,roberts1966comparison} and the
Lorden--Pollak minimax formulations \citep{lorden1971procedures,pollak1985optimal}.
Composite separated problems are commonly handled by generalized likelihood
ratios, mixtures, or robust least-favorable models
\citep{siegmund1995using,lai1998information,tartakovsky2014sequential}.
Modern nonparametric and computationally efficient variants include
contrastive discrepancies \citep{puchkin2023contrastive}, robust procedures
for heavy-tailed means \citep{sankararaman2023heavy}, functional pruning for
unknown mean shifts \citep{romano2023focus}, and computational-geometry
methods for multivariate Gaussian changes \citep{pishchagina2026geometry}.
These works typically retain some separation, training period, parametric
structure, or prescribed discrepancy between the pre- and post-change
regimes.

\paragraph{Non-partitioned and fully unknown SCD.}
Several strands come particularly close to the present formulation.
For sub-Gaussian piecewise-constant means, \citet{maillard2019sequential} develop a doubly time-uniform scan/GLR detector when both segment means are unknown and give nonasymptotic delay guarantees.
\citet{alami2020restarted} analyze a restarted Bayesian online detector when
both segment distributions are unknown and obtain nonasymptotic delay and
false-alarm guarantees under their modeling assumptions.  In finite-alphabet
settings, \citet{malik2021universal} estimate the pre-change distribution
empirically and use universal coding for the post-change law, proving
asymptotic optimality; \citet{gulaguli2025markov} extend this idea to finite-
order Markov data.  The change-point-model literature implemented in the
\texttt{cpm} package includes sequential Gaussian and fully nonparametric
procedures with unknown segment parameters or distributional form
\citep{ross2015cpm}.  Data-driven kernel methods have also been studied when
both Markov or hidden-Markov regimes are unknown \citep{zhang2022datadriven}.
For mean functionals, \citet{liang2022non} give a nonparametric quickest
mean-change method, while \citet{shekhar2023sequential,shekhar2023reducing}
reduce sequential change detection to confidence sequences and sequential
estimation.  Conformal test martingales and conformal CUSUM procedures test
the massive i.i.d./exchangeability null without specifying either segment
law \citep{vovk2003testing,vovk2021retrain,vovk2025cusum}.  These approaches
provide important distribution-free or model-adaptive guarantees, but they
do not yield the same general reduction from point-null KL-optimal
e-processes to instance-optimal non-partitioned ARL and PFA detectors over an
arbitrary class $\calP$.

\subsection{Our contribution}
Our main contributions are summarized as follows: 
 \begin{itemize} 
 \item We introduce e-detectors for the  non-partitioned sequential change detection problem 
and prove finite-sample false-alarm guarantees, including ARL control and PFA control, over arbitrary distribution classes $\calP$. 
\item Under suitable assumptions on $\calP$, we derive upper bounds on detection delay. We also derive lower bounds, matching those upper bounds and thus proving that our method achieves asymptotically optimal detection delay, under suitable assumptions.
\item We develop explicit detectors for sub-Gaussian means, bounded means,
 Gaussian means with unknown variance and change in a two-state Markov transition matrix.  
 \end{itemize}
The rest of the paper is organized as follows.
\Cref{sec:method} gives general ARL- and PFA-valid detectors.
\Cref{sec:sub-Gaussian,sec:bounded-mean,sec:gaussian-unknown-variance} give
three concrete examples.  \Cref{sec:upper-bdd,sec:lowerb} state the general
upper and lower bounds. \Cref{sec:experiments-updated} reports
simulation experimental results.\footnote{The code for reproducing all the experimental results in this paper is publicly available at \href{https://github.com/Aytijhya/Nonseparated-e-detector}{https://github.com/Aytijhya/Nonseparated-e-detector}.} \Cref{sec:markov-example} provides another concrete example with dependent data. Proofs of all the theoretical results are provided in the appendix.

\section{Method}
\label{sec:method}

The following is the key ingredient behind our construction.

\begin{definition}[$s$-delay e-processes]\label{ass:eprocess}
For every $R\in\calP$ and every start time $s\in\N$, the process $\{M_{s:t}^R\}_{t\in\N}$
with $M_{s:j}^R=1,$ for $j=1,\cdots,s-1$, is an $s$-delay e-process under $\Pinf{R}$ if it is nonnegative and adapted to the filtration $\{\calF_{t}\}_{t\in\N}$, and for every stopping time $\tau$, 
\[
  \Einf{R}\!\left[M_{s:\tau}^R\mid \calF_{s-1}\right]\le 1.
\]
\end{definition}

 \cite{ram2026power} prove a general existence theorem for sequential tests and e-processes for i.i.d. laws on Polish spaces: weak compactness of the null class is a sufficient condition for power-one tests against the complement, and their REGROW construction yields asymptotically relatively growth-rate optimal e-processes. 
 Since a singleton $\{R\}$ is weakly compact, their result supplies point-null primitives $M^R$ at the level of existence.  For change detection, however, the statistic contains an infimum over $R\in\calP$, so pointwise growth for each fixed $R$ is not enough.
 The key issue is uniform growth over local sets of distributions.  We do not assume that the entire model class $\calP$ is weakly compact.  Instead, \Cref{sec:local-witnesses} develops local REGROW witnesses, which are the only compactness/regularity properties needed by the delay proofs.

\subsection{SR-style e-detector for ARL control}

For a threshold $A>1$, define
\begin{equation}\label{eq:ordinary-arl-detector}
  D_t^{\mathrm{ARL}}
  =
  \inf_{R\in\calP}\sum_{s=1}^t M_{s:t}^R,
  \qquad
  \tau_A^{\mathrm{ARL}}
  =
  \inf\{t\ge1:D_t^{\mathrm{ARL}}\ge A\}.
\end{equation}

Here and henceforth, we always implicitly assume that $D_t^{\mathrm{ARL}}$ and similarly defined quantities are measurable, so that $\tau_A^{\mathrm{ARL}}$ is a well-defined stopping time. This is not immediate by definition, since we are taking an infimum over a possibly nonparametric class $\calP$. In most examples of practical interest, we expect it to be measurable, and we provide several such examples later in the paper.

\begin{theorem}[ARL control]\label{thm:arl-control}
$D_t^{\mathrm{ARL}}$ is an e-detector, and thus for every $R\in\calP$,
$\Einf{R}\tau_A^{\mathrm{ARL}}\ge A,$
and, moreover, for every $m\in\N$,
\begin{equation}
    \label{eq:lfa}
     \Pinf{R}(\tau_A^{\mathrm{ARL}}\le m)\le \frac{m}{A}.
\end{equation}
\end{theorem}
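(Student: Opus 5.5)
The plan is to reduce everything to the single-$R$ statistic by dominating the infimum with its value at the true law. Fix $R\in\calP$ and work under $\Pinf{R}$. Since $D_t^{\mathrm{ARL}}$ is an infimum over all candidate laws, we have pointwise
\[
D_t^{\mathrm{ARL}}\le \sum_{s=1}^t M^R_{s:t}=:S^R_t .
\]
It then suffices to show that $S^R_t$ is an e-detector under $\Pinf{R}$, that is, $\Einf{R}[S^R_\tau]\le \Einf{R}[\tau]$ for every stopping time $\tau$. Nonnegativity and adaptedness of $D_t^{\mathrm{ARL}}$ are inherited, with measurability assumed as stated. Domination then transfers the e-detector inequality to $D^{\mathrm{ARL}}$.

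For the e-detector inequality on $S^R$, I would rewrite
\[
S^R_\tau=\sum_{s\ge1}\ind\{\tau\ge s\}\,M^R_{s:\tau}.
\]
The event $\{\tau\ge s\}=\{\tau\le s-1\}^c$ is $\calF_{s-1}$-measurable. By Tonelli, and by conditioning on $\calF_{s-1}$ in each term,
\[
\Einf{R}[S^R_\tau]=\sum_{s\ge1}\Einf{R}\!\left[\ind\{\tau\ge s\}\,\Einf{R}[M^R_{s:\tau}\mid\calF_{s-1}]\right]\le\sum_{s\ge1}\Pinf{R}(\tau\ge s)=\Einf{R}[\tau].
\]
This uses \Cref{ass:eprocess} applied with the stopping time $\tau$. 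The main technical point is that the definition only applies to stopping times, and possibly only finite ones. If $\Einf{R}\tau=\infty$ the bound is trivial. Otherwise $\tau<\infty$ a.s., so the definition applies directly; alternatively, one can use $\tau\wedge n$ and pass to the limit with Fatou. A further subtlety is that on $\{\tau\ge s\}$ we need $M^R_{s:\tau}$ to be evaluated at times $\ge s-1$. This is fine because $M_{s:j}^R=1$ for $j<s$, and the inequality is only needed on that event. If needed, I would replace $\tau$ by $\tau\vee s$ inside the conditional expectation, which is still a stopping time and agrees with $\tau$ on $\{\tau\ge s\}$.

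Next come the consequences. For ARL, suppose $\Einf{R}\tau_A<\infty$, where $\tau_A=\tau_A^{\mathrm{ARL}}$; otherwise there is nothing to prove. Then $\tau_A<\infty$ a.s. and $D_{\tau_A}\ge A$, so
\[
A\le \Einf{R}[D_{\tau_A}]\le\Einf{R}[\tau_A].
\]
For the local bound, apply the inequality to the bounded stopping time $\tau_A\wedge m$:
\[
A\,\Pinf{R}(\tau_A\le m)\le \Einf{R}[D_{\tau_A\wedge m}]\le \Einf{R}[\tau_A\wedge m]\le m,
\]
which gives \eqref{eq:lfa}. The only delicate step is the conditioning/optional-stopping justification in the second paragraph; the rest is bookkeeping.
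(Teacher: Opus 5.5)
Your proof is correct and follows essentially the same route as the paper. You dominate $D_t^{\mathrm{ARL}}$ by the oracle statistic $\sum_{s\le t}M_{s:t}^R$, bound each $\Einf{R}[\ind\{\tau\ge s\}M^R_{s:\tau}]$ by $\Pinf{R}(\tau\ge s)$ by conditioning on $\calF_{s-1}$, and apply this at $\tau_A^{\mathrm{ARL}}\wedge m$. The only cosmetic difference is that you obtain $\Einf{R}\tau_A^{\mathrm{ARL}}\ge A$ by applying the e-detector inequality directly at $\tau_A^{\mathrm{ARL}}$ when its mean is finite, whereas the paper lets $m\to\infty$ in the finite-horizon bound.
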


\emph{Proof.} See \Cref{app:proof-validity}.

\subsection{SR-style detector for PFA control}

For false-alarm probability control, choose deterministic weights $ \pi_s>0$ such that
  $\sum_{s=1}^{\infty}\pi_s\le1.$
Define
\begin{equation}\label{eq:ordinary-pfa-detector}
  D_t^{\mathrm{PFA}}
  =
  \inf_{R\in\calP}\sum_{s=1}^t \pi_sM_{s:t}^R,
  \qquad
  \tau_\alpha^{\mathrm{PFA}}
  =
  \inf\left\{t\ge1:D_t^{\mathrm{PFA}}\ge \frac1\alpha\right\}.
\end{equation}
A canonical choice is
\begin{equation}\label{eq:pi-weights}
  \pi_s=\frac{c_\pi}{s\{\log(e s)\}^2},
\end{equation}
where $c_\pi>0$ normalizes the sum to be at most one.  Then
\begin{equation}\label{eq:pi-log}
  -\log \pi_s=\log s+2\log\log(e s)+O(1).
\end{equation}

\begin{theorem}[Global PFA control]\label{thm:pfa-control}
$D_t^{\mathrm{PFA}}$ is an e-process and thus
for every $R\in\calP$, $ \Pinf{R}(\tau_\alpha^{\mathrm{PFA}}<\infty)\le \alpha.$
\end{theorem}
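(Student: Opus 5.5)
Fix $R\in\calP$. Since $D_t^{\mathrm{PFA}}$ is an infimum over $\calP$, for every $t$ we have the pointwise domination
\[
  D_t^{\mathrm{PFA}}\le \sum_{s=1}^t \pi_s M_{s:t}^R .
\]
It therefore suffices to show that the right-hand side is dominated by an e-process under $\Pinf{R}$. The natural candidate is the full mixture
\[
  G_t^R\defeq \sum_{s=1}^\infty \pi_s M_{s:t}^R ,
\]
which is well-defined and adapted. Indeed, $M_{s:t}^R=1$ for $s>t$ by \Cref{ass:eprocess}, so $G_t^R=\sum_{s\le t}\pi_s M_{s:t}^R+\sum_{s>t}\pi_s$, a finite adapted sum plus a deterministic tail. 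Since all terms are nonnegative, $D_t^{\mathrm{PFA}}\le G_t^R$.

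The main step is to check that $\Einf{R}[G_\tau^R]\le 1$ for every stopping time $\tau$. By Tonelli, $\Einf{R}[G_\tau^R]=\sum_s \pi_s\,\Einf{R}[M_{s:\tau}^R]$. For each $s$, the tower property and the defining inequality $\Einf{R}[M_{s:\tau}^R\mid\calF_{s-1}]\le1$ give $\Einf{R}[M_{s:\tau}^R]\le1$. Hence $\Einf{R}[G_\tau^R]\le\sum_s\pi_s\le1$.

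Two technical points need attention.
\begin{itemize}
\item On $\{\tau<s\}$ we use the convention $M_{s:\tau}^R=1$. This is consistent with the definition and is covered by the same inequality.
\item Stopping times may be infinite. I would handle them by applying the bound to $\tau\wedge n$ and then using Fatou's lemma. This requires defining $M_{s:\infty}^R$ as a $\liminf$, or restricting the e-process property to $\tau\wedge n$. I expect this limiting issue to be the only real obstacle, and it is standard.
\end{itemize}

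Consequently, $D_\tau^{\mathrm{PFA}}\le G_\tau^R$ with $\Einf{R}[G_\tau^R]\le1$ for every stopping time $\tau$ and every $R\in\calP$. So $D^{\mathrm{PFA}}$ is an e-process for the composite null, because the supremum over $R$ of the expectations is at most $1$.

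For the PFA bound, take $\tau=\tau_\alpha^{\mathrm{PFA}}\wedge n$. On $\{\tau_\alpha^{\mathrm{PFA}}\le n\}$ we have $D_\tau^{\mathrm{PFA}}\ge1/\alpha$, so Markov's inequality gives
\[
  \Pinf{R}(\tau_\alpha^{\mathrm{PFA}}\le n)\le \alpha\,\Einf{R}[D_\tau^{\mathrm{PFA}}]\le\alpha .
\]
Letting $n\to\infty$ and using continuity from below yields $\Pinf{R}(\tau_\alpha^{\mathrm{PFA}}<\infty)\le\alpha$. This is Ville-type control obtained from the e-process property alone, with measurability of $D_t^{\mathrm{PFA}}$ assumed as stated in the paper.
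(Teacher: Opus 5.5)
Your proof is correct and takes essentially the same approach as the paper. The paper also fixes $R$ and dominates $D_t^{\mathrm{PFA}}$ by the full $\pi$-mixture of delayed e-processes; its $E_t^{R_0}$ differs from your $G_t^R$ only by the nonnegative constant $1-\sum_s\pi_s$. It then gets e-validity term by term and concludes with Ville's inequality. Your Markov bound at $\tau_\alpha^{\mathrm{PFA}}\wedge n$, followed by $n\to\infty$, is just the standard proof of Ville's inequality for e-processes, and it also avoids the infinite-stopping-time issue you flagged.
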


\emph{Proof.} See \Cref{app:proof-validity}.

The next three sections examine some concrete methodological examples.

\section{Example 1: Sub-Gaussian mean-change}
\label{sec:sub-Gaussian}

We first provide a concrete nonparametric example with $\calP$ being the set of all $\sigma^2$-sub-Gaussian distributions.
Let $$\mathcal{P}_\theta = \{ P \in M_1(\mathbb{R}) : \mathbb{E}_P[X]=\theta,\mathbb{E}_P[e^{\lambda(X-\mathbb{E}_P[X])}] \leq e^{\frac{\lambda^2}{2\sigma^2}} , \forall \lambda \in \mathbb{R} \}.$$ 
Note that for each $P\in\calP$, there exists $\theta\in\mathbb R$ such that $P\in\calP_\theta$.
Fix any $\rho>0$ and define
\begin{equation}
\label{eq:mixture-evalue}
  M_{s:t}^{\theta}
  =
  \frac{1}{\sqrt{1+\rho^2\sigma^2 n_{s:t}}}
  \exp\left\{
    \frac{\rho^2 n_{s:t}^2(\bar X_{s:t}-\theta)^2}
         {2(1+\rho^2\sigma^2 n_{s:t})}
  \right\},
\end{equation}
where $n_{s:t}=t-s+1,
  \bar X_{s:t}=\frac{1}{n_{s:t}}\sum_{i=s}^t X_i.$
\begin{proposition}
\label{prop:gaussian-mixture}
 For any $P\in\calP_\theta$, each process $(M_{s:t}^{\theta})_{t}$ is an $s$-delay e-process as in Definition \ref{ass:eprocess}.  In particular, it is a test supermartingale:
\[
  \E_P^\infty[M_{s:t}^{\theta}\mid \calF_{t-1}]
  \le M_{s:t-1}^{\theta},\qquad s<t.
\]
\end{proposition}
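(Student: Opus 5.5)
The plan is to write $M_{s:t}^{\theta}$ as a Gaussian mixture of exponential supermartingales and then use optional stopping for nonnegative supermartingales.

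\textbf{Normalization.} I read the moment generating function condition in the definition of $\calP_\theta$ as the standard variance-proxy bound
\[
\E_P\bigl[e^{\lambda(X-\theta)}\bigr]\le e^{\lambda^2\sigma^2/2}\quad\text{for all }\lambda\in\R .
\]
This is the convention that matches the factor $1+\rho^2\sigma^2 n_{s:t}$ in \eqref{eq:mixture-evalue}.

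\textbf{Step 1: fixed-$\lambda$ supermartingales.} Fix $\lambda\in\R$ and set $S_{s:t}=\sum_{i=s}^t (X_i-\theta)=n_{s:t}(\bar X_{s:t}-\theta)$. Define
\[
L_{s:t}^{\lambda}=\exp\Bigl\{\lambda S_{s:t}-\tfrac{\lambda^2\sigma^2 n_{s:t}}{2}\Bigr\}\quad\text{for } t\ge s,
\]
with $L^\lambda_{s:t}=1$ for $t<s$. This is consistent with the convention $n_{s:s-1}=0$. Under $\Pinf{P}$ the variable $X_t$ is independent of $\calF_{t-1}$ and has law $P\in\calP_\theta$. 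Hence for $t\ge s$,
\[
\E_P^\infty\bigl[L^\lambda_{s:t}\mid\calF_{t-1}\bigr]=L^\lambda_{s:t-1}\,\E_P\bigl[e^{\lambda(X_t-\theta)}\bigr]e^{-\lambda^2\sigma^2/2}\le L^\lambda_{s:t-1}.
\]

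\textbf{Step 2: mixing over $\lambda$.} Integrate $L^\lambda_{s:t}$ against the $N(0,\rho^2)$ density $\phi_\rho(\lambda)$. Completing the square in $\lambda$ gives
\[
\int \phi_\rho(\lambda)\,L_{s:t}^{\lambda}\,d\lambda
=\frac{1}{\sqrt{1+\rho^2\sigma^2 n_{s:t}}}\exp\Bigl\{\frac{\rho^2 S_{s:t}^2}{2(1+\rho^2\sigma^2 n_{s:t})}\Bigr\}
=M^{\theta}_{s:t}.
\]
This is a routine Gaussian integral, and it also shows $M^\theta_{s:t}=1$ for $t<s$. Since the integrand is nonnegative, conditional Tonelli lets us exchange the conditional expectation and the $\lambda$-integral:
\[
\E_P^\infty\bigl[M^\theta_{s:t}\mid\calF_{t-1}\bigr]=\int\phi_\rho(\lambda)\,\E_P^\infty\bigl[L^\lambda_{s:t}\mid\calF_{t-1}\bigr]\,d\lambda\le M^\theta_{s:t-1}.
\]
This is the displayed supermartingale inequality. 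The process is nonnegative and adapted, and it is identically $1$ up to time $s-1$.

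\textbf{Step 3: the $s$-delay e-process property.} This step needs the most care, because stopping times may be unbounded or infinite. For bounded stopping times, optional stopping for the nonnegative supermartingale $(M^\theta_{s:t})_{t\ge s-1}$ gives $\E_P^\infty[M^\theta_{s:\tau\wedge N}\mid\calF_{s-1}]\le M^\theta_{s:s-1}=1$ for every $N$. A nonnegative supermartingale converges almost surely, so $M^\theta_{s:\infty}$ is well defined. Conditional Fatou as $N\to\infty$ then yields
\[
\E_P^\infty\bigl[M^\theta_{s:\tau}\mid\calF_{s-1}\bigr]\le 1
\]
for every stopping time $\tau$, including on the event $\{\tau=\infty\}$. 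This is exactly Definition~\ref{ass:eprocess}.
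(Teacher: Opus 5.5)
Your proposal is correct and follows the paper's proof essentially step for step. Like the paper, you build fixed-$\lambda$ exponential supermartingales under the variance-proxy reading $\E_P[e^{\lambda(X-\theta)}]\le e^{\lambda^2\sigma^2/2}$, which is the same reading the paper's proof uses despite the $e^{\lambda^2/(2\sigma^2)}$ written in the definition of $\calP_\theta$. You then mix against $N(0,\rho^2)$ via conditional Tonelli and complete the square; your Step~3 (optional stopping at $\tau\wedge N$, supermartingale convergence, conditional Fatou) spells out what the paper compresses into the remark that a nonnegative test supermartingale started at one is an e-process.
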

\emph{Proof.} See \Cref{app:proof-sub-Gaussian}.

With the above e-process, our detector statistic for ARL control in \eqref{eq:ordinary-arl-detector} reduces to
\begin{equation}
\label{eq:D-def}
D_t^{\mathrm{ARL}}=\inf_{\theta\in\R}\sum_{s=1}^t M_{s:t}^{\theta}.
\end{equation}
Suppose that the true pre-change $P\in\mathcal{P}_\mu$ and the post-change $Q\in\mathcal{P}_\nu$, for some $\mu\neq\nu$.
The following theorem establishes the first-order upper bound in an asymptotic late-change regime in terms of 
\begin{equation}
\label{eq:subg-information}
I:=\inf_{P_0\in \mathcal{P}_\mu,P_1\in \mathcal{P}_\nu}\KL(P_1\|P_0)=\frac{(\mu-\nu)^2}{2\sigma^2}.
\end{equation}

\begin{theorem}[First-order upper bound in the late-change regime]
\label{thm:upper-bdd-sub-Gaussian}
Consider a sequence $\alpha\downarrow 0$, with $A=1/\alpha$ and $L=\log A$.  Suppose the changepoint $T_\alpha$ satisfies
\begin{equation}
\label{eq:late-change-assumption}
  \frac{T_\alpha}{L}\to\infty,
  \qquad
  \log T_\alpha=o(L).
\end{equation}
Then, under $\bbP:=\bbP^{T_\alpha}_{P,Q}$,
\begin{equation}
   \tau_A^{\mathrm{ARL}}-T_\alpha
    \le
    (1+o_\bbP(1))\frac{L}{I}.
\end{equation}
Moreover,
$\E^{T_\alpha}_{P,Q}[(\tau_A^{\mathrm{ARL}}-T_\alpha)^+]
  \le
  (1+o(1))\frac{L}{I},$
and since $\bbP^{T_\alpha}_{P,Q}(\tau_A^{\mathrm{ARL}}\le T_\alpha)=o(1)$,
\[
  \CADD_{T_\alpha}(\tau_A^{\mathrm{ARL}})
  \le
  (1+o(1))\frac{L}{I}.
\]
\end{theorem}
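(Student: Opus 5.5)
The plan is to lower bound the infimum over $\theta$ in \eqref{eq:D-def} using only two of the summands. Since every $M_{s:t}^\theta$ is nonnegative, for every $\theta$ and every $t>T_\alpha$,
\[
\sum_{s=1}^t M_{s:t}^\theta \ge \max\{M_{1:t}^\theta,\, M_{T_\alpha+1:t}^\theta\}.
\]
Write $n=t-T_\alpha$. Taking logarithms in \eqref{eq:mixture-evalue} gives
\[
\log M_{s:t}^\theta = \frac{\rho^2 n_{s:t}^2(\bar X_{s:t}-\theta)^2}{2(1+\rho^2\sigma^2 n_{s:t})} - \tfrac12\log(1+\rho^2\sigma^2 n_{s:t}),
\]
which is $n_{s:t}(\bar X_{s:t}-\theta)^2/(2\sigma^2)\,(1+O(1/n_{s:t}))-O(\log n_{s:t})$.

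Fix $\varepsilon>0$ and a small $\delta>0$, and set $n_\alpha=\lceil (1+\varepsilon)L/I\rceil$. We split according to where $\theta$ lies.

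\emph{Case $|\theta-\bar X_{1:t}|>\delta$.} Here $\log M_{1:t}^\theta \ge t\delta^2/(2\sigma^2)(1+o(1)) - O(\log t)$. Since $t\ge T_\alpha$ and $T_\alpha/L\to\infty$, this exceeds $L$ eventually, so $D_t^{\mathrm{ARL}}\ge A$ on this range of $\theta$.

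\emph{Case $|\theta-\bar X_{1:t}|\le\delta$.} At $t=T_\alpha+n_\alpha$ the post-change block has length $n_\alpha=O(L)=o(T_\alpha)$. Hence $\bar X_{1:t}=\mu+o_\bbP(1)$ by the weak law of large numbers for the i.i.d.\ $P$ block, the post-change contribution being negligible. So on an event of probability $1-o(1)$ we have $|\theta-\mu|\le 2\delta$. Also $\bar X_{T_\alpha+1:t}=\nu+o_\bbP(1)$ because $n_\alpha\to\infty$. Therefore
\[
\log M_{T_\alpha+1:t}^\theta\ge n_\alpha\frac{(|\nu-\mu|-3\delta)^2}{2\sigma^2}-O(\log L)\ge (1+\varepsilon)(1-c\delta)L-O(\log L),
\]
which exceeds $L$ for $\delta$ small. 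The two cases together give $D^{\mathrm{ARL}}_{T_\alpha+n_\alpha}\ge A$ with probability $1-o(1)$, hence $\tau_A^{\mathrm{ARL}}-T_\alpha\le n_\alpha$. Since $\varepsilon$ was arbitrary, this gives the $o_\bbP(1)$ statement.

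For the pre-change part, \Cref{thm:arl-control} applies because $\bbP^{T_\alpha}_{P,Q}$ coincides with $\Pinf{P}$ on $\calF_{T_\alpha}$, so $\bbP(\tau_A^{\mathrm{ARL}}\le T_\alpha)\le T_\alpha/A$. This is $o(1)$ by $\log T_\alpha=o(L)$. The CADD bound then follows from the expectation bound by dividing by $\bbP(\tau>T_\alpha)=1-o(1)$.

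The main obstacle is the expectation bound, which requires uniform integrability of $(\tau-T_\alpha-n_\alpha)^+/L$. I would repeat the argument above at each time $t=T_\alpha+k$ with $k\ge n_\alpha$ and replace the law of large numbers by sub-Gaussian Chernoff bounds. These give
\[
\bbP(|\bar X_{1:t}-\mu|>\delta)\le 2e^{-c T_\alpha\delta^2}+2e^{-cL\delta^2},
\]
where the second term controls the drift caused by the post-change fraction, which is at most $k/t$ and is handled by bounding the two blocks separately. They also give $\bbP(|\bar X_{T_\alpha+1:t}-\nu|>\delta)\le 2e^{-ck\delta^2}$. On the complement of these events the stopping condition holds at $T_\alpha+k$. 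Hence $\bbP(\tau-T_\alpha>k)\le 4e^{-c'k}+o(1)\cdot(\text{terms summable in }k)$, and summing over $k\ge n_\alpha$ contributes $o(L)$.

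One subtlety needs care: the $\bar X_{1:t}$ event for large $k$. Once $k$ is comparable to $T_\alpha$, the overall mean is no longer close to $\mu$. I would handle this by swapping the roles of $\mu$ and the full-sample mean: for any $\theta$, either $|\theta-\bar X_{1:t}|>\delta$ or $|\theta-\nu|$ is bounded below by some quantity. The real simplification is to note that once $k\ge T_\alpha$ the block $s=T_\alpha+1$ alone has mean near $\nu$. Then any $\theta$ either deviates from the overall mean by a fixed amount, making $M_{1:t}^\theta$ large, or deviates from $\nu$ by a fixed fraction of $|\mu-\nu|$, making $M_{T_\alpha+1:t}^\theta$ large. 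Either way the failure probability is exponentially small in $k$.
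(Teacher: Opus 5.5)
Your high-probability bound is correct and is essentially the paper's argument: both keep only the starts $s=1$ and $s=T_\alpha+1$ at time $T_\alpha+n_\alpha$. The paper balances the two quadratics exactly through \Cref{lem:two-quadratics}, obtaining $B_{T,d}=Id\{1+o_{\bbP}(1)\}$. Your split on $|\theta-\bar X_{1:t}|\gtrless\delta$ reaches the same conclusion after $\delta\downarrow0$. Your bound $\bbP(\tau_A^{\mathrm{ARL}}\le T_\alpha)\le T_\alpha/A$ and the passage from $\E[(\tau-T_\alpha)^+]$ to CADD also match the paper.

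The gap is in the expectation bound, specifically in the far tail of $k=\tau-T_\alpha$.

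\emph{Your dichotomy fails for $k\gg T_\alpha$.} There $\bar X_{1:t}\approx\nu+\tfrac{T_\alpha}{t}(\mu-\nu)\to\nu$. So the choice $\theta=\bar X_{1:t}$ gives $M^\theta_{1:t}=(1+\rho^2\sigma^2t)^{-1/2}$, while $\log M^\theta_{T_\alpha+1:t}\approx T_\alpha^2I/k$. The latter drops below $L$ once $k\gtrsim T_\alpha^2I/L$. A $\theta$ near the overall mean is then \emph{not} a fixed fraction of $|\mu-\nu|$ away from $\nu$, and the two-start argument certifies nothing. You therefore have no bound on $\bbP(\tau-T_\alpha>k)$ that is summable over all $k$.

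\emph{The prefix term does not decay in $k$.} Even where your argument works ($k\le CT_\alpha$), the term $2e^{-cT_\alpha\delta^2}$ is constant in $k$. Summing it over $k\ge n_\alpha$ without a cutoff diverges.

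\emph{The missing idea is a deterministic cap.} The exponential factor in \eqref{eq:mixture-evalue} is at least one, so $M^\theta_{s:t}\ge(1+\rho^2\sigma^2(t-s+1))^{-1/2}$. Hence $D^{\mathrm{ARL}}_t\ge\sqrt{t/(1+\rho^2\sigma^2)}$ and $\tau_A^{\mathrm{ARL}}\le\lceil(1+\rho^2\sigma^2)A^2\rceil$ on every path. With $H=\lfloor T_\alpha/2\rfloor$ and $Y=(\tau_A^{\mathrm{ARL}}-T_\alpha)^+$, this gives $\E Y\le n_\alpha+H\,\bbP(Y>n_\alpha)+CA^2\,\bbP(Y>H)$.

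\begin{itemize}
\item Your Chernoff bounds give $\bbP(Y>n_\alpha)\le e^{-c_\varepsilon L}$. So the middle term vanishes by $\log T_\alpha=o(L)$.
\item At horizon $H$ the overall mean is still a fixed fraction of $|\nu-\mu|$ away from $\nu$, so the two-start argument gives $\bbP(Y>H)\le e^{-cIT_\alpha}$. The last term is then $e^{2L-cIT_\alpha+O(1)}\to0$ by $T_\alpha/L\to\infty$.
\end{itemize}

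This is how the paper's Steps A--D close the argument.
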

\emph{Proof.} See \Cref{app:proof-sub-Gaussian}.

Similarly, our detector statistic for PFA control in \eqref{eq:ordinary-pfa-detector} reduces to
\begin{equation}
\label{eq:D-def-PFA}
D_t^{\mathrm{PFA}}=\inf_{\theta\in\R}\sum_{s=1}^t \pi_sM_{s:t}^{\theta}.
\end{equation}
Suppose that the true pre-change $P\in\mathcal{P}_\mu$ and the post-change $Q\in\mathcal{P}_\nu$, for some $\mu\neq\nu$.
The following theorem is the positive result for the false-alarm-valid detector.  It is stated as a high-probability detection-delay bound (recall from \Cref{subsec:metrics} that CADD is not a right metric for the PFA setting). Define,
\begin{equation}
\label{eq:hpi-def}
  h_\pi(T)
  :=
  \max\{-\log\pi_{T+1},-\log\pi_{T-\lfloor T/2\rfloor+1}\}
  +
  \frac12\log(1+2\rho^2\sigma^2 T).
\end{equation}
For the weight sequence $\{\pi_s\}_{s\in\N}$ in \eqref{eq:pi-weights},
\begin{equation}
\label{eq:hpi-OlogT}
  h_\pi(T)=O(\log T).
\end{equation}
As before, suppose that the true pre-change $P\in\mathcal{P}_\mu$ and the post-change $Q\in\mathcal{P}_\nu$, for some $\mu\neq\nu$ and let $I=\inf_{P_0\in \mathcal{P}_\mu,P_1\in \mathcal{P}_\nu}\KL(P_1\|P_0)=\frac{(\mu-\nu)^2}{2\sigma^2}.$
\begin{theorem}[Delay upper bound]
\label{thm:subg-upper-pfa}
Consider a sequence $\alpha\downarrow 0$, with $L=\log(1/\alpha)$, and the changepoints $T_\alpha\to\infty$.  Suppose
\begin{equation}
\label{eq:late-assumption-pfa}
  \frac{ T_\alpha}{L+h_\pi(T_\alpha)}\to\infty.
\end{equation}
Then, under $\bbP:=\bbP^{T_\alpha}_{P,Q}$,
\begin{equation}
\label{eq:upper-op}
  \tau_\alpha^{\mathrm{PFA}}-T_\alpha
  \le
  (1+o_{\bbP}(1))\frac{L+h_\pi(T_\alpha)}{I}.
\end{equation}
For \eqref{eq:pi-weights}, this is
\begin{equation}
\label{eq:upper-OlogT}
  \tau_\alpha^{\mathrm{PFA}}-T_\alpha
  \le
  (1+o_{\bbP}(1))
  \frac{L+O(\log T_\alpha)}{I}.
\end{equation}
\end{theorem}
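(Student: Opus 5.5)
We need $\tau_\alpha^{\mathrm{PFA}}\le T_\alpha+n$ with high probability, where $n=\lceil(1+\epsilon)(L+h_\pi(T_\alpha))/I\rceil$ and $\epsilon>0$ is fixed. It suffices to show that at time $t=T_\alpha+n$ we have $D_t^{\mathrm{PFA}}\ge 1/\alpha$ with probability tending to one. Since every summand is nonnegative, for each $\theta$ we can lower bound the sum by just two terms:
\[
\sum_{s=1}^t\pi_sM^\theta_{s:t}\ \ge\ \max\bigl\{\pi_{T+1}M^\theta_{T+1:t},\ \pi_{s_0}M^\theta_{s_0:t}\bigr\},\qquad s_0=T-\lfloor T/2\rfloor+1 .
\]
Here $T=T_\alpha$. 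The first window contains only post-change data. The second contains about $T/2$ pre-change points together with the $n$ post-change points. These are exactly the two indices appearing in $h_\pi$, which explains its form.

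\textbf{Concentration.} First I would control the sample means. Because $n\to\infty$, the weak law (sub-Gaussian concentration) gives $\bar X_{T+1:t}=\nu+o_\bbP(1)$. The assumption $T/(L+h_\pi)\to\infty$ gives $n=o(T)$, so $\bar X_{s_0:t}=\mu+o_\bbP(1)$: the post-change contribution has weight $n/(n+\lfloor T/2\rfloor)\to0$. These two estimates hold simultaneously for all $\theta$, because $\theta$ enters $M^\theta$ only through $(\bar X-\theta)^2$.

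\textbf{Case split over $\theta$.} Fix a small $\delta>0$ and write $\Delta=|\mu-\nu|$.

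If $|\theta-\mu|\ge(1-\delta)\Delta$, I would use the second window, of length $m=n_{s_0:t}\ge T/2$. Its log e-value is
\[
\log M^\theta_{s_0:t}\ \ge\ \frac{\rho^2m^2(\bar X_{s_0:t}-\theta)^2}{2(1+\rho^2\sigma^2m)}-\tfrac12\log(1+\rho^2\sigma^2 m).
\]
Whenever $|\bar X_{s_0:t}-\theta|$ is bounded away from zero, the first term is of order $m\asymp T$. Since $T\gg L+h_\pi$, this term dominates $L-\log\pi_{s_0}+\tfrac12\log(1+\rho^2\sigma^2m)$. This covers every $\theta$ with $|\theta-\mu|$ bounded below, and in particular every $\theta$ far from $\nu$.

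If instead $|\theta-\mu|<(1-\delta)\Delta$, then $\theta$ lies away from $\nu$ on the side of $\mu$, so $|\theta-\nu|\ge\delta\Delta$. This alone is not enough, because for $\theta$ near $\mu$ the second window is useless and we need the first window to deliver the full rate. The split should therefore be: (a) $|\theta-\mu|\ge\eta$ uses the second window, with gain of order $\eta^2T$; (b) $|\theta-\mu|<\eta$ uses the first window, where $|\bar X_{T+1:t}-\theta|\ge\Delta-\eta-o_\bbP(1)$. In case (b),
\[
\log M^\theta_{T+1:t}\ \ge\ \frac{n(\Delta-\eta-o(1))^2}{2\sigma^2}(1-o(1))-\tfrac12\log(1+\rho^2\sigma^2n),
\]
using $\rho^2n^2/(2(1+\rho^2\sigma^2n))=n/(2\sigma^2)-O(1)$. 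Choosing $\eta$ small relative to $\epsilon$, this is at least $(1+\epsilon/2)(L+h_\pi)-\tfrac12\log(1+2\rho^2\sigma^2T)\ge L-\log\pi_{T+1}$, using $n\le 2T$. Hence $\pi_{T+1}M^\theta_{T+1:t}\ge 1/\alpha$.

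Case (a) needs a uniform bound $\eta^2T/(8\sigma^2)\gg L+h_\pi$, which holds since $T/(L+h_\pi)\to\infty$ and $\eta$ is fixed. Taking the infimum over $\theta$ then gives $D_t\ge1/\alpha$ with probability $1-o(1)$. As $\epsilon$ is arbitrary, this yields \eqref{eq:upper-op}. The statement \eqref{eq:upper-OlogT} then follows from \eqref{eq:hpi-OlogT}.

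\textbf{Main obstacle.} The delicate point is getting constant $1$ in front of $L/I$, uniformly over the infimum in $\theta$. This requires the $o_\bbP(1)$ errors in the sample means to be uniform, which is immediate here because there are only two windows, and the $\theta$-uniformity is deterministic given those two means. A second requirement is that $n=o(T)$, so that the $s_0$ window stays centered at $\mu$. I would also check separately that $\tau>T$ is not needed: early stopping only helps an upper bound on $\tau-T$ from above.
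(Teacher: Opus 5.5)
Your proposal is correct. It uses the same reduction as the paper: at time $T+d$ you keep only the pure post-change start $T+1$ and the straddling start $T-\lfloor T/2\rfloor+1$. You use $d=o(T)$ so that the straddling mean stays near $\mu$, and you charge the weights and mixture penalties to $h_\pi(T)$.

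The difference is how you handle $\inf_\theta\max\{\cdot,\cdot\}$. You split $\theta$ into a fixed $\eta$-neighborhood of $\mu$ and its complement:
\begin{itemize}
\item the neighborhood is rejected by the post-change block at rate $(\Delta-\eta)^2/(2\sigma^2)\approx I$;
\item the complement is rejected by the long straddling block with gain of order $\eta^2T\gg L+h_\pi$.
\end{itemize}
This is exactly the inner/outer scheme of \Cref{thm:delay-upper-generic}.

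The paper instead evaluates the minimax over $\theta$ in closed form via \Cref{lem:two-quadratics}. This gives \Cref{lem:weighted-sufficient-stopping}, namely $\log D^{\mathrm{PFA}}_{T+d}\ge B_{m,d}-\ell_\pi(T,m)-\tfrac12\log(1+\rho^2\sigma^2(m+d))$ with $B_{m,d}=(\bar X_0-\bar X_1)^2/(\kappa_d^{-1/2}+\kappa_{m+d}^{-1/2})^2$. The paper then only needs $\bar X_0-\bar X_1=(\nu-\mu)(1+o_{\bbP}(1))$ and $\kappa_{m+d}^{-1/2}=o(\kappa_d^{-1/2})$.

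What each route buys:
\begin{itemize}
\item The paper's route exploits the quadratic structure to get a single explicit, nonasymptotic stopping criterion in two sample means, with no auxiliary slack parameter. The same criterion is reused for the tail bounds in the ARL CADD proof.
\item Your route is less explicit, but it uses only separation and continuity in $\theta$. It therefore transfers to families without a closed-form balance, such as bounded means, the studentized detector, or a generic $\calP$.
\end{itemize}

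One expository correction: drop your first tentative split at $(1-\delta)\Delta$. Also drop the remark that the straddling window covers ``every $\theta$ far from $\nu$.'' The point $\theta=\mu$ is far from $\nu$, yet it is not handled by that window. Only your final (a)/(b) split is valid.
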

\emph{Proof.} See \Cref{app:proof-sub-Gaussian}.
Comparing this result with the CADD bound for the ARL controlling detector, we see that we need to pay an additional $O(\log T_\alpha)$ price for the 
stricter false alarm guarantee.

\begin{corollary}[When the first-order term is unchanged]
\label{cor:first-order-same}
Use the spending sequence \eqref{eq:pi-weights}.  If $\frac{I T_\alpha}{L}\to\infty,
  \log T_\alpha=o(L),$
then
\[
  \tau_\alpha^\pi-T_\alpha
  \le
  (1+o_{\bbP}(1))\frac{L}{I}.
\]
\end{corollary}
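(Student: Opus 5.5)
This is a direct specialization of \Cref{thm:subg-upper-pfa}; the work is to check that the hypotheses of the corollary imply \eqref{eq:late-assumption-pfa} and that the additive term $h_\pi(T_\alpha)$ is negligible relative to $L$. Here $\tau_\alpha^\pi$ denotes $\tau_\alpha^{\mathrm{PFA}}$ built with the weights \eqref{eq:pi-weights}.

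\textbf{Step 1: $h_\pi(T_\alpha)=o(L)$.} By \eqref{eq:hpi-OlogT}, $h_\pi(T)=O(\log T)$. More concretely, \eqref{eq:pi-log} gives $-\log\pi_{T+1}=\log T+O(\log\log T)$, and the same holds for $-\log\pi_{T-\lfloor T/2\rfloor+1}$, since that index lies between $T/2$ and $T+1$. The remaining term is $\tfrac12\log(1+2\rho^2\sigma^2T)=\tfrac12\log T+O(1)$. Hence $h_\pi(T_\alpha)\le C\log T_\alpha+C'$. The hypothesis $\log T_\alpha=o(L)$, together with $L\to\infty$ (which holds since $\alpha\downarrow0$), then gives $h_\pi(T_\alpha)=o(L)$. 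The only care needed is when $T_\alpha$ stays bounded. In that case $\log T_\alpha$ is $O(1)$, which is still $o(L)$. In any case $T_\alpha\to\infty$ is forced, because $IT_\alpha/L\to\infty$ and $L\to\infty$.

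\textbf{Step 2: verify \eqref{eq:late-assumption-pfa}.} Step 1 gives $L+h_\pi(T_\alpha)=(1+o(1))L$. Therefore
\[
\frac{T_\alpha}{L+h_\pi(T_\alpha)}=\frac{1}{(1+o(1))}\cdot\frac{IT_\alpha}{L}\cdot\frac1I\to\infty,
\]
because $I=(\mu-\nu)^2/(2\sigma^2)$ is a fixed positive constant. All hypotheses of \Cref{thm:subg-upper-pfa} therefore hold.

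\textbf{Step 3: conclude.} Applying \eqref{eq:upper-op} gives
\[
\tau_\alpha^\pi-T_\alpha\le(1+o_\bbP(1))\frac{L+h_\pi(T_\alpha)}{I}=(1+o_\bbP(1))(1+o(1))\frac{L}{I}.
\]
A deterministic $o(1)$ factor is absorbed into $o_\bbP(1)$, which yields the stated bound.

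There is no real obstacle here. The only point requiring attention is Step 1: one must check that both weight indices in $h_\pi$ are of order $T$, so that each $-\log\pi$ term is $\log T+O(\log\log T)$, and then convert $O(\log T_\alpha)$ into $o(L)$ using $\log T_\alpha=o(L)$.
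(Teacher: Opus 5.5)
Your proposal is correct and matches the paper's intended argument. The corollary is a direct specialization of \Cref{thm:subg-upper-pfa}: $h_\pi(T_\alpha)=O(\log T_\alpha)=o(L)$ from \eqref{eq:hpi-OlogT} both verifies \eqref{eq:late-assumption-pfa} and lets the additive penalty be absorbed into the $(1+o_{\bbP}(1))$ factor, and your checks that $T_\alpha\to\infty$ is forced and that both weight indices in $h_\pi$ are of order $T_\alpha$ supply details the paper leaves implicit, since it writes out no separate proof of this corollary.
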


\emph{Proof.} See \Cref{app:proof-sub-Gaussian}.

Thus, the strengthened false-alarm guarantee costs only a second-order additive term whenever $\log T_\alpha=o(L)$.

It is worth mentioning that the above specific construction for the sub-Gaussian example can only give bounds in terms the infimum KL information, $I=\inf_{P_0\in \mathcal{P}_\mu,P_1\in \mathcal{P}_\nu}\KL(P_1\|P_0),$ when the true pre-change $P\in\mathcal{P}_\mu$ and the post-change $Q\in\mathcal{P}_\nu$. In general, this quantity can be substantially smaller than the instance-specific information number, $I^*=\KL(Q\|P)$, which governs the optimal asymptotic delay in the classical separated SCD problems.

A natural question is whether point-null e-processes can instead adapt to the true pair $(P,Q)$ and attain the instance-specific rate $I^*=\KL(Q\|P)$.  The general theory in \Cref{sec:upper-bdd} answers this affirmatively under local uniform-growth conditions.

\section{Example 2: Bounded mean change detection}
\label{sec:bounded-mean}
Let $\calP$ be the set of all distributions on $[0,1]$:
\[
 \mathcal P=\mathcal M_1([0,1]),
\]
and $\mathcal P_\theta$ be the set of all distributions on $[0,1]$ having mean $\theta$, i.e., $\mathcal P_\theta=\{P\in\mathcal P:\E_PX=\theta\}.$
For predictable bets $\lambda_{s,i}(\theta)\in\left[-\frac1{1-\theta},\frac1\theta\right]$,
\begin{equation}
\label{eq:bounded-linear-wealth-main}
 W_{s:t}^{\theta}=\prod_{i=s}^t
 \{1+\lambda_{s,i}(\theta)(X_i-\theta)\}
\end{equation}
is a nonnegative test martingale whenever
$\E[X_i\mid\mathcal F_{i-1}]=\theta$.  Thus mixtures and predictable
no-regret strategies furnish restartable e-processes for the bounded
conditional-mean null.

A particularly natural choice is the two-asset universal portfolio of
\citet{cover1991universal}.  For each $\theta\in(0,1)$, parameterize a constant portfolio by
\[
 \lambda_\theta(p)=\frac{p-\theta}{\theta(1-\theta)},\qquad p\in[0,1],
\]
and mix against the arcsine law
\[
 \Pi_J(dp)=\frac{dp}{\pi\sqrt{p(1-p)}}.
\]
The resulting restartable wealth is
\begin{equation}
\label{eq:bounded-UP-mixture}
 U_{s:t}^{\theta}
 =\int_0^1\prod_{i=s}^t
 \{1+\lambda_\theta(p)(X_i-\theta)\}\,\Pi_J(dp).
\end{equation}
For a prescribed horizon, this continuous mixture can be evaluated exactly
by Gauss--Chebyshev quadrature; it is not replaced below by a finite fixed
portfolio grid.

For a post-change law $Q$, define its betting information against the mean
null $\theta$ by
\begin{equation}
\label{eq:bounded-betting-information}
 I_{\rm bet}(Q,\theta)
 =\sup_{\lambda\in\Lambda_\theta}
   \E_Q\log\{1+\lambda(X-\theta)\}.
\end{equation}
The completed reverse information projection identity is
\begin{equation}
\label{eq:bounded-betting-ripr-main}
 I_{\rm bet}(Q,\theta)
 =\inf_{R\in\mathcal P_\theta}\KL(Q\|R).
\end{equation}

\begin{proposition}[Validity, growth, and variance adaptivity]
\label{prop:bounded-UP-growth-main}
For each $\theta\in(0,1)$, $(U_{s:t}^{\theta})_{t\ge s-1}$ is an $s$-delay
test martingale under every adapted $[0,1]$-valued process satisfying
$\E[X_t\mid\mathcal F_{t-1}]=\theta$.  If
$X_i\stackrel{\rm iid}{\sim}Q$ with mean $b\ne\theta$, then
\[
 \liminf_{n\to\infty}\frac1n\log U_{1:n}^{\theta}
 \ge I_{\rm bet}(Q,\theta)
 \qquad Q^\infty\text{-a.s.}
\]
\end{proposition}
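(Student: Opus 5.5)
The proof has two parts: validity, which follows directly from the structure of the mixture, and growth, which reduces to a regret bound for Cover's universal portfolio.

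\textbf{Validity.} Fix $\theta\in(0,1)$ and $p\in[0,1]$. Write $\lambda=\lambda_\theta(p)=(p-\theta)/\{\theta(1-\theta)\}$. This lies in $[-1/(1-\theta),1/\theta]$: the endpoints are attained at $p=0$ and $p=1$, and $\lambda$ is linear in $p$. Hence for every $x\in[0,1]$,
\[
1+\lambda(x-\theta)=\frac{\theta(1-\theta)+(p-\theta)(x-\theta)}{\theta(1-\theta)}\ge 0,
\]
because the numerator is bilinear in $(p,x)$ and is nonnegative at the four corners of $[0,1]^2$.

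\begin{itemize}
\item Each constant-bet product $\prod_{i=s}^t\{1+\lambda(X_i-\theta)\}$ is therefore nonnegative and adapted.
\item If $\E[X_t\mid\calF_{t-1}]=\theta$, each factor has conditional mean one, so each product is a martingale.
\item $U^\theta_{s:t}$ is the $\Pi_J$-mixture of these products. Nonnegativity lets us apply conditional Fubini/Tonelli and swap the conditional expectation with $\int\Pi_J(dp)$. This gives $\E[U^\theta_{s:t}\mid\calF_{t-1}]=U^\theta_{s:t-1}$.
\item We have $U^\theta_{s:s-1}=1$. Optional stopping for nonnegative supermartingales (Fatou for unbounded $\tau$) gives $\E[U^\theta_{s:\tau}\mid\calF_{s-1}]\le 1$, which is the $s$-delay property of \Cref{ass:eprocess}.
\end{itemize}

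\textbf{Growth.} The key step is a pathwise regret bound for Cover's universal portfolio with Dirichlet$(1/2,1/2)$ prior on two assets. Rewrite each factor as
\[
1+\lambda_\theta(p)(x-\theta)=p\,\frac{x}{\theta}+(1-p)\,\frac{1-x}{1-\theta},
\]
using the identity $\theta(1-\theta)+(p-\theta)(x-\theta)=p x(1-\theta)+(1-p)(1-x)\theta$. This is exactly a two-asset constant-rebalanced portfolio with price relatives $x/\theta$ and $(1-x)/(1-\theta)$.

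The Krichevsky--Trofimov/Cover bound, which holds deterministically for every sequence, then gives
\[
\log U^\theta_{1:n}\ge \sup_{p\in[0,1]}\sum_{i=1}^n\log\{1+\lambda_\theta(p)(X_i-\theta)\}-\tfrac12\log(n+1)-\log 2.
\]
I would cite this bound; if needed, I would reprove it by the standard Laplace/Beta-integral argument, comparing each $p$ with a neighbourhood of the maximiser. Note that $\{\lambda_\theta(p):p\in[0,1]\}$ is all of $\Lambda_\theta$. Dividing by $n$, it suffices to show
\[
\liminf_n \sup_{\lambda\in\Lambda_\theta}\frac1n\sum_{i=1}^n\log\{1+\lambda(X_i-\theta)\}\ge I_{\rm bet}(Q,\theta)\quad\text{a.s.}
\]

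\textbf{The obstacle: boundary bets.} The remaining difficulty is the SLLN for log-terms that may equal $-\infty$ at the boundary bets. For any fixed $\lambda$ with $\E_Q\log\{1+\lambda(X-\theta)\}>-\infty$:
\begin{itemize}
\item the positive part of $\log\{1+\lambda(X-\theta)\}$ is bounded, so the variable is quasi-integrable;
\item the SLLN gives $\frac1n\sum_i\log\{1+\lambda(X_i-\theta)\}\to\E_Q\log\{1+\lambda(X-\theta)\}$ a.s.;
\item the sup over $\lambda$ is at least the value at any fixed $\lambda$.
\end{itemize}
To close the argument:
\begin{itemize}
\item Choose a countable dense set of interior $\lambda$'s, for which the log-terms are bounded.
\item Use concavity and monotone convergence to show that the supremum over interior $\lambda$ equals $I_{\rm bet}(Q,\theta)$. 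This includes the case where the supremum is attained at an endpoint, by taking limits of interior $\lambda$'s.
\item Intersect the countably many a.s. events.
\end{itemize}
The hypothesis $b\neq\theta$ guarantees $I_{\rm bet}>0$, although the inequality itself holds regardless.
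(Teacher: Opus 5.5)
Your proposal is correct and follows essentially the paper's route: validity from nonnegativity and conditional mean one of each factor $f_{\theta,p}$ plus Tonelli, then growth from a pathwise universal-portfolio regret bound, the SLLN for fixed interior bets, and approximation of endpoint bets by interior ones. The only real difference is the regret bound: the paper proves a cruder one, $\log(n+1)+\log(\pi/2)$, via the $2/\pi$ lower bound on the arcsine density and a Bernstein-polynomial expansion of $W_{1:n}^{\theta,p}$, whereas you cite the sharper Cover--Ordentlich/Krichevsky--Trofimov constant $\tfrac12\log(n+1)+\log 2$; either works, since only sublinearity is needed.
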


  See
\Cref{app:prop-bddmean} for a proof.

For deterministic weights $w_s\in(0,1]$, define
\begin{equation}
\label{eq:bounded-UP-detector-main}
 D_t^{\rm UP,w}=\inf_{\theta\in(0,1)}
       \sum_{s=1}^t w_sU_{s:t}^{\theta},
 \qquad
 \tau_A^{\rm UP,w}=\inf\{t:D_t^{\rm UP,w}\ge A\}.
\end{equation}
The ARL detector uses $w_s\equiv1$; the PFA detector uses $w_s=\pi_s$ and
$A=1/\alpha$.
It inherits ARL or PFA validity from
\Cref{thm:arl-control,thm:pfa-control}.
\begin{theorem}[Bounded-mean delay]
\label{thm:bounded-main-delay}
Suppose the pre-change mean is $a\in(0,1)$ and the post-change law is $Q$ with
mean $b\ne a$.  Let $I_{\rm bet}:=I_{\rm bet}(Q,a)\in(0,\infty)$.
If $T_A/\log A\to\infty$ and $\log T_A=o(\log A)$, then
\[
 \tau_A^{\rm UP,ARL}-T_A
 \le(1+o_{\bbP}(1))\frac{\log A}{I_{\rm bet}},
\]
and
\[
 \CADD_{P,Q}^{T_A}(\tau_A^{\rm UP,ARL})
 \le(1+o(1))\frac{\log A}{I_{\rm bet}}.
\]
If
$\frac{T_\alpha}
 {\log(1/\alpha)-\log\pi_{T_\alpha+1}}\to\infty,$
then the PFA detector satisfies
\[
 \tau_\alpha^{\rm UP,PFA}-T_\alpha
 \le(1+o_{\bbP}(1))
 \frac{\log(1/\alpha)-\log\pi_{T_\alpha+1}}{I_{\rm bet}}.
\]
\end{theorem}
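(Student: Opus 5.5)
The argument has two halves. An upper bound on the detector freezes $\theta$ at the true pre-change mean $a$. A lower bound must control every other $\theta$ through pre-change data. Write $L=\log A$ (respectively $L=\log(1/\alpha)-\log\pi_{T_\alpha+1}$) and fix $\epsilon>0$. Set $n_\epsilon=\lceil(1+\epsilon)L/I_{\rm bet}\rceil$ and $t=T+n_\epsilon$; we need $\bbP(D_t\ge A)\to1$, or $\ge 1/\alpha$ in the PFA case. Since $D_t$ is an infimum over $\theta$, we need $\inf_\theta \sum_s w_s U^\theta_{s:t}\ge A$.

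We split $\theta$ into two regions. For $\theta$ far from $a$, take the term $s=1$. Under $P$ the pre-change mean is $a\ne\theta$. Uniformly over $|\theta-a|\ge\delta$, the universal-portfolio regret bound gives
\[
\log U^\theta_{1:T}\ge \sup_{p}\sum_{i\le T}\log\{1+\lambda_\theta(p)(X_i-\theta)\}-\tfrac12\log T-O(1).
\]
The supremum is at least $T\cdot c_\delta(1-o_\bbP(1))$, using a uniform LLN over the compact set of $(\theta,p)$ with bets clipped away from the boundary. Because $T/L\to\infty$, this gives $\log U^\theta_{1:T}\gg L$. The post-change segment adds only $\log U^\theta_{T+1:t}$ given $\mathcal F_T$. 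To make this rigorous, lower bound the mixture on $[1,t]$ by the wealth of a single constant portfolio chosen to bet on $a$ versus $\theta$, cost $O(\log t)$. Over the $n_\epsilon=O(L)=o(T)$ post-change steps, that portfolio loses at most $O(n_\epsilon)$, which is $o(T)$. Hence this region contributes at least $e^{cT}\ge A$ with high probability. Near $\theta=0$ or $1$, extreme bets make this trivial.

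For $\theta$ near $a$, take $s=T+1$. By the Cover regret bound uniformly in $\theta$,
\[
\log U^\theta_{T+1:t}\ge n_\epsilon\sup_{\lambda}\frac{1}{n_\epsilon}\sum\log\{1+\lambda(X_i-\theta)\}-\tfrac12\log n_\epsilon-O(1).
\]
The empirical optimized log-growth converges to $I_{\rm bet}(Q,\theta)$ uniformly on $|\theta-a|\le\delta$, by a uniform LLN over compact $(\theta,\lambda)$ after restricting $\lambda$ to the interior. Since $I_{\rm bet}(Q,\cdot)$ is continuous at $a$, choosing $\delta$ small gives $\log U^\theta_{T+1:t}\ge (1+\epsilon/2)L$ uniformly with probability tending to one. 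Here we use $\log n_\epsilon=O(\log L)=o(L)$, and for PFA the weight $\pi_{T+1}$ is already absorbed into $L$. Combining both regions shows $\tau-T\le n_\epsilon$ with probability tending to one, which is the $o_\bbP$ statement.

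For the ARL CADD bound, the requirement $\log T=o(L)$ gives $\bbP(\tau\le T)\le T/A\to0$ by \eqref{eq:lfa}. Indeed, the event $\{\tau\le T\}$ is $\mathcal F_T$-measurable, so its probability is the same under $P^\infty$. Then $\E(\tau-T)^+$ is bounded by a geometric restart argument: on blocks of length $n_\epsilon$ after $T$, each block independently succeeds with probability $\ge 1-o(1)$ using the same bound with $s$ set to the block start. This requires the far-$\theta$ region to remain controlled, which holds uniformly since the pre-change information only accumulates. Dividing by $\bbP(\tau>T)=1-o(1)$ gives the CADD bound.

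The main obstacle is the uniformity in $\theta$ for the post-change segment near the boundary of $\Lambda_\theta$. The optimal $\lambda$ may sit at the endpoint where $\log$ blows up, and that is also where the arcsine-mixture regret must be handled. I would first clip to $\lambda\in(1-\eta)\Lambda_\theta$, losing at most $o(1)$ in $I_{\rm bet}$ as $\eta\to0$, and then apply the regret bound against constant portfolios in the interior.
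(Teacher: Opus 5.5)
Your high-probability bounds for both detectors are sound, though you argue them differently from the paper. For $|\theta-a|\le\delta$ you use a uniform LLN over clipped $(\theta,\lambda)$ plus continuity of $I_{\rm bet}(Q,\cdot)$ at $a$; this holds because the function is convex and finite on $(0,1)$. For $|\theta-a|\ge\delta$ you use one interior portfolio on the cross block, whose post-change loss is $O(n_\epsilon)=o(T)$. The paper instead reduces the inner region to a single bounded random walk with increments $Y_\eta(x)=\inf_{|\theta-a|\le\delta}\log f_{\theta,p_\eta}(x)$. It reduces the outer region to the direct bound $U^\theta_{1:n}\ge c_J(n+1)^{-1/2}e^{n\kl(\bar X_{1:n},\theta)}$ on the prefix event $G_T=\{|\bar X_{1:T}-a|\le\delta/4\}$.

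The genuine gap is in the CADD bound. Your restart argument rests on the claim that the far-$\theta$ region ``remains controlled, since the pre-change information only accumulates,'' and this is false. The prefix evidence is fixed at $T$ observations and is diluted in every block starting at $s\le T$ as post-change data arrive; $t\mapsto U^\theta_{1:t}$ is not monotone. Your own estimate shows this: after $k$ blocks the cross-block bound is $cT-Ckn_\epsilon-O(\log t)$, which is vacuous once $kn_\epsilon\gtrsim T$. Beyond that point $\bar X_{1:t}$ has drifted from $a$ toward $b$, and for $\theta$ near it neither start $1$ nor a block start of length $n_\epsilon$ certifies rejection. For the block start, the reason is that $I_{\rm bet}(Q,\theta)$ decreases to $0$ as $\theta$ moves toward $b$. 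The block successes are also not independent, since the outer region uses the shared prefix; at best they are conditionally independent given $G_T$. So your scheme bounds only $\E[(\tau-T)^+;\,\tau\le T+cT]$, and nothing in the proposal controls the contribution of later stopping times.

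Two ingredients are missing: (i) a pathwise cap on the stopping time, and (ii) a probability, exponentially small in $T$, of not stopping within a window of length $\asymp T$. The paper gets (i) from $U^\theta_{1:n}\ge c_J/\sqrt{n+1}$ for every $\theta$ and every path. Hence $D^{\rm UP,ARL}_t\ge c_*\sqrt t$ and $\tau_A^{\rm UP,ARL}\le C_*A^2$. For (ii), let $\sigma_L$ be the first $n$ with $\sum_{i\le n}Y_\eta(X_{T+i})\ge \log A+\log(n+1)+\log(\pi/2)$ and let $H_T=\lfloor\delta T/4\rfloor$. The paper proves the pathwise implication $G_T\cap\{\sigma_L\le H_T\}\Rightarrow\tau_A^{\rm UP,ARL}\le T+\sigma_L$. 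It combines this with the Hoeffding bounds $\bbP(G_T^c)\le 2e^{-\delta^2T/8}$ and $\bbP(\sigma_L>H_T)\le e^{-cH_T}$. Then $\E[(\tau-T)^+]\le\E\sigma_L+C_*A^2\{\bbP(G_T^c)+\bbP(\sigma_L>H_T)\}$. The error term vanishes because $2\log A=o(T)$, and Wald's identity gives $\E\sigma_L/\log A\to1/\mu\le 1/(I_{\rm bet}-2\eta)$.

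Ingredient (ii) genuinely needs exponential concentration, not an LLN. If each block fails with probability only $p=o(1)$, all $\Theta(T/\log A)$ blocks in the window fail with probability $p^{\Theta(T/\log A)}$. That need not beat the cap $A^2$ unless $\log(1/p)\gg(\log A)^2/T$.
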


A complete proof is in
\Cref{app:bounded-UP-delay}.  

For computation, a prescribed horizon $H$ permits exact evaluation of the
continuous arcsine mixture through time $H$ by Gauss--Chebyshev quadrature
with $K\ge(H+1)/2$ nodes: the time-$n$ integrand is a polynomial of degree
$n$ in $p$.  A long-running detector can instead use a no-regret portfolio,
an increasing positive quadrature, or a mixture-of-lower-bounds
implementation.  
A fixed no-regret portfolio takes $O(1)$ work per active start and observation and therefore $O(K)$ work per active start, unless further polynomial-recursion structure is exploited.

\section{Example 3: Gaussian change detection with unknown variance}
\label{sec:gaussian-unknown-variance}

We now take $\mathcal P$ to be the set of all Gaussian distributions:
\[
  \mathcal P
  :=\{N(\mu,\sigma^2):\mu\in\mathbb R,\ \sigma^2>0\}.  
\]
For each $\theta\in\mathbb R $,
define
\[\mathcal N_\theta
  :=\{N(\theta,\sigma^2):\sigma^2>0\}.\]
We consider two change detection problems in the following subsections. First, we consider the problem of detecting changes in the mean parameter only, when the variance is unknown and unrestricted.  Second, we
construct a detector for changes in either the mean or variance parameter and  recover the stronger
instance-specific information rate $\KL(Q\|P)$.

\subsection{Mean change detection using universal-inference t-test e-process}

For every start $s$, let $\widetilde\mu_{s:i-1}$ and
$\widetilde\sigma^2_{s:i-1}$ be predictable estimators, with the latter
strictly positive, based only on $X_s,\ldots,X_{i-1}$.  A concrete online
choice is
\begin{equation}
\label{eq:gaussian-regularized-predictors}
  \widetilde\mu_{s:i-1}
  :=
  \begin{cases}
    m_0,&k=0,\\
    \bar X_{s:i-1},&k\ge1,
  \end{cases}
  \qquad
  \widetilde\sigma^2_{s:i-1}
  :=
  \frac{v_0+\sum_{j=s}^{i-1}
        (X_j-\bar X_{s:i-1})^2}{k+\nu_0},
  \qquad k=i-s,
\end{equation}
where $m_0\in\mathbb R$ and $v_0,\nu_0>0$ are fixed, and the centered sum is
zero when $k=0$.  

For $n=n_{s:t}$, define
\begin{equation}
\label{eq:UI-t-eprocess}
  R_{s:t}^{\theta}
  :=
  \left\{
    \frac1n\sum_{i=s}^t(X_i-\theta)^2
  \right\}^{n/2}
  e^{n/2}
  \prod_{i=s}^t
  \frac1{\widetilde\sigma_{s:i-1}}
  \exp\left\{
    -\frac{(X_i-\widetilde\mu_{s:i-1})^2}
           {2\widetilde\sigma^2_{s:i-1}}
  \right\},
\end{equation}
with $R_{s:s-1}^{\theta}=1$.  Notice that for a nonzero null mean $\theta$,
only the residual sum of squares in the first factor is shifted; the
predictive numerator remains fitted to the unshifted observations.  This is
the power-preserving translation described by \citet{wang2024anytime}.

\begin{proposition}[Universal-inference t e-process]
\label{prop:UI-t-eprocess}
For every $\theta\in\mathbb R$, the process
$(R_{s:t}^{\theta})_{t\ge s-1}$ is an $s$-delay e-process under every
$N(\theta,\sigma^2)^\infty$, uniformly over $\sigma^2>0$.
If $X_i\stackrel{\mathrm{iid}}\sim N(m,v)$ and the predictors satisfy the
consistency and inverse-moment conditions of \citet[Proposition~3.3]{wang2024anytime}
---in particular, the regularized estimators
\eqref{eq:gaussian-regularized-predictors} do---then
\begin{equation}
\label{eq:UI-t-growth}
  \frac1n\log R_{1:n}^{\theta}
  \longrightarrow
  J_{m,v}(\theta)
  :=\frac12\log\left(1+\frac{(m-\theta)^2}{v}\right)
  \qquad\text{almost surely}.
\end{equation}
Moreover,
\begin{equation}
\label{eq:t-reverse-I-projection}
  J_{m,v}(\theta)
  =\inf_{r>0}
    \KL\bigl(N(m,v)\,\|\,N(\theta,r)\bigr),
\end{equation}
so this is the largest possible first-order growth rate for an e-process
valid under the whole composite mean null $\mathcal N_\theta$.
\end{proposition}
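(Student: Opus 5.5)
The proof has three parts: validity, the growth limit \eqref{eq:UI-t-growth}, and the identity \eqref{eq:t-reverse-I-projection}. Each rests on writing $R^\theta_{s:t}$ as a ratio. Set $n=n_{s:t}$ and $\hat\sigma^2_\theta:=\frac1n\sum_{i=s}^t(X_i-\theta)^2$. The Gaussian likelihood of $X_s,\dots,X_t$ under $N(\theta,\sigma^2)$, maximized over $\sigma^2$, equals $(2\pi\hat\sigma^2_\theta)^{-n/2}e^{-n/2}$. Therefore
\[
R^\theta_{s:t}=\frac{\prod_{i=s}^t \varphi(X_i;\widetilde\mu_{s:i-1},\widetilde\sigma^2_{s:i-1})}{\sup_{\sigma^2>0}\prod_{i=s}^t\varphi(X_i;\theta,\sigma^2)},
\]
where $\varphi(\cdot;m,v)$ is the normal density. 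The $(2\pi)^{-n/2}$ factors cancel between numerator and denominator.

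\emph{Validity.} I would follow the universal-inference argument. Fix $\sigma^2>0$ and the true law $N(\theta,\sigma^2)^\infty$. The denominator is at least the likelihood at the true $\sigma^2$, so
\[
R^\theta_{s:t}\le N^{\sigma}_{s:t}:=\prod_{i=s}^t\frac{\varphi(X_i;\widetilde\mu_{s:i-1},\widetilde\sigma^2_{s:i-1})}{\varphi(X_i;\theta,\sigma^2)}.
\]
The predictors depend only on $X_s,\dots,X_{i-1}$, and the predictive density integrates to one. Hence each factor has conditional expectation $1$ given $\calF_{i-1}$, and $N^\sigma_{s:\cdot}$ is a nonnegative martingale starting from $1$ at time $s-1$. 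It is independent of $\calF_{s-1}$-measurable quantities because the data are i.i.d. Optional stopping for nonnegative supermartingales (Fatou for unbounded $\tau$), applied conditionally on $\calF_{s-1}$, gives $\Einf{}[R^\theta_{s:\tau}\mid\calF_{s-1}]\le\Einf{}[N^\sigma_{s:\tau}\mid\calF_{s-1}]\le1$. On $\{\tau<s\}$ we have $R=1$. This bound holds for every $\sigma^2$, which gives uniformity over $\mathcal N_\theta$. Taken together, this verifies \Cref{ass:eprocess}.

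\emph{Growth.} Write
\[
\frac1n\log R^\theta_{1:n}=\frac12\log\hat\sigma^2_\theta+\frac12+\frac1n\sum_{i=1}^n\Bigl[-\log\widetilde\sigma_{1:i-1}-\frac{(X_i-\widetilde\mu_{1:i-1})^2}{2\widetilde\sigma^2_{1:i-1}}\Bigr].
\]
By the SLLN, $\hat\sigma^2_\theta\to v+(m-\theta)^2$. For the predictive sum, I would invoke \citet[Proposition~3.3]{wang2024anytime}: under its consistency and inverse-moment conditions, the average converges a.s.\ to $-\frac12\log v-\frac12$. The heuristic is that $\widetilde\mu\to m$ and $\widetilde\sigma^2\to v$ a.s., so the summands behave like $-\frac12\log v-(X_i-m)^2/(2v)$. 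A Ces\`aro argument plus a martingale SLLN for the cross terms then handles the remainder. The limit is therefore $\frac12\log(v+(m-\theta)^2)-\frac12\log v$, which is $J_{m,v}(\theta)$.

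For the regularized predictors \eqref{eq:gaussian-regularized-predictors}, two checks are needed. Consistency follows from the SLLN, since $v_0$ and $\nu_0$ wash out. The inverse moments are controlled because $\widetilde\sigma^2\ge v_0/(k+\nu_0)$, so $\log\widetilde\sigma^{-1}$ grows at most like $\log k$, and its Ces\`aro contribution vanishes. I expect this step to be the main obstacle if done from scratch. The difficulty is controlling $(X_i-\widetilde\mu)^2/\widetilde\sigma^2$ uniformly in early rounds. I would lean on the cited proposition rather than reprove it.

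\emph{Projection identity.} This is a direct computation:
\[
\KL(N(m,v)\|N(\theta,r))=\frac12\Bigl[\log\frac rv+\frac{v+(m-\theta)^2}{r}-1\Bigr].
\]
This is minimized at $r=v+(m-\theta)^2$, giving $J_{m,v}(\theta)$.

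\emph{Optimality.} The final claim, that $J_{m,v}(\theta)$ is the largest possible growth rate, follows from the standard bound for composite nulls. Any e-process valid for $\mathcal N_\theta$ is in particular valid under each $N(\theta,r)^\infty$. Its growth rate under $N(m,v)^\infty$ is then at most $\KL(N(m,v)\|N(\theta,r))$ for each $r$. Taking the infimum over $r$ gives the bound $J_{m,v}(\theta)$.
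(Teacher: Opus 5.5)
Your proposal is correct and follows the paper's proof essentially step for step. Validity comes from dominating $R^\theta_{s:t}$, the predictive likelihood over the variance-maximized null likelihood, by the predictive likelihood-ratio test martingale at the true variance. The growth limit uses the same logarithmic decomposition, with the cited result of \citet{wang2024anytime} supplying the limits of the predictive averages. The projection identity is the explicit Gaussian KL minimized at $r=v+(m-\theta)^2$.

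Your added argument for the ``largest possible growth rate'' claim goes beyond the paper, which only asserts it. You note that any e-process valid for $\mathcal N_\theta$ is valid under each $N(\theta,r)^\infty$, so it cannot grow faster than $\inf_r \KL(N(m,v)\|N(\theta,r))$.
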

\emph{Proof.} See \Cref{app:proof-gaussian}.


For weights $w_s\in(0,1]$, define the studentized detector
\begin{equation}
\label{eq:t-detector}
  D_t^{t,w}
  :=\inf_{\theta\in\mathbb R}
      \sum_{s=1}^t w_sR_{s:t}^{\theta},
  \qquad
  \tau_A^{t,w}
  :=\inf\{t\ge1:D_t^{t,w}\ge A\}.
\end{equation}
It inherits ARL or PFA validity from
\Cref{thm:arl-control,thm:pfa-control}.

The optimization in \eqref{eq:t-detector} is one-dimensional and convex.
Indeed, apart from a positive factor independent of $\theta$,
\[
  R_{s:t}^{\theta}
  \propto
  \left\{\sum_{i=s}^t(X_i-\theta)^2\right\}^{n_{s:t}/2}
  =\|X_{s:t}-\theta\mathbf 1\|_2^{n_{s:t}},
\]
which is convex for $n_{s:t}\ge1$.  Whenever the residual sum of squares is
positive,
\begin{equation}
\label{eq:UI-t-derivative}
  \frac{\partial}{\partial\theta}\log R_{s:t}^{\theta}
  =
  \frac{n_{s:t}^2(\theta-\bar X_{s:t})}
       {\sum_{i=s}^t(X_i-\theta)^2}.
\end{equation}
The objective minimizer therefore lies between the smallest and largest
interval means.  At each new observation, all counts, sums, sums of squares,
regularized predictors, and cumulative predictive log scores for the $t$
active starts can be updated in $O(t)$ operations.  Each objective or
subgradient evaluation is then $O(t)$, so bisection or safeguarded Newton
costs $O(t\log(1/\epsilon_{\mathrm{num}}))$ at time $t$ and uses $O(t)$
memory.  Thus the exact all-start detector has quadratic total cost through a
horizon $N$, just as the other exact SR-style implementations in this paper.


To study the asymptotics of the detection delay, suppose now that
\[
  P=N(a,u),
  \qquad
  Q=N(b,v),
  \qquad a\ne b.
\]
Define the t-information number
\begin{equation}
\label{eq:t-information}
  I_t(P,Q)
  :=J_{b,v}(a)
  =\frac12\log\left(1+\frac{(b-a)^2}{v}\right).
\end{equation}
For a common but unknown variance $u=v=\sigma^2$, this is
$\tfrac12\log(1+(b-a)^2/\sigma^2)$.

\begin{theorem}[Delay of the studentized detector]
\label{thm:t-delay-unified}
Let $A_\alpha\to\infty$, $L_\alpha:=\log A_\alpha$, and let
$w_{s,\alpha}\in(0,1]$.  Define
\begin{equation}
\label{eq:t-effective-boundary}
  B_\alpha^t
  :=L_\alpha-\log w_{T_\alpha+1,\alpha}.
\end{equation}
Assume
\begin{equation}
\label{eq:t-late-change-condition}
  B_\alpha^t\to\infty,
  \qquad
  \frac{T_\alpha}{B_\alpha^t}\to\infty,
  \qquad
  L_\alpha-\log w_{1,\alpha}=o(T_\alpha).
\end{equation}
For every fixed $\varepsilon>0$, let
\[
  d_\alpha
  :=\left\lceil
       (1+\varepsilon)
       \frac{B_\alpha^t}{I_t(P,Q)}
     \right\rceil.
\]
Then
\begin{equation}
\label{eq:t-delay-hp}
  \bbP_{P,Q}^{T_\alpha}
  \bigl(\tau_{A_\alpha}^{t,w}\le T_\alpha+d_\alpha\bigr)
  \longrightarrow1.
\end{equation}
Consequently, for the ARL detector,
\[
  \tau_A^{t,\mathrm{ARL}}-T_A
  \le
  (1+o_{\bbP}(1))\frac{\log A}{I_t(P,Q)}
\]
whenever $T_A/\log A\to\infty$.  For the PFA detector with
\eqref{eq:pi-weights},
\begin{equation}
\label{eq:t-PFA-delay}
  \tau_\alpha^{t,\mathrm{PFA}}-T_\alpha
  \le
  (1+o_{\bbP}(1))
  \frac{\log(1/\alpha)+\log T_\alpha
        +2\log\log(eT_\alpha)+O(1)}{I_t(P,Q)}.
\end{equation}
The same statements hold conditionally on no false alarm before $T_\alpha$
whenever the corresponding ARL or PFA pre-change false-alarm probability
vanishes.
\end{theorem}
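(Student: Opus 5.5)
The plan is to lower bound the infimum over $\theta$ by discarding all but two starts. Since every summand is nonnegative,
\[
D_t^{t,w}\ \ge\ \inf_{\theta\in\mathbb R}\max\Bigl\{w_{1,\alpha}R_{1:t}^{\theta},\ w_{T_\alpha+1,\alpha}R_{T_\alpha+1:t}^{\theta}\Bigr\},\qquad t=T_\alpha+d_\alpha .
\]
So it suffices to show that, with probability tending to one, every $\theta$ makes one of these two terms at least $A_\alpha$.

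The key structural fact is that the $\theta$-dependence of $R_{s:t}^\theta$ is explicit. The residual sum of squares satisfies $\sum_{i=s}^t(X_i-\theta)^2=n\hat\sigma^2_{s:t}+n(\bar X_{s:t}-\theta)^2$, where $\hat\sigma^2_{s:t}$ is the empirical variance on the interval. Hence
\[
\log R_{s:t}^{\theta}=\log R_{s:t}^{\bar X_{s:t}}+\frac n2\log\Bigl(1+\frac{(\bar X_{s:t}-\theta)^2}{\hat\sigma^2_{s:t}}\Bigr),\qquad n=n_{s:t}.
\]
The first term is the log predictive likelihood of the regularized plug-in predictors \eqref{eq:gaussian-regularized-predictors} minus the maximized Gaussian log-likelihood. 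I would show it is $-o_{\bbP}(n)$ in the two cases needed below.

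The first case is the post-change window $s=T_\alpha+1$, of length $d_\alpha\to\infty$. These data are i.i.d. $N(b,v)$, so this follows from \Cref{prop:UI-t-eprocess} at $\theta=\bar X$. More directly, it follows from the standard $O(\log n)$ prequential regret of the plug-in predictors, which is exactly what the Wang--Ramdas consistency and inverse-moment conditions deliver.

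The second case is the full window $s=1$, of length $n=T_\alpha+d_\alpha$. Here the first $T_\alpha$ observations are i.i.d. $N(a,u)$, so the same argument gives $-o(T_\alpha)$ for them. The remaining $d_\alpha=o(T_\alpha)$ terms each contribute $O_{\bbP}(1)$ on average, because the predictors are then close to $(a,u)$ and the predictive variances are bounded away from zero. I expect this uniform control of the mixed-segment regret to be the main technical obstacle, and I would handle it by splitting the predictive log-score sum at $T_\alpha$ and bounding the tail crudely.

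Next fix $\eta>0$ small, and choose $\delta>0$ by continuity of $J_{b,v}$ so that $J_{b,v}(\theta)\ge I_t(P,Q)-\eta$ whenever $|\theta-a|\le\delta$. Split $\theta$ into a near region and a far region.

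In the near region $|\theta-a|\le\delta$, use the post-change start. The law of large numbers gives $\bar X_{T+1:t}\to b$ and $\hat\sigma^2\to v$, uniformly in $\theta$ on this compact set. Combined with the regret bound,
\[
\log\bigl(w_{T+1}R_{T+1:t}^\theta\bigr)\ \ge\ \log w_{T+1}+d_\alpha\bigl(I_t(P,Q)-3\eta\bigr).
\]
By the definition of $d_\alpha$, the right-hand side is at least $\log w_{T+1}+B^t_\alpha(1+\varepsilon)(1-3\eta/I_t)\ge L_\alpha$ once $\eta$ is small relative to $\varepsilon$.

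In the far region $|\theta-a|>\delta$, use the full start. Since $d_\alpha=o(T_\alpha)$, which follows from $T_\alpha/B_\alpha^t\to\infty$, the pooled mean tends to $a$ and the pooled variance stays bounded in probability. Hence $(\bar X_{1:t}-\theta)^2/\hat\sigma^2\ge c\delta^2$ for some $c>0$, uniformly over the far region, and
\[
\log\bigl(w_1R_{1:t}^\theta\bigr)\ \ge\ \log w_{1}+\frac{T_\alpha}{2}\log(1+c\delta^2)-o_{\bbP}(T_\alpha).
\]
This exceeds $L_\alpha$ because $L_\alpha-\log w_{1,\alpha}=o(T_\alpha)$. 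The two regions together give \eqref{eq:t-delay-hp}.

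The corollaries follow by specializing the weights. For the ARL detector, take $w\equiv1$, so that $B^t=\log A$; the third condition in \eqref{eq:t-late-change-condition} is then implied by $T_A/\log A\to\infty$. For the PFA detector, take $w=\pi$, and \eqref{eq:pi-log} gives $B^t_\alpha=\log(1/\alpha)+\log T_\alpha+2\log\log(eT_\alpha)+O(1)$. In both cases, letting $\varepsilon\downarrow0$ along the sequence turns the high-probability bound into the $(1+o_{\bbP}(1))$ statement.

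For the conditional statements, let $C=\{\tau>T_\alpha\}$ be the event of no false alarm before $T_\alpha$. Then
\[
\bbP(E\mid C)\ \ge\ \frac{\bbP(E)-\bbP(C^c)}{\bbP(C)},
\]
and this tends to one whenever $\bbP(C^c)\to0$ and $\bbP(E)\to1$.
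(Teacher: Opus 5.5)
Your proposal is correct and follows essentially the same route as the paper. The post-change start covers $|\theta-a|\le\delta$, with $\delta$ chosen by continuity of $J_{b,v}$ and compact-uniform growth on the pure $Q$-block. The start-$1$ term covers $|\theta-a|\ge\delta$ through cross-block stability, which the paper packages as Lemma~\ref{lem:UI-cross-stability}; its proof is exactly your decomposition $\frac1n\log R^\theta_{1:n}=C_n+\frac12\log\{\widehat v_n+(\bar X_n-\theta)^2\}$, combined with splitting the predictive log-score at $T_\alpha$ and bounding the $o(T_\alpha)$-length suffix crudely.
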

\emph{Proof.} See \Cref{app:proof-gaussian}.

The denominator $I_t(P,Q)$ matches the reverse information projection in
\eqref{eq:t-reverse-I-projection}.  Thus it is minimax optimal if the
pre-change null is specified only by its mean and its variance is allowed to
range freely.  It need not equal the instance-specific information
$\KL(Q\|P)$ once a long prefix can learn the actual pre-change variance; we
return to this distinction below.

\subsection{An instance-optimal full Gaussian change detector}

If the entire Gaussian law is treated as the candidate no-change
distribution, universal inference yields a point-null martingale with full
KL growth.  Let
\[
  q_{s:i-1}(x)
  :=\phi_{\widetilde\mu_{s:i-1},
           \widetilde\sigma^2_{s:i-1}}(x)
\]
be any predictable Gaussian density, for example the online predictors in
\eqref{eq:gaussian-regularized-predictors}.  For $\theta\in\mathbb R$ and
$r>0$, define
\begin{equation}
\label{eq:full-Gaussian-martingale}
  L_{s:t}^{\theta,r}
  :=\prod_{i=s}^t
  \frac{q_{s:i-1}(X_i)}{\phi_{\theta,r}(X_i)},
  \qquad L_{s:s-1}^{\theta,r}:=1.
\end{equation}
Under $N(\theta,r)^\infty$, this is a test martingale in the canonical
filtration.  Under $N(m,v)^\infty$ and consistent predictors,
\begin{equation}
\label{eq:full-Gaussian-growth}
  \frac1n\log L_{1:n}^{\theta,r}
  \longrightarrow
  \KL\bigl(N(m,v)\|N(\theta,r)\bigr)
  \qquad\text{almost surely}.
\end{equation}
Define
\begin{equation}
\label{eq:full-Gaussian-detector}
  D_t^{\mathrm G,w}
  :=\inf_{\theta\in\mathbb R,\ r>0}
      \sum_{s=1}^t w_sL_{s:t}^{\theta,r},
  \qquad
  \tau_A^{\mathrm G,w}
  :=\inf\{t\ge1:D_t^{\mathrm G,w}\ge A\}.
\end{equation}

\begin{theorem}[Instance-optimal Gaussian delay]
\label{thm:full-Gaussian-delay}
Let $P=N(a,u)$ and $Q=N(b,v)$ be distinct, and put
\begin{equation}
\label{eq:full-Gaussian-information}
  I_{\mathrm G}(P,Q)
  :=\KL(Q\|P)
  =\frac12\left\{
       \log\frac uv+
       \frac{v+(b-a)^2}{u}-1
     \right\}.
\end{equation}
Let $A_\alpha\to\infty$, $L_\alpha=\log A_\alpha$, and define
$B_\alpha^{\mathrm G}:=L_\alpha-
\log w_{T_\alpha+1,\alpha}$.  If
$ B_\alpha^{\mathrm G}\to\infty,
  \frac{T_\alpha}{B_\alpha^{\mathrm G}}\to\infty,
  L_\alpha-\log w_{1,\alpha}=o(T_\alpha),$
then, for every fixed $\varepsilon>0$,
\begin{equation}
\label{eq:full-Gaussian-delay-hp}
  \bbP_{P,Q}^{T_\alpha}
  \left(
    \tau_{A_\alpha}^{\mathrm G,w}-T_\alpha
    \le
    \left\lceil
      (1+\varepsilon)
      \frac{B_\alpha^{\mathrm G}}{I_{\mathrm G}(P,Q)}
    \right\rceil
  \right)
  \longrightarrow1.
\end{equation}
In particular, when $u=v=\sigma^2$,
$ I_{\mathrm G}(P,Q)=\frac{(b-a)^2}{2\sigma^2}.$
Thus the ARL delay is first-order $\log A/I_{\mathrm G}(P,Q)$.  The PFA
delay is first-order
\[
 \frac{\log(1/\alpha)-\log\pi_{T+1}}{I_{\mathrm G}(P,Q)}.
\]
\end{theorem}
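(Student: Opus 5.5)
Since $D_t^{\mathrm G,w}$ is an infimum, it suffices to exhibit, on an event of probability tending to one, a single start whose term dominates the infimum. I will use the start $s=T_\alpha+1$ at the deterministic time $t_\alpha:=T_\alpha+d_\alpha$, where
\[
d_\alpha=\left\lceil(1+\varepsilon)B_\alpha^{\mathrm G}/I_{\mathrm G}(P,Q)\right\rceil .
\]
Then $D_{t_\alpha}^{\mathrm G,w}\ge \inf_{\theta,r} w_{T_\alpha+1,\alpha}L_{T_\alpha+1:t_\alpha}^{\theta,r}$, so it is enough to show
\[
\inf_{\theta,r}\log L^{\theta,r}_{T_\alpha+1:t_\alpha}\ \ge\ (1+\varepsilon/2)^{-1}\cdots\ \ge L_\alpha-\log w_{T_\alpha+1,\alpha}=B_\alpha^{\mathrm G}
\]
with probability tending to one. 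The post-change block $X_{T_\alpha+1},\dots,X_{t_\alpha}$ is i.i.d.\ $Q=N(b,v)$ and has length $d_\alpha\to\infty$, since $B_\alpha^{\mathrm G}\to\infty$. The infimum, however, ranges over all $(\theta,r)$, including those close to $Q$ itself, where $L$ cannot grow. So a single post-change block alone cannot work. The remedy is to use the long pre-change data in the same infimum: for $(\theta,r)$ near $Q$, the terms with starts $s\le T_\alpha$ must already be huge.

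Accordingly, I split the parameter space into a compact neighbourhood $K$ of $P$ and its complement. For $(\theta,r)\in K$, the post-change start gives
\[
\log L^{\theta,r}_{T_\alpha+1:t_\alpha}=\sum_i\log q(X_i)-\sum_i\log\phi_{\theta,r}(X_i).
\]
The first sum equals $\sum_i\log\phi_{b,v}(X_i)-O_{\bbP}(\log d_\alpha)$, which is the standard regret of the regularized Gaussian plug-in predictor. The second sum is a Gaussian log-likelihood that is quadratic in the sufficient statistics $(\sum X_i,\sum X_i^2)$. The law of large numbers for these two statistics therefore gives, uniformly over $K$,
\[
\tfrac1{d_\alpha}\log L^{\theta,r}_{T_\alpha+1:t_\alpha}\to \KL\bigl(N(b,v)\|N(\theta,r)\bigr).
\]
Choosing $K$ small enough that this KL exceeds $I_{\mathrm G}/(1+\varepsilon/3)$ on $K$, by continuity, yields $\log L\ge B_\alpha^{\mathrm G}$ on $K$ with high probability.

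For $(\theta,r)\notin K$, I use the start $s=1$ at time $t_\alpha$, which contributes $w_{1,\alpha}L^{\theta,r}_{1:t_\alpha}$. The first $T_\alpha$ observations are i.i.d.\ $P$, so the predictive part tracks $P$ up to $O(\log T_\alpha)$ regret. The closed form of the Gaussian likelihood in the sufficient statistics then gives, uniformly over $(\theta,r)\notin K$ (including $r\to0$, $r\to\infty$, $|\theta|\to\infty$),
\[
\log L^{\theta,r}_{1:T_\alpha}\ge c\,T_\alpha-O_{\bbP}(\log T_\alpha)
\]
for some $c=c(K)>0$. This holds because $\inf_{(\theta,r)\notin K}\KL(P\|N(\theta,r))>0$ and this KL is coercive in the parameters. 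The additional post-change factor from $T_\alpha+1$ to $t_\alpha$ can decrease the log by at most $O_{\bbP}(d_\alpha)$ uniformly: this follows from the regret bound plus the fact that $-\log\phi_{\theta,r}$ is bounded below up to $\tfrac12\log r$ terms, which are controlled jointly with the prefix. Since $d_\alpha=o(T_\alpha)$ and $L_\alpha-\log w_{1,\alpha}=o(T_\alpha)$, we get $w_{1,\alpha}L^{\theta,r}_{1:t_\alpha}\ge A_\alpha$ uniformly off $K$ with high probability. Combining the two regions gives $D^{\mathrm G,w}_{t_\alpha}\ge A_\alpha$, hence $\tau\le t_\alpha$, which is the claim.

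\textbf{Main obstacle.} The main obstacle is making the off-$K$ bound genuinely uniform over a noncompact parameter set, particularly as $r\downarrow0$ or $r\uparrow\infty$ when the data span the change. I would handle this by writing $-\log\phi_{\theta,r}$ over the whole window explicitly as
\[
\tfrac n2\log(2\pi r)+\frac{\sum(X_i-\theta)^2}{2r},
\]
and minimizing exactly in $\theta$ and $r$. This reduces the problem to comparing the empirical variance on windows with the plug-in predictive scores. The ARL specialization takes $w\equiv1$, and the PFA specialization takes $w=\pi$ with $A=1/\alpha$.
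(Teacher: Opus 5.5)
Your proposal is correct and follows essentially the paper's argument: a compact neighbourhood of $P$ is cleared by the pure post-change start via uniform convergence to $\KL(Q\|\cdot)$, and its exterior by the start-$1$ term at time $T_\alpha+d_\alpha$. That exterior step uses the explicit Gaussian likelihood in $(\bar X,\widehat v)$, coercivity of $\KL(P\|\cdot)$, and stability of the predictors and predictive scores when $d_\alpha=o(T_\alpha)$, which is exactly your ``Main obstacle'' whole-window computation. Your prefix-plus-suffix variant also goes through without any joint control with the prefix. Minimizing the suffix log-ratio exactly over $(\theta,r)$, at the suffix sample mean and variance (the latter tending to $v>0$), shows it is at least $-O_{\bbP}(d_\alpha)$ uniformly, provided the start-$1$ predictors stay near $(a,u)$.
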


The optimization in \eqref{eq:full-Gaussian-detector} is a smooth
convex problem after reparametrizing by the Gaussian natural parameters
\[
  \eta_1=\theta/r,
  \qquad
  \eta_2=-1/(2r)<0.
\]
For each start $s$,
\[
  L_{s:t}^{\eta}
  =C_{s:t}
   \exp\{n_{s:t}A(\eta)-\eta_1S_{s:t}-\eta_2Q_{s:t}\},
  \qquad
  Q_{s:t}:=\sum_{i=s}^tX_i^2,
\]
where $C_{s:t}$ is independent of $\eta$ and
\[
  A(\eta)
  =-\frac{\eta_1^2}{4\eta_2}
    +\frac12\log\frac{\pi}{-\eta_2}
\]
is the convex Gaussian log-partition function.  The sum over starts is
therefore convex on $\{\eta_2<0\}$.  Prefix sufficient statistics and
predictive log scores give $O(t)$ objective, gradient, and Hessian
evaluations; a warm-started damped Newton method is consequently practical.

When the variance is common,
\begin{equation}
\label{eq:t-vs-full-info}
  I_t(P,Q)
  =\frac12\log\left(1+\frac{(b-a)^2}{\sigma^2}\right)
  \le
  I_{\mathrm G}(P,Q)
  =\frac{(b-a)^2}{2\sigma^2},
\end{equation}
with strict inequality for a nonzero fixed standardized shift.  The
studentized detector is growth-rate optimal for a mean-null with unrestricted
variance, and it is locally efficient because
$\tfrac12\log(1+x)=x/2+o(x)$ as $x\downarrow0$.  The full-Gaussian detector
uses the prefix to learn the actual variance and attains the
instance-specific denominator $\KL(Q\|P)$.  

\section{Detection delay upper bound for general $\calP$}\label{sec:upper-bdd}
\subsection{Adjusters and nondecreasing block e-processes}

The restarted e-process $t\mapsto M_{s:t}^R$ need not be nondecreasing.  Hence, at time $T+d$, the block $M_{1:T+d}^R$ need not retain the pure pre-change evidence $M_{1:T}^R$.  The fix is to apply the ``adjuster'' idea \citep{choe2026combining} to the running maximum.

\begin{definition}[Adjuster]
\label{def:adjuster}
An increasing, right-continuous function $a:[1,\infty]\to[0,\infty]$ is called an adjuster if
\[
  \int_1^{\infty} \frac{a(x)}{x^2}\,dx\le 1.
\]
We call $a$ growth preserving if, additionally,
\[
  \log a(e^y)=y-o(y)\qquad (y\to\infty).
\]
\end{definition}
The integral condition preserves e-validity after taking a running
maximum, while the growth-preserving condition ensures that this
operation changes log-growth only by a lower-order term.
A convenient example is
\begin{equation}\label{eq:canonical-adjuster}
  a(x)=\frac{2x}{(2+\log x)^2},\qquad x\ge 1.
\end{equation}
Indeed, with $u=2+\log x$,
\[
  \int_1^\infty \frac{2x}{(2+\log x)^2}\frac{dx}{x^2}
  =\int_1^\infty \frac{2}{x(2+\log x)^2}\,dx
  =\int_2^\infty \frac{2}{u^2}\,du=1,
\]
and
\[
  \log a(e^y)=y-2\log(2+y)+\log 2=y-o(y).
\]

The next lemma is the running-maximum special case of the adjuster theory of  \cite{choe2026combining}.  

\begin{lemma}[Adjusted running maximum is an e-process]\label{lem:adjusted-max}
Let $(E_t)_{t\ge0}$ be an e-process under a null law $\mathbb P$, with $E_0=1$, and let
\[
  E_t^*=\max_{0\le u\le t}E_u.
\]
If $a$ is an adjuster, then $(a(E_t^*))_{t\ge0}$ is an e-process under $\mathbb P$.
\end{lemma}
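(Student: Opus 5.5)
The plan is to show that $\E_{\mathbb P}[a(E^*_\tau)]\le 1$ for every stopping time $\tau$. We write $a(E^*_\tau)$ as an integral of indicators of threshold crossings, and then bound each crossing probability with the e-process property applied at a suitably chosen stopping time.

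\textbf{Layer-cake representation.} Since $a$ is increasing and right-continuous on $[1,\infty]$ and $E^*_\tau\ge E_0=1$, we can write
\[
a(E^*_\tau)=a(1)+\int_{(1,\infty)}\ind\{E^*_\tau\ge x\}\,da(x).
\]
Here $da$ is the Lebesgue--Stieltjes measure of $a$. It is not obvious that $a(1)$ can be absorbed, so we will use the integration-by-parts identity
\[
\int_1^\infty \frac{a(x)}{x^2}\,dx=a(1)+\int_{(1,\infty)}\frac{da(x)}{x}.
\]
To get it, substitute $a(x)=a(1)+\int_{(1,x]}da(y)$ and apply Fubini, using $\int_y^\infty x^{-2}dx=1/y$. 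Both terms are nonnegative, so the adjuster condition gives
\[
a(1)+\int_{(1,\infty)}\frac{da(x)}{x}\le 1.
\]

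\textbf{Crossing probabilities.} For each $x>1$, let $\sigma_x=\inf\{t\ge0:E_t\ge x\}$, a stopping time. Then
\[
\{E^*_\tau\ge x\}=\{\sigma_x\le\tau\}\quad\text{on }\{\tau<\infty\}.
\]
For the case $\tau=\infty$, we interpret $E^*_\infty=\sup_t E_t$ and use the event $\{\sigma_x<\infty\}$. Apply the e-process property to the stopping time $\sigma_x\wedge\tau$, or to $\sigma_x\wedge n$ followed by $n\to\infty$ with Fatou's lemma. Markov's inequality then gives
\[
\mathbb P(\sigma_x\le\tau)\le \mathbb P\bigl(E_{\sigma_x\wedge\tau}\ge x\bigr)\le \frac{\E[E_{\sigma_x\wedge\tau}]}{x}\le\frac1x.
\]
This is Ville-type control, derived for e-processes directly from optional stopping at hitting times rather than from a supermartingale.

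\textbf{Conclusion.} Take expectations in the layer-cake representation and use Tonelli, which is justified because the integrand is nonnegative. This gives
\[
\E[a(E^*_\tau)]\le a(1)+\int_{(1,\infty)}\frac{da(x)}{x}\le1.
\]
Adaptedness of $a(E^*_t)$ is clear, since $E^*_t$ is $\calF_t$-measurable and $a$ is monotone, hence Borel. Nonnegativity is immediate.

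\textbf{Main obstacle.} The main obstacle is the technical bookkeeping rather than the idea. Two points need care. First, handling $\tau=\infty$ and the value $a(\infty)$ requires choosing the right convention: either extend the e-process property to possibly infinite stopping times via Fatou's lemma, or restrict to finite $\tau$ as in the definition. Second, the Stieltjes/Fubini identity must be checked at the left endpoint $x=1$, including the case of a jump of $a$ at $1$. Here I would rely on right-continuity and treat $a(1)$ as an atom.
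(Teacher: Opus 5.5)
Your proof is correct and follows essentially the paper's route: you first get the Ville-type bound $\mathbb P(E^*_\tau\ge x)\le 1/x$, which you justify directly from the e-process property at $\sigma_x\wedge\tau$ where the paper simply invokes Ville's inequality for the stopped process, and then you integrate the adjuster against this tail. The only difference is bookkeeping: the paper phrases the last step as calibrating the superuniform p-value $1/E^*_\tau$ with the decreasing function $u\mapsto a(1/u)$, while you use the Lebesgue--Stieltjes layer-cake representation together with the identity $\int_1^\infty a(x)x^{-2}\,dx=a(1)+\int_{(1,\infty)}x^{-1}\,da(x)$.
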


\begin{definition}[Adjusted $s$-delay e-process]\label{def:adjusted-block}
For every $R\in\calP$, $s\le t$, and $s$-delay e-process $M_{s:u}^R$, define the process $ \{\overline M_{s:t}^R\}_{t\in\N}$ such that $ \overline M_{s:t}^R=1,$ for $t=1,\cdots,s-1$ and
\[
  \overline M_{s:t}^R=a(M_{s:t}^{R,*}),
\]
where $M_{s:t}^{R,*}=\max_{s-1\le u\le t}M_{s:u}^R$. Then we call $ \{\overline M_{s:t}^R\}_{t\in\N}$ an adjusted $s$-delay e-process.
\end{definition}
Note that $t\mapsto \overline M_{s:t}^R$ is nondecreasing for each fixed $s\leq t$ and $R$.
\begin{corollary}[Validity and evidence retention]\label{cor:valid-retention}
For every $R\in\calP$ and every start $s$, $(\overline M_{s:t}^R)_{t}$ is an $s$-delay e-process under $\Pinf{R}$.  Moreover, for $s\le u\le t$,
\[
  \overline M_{s:t}^R\ge a(M_{s:u}^R).
\]
In particular, if the true change is at $T$, then at time $T+d$,
\begin{equation}
\label{eq:adj}
     \overline M_{1:T+d}^R\ge a(M_{1:T}^R),
  \qquad
  \overline M_{T+1:T+d}^R\ge a(M_{T+1:T+d}^R).
\end{equation}
\end{corollary}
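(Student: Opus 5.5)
The corollary has two parts, validity and evidence retention, and I will prove them separately. The main tool for validity is \Cref{lem:adjusted-max}, applied conditionally on $\calF_{s-1}$. The retention inequalities follow from monotonicity of the running maximum and of the adjuster $a$.

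\emph{Validity.} Fix $R\in\calP$ and a start time $s$. I will work under $\Pinf{R}$ conditionally on $\calF_{s-1}$, with time re-indexed so that $u=s-1$ plays the role of time $0$. The process $E_k:=M_{s:s-1+k}^R$, $k\ge0$, has $E_0=M_{s:s-1}^R=1$, is nonnegative, and is adapted to $(\calF_{s-1+k})_k$. By \Cref{ass:eprocess}, for every stopping time $\sigma$ of this shifted filtration (equivalently, every stopping time $\tau\ge s-1$ of the original one),
\[
\Einf{R}[E_\sigma\mid\calF_{s-1}]\le1.
\]
So $(E_k)$ is an e-process under the regular conditional law given $\calF_{s-1}$. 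Under i.i.d.\ sampling, the future $(X_s,X_{s+1},\dots)$ is independent of $\calF_{s-1}$, but conditioning is needed anyway because $M^R_{s:\cdot}$ may depend on the past only through its starting value $1$.

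I then apply \Cref{lem:adjusted-max} conditionally. Since $E_k^*=M_{s:s-1+k}^{R,*}$, it gives $\Einf{R}[a(E^*_\sigma)\mid\calF_{s-1}]\le1$. The delicate point is the case of a general stopping time $\tau$ that may be smaller than $s-1$. On $\{\tau\le s-1\}\in\calF_{s-1}$, the definition gives $\overline M_{s:\tau}^R=1$. The only subtlety is at $t=s-1$: there I need $a(1)\le1$, which holds because $a$ is increasing and $\int_1^\infty a(x)x^{-2}dx\le1$ forces $a(1)\le1$, or else I adopt the convention $\overline M_{s:s-1}^R=1$. On the complementary event, I replace $\tau$ by $\tau\vee(s-1)$, which is a valid stopping time in the shifted filtration. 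Combining the two events yields $\Einf{R}[\overline M_{s:\tau}^R\mid\calF_{s-1}]\le1$. Nonnegativity and adaptedness are immediate, since $a\ge0$ and a running maximum of adapted variables is adapted.

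\emph{Retention.} For $s\le u\le t$ we have $M^{R,*}_{s:t}\ge M^{R,*}_{s:u}\ge M^R_{s:u}$, and $a$ is increasing, so $\overline M^R_{s:t}\ge a(M^R_{s:u})$. Taking $s=1$, $u=T$, $t=T+d$ gives the first part of \eqref{eq:adj}, and taking $s=T+1$, $u=t=T+d$ gives the second. The main obstacle is the conditional application of \Cref{lem:adjusted-max}, in particular making the time shift and the edge case at $t=s-1$ rigorous.
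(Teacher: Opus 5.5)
Your proposal is correct and follows the paper's proof. Validity comes from applying \Cref{lem:adjusted-max} conditionally on $\calF_{s-1}$ to the shifted process $k\mapsto M^R_{s:s-1+k}$, and retention comes from $M^{R,*}_{s:t}\ge M^R_{s:u}$ together with the monotonicity of $a$. Your remark that you "need" $a(1)\le 1$ is true but unnecessary: on the $\calF_{s-1}$-measurable event $\{\tau\le s-1\}$ you have $\overline M^R_{s:\tau}=1$, and multiplying the conditional bound $\Einf{R}[a(M^{R,*}_{s:\tau\vee(s-1)})\mid\calF_{s-1}]\le 1$ by $\ind\{\tau\ge s\}$ already handles the edge case.
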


\emph{Proof.} See \Cref{app:proof-local-witnesses}.

The first inequality in \eqref{eq:adj} retains pre-change evidence against candidates
far from the true pre-change law, while the second retains evidence
from the pure post-change block against nearby candidates.

\subsection{Verifying uniform growth without weak compactness of $\calP$}\label{sec:local-witnesses}

\begin{lemma}[Local KL neighborhoods]\label{lem:neighborhoods}
Let $P,Q\in\calP$ and suppose $I^*=\KL(Q\|P)\in(0,\infty)$.  For every $\eta\in(0,I^*)$, the set
\[
  G_\eta=\{R\in\calP:\KL(Q\|R)>I^*-\eta\}
\]
is a relatively weakly open neighborhood of $P$.  Moreover, if $U\subseteq\calP$ is any relatively weakly open neighborhood of $P$, then
\[
  c_U\defeq \inf_{R\in\calP\setminus U}\KL(P\|R)>0.
\]
Consequently, if $U_\eta$ is any relatively weakly open set satisfying $P\in U_\eta\subseteq G_\eta$, then
\begin{align}
  \inf_{R\notin U_\eta}\KL(P\|R)&\ge c_{U_\eta}>0,\label{eq:separate-P}\\
  \inf_{R\in U_\eta}\KL(Q\|R)&\ge I^*-\eta.\label{eq:local-Q}
\end{align}
\end{lemma}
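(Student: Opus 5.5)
The plan is to prove the three parts in order: openness of $G_\eta$ from lower semicontinuity of $\KL$, positivity of $c_U$ from Pinsker's inequality, and then the two displayed bounds directly.

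\textbf{Step 1: $G_\eta$ is a relatively weakly open neighborhood of $P$.} I will use the Donsker--Varadhan variational formula,
\[
  \KL(Q\|R)=\sup_{f\in C_b(\calX)}\Bigl\{\E_Q f-\log \E_R e^{f}\Bigr\}.
\]
For fixed $Q$ and fixed $f$, the map $R\mapsto \E_Q f-\log\E_R e^f$ is weakly continuous, because $e^f$ is bounded, continuous and strictly positive. A supremum of continuous functions is lower semicontinuous, so $R\mapsto\KL(Q\|R)$ is weakly lower semicontinuous on $\calX$'s probability measures. Since $\calX$ is Polish, $C_b$ suffices in the formula.

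Lower semicontinuity makes every strict superlevel set open. Hence $\{R:\KL(Q\|R)>I^*-\eta\}$ is weakly open, and its trace $G_\eta$ on $\calP$ is relatively weakly open. It contains $P$ because $\KL(Q\|P)=I^*>I^*-\eta$.

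\textbf{Step 2: $c_U>0$.} This is the only step needing a real idea: a weak neighborhood must contain a total-variation ball. Since $U$ is relatively weakly open and contains $P$, there is a basic neighborhood inside it. That is, there are $f_1,\dots,f_k\in C_b(\calX)$ and $\epsilon>0$ such that
\[
  \bigl\{R\in\calP:\ |\E_R f_i-\E_P f_i|<\epsilon\ \text{ for all } i\le k\bigr\}\subseteq U.
\]
Let $\delta=\epsilon/(2\max_i\|f_i\|_\infty+1)$. If $\|R-P\|_{\mathrm{TV}}<\delta$, then $|\E_R f_i-\E_P f_i|\le 2\|f_i\|_\infty\delta<\epsilon$ for every $i$, so $R\in U$.

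Contrapositively, every $R\in\calP\setminus U$ satisfies $\|R-P\|_{\mathrm{TV}}\ge\delta$. Pinsker's inequality $\|P-R\|_{\mathrm{TV}}\le\sqrt{\KL(P\|R)/2}$ then gives $\KL(P\|R)\ge 2\delta^2$. Hence $c_U\ge 2\delta^2>0$, with the convention that $c_U=+\infty$ if $U=\calP$.

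\textbf{Step 3: the consequences.} Take $U=U_\eta$ in Step 2; this gives \eqref{eq:separate-P} immediately. For \eqref{eq:local-Q}, every $R\in U_\eta\subseteq G_\eta$ satisfies $\KL(Q\|R)>I^*-\eta$ by the definition of $G_\eta$. Taking the infimum over $R\in U_\eta$ yields the bound $\ge I^*-\eta$.
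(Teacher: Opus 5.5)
Your proof is correct and follows the same route as the paper. Openness of $G_\eta$ comes from weak lower semicontinuity of $R\mapsto\KL(Q\|R)$ via the Donsker--Varadhan formula, and $c_U>0$ comes from Pinsker's inequality together with the fact that total-variation closeness implies weak closeness. The only difference is in Step 2: the paper argues by contradiction with a sequence $R_n\in\calP\setminus U$ satisfying $\KL(P\|R_n)\to0$, while you place a total-variation ball inside a basic weak neighborhood, which gives the explicit bound $c_U\ge 2\delta^2$.
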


\emph{Proof.} See \Cref{app:proof-local-witnesses}.

Thus, candidates near $P$ can be rejected after the change at rate
nearly $I^*$, while candidates outside a neighborhood of $P$ are
uniformly separated from $P$ and can be rejected using the long
pre-change prefix.

We start with the following definition.
Let $K\subseteq\calP$, $ \mathcal A\subseteq K^c$, and define
\[
    \Phi_K(S)
    :=
    \inf_{R\in K}\KL(S\|R),
    \qquad S\in\calP.
\]
\begin{definition}[Simultaneous REGROW regularity]
\label{def:simultaneous-regrow}
We say that $K$ is \emph{simultaneously REGROW-regular over
$\mathcal A$} if there exists a single nondecreasing e-process $\{E_n^{K,\mathcal A}\}_{n\ge 0}$
which is valid under every $R\in K$, and satisfies, simultaneously for every
$S\in\mathcal A$,
\[
    \liminf_{n\to\infty}
    \frac{1}{n}\log E_n^{K,\mathcal A}
    \ge
    \Phi_K(S),
    \qquad
    S^\infty\text{-a.s.}
\]
\end{definition}

The essential requirement is that the same e-process works for all
alternatives $S\in\mathcal A$; it may not be chosen after the true
alternative is known.

Next, we show that weak compactness of $K$ is a sufficient condition for this property by Theorem 2 of \cite{ram2026power}, along with \Cref{lem:growth-preserved} to make that e-process non-decreasing.  Weak compactness is not claimed to be necessary; in their power-one theorem, the underlying local requirement is positivity and weak lower semicontinuity of $\Phi_K$ at the alternatives of interest.

\begin{lemma}[Weakly compact classes are simultaneously REGROW-regular]
\label{lem:weakly-compact-regrow}
Let $\mathcal X$ be a Polish space, and let
$ K\subseteq\mathcal M_1(\mathcal X)$
be a nonempty weakly compact class of probability measures.
Then $K$ is {simultaneously REGROW-regular over any
$\mathcal A \subseteq K^c$}. 
\end{lemma}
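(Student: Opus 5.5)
The plan is to take the REGROW e-process that \cite{ram2026power} construct for the weakly compact null $K$, and then apply the adjusted running maximum to make it nondecreasing.

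\textbf{Step 1: invoke the existence result.} Theorem~2 of \cite{ram2026power} applies to a nonempty weakly compact class $K\subseteq\mathcal M_1(\mathcal X)$ on a Polish space. It supplies a single e-process $(E_n)_{n\ge0}$ with $E_0=1$ that is valid under $R^\infty$ for every $R\in K$. For every alternative $S$ with $S^\infty$-a.s. growth, it also satisfies
\[
\liminf_{n\to\infty}\frac1n\log E_n\ge\Phi_K(S).
\]
The essential point is that this construction does not depend on $S$. It is built from $K$ alone, for example through a mixture over a countable dense set of plug-in or portfolio strategies together with reverse information projections onto $K$. The same process therefore serves every $S\in\mathcal A\subseteq K^c$ simultaneously. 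If $\Phi_K(S)=0$ or $\Phi_K(S)=\infty$, the required bound is either trivial or is covered by the same theorem.

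\textbf{Step 2: make the process monotone.} Fix the canonical adjuster $a$ of \eqref{eq:canonical-adjuster} and set
\[
E_n^{K,\mathcal A}:=a\Bigl(\max_{0\le u\le n}E_u\Bigr).
\]
By \Cref{lem:adjusted-max}, this is an e-process under each $R\in K$, since $E$ is one under each such law. It is nondecreasing because $a$ is increasing and the running maximum is nondecreasing.

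\textbf{Step 3: growth is preserved.} This is \Cref{lem:growth-preserved}, and the argument runs as follows. Write $y_n=\log\max_{u\le n}E_u\ge\log E_n$. On the $S^\infty$-full event where $\liminf_n \frac1n\log E_n\ge\Phi_K(S)>0$, we have $y_n\to\infty$. Then
\[
\log a(e^{y_n})=y_n-o(y_n)=y_n\bigl(1-o(1)\bigr),
\]
so
\[
\liminf_n\frac1n\log E_n^{K,\mathcal A}\ge\liminf_n\frac{y_n}{n}\bigl(1-o(1)\bigr)\ge\Phi_K(S).
\]
When $\Phi_K(S)=\infty$, the same monotonicity gives an infinite rate.

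\textbf{Main obstacle.} The delicate step is Step 1: confirming that the cited theorem's e-process is uniform over alternatives, meaning it is not tuned to a particular $S$, and that its guarantee is stated almost surely for every $S$ with $\Phi_K(S)>0$ rather than only in expectation. If the theorem only gives growth for each $S$ separately, I would first check that its construction is a fixed universal mixture over a countable family. Failing that, I would rebuild it that way: take a countable weakly dense set of predictable strategies, mix them with summable weights, and observe that mixing costs only $O(1)$ in log-wealth. That keeps validity under $K$ and recovers the rate for each $S$ at once.
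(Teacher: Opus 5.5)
Your proposal is correct and follows the paper's proof step for step: take a single $S$-independent REGROW e-process from Theorem~2 of \cite{ram2026power}, make it nondecreasing with the adjusted running maximum (\Cref{lem:adjusted-max}), and transfer the rate with \Cref{lem:growth-preserved}. The one small difference is the case $\Phi_K(S)=0$: the paper rules it out for $S\notin K$ (via Pinsker plus weak compactness) so that \Cref{lem:growth-preserved} applies with $r>0$, whereas you call it trivial---true for your canonical adjuster since $a\ge a(1)=1/2$ on $[1,\infty)$, but not for an adjuster with $a(1)=0$, so the paper's positivity argument is the more robust route.
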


\emph{Proof.} See \Cref{app:proof-local-witnesses}.

\begin{definition}[Countable local REGROW witness basis]\label{def:witness-basis}
We say that $\calP$ has a countable local REGROW witness basis if there exists a countable family $\mathscr B=\{B_j:j\ge 1\}$ of relatively weakly open subsets of $\calP$ such that:
\begin{enumerate}[label=\textup{(\roman*)}]
\item
For every $P\in\calP$ and every relatively weakly open
neighborhood $G$ of $P$, there exists $j\ge1$ such that
\[
    P\in B_j
    \subseteq
    \overline B_j
    \subseteq
    G,
\]
where $\overline B_j$ denotes relative weak closure in $\calP$.

\item
For every $j\ge1$, the class $\overline B_j$ is weakly compact. 
\item
For every $j\ge1$, the exterior class
$B_j^c:=\calP\setminus B_j$
is simultaneously REGROW-regular over $B_j$. 
\end{enumerate}
Moreover, $\mathscr B$ is called the countable local REGROW witness basis of $\calP$.
\end{definition}

Indeed, a weakly compact class is a compact metric space under the
relative weak topology. It therefore has a countable basis with
compact closures, and the complement of each open basis element is
also weakly compact (and hence, simultaneously REGROW-regular). Thus global weak compactness is a convenient
sufficient condition for \Cref{def:witness-basis}, as stated formally below.

\begin{proposition}[Weakly compact classes admit REGROW witness]
\label{lem:compact-witness-basis}
Let $\mathcal{X}$ be a Polish space, $ \mathcal{B}$ be its Borel $\sigma$-field, and let $\mathcal{P}$ be a weakly compact class of probability measures on $\mathcal{X}$. Then $\mathcal P$ admits a countable
universal local REGROW witness basis in the sense of
\Cref{def:witness-basis}.
\end{proposition}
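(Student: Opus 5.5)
The plan is to verify the three conditions of \Cref{def:witness-basis} directly from the metric structure of the weak topology, using \Cref{lem:weakly-compact-regrow} for condition (iii).

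\textbf{Step 1: a countable basis.} Since $\mathcal X$ is Polish, $\mathcal M_1(\mathcal X)$ with the weak topology is metrizable, for instance by the L\'evy--Prokhorov or bounded-Lipschitz metric $d$, and is in fact Polish. Hence $\calP$, being weakly compact, is a compact metric space under $d$. A compact metric space is totally bounded, so for each $k\ge1$ there is a finite $1/k$-net $\{P_{k,1},\dots,P_{k,N_k}\}\subseteq\calP$. Take $\mathscr B$ to be the countable family of relative open balls
\[
  B_{k,i}=\{R\in\calP:d(R,P_{k,i})<1/k\},
\]
reindexed as $\{B_j\}$.

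\textbf{Step 2: condition (i).} Given $P\in\calP$ and a relatively open $G\ni P$, choose $r>0$ with $\{R\in\calP:d(R,P)<r\}\subseteq G$. Pick $k$ with $3/k<r$, and then $i$ with $d(P,P_{k,i})<1/k$, so $P\in B_{k,i}$. The relative closure of $B_{k,i}$ lies in the closed ball $\{d(\cdot,P_{k,i})\le 1/k\}$. For any $R$ in that closed ball,
\[
  d(R,P)\le d(R,P_{k,i})+d(P_{k,i},P)\le 2/k<r,
\]
so $\overline{B}_{k,i}\subseteq G$.

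\textbf{Step 3: condition (ii).} $\overline B_j$ is a closed subset of the compact space $\calP$, hence weakly compact. It is also closed in $\mathcal M_1(\mathcal X)$, since compact subsets of a Hausdorff space are closed.

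\textbf{Step 4: condition (iii), the only substantive step.} The exterior class $B_j^c=\calP\setminus B_j$ is relatively closed in the compact set $\calP$, hence weakly compact as a subset of $\mathcal M_1(\mathcal X)$. It is nonempty, assuming $B_j\neq\calP$; the degenerate case $B_j=\calP$ can simply be dropped from the basis, or handled by taking the trivial e-process, since then the condition concerns an empty null. Because $B_j\subseteq (B_j^c)^c$, \Cref{lem:weakly-compact-regrow} applied with $K=B_j^c$ and $\mathcal A=B_j$ gives a single nondecreasing e-process, valid under every $R\in B_j^c$, whose growth rate is at least $\Phi_{B_j^c}(S)$ simultaneously for all $S\in B_j$.

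\textbf{Main obstacle.} The only care point is the topological bookkeeping: relative closures within $\calP$ coincide with closures in $\mathcal M_1(\mathcal X)$ because $\calP$ is closed, and the metrizability of the weak topology is what yields a countable basis. These issues are routine, so the substantive content rests entirely on \Cref{lem:weakly-compact-regrow}.
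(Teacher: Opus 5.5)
Your proof is correct and follows the same route as the paper. Both arguments metrize the weak topology, take a countable basis whose closures fit inside any prescribed neighborhood, observe that closures and complements of basis elements are closed in the compact set $\mathcal P$ and hence weakly compact, and apply \Cref{lem:weakly-compact-regrow} with $K=B_j^c$ and $\mathcal A=B_j$.

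The differences are cosmetic. You build the basis explicitly from finite $1/k$-nets and a triangle-inequality check, where the paper invokes second countability and regularity of compact metric spaces. You also flag the degenerate case $B_j=\mathcal P$, which the paper passes over. There, dropping such $B_j$ is the right fix when $\mathcal P$ has at least two points: with an empty null, $\Phi_{B_j^c}\equiv+\infty$, so a constant ``trivial'' process would not meet the growth requirement.
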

\emph{Proof.} See \Cref{app:proof-local-witnesses}.

\begin{proposition}[Gaussian location family and REGROW witness]
\label{prop:gaussian-local-witness}
Let $\mathcal{P} = \{N(\theta, 1) : \theta \in \mathbb{R}\}$ be the unit-variance Gaussian location family. Then $\mathcal{P}$ is not weakly compact, but it admits a countable local REGROW witness basis in the sense of Definition \ref{def:witness-basis}.
\end{proposition}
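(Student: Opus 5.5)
We will prove the two claims separately, working throughout with the parametrization $\theta\mapsto N(\theta,1)$.

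\emph{Not weakly compact.} We will exhibit a sequence in $\calP$ with no weakly convergent subsequence whose limit lies in $\calP$. Take $N(n,1)$ for $n\in\N$. For every compact $C\subset\R$ we have $N(n,1)(C)\to 0$, so the sequence is not tight. By Prokhorov's theorem on the Polish space $\R$, no subsequence converges weakly to a probability measure, and in particular not to any element of $\calP$.

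\emph{Witness basis.} The first step is a topological reduction. The map $\theta\mapsto N(\theta,1)$ is a homeomorphism from $\R$ onto $\calP$ with the relative weak topology. Continuity follows from dominated convergence applied to $\int f\,dN(\theta,1)$ for bounded continuous $f$. For continuity of the inverse, weak convergence $N(\theta_k,1)\Rightarrow N(\theta,1)$ forces tightness, hence boundedness of $(\theta_k)$. Any subsequential limit $\theta'$ then satisfies $N(\theta',1)=N(\theta,1)$, so $\theta'=\theta$.

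With this identification, define the basis by
\[
B_{q,r}=\{N(\theta,1):|\theta-q|<r\},\qquad q\in\mathbb Q,\ r\in\mathbb Q_{>0}.
\]
This family is countable. Condition (i) holds because rational intervals $(q-r,q+r)$ with $[q-r,q+r]$ contained in a given open set form a neighbourhood basis in $\R$. The relative closure $\overline B_{q,r}$ is the image of $[q-r,q+r]$, a compact set under a continuous map, so it is weakly compact and (ii) holds.

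For (iii), write $\ell=q-r$ and $u=q+r$. The exterior is $B^c=\{N(\theta',1):\theta'\le \ell\}\cup\{N(\theta',1):\theta'\ge u\}$. For $S=N(\theta,1)$ with $\theta\in(\ell,u)$ we have
\[
\Phi_{B^c}(S)=\min\{(\theta-\ell)^2,(u-\theta)^2\}/2.
\]
We handle each half separately with a one-sided mixture. Let
\[
E^-_n=\int_\ell^\infty \prod_{i\le n}\frac{\phi_{\mu,1}(X_i)}{\phi_{\ell,1}(X_i)}\,g(\mu)\,d\mu
\]
for a positive density $g$ on $(\ell,\infty)$, and define $E^+_n$ symmetrically with null point $u$ and prior supported on $(-\infty,u)$. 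For $\mu\ge\ell$ and $\theta'\le\ell$, the likelihood ratio equals $\exp\{(\mu-\ell)(S_n-n\ell)-n(\mu-\ell)^2/2\}$. Under $N(\theta',1)^\infty$ this is a nonnegative supermartingale with initial value $1$, because its one-step conditional mean is
\[
\exp\{(\mu-\ell)(\theta'-\ell)\}\le 1 .
\]
By Fubini, the mixture $E^-$ is then a test supermartingale under every $N(\theta',1)^\infty$ with $\theta'\le\ell$. The symmetric argument shows that $E^+$ is a test supermartingale under every $N(\theta',1)^\infty$ with $\theta'\ge u$.

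Now set $E_n=\min(E^-_n,E^+_n)$. Under any $R\in B^c$, $E_n$ is bounded above by whichever of $E^\pm_n$ is valid for $R$, so optional stopping gives $\E_R[E_\tau]\le 1$ for every stopping time $\tau$. Hence $E$ is an e-process under all of $B^c$ simultaneously.

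For growth under $S=N(\theta,1)$ with $\theta\in(\ell,u)$, we use a Laplace-method lower bound. Restrict the integral defining $E^-_n$ to a small interval around $\mu=\theta$ and apply the strong law $\bar X_n\to\theta$. This gives, almost surely,
\[
\liminf_n \tfrac1n\log E^-_n\ge (\theta-\ell)^2/2 .
\]
Symmetrically, $\liminf_n \tfrac1n\log E^+_n\ge (u-\theta)^2/2$. The growth rate of a minimum is at least the minimum of the growth rates, so $\liminf_n\tfrac1n\log E_n\ge\Phi_{B^c}(S)$ almost surely, simultaneously for every $S\in B$.

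Finally, \Cref{def:simultaneous-regrow} asks for a nondecreasing e-process. We replace $E_n$ by $a(\max_{k\le n}E_k)$ with the growth-preserving adjuster \eqref{eq:canonical-adjuster}. \Cref{lem:adjusted-max} keeps this an e-process under each $R\in B^c$, and the property $\log a(e^y)=y-o(y)$ preserves the growth rate.

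The main obstacle is the validity of $E$ under the union of two half-line nulls. A two-sided mixture against a single null point would not be valid for this composite null, which is why we use the minimum of two one-sided supermartingales. The remaining delicate point is making the Laplace lower bound rigorous almost surely, by controlling the integrand uniformly on a shrinking $\mu$-neighbourhood of $\theta$.
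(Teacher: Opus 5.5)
Your proof is correct. Its overall structure matches the paper's: the same Prokhorov argument with $N(n,1)$, the same homeomorphism $\theta\mapsto N(\theta,1)$ reducing (i)--(ii) to rational intervals, and the same closing step via the adjuster and \Cref{lem:adjusted-max}.

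The genuine difference is the exterior witness in (iii).
\begin{itemize}
\item \textbf{Paper: a \emph{mixture of minima}.} For each rational $h$ in the interval it takes $\min\{L^{(q,h)},L^{(r,h)}\}$, two likelihood ratios sharing the \emph{same} alternative $h$. It then mixes over a dense enumeration $\{h_m\}$ with weights $2^{-m}$. Each component grows at rate $I_{q,r}(\theta_0)-\tfrac12(\theta_0-h)^2$ by the strong law alone, and density of $\{h_m\}$ recovers the full rate. Only countably many SLLN events and Tonelli are needed.
\item \textbf{You: a \emph{minimum of mixtures}.} Each one-sided continuous-prior mixture $E^\pm$ is a genuine test supermartingale for its entire half-line null, rather than merely an e-process. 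Each also attains its own one-sided rate with only logarithmic regret. The price is a Laplace-type lower bound instead of a discrete union.
\end{itemize}

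That Laplace step is easier than you suggest. Write
\[
E^-_n=e^{n(\bar X_n-\ell)^2/2}\int_\ell^\infty e^{-n(\bar X_n-\mu)^2/2}g(\mu)\,d\mu .
\]
On the event $\bar X_n\to\theta$, fix $\epsilon<\min\{\theta-\ell,u-\theta\}$. Eventually $|\bar X_n-\theta|<\epsilon$, which gives
\[
\tfrac1n\log E^-_n\ge \tfrac12(\bar X_n-\ell)^2-2\epsilon^2+\tfrac1n\log\int_{\theta-\epsilon}^{\theta+\epsilon}g .
\]
Hence $\liminf_n\frac1n\log E^-_n\ge(\theta-\ell)^2/2-2\epsilon^2$, and you can let $\epsilon\downarrow0$ along a countable sequence. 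This needs no shrinking neighbourhood and no continuity of $g$, only $g>0$ on $(\ell,\infty)$.
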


\emph{Proof.} See \Cref{app:proof-local-witnesses}.

The above result shows that \Cref{def:witness-basis} is strictly more flexible
than global weak compactness.
The following theorem shows how local REGROW witnesses help in achieving a uniform growth rate.

\begin{theorem}[Universal local witnesses imply uniform growth]
\label{thm:witness-implies-growth}
Assume that $\calP$ has a countable local REGROW
witness basis $\mathscr B=\{B_j:j\ge1\}.$
Then for each $R\in\calP$ and $s\in\N$, there exists an $s$-delay e-process $\{\overline M_{s:t}^{R}\}_t$ with the following properties. For every fixed $R$ and $s$, the map
$t\mapsto\overline M_{s:t}^{R}$ is nondecreasing for $t\ge s$.
Moreover, for any $P,Q\in\calP$ such that
$I^*:=\KL(Q\|P)\in(0,\infty),$
and any fixed $\eta_{\mathrm{in}}\in(0,I^*)$,  there exists
$j_*=j_*(P,Q,\eta_{\mathrm{in}})\in\N$ such that
we have
$ P\in B_{j_*},
    \Phi_{\overline B_{j_*}}(Q)
    \ge
    I^*-\frac{\eta_{\mathrm{in}}}{2},$
and
$c_{\eta_{\mathrm{in}}}
    :=
    \Phi_{B_{j_*}^c}(P)
    >0,$
and for every
$\eta_{\mathrm{out}}\in(0,c_{\eta_{\mathrm{in}}})$,
\begin{equation}
\label{eq:pre-growth}
    \mathbb P_P^\infty
    \left(
        \inf_{R\notin B_{j_*}}
        \log\overline M_{1:n}^{R}
        \ge
        n\bigl(c_{\eta_{\mathrm{in}}}
        -\eta_{\mathrm{out}}\bigr)
    \right)
    \longrightarrow1,
\end{equation}
and
\begin{equation}
\label{eq:post-growth}
    \mathbb P_Q^\infty
    \left(
        \inf_{R\in B_{j_*}}
        \log\overline M_{1:n}^{R}
        \ge
        n\bigl(I^*-\eta_{\mathrm{in}}\bigr)
    \right)
    \longrightarrow1.
\end{equation}
\end{theorem}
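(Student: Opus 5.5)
The construction aggregates, for each candidate $R$, the witness e-processes supplied by \Cref{def:witness-basis}(iii). The natural candidate is a countable mixture. Fix weights $\gamma_j>0$ with $\sum_j\gamma_j\le1$. For each $j$, let $E^{(j)}_n$ be the nondecreasing e-process from part (iii), valid under every law in $B_j^c$ and growing at rate $\Phi_{B_j^c}(S)$ under every $S\in B_j$. Part (ii) together with \Cref{lem:weakly-compact-regrow} gives a nondecreasing e-process $F^{(j)}_n$, valid under every law in $\overline B_j$ and growing at rate $\Phi_{\overline B_j}(S)$ under every $S\notin\overline B_j$.

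For $R\in\calP$ and start $s$, we apply both processes to the shifted sample $X_s,X_{s+1},\ldots$ and set
\[
\overline M^R_{s:t}:=\sum_{j:\,R\notin B_j}\gamma_jE^{(j)}_{t-s+1}+\sum_{j:\,R\in\overline B_j}\gamma_jF^{(j)}_{t-s+1},
\]
with the value $1$ before time $s$. To keep the process equal to $1$ at $t=s-1$, we either normalize the weights or accept an initial value at most $2$ and halve.

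\textbf{Validity.} Under $\Pinf R$, each summand is an e-process for the shifted data. Because the data are i.i.d., conditioning on $\calF_{s-1}$ is harmless. A countable convex combination of e-processes with total weight at most $1$ is again an e-process, by Fubini applied to $\E[\cdot_\tau]$ for any stopping time $\tau$. Each summand is nondecreasing, so the sum is too. This establishes the $s$-delay and monotonicity claims.

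\textbf{Choice of $j_*$.} Given $P,Q,\eta_{\mathrm{in}}$, \Cref{lem:neighborhoods} shows that $G_{\eta_{\mathrm{in}}/2}$ is a relatively weakly open neighborhood of $P$. Part (i) then yields $j_*$ with $P\in B_{j_*}\subseteq\overline B_{j_*}\subseteq G_{\eta_{\mathrm{in}}/2}$. It follows that $\Phi_{\overline B_{j_*}}(Q)\ge I^*-\eta_{\mathrm{in}}/2$. Applying \Cref{lem:neighborhoods} to the open neighborhood $B_{j_*}$ gives $c_{\eta_{\mathrm{in}}}=\Phi_{B_{j_*}^c}(P)>0$.

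\textbf{Growth bound \eqref{eq:pre-growth}.} Every $R\notin B_{j_*}$ has its $j_*$ summand included, so $\overline M^R_{1:n}\ge\gamma_{j_*}E^{(j_*)}_n$ uniformly in such $R$. Since $P\in B_{j_*}$, the REGROW property gives $\liminf_n\frac1n\log E^{(j_*)}_n\ge c_{\eta_{\mathrm{in}}}$ $P^\infty$-a.s. The additive $\log\gamma_{j_*}$ is $o(n)$, so the probability tends to one.

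\textbf{Growth bound \eqref{eq:post-growth}.} Every $R\in B_{j_*}\subseteq\overline B_{j_*}$ has its $F^{(j_*)}$ summand included, so $\overline M^R_{1:n}\ge\gamma_{j_*}F^{(j_*)}_n$ uniformly over such $R$. The obstacle here is that REGROW growth of $F^{(j_*)}$ is guaranteed only at alternatives $S\notin\overline B_{j_*}$, so we must check that $Q\notin\overline B_{j_*}$. This holds because $\KL(Q\|Q)=0<I^*-\eta_{\mathrm{in}}/2$, so $Q\notin G_{\eta_{\mathrm{in}}/2}\supseteq\overline B_{j_*}$.

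The REGROW property then gives $\liminf_n\frac1n\log F^{(j_*)}_n\ge\Phi_{\overline B_{j_*}}(Q)\ge I^*-\eta_{\mathrm{in}}/2$ $Q^\infty$-a.s. Absorbing $\log\gamma_{j_*}$ and the gap $\eta_{\mathrm{in}}/2$ yields \eqref{eq:post-growth}.

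The main subtlety is that the construction must be universal: $\overline M^R$ may not depend on $(P,Q)$. This is exactly why the mixture runs over all $j$ rather than a single chosen index. It is also why countability of the basis is essential, since it lets the weights be summable.
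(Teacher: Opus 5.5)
Your proposal is correct and follows essentially the same route as the paper: a fixed countable $\gamma_j$-mixture of the inner witnesses (from condition (ii) via \Cref{lem:weakly-compact-regrow}) and the exterior witnesses (from condition (iii)), with $j_*$ chosen through \Cref{lem:neighborhoods} and condition (i), and both growth bounds read off from the single $j_*$ summand after checking $Q\notin\overline B_{j_*}$. The differences are cosmetic: the paper assigns exactly one witness per index according to whether $R\in B_j$ and uses the baseline $(1+E)/2$, while you include both witnesses for boundary candidates $R\in\overline B_j\setminus B_j$, so the halving of the weights you mention is genuinely needed for validity.
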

\emph{Proof.} See \Cref{app:proof-local-witnesses}.

\begin{remark}[What is really assumed]
The detector itself is defined from point-null e-processes.  The witness augmentation is a way to prove uniform delay bounds for the infimum over the point-nulls.  The genuinely needed condition is not compactness of all of $\calP$, but rather the existence of local witnesses for the two classes that matter near a change: a compact neighborhood of $P$, which handles post-change evidence uniformly near $P$, and the exterior of that neighborhood, which handles pre-change anchoring uniformly away from $P$.
\end{remark}

\subsection{A generic upper bound}

We state the result once for a generic weight sequence $w_s$.  The ARL detector corresponds to $w_s\equiv1$ and threshold $A$.  The PFA detector corresponds to $w_s=\pi_s$ with $\sum_{s=1}^\infty\pi_s=1$ and threshold $A_\alpha=1/\alpha$.
Let $A_\alpha\to\infty$ and set $L_\alpha=\log A_\alpha$.  Let $T_\alpha$ be the changepoint, and let $w_s=w_{s,\alpha}\in(0,1]$ be deterministic weights.  Define
\begin{equation}\label{eq:Balpha}
  B_\alpha=L_\alpha-\log w_{T_\alpha+1}.
\end{equation}
For the ARL detector, $w_s\equiv1$, so $B_\alpha=L_\alpha$.  For the PFA detector with weights \eqref{eq:pi-weights}, $A_\alpha=1/\alpha$ and
\[
  B_\alpha=\log(1/\alpha)+\log T_\alpha+2\log\log(eT_\alpha)+O(1).
\]

For weights $w_s$, define 
\begin{equation}\label{eq:generic-adjusted-detector}
  D_t^w=\inf_{R\in\calP}\sum_{s=1}^t w_s\overline M_{s:t}^R,
  \qquad
  \tau_\alpha^w=\inf\{t\ge1:D_t^w\ge A_\alpha\}.
\end{equation}

\begin{theorem}[Late-change delay bound]
\label{thm:delay-upper-generic}
Assume that $\calP$ has a countable local REGROW witness basis.  Consider a sequence
$\alpha\downarrow 0$, with $A_\alpha=1/\alpha, L_\alpha=\log(A_\alpha)$.  Fix $\eps>0$ and set
\[
  d_\alpha=\left\lceil (1+\eps)\frac{B_\alpha}{I^*}\right\rceil,
  \qquad
  B_\alpha=L_\alpha-\log w_{T_\alpha+1}.
\]
Suppose there exists $\eta_{\mathrm{in}}\in(0,I^*)$ and, for the corresponding
$c_{\eta_{\mathrm{in}}}>0$ furnished by
\Cref{thm:witness-implies-growth}, there exists an
$\eta_{\mathrm{out}}\in(0,c_{\eta_{\mathrm{in}}})$ such that
\begin{align}
  T_\alpha
  \bigl(c_{\eta_{\mathrm{in}}}-\eta_{\mathrm{out}}\bigr)
  +\log w_1
  &\ge L_\alpha
  \quad\text{eventually},
  \label{eq:anchor-condition}\\
  d_\alpha(I^*-\eta_{\mathrm{in}})
  +\log w_{T_\alpha+1}
  &\ge L_\alpha
  \quad\text{eventually}.
  \label{eq:post-condition}
\end{align}
Then, for each $s$ and $R\in\calP$, there exists an $s$-delay e-process
$\{\overline M_{s:t}^R\}_{t\geq s-1}$, such that the corresponding stopping time
$\tau_\alpha^w$ in \eqref{eq:generic-adjusted-detector} satisfies
\[
  \mathbb P_{P,Q}^{T_\alpha}
  \bigl(\tau_\alpha^w\le T_\alpha+d_\alpha\bigr)\to1.
\]
\end{theorem}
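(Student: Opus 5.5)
The plan is to take the e-processes $\overline M_{s:t}^R$ to be exactly those furnished by \Cref{thm:witness-implies-growth}: nondecreasing in $t$, $s$-delay valid, with the growth guarantees \eqref{eq:pre-growth} and \eqref{eq:post-growth} for the index $j_*=j_*(P,Q,\eta_{\mathrm{in}})$. These are built from $s$-restarted copies of the same construction, so the block starting at $s=T_\alpha+1$ evaluated at time $T_\alpha+d$ has the law under $\mathbb P_{P,Q}^{T_\alpha}$ of $\overline M_{1:d}^R$ under $\mathbb P_Q^\infty$. Validity is not at issue here; only the delay statement needs proof. Set $t_\alpha=T_\alpha+d_\alpha$. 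Since $\tau_\alpha^w\le t_\alpha$ whenever $D_{t_\alpha}^w\ge A_\alpha$, it suffices to show $\mathbb P_{P,Q}^{T_\alpha}(D_{t_\alpha}^w\ge A_\alpha)\to1$.

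Split the infimum over $R\in\calP$ into $R\notin B_{j_*}$ and $R\in B_{j_*}$, and keep one term of the sum in each case. This is valid because all summands are nonnegative. For $R\notin B_{j_*}$, keep the $s=1$ term and use monotonicity in $t$:
\[
w_1\overline M_{1:t_\alpha}^R\ge w_1\overline M_{1:T_\alpha}^R .
\]
The first $T_\alpha$ observations are i.i.d.\ $P$, so apply \eqref{eq:pre-growth} with $n=T_\alpha\to\infty$. (Condition \eqref{eq:anchor-condition} with $L_\alpha\to\infty$ forces $T_\alpha\to\infty$.) With probability tending to one,
\[
\inf_{R\notin B_{j_*}}\log\bigl(w_1\overline M_{1:T_\alpha}^R\bigr)\ge T_\alpha(c_{\eta_{\mathrm{in}}}-\eta_{\mathrm{out}})+\log w_1\ge L_\alpha ,
\]
eventually, by \eqref{eq:anchor-condition}. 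For $R\in B_{j_*}$, keep the $s=T_\alpha+1$ term. The observations $X_{T_\alpha+1},\dots,X_{t_\alpha}$ are i.i.d.\ $Q$ and independent of the prefix, and $d_\alpha\to\infty$. So \eqref{eq:post-growth} with $n=d_\alpha$ gives, with probability tending to one,
\[
\inf_{R\in B_{j_*}}\log\bigl(w_{T_\alpha+1}\overline M_{T_\alpha+1:t_\alpha}^R\bigr)\ge d_\alpha(I^*-\eta_{\mathrm{in}})+\log w_{T_\alpha+1}\ge L_\alpha ,
\]
eventually, by \eqref{eq:post-condition}.

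Intersecting the two events, whose probability still tends to one, gives $D_{t_\alpha}^w\ge\min\{\cdot,\cdot\}\ge A_\alpha$, and the claim follows.

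The main obstacle is a technical point in the convergence statements. \Cref{thm:witness-implies-growth} gives convergence in probability as $n\to\infty$ along the integers, while here $n=T_\alpha$ and $n=d_\alpha$ are deterministic sequences tending to infinity. Composing a limit in $n$ with any sequence $n_\alpha\to\infty$ preserves convergence to one, so this is fine, but it should be stated explicitly. One must also check that the shifted block genuinely has the same law as the $s=1$ process, which requires the restart construction in \Cref{thm:witness-implies-growth} to be shift-equivariant. I would make that explicit by defining $\overline M_{s:t}^R$ as the $s=1$ process applied to $X_s,\dots,X_t$.

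Finally, the infimum over an uncountable set inside the probability uses the standing measurability convention, which lets the events above be treated as measurable.
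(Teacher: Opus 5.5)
Your proposal is correct and follows essentially the same argument as the paper. Take the nondecreasing e-processes from \Cref{thm:witness-implies-growth}. Rule out candidates $R\notin B_{j_*}$ with the start-$1$ term, using \eqref{eq:pre-growth} and monotonicity in $t$. Rule out candidates $R\in B_{j_*}$ with the start-$(T_\alpha+1)$ term, using \eqref{eq:post-growth}. Then intersect the two high-probability events. Your additional remarks are also correct and make explicit points the paper leaves implicit: \eqref{eq:anchor-condition} forces $T_\alpha\to\infty$ because $\log w_1\le 0$; $d_\alpha\to\infty$ because $B_\alpha\ge L_\alpha$; and the restarted blocks are shift-equivariant by construction.
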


\emph{Proof.} See \Cref{app:proof-generic-upper}.

\begin{corollary}[ARL late-change delay]\label{cor:arl-delay}
 Consider a sequence $\alpha\downarrow 0$, with $A=1/\alpha$ and $L=\log A$.  Suppose the changepoint $T_A$ satisfies $ \frac{T_A}{L}\to\infty.$
 If $\calP$ has a countable local REGROW witness basis, then there exists an ARL detector $\tau_A^{\mathrm{ARL}}$ of the form \eqref{eq:ordinary-arl-detector}, such that for every $\eps>0$,
\[
  \mathbb P_{P,Q}^{T_A}\left(
  \tau_A^{\mathrm{ARL}}-T_A
  \le
  (1+\eps)\frac{L}{I^*}+1
  \right)\to1.
\]
If, additionally, $T_A/A\to0$, then
$\mathbb P_{P,Q}^{T_A}(\tau_A^{\mathrm{ARL}}>T_A)\to1,$ and hence,
for every $\eps>0$,
\[
  \mathbb P_{P,Q}^{T_A}\left(
  \tau_A^{\mathrm{ARL}}-T_A
  \le
  (1+\eps)\frac{L}{I^*}+1
  \mid \tau_A^{\mathrm{ARL}}>T_A\right)\to1.
\]
\end{corollary}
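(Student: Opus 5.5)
The plan is to specialize \Cref{thm:delay-upper-generic} to $w_s\equiv1$ and $A_\alpha=A$, and then add a separate false-alarm estimate for the conditional statement.

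With these weights, $B_\alpha=L$ and $\log w_1=\log w_{T_A+1}=0$. The detector in \eqref{eq:generic-adjusted-detector} is then exactly of the form \eqref{eq:ordinary-arl-detector}, built from the adjusted $s$-delay e-processes $\overline M^R_{s:t}$ supplied by \Cref{thm:witness-implies-growth}. These are genuine $s$-delay e-processes by \Cref{cor:valid-retention}. \Cref{thm:arl-control} therefore applies, and the detector is ARL-valid at level $A$.

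It remains to verify \eqref{eq:anchor-condition} and \eqref{eq:post-condition}.
\begin{itemize}
\item Fix any $\eta_{\mathrm{in}}\in(0,I^*)$, take $c_{\eta_{\mathrm{in}}}>0$ from \Cref{thm:witness-implies-growth}, and choose $\eta_{\mathrm{out}}=c_{\eta_{\mathrm{in}}}/2$. Condition \eqref{eq:anchor-condition} reads $T_A c_{\eta_{\mathrm{in}}}/2\ge L$. This holds eventually because $T_A/L\to\infty$.
\item For \eqref{eq:post-condition}, take $d_A=\lceil(1+\eps')L/I^*\rceil$ with some $\eps'<\eps$ to be fixed. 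We need $d_A(I^*-\eta_{\mathrm{in}})\ge L$. Since $\eta_{\mathrm{in}}$ is ours to choose, pick it small enough that $(1+\eps')(1-\eta_{\mathrm{in}}/I^*)>1$. The condition then holds for all large $L$.
\end{itemize}
The order of choices matters here. Fix $\eps'$ first (we may simply take $\eps'=\eps$), then $\eta_{\mathrm{in}}$, then $c_{\eta_{\mathrm{in}}}$, then $\eta_{\mathrm{out}}$. The anchor condition needs only $T_A/L\to\infty$, whatever positive constant $c_{\eta_{\mathrm{in}}}/2$ turns out to be.

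\Cref{thm:delay-upper-generic} then gives $\bbP^{T_A}_{P,Q}(\tau_A^{\mathrm{ARL}}\le T_A+d_A)\to1$. Since $d_A\le(1+\eps)L/I^*+1$, the first display of the corollary follows.

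For the conditional statement, note that $\tau_A^{\mathrm{ARL}}\le T_A$ is an event in $\calF_{T_A}$. Its probability under $\bbP^{T_A}_{P,Q}$ therefore equals its probability under $\Pinf{P}$, because the two laws agree on $\calF_{T_A}$. By \eqref{eq:lfa} this is at most $T_A/A\to0$, so $\bbP^{T_A}_{P,Q}(\tau_A^{\mathrm{ARL}}>T_A)\to1$. Finally, for events $E,F$ with $\bbP(F)\to1$ and $\bbP(E)\to1$,
\[
\bbP(E\mid F)\ge \frac{\bbP(E)-\bbP(F^c)}{\bbP(F)}\to1 .
\]
Applying this with $E$ the delay event and $F=\{\tau_A^{\mathrm{ARL}}>T_A\}$ gives the conditional display.

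The main obstacle is the bookkeeping in the first step. The witness index $j_*$, and hence $c_{\eta_{\mathrm{in}}}$, depend on $\eta_{\mathrm{in}}$, so $\eta_{\mathrm{in}}$ must be fixed before invoking the anchor condition, as in the order above. Relatedly, the e-processes must not depend on $\eps$. They do not, because \Cref{thm:witness-implies-growth} builds one family $\overline M^R_{s:t}$ from the full countable basis, with $j_*$ entering only the analysis.
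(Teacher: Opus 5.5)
Your proposal is correct and follows the paper's proof essentially step for step. Specialize \Cref{thm:delay-upper-generic} to $w_s\equiv1$ and take $\eta_{\mathrm{out}}=c_{\eta_{\mathrm{in}}}/2$, so the anchor condition reduces to $T_A/L\to\infty$. Choose $\eta_{\mathrm{in}}$ small relative to $\eps$; the paper uses the explicit value $\eta_{\mathrm{in}}=\eps I^*/\{2(1+\eps)\}$, which gives $d_A(I^*-\eta_{\mathrm{in}})\ge(1+\eps/2)L$. The conditional statement then comes from $\Pinf{P}(\tau_A^{\mathrm{ARL}}\le T_A)\le T_A/A$ together with agreement of the two laws on $\calF_{T_A}$.

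One small citation fix: validity of the processes $\overline M^R_{s:t}$ comes directly from \Cref{thm:witness-implies-growth}, not from \Cref{cor:valid-retention}. Those processes are $\gamma_j$-mixtures of $(1+E_{s:t}^{j,\mathrm{in}})/2$ and $(1+E_{s:t}^{j,\mathrm{out}})/2$, not literally adjusted running maxima. Your remark that this family is built before $\eps$, $\eta_{\mathrm{in}}$ and $j_*$ are fixed is exactly what lets a single detector work for every $\eps$.
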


\emph{Proof.} See \Cref{app:proof-generic-upper}.

\begin{corollary}[PFA late-change delay]\label{cor:pfa-delay}
Consider a sequence $\alpha\downarrow 0$, with $A=1/\alpha$ and $L=\log A$.  Suppose the changepoint $T_A$ satisfies $ \frac{T_A}{L}\to\infty.$
 If $\calP$ has a countable local REGROW witness basis, then there exists a PFA detector $\tau_\alpha^{\mathrm{PFA}}$ of the form \eqref{eq:ordinary-pfa-detector}, such that for every $\eps>0$,
\[
  \mathbb P_{P,Q}^{T_A}\left(
  \tau_\alpha^{\mathrm{PFA}}-T_A
  \le
  (1+\eps)\frac{\log(1/\alpha)-\log\pi_{T_A+1}}{I^*}+1
  \right)\to1.
\]
For the canonical weights \eqref{eq:pi-weights}, this upper bound is
\[
  (1+\eps)\frac{
    \log(1/\alpha)+\log T_A+2\log\log(eT_A)+O(1)
  }{I^*}.
\]
Moreover,
\[
  \Pchg{P}{Q}{T_A}(\tau_\alpha^{\mathrm{PFA}}> T_A)
  \ge1- \alpha.
\]
So, false alarms before the change vanish,  i.e., $\Pchg{P}{Q}{T_A}(\tau_\alpha^{\mathrm{PFA}}> T_A)\to 1$, as $\alpha\downarrow0$.
\end{corollary}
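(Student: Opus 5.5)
The plan is to obtain \Cref{cor:pfa-delay} as a specialization of \Cref{thm:delay-upper-generic} with $w_s=\pi_s$ and $A_\alpha=1/\alpha$, and to get the false-alarm statement from \Cref{thm:pfa-control}. Concretely, let $\overline M^R_{s:t}$ be the adjusted nondecreasing $s$-delay e-processes furnished by \Cref{thm:witness-implies-growth}. With these primitives, $D_t^w$ in \eqref{eq:generic-adjusted-detector} is exactly of the form \eqref{eq:ordinary-pfa-detector}, since \Cref{cor:valid-retention} ensures each $\overline M^R_{s:t}$ is a legitimate $s$-delay e-process. Then $B_\alpha=L-\log\pi_{T_A+1}$.

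It remains to verify \eqref{eq:anchor-condition} and \eqref{eq:post-condition} for a fixed small $\eta_{\mathrm{in}}$.
\begin{itemize}
\item Given $\eps>0$, choose $\eta_{\mathrm{in}}\in(0,I^*)$ small enough that $(1+\eps)(I^*-\eta_{\mathrm{in}})/I^*\ge 1+\eps/2$.
\item Take $\eta_{\mathrm{out}}=c_{\eta_{\mathrm{in}}}/2$, where $c_{\eta_{\mathrm{in}}}$ is from \Cref{thm:witness-implies-growth}.
\end{itemize}

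For \eqref{eq:post-condition}, note that $d_\alpha(I^*-\eta_{\mathrm{in}})\ge(1+\eps/2)B_\alpha=(1+\eps/2)(L-\log\pi_{T_A+1})$. Since $-\log\pi_{T_A+1}\ge0$ and $L\to\infty$, this is at least $L-\log\pi_{T_A+1}+(\eps/2)L$. Hence $d_\alpha(I^*-\eta_{\mathrm{in}})+\log\pi_{T_A+1}\ge L$ eventually.

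For \eqref{eq:anchor-condition}, we need $T_A c_{\eta_{\mathrm{in}}}/2+\log\pi_1\ge L$. This follows from $T_A/L\to\infty$, because $\log\pi_1$ is a fixed constant. \Cref{thm:delay-upper-generic} then gives $\bbP^{T_A}_{P,Q}(\tau_\alpha^{\mathrm{PFA}}\le T_A+d_\alpha)\to1$, and $d_\alpha\le(1+\eps)B_\alpha/I^*+1$ by the ceiling. The canonical-weights form follows by substituting \eqref{eq:pi-log} with $s=T_A+1$; here $\log(T_A+1)=\log T_A+O(1)$, and the constant is absorbed.

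The main subtlety I expect is that the hypotheses of the corollary do \emph{not} explicitly include $T_A/(L-\log\pi_{T_A+1})\to\infty$. For the anchor condition only $T_A/L\to\infty$ is needed, so no extra assumption is required. The concern is whether $d_\alpha$ might exceed the post-change horizon. It does not matter: \Cref{thm:delay-upper-generic} places no restriction on $d_\alpha$ relative to $T_A$, and its two conditions are exactly what we verified. I would nevertheless check that the generic proof did not implicitly use $d_\alpha=o(T_\alpha)$. If it did, I would fall back on the canonical weights, for which $\log T_A=o(T_A)$ makes this automatic.

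For the false-alarm statement, the event $\{\tau_\alpha^{\mathrm{PFA}}\le T_A\}$ is $\calF_{T_A}$-measurable. On $\calF_{T_A}$ the law $\bbP^{T_A}_{P,Q}$ agrees with $\Pinf{P}$, so
\[
\bbP^{T_A}_{P,Q}(\tau_\alpha^{\mathrm{PFA}}\le T_A)=\Pinf{P}(\tau_\alpha^{\mathrm{PFA}}\le T_A)\le\Pinf{P}(\tau_\alpha^{\mathrm{PFA}}<\infty)\le\alpha
\]
by \Cref{thm:pfa-control} applied with the adjusted primitives. This bound tends to $0$ as $\alpha\downarrow0$.
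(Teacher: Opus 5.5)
Your proposal is correct and follows the paper's proof essentially step for step. You specialize \Cref{thm:delay-upper-generic} to $w_s=\pi_s$, choose $\eta_{\mathrm{in}}$ so that $(1+\eps)(I^*-\eta_{\mathrm{in}})/I^*\ge 1+\eps/2$ (the paper takes equality) and $\eta_{\mathrm{out}}=c_{\eta_{\mathrm{in}}}/2$, verify the anchor and post-change conditions, and obtain the pre-change bound from \Cref{thm:pfa-control} because the change law agrees with $\Pinf{P}$ on $\calF_{T_A}$.

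Your worry about $d_\alpha=o(T_\alpha)$ is unfounded. The generic proof uses only $T_\alpha\to\infty$ and $d_\alpha\to\infty$, and both hold here.

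One small citation fix: validity of the witness-mixture primitives is part of the conclusion of \Cref{thm:witness-implies-growth}, not \Cref{cor:valid-retention}. Those primitives are mixtures $\sum_j\gamma_j(1+E^{j})/2$, not processes of the form $a(M^{R,*}_{s:t})$.
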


\emph{Proof.} See \Cref{app:proof-generic-upper}.

\begin{remark}[Why the post-change start alone is not enough]
The term starting at $s=T+1$ cannot reject $R=Q$, because the block $X_{T+1:T+d}$ is from i.i.d.\ $Q$.  The prefix term is needed to reject $R=Q$ and nearby alternatives.  The adjusted detector keeps the prefix evidence alive through monotonicity, avoiding the need to analyze the raw cross-block $M_{1:T+d}^R$.
\end{remark}

Note that the above result proves the existence of such e-detectors achieving a detection delay bound that depends on $(P,Q)$, while the construction of those e-detectors does not rely on the knowledge of $(P,Q)$.

\section{Detection-delay lower bounds}
\label{sec:lowerb}
The matching lower bounds follow by change of measure.  We state the three
forms needed to interpret the upper bounds; proofs are in \Cref{app:proof-lower-bounds}.  Let $I^*=\KL(Q\|P)$ and
$L_\alpha=\log(1/\alpha)$.

\begin{theorem}[Pointwise PFA lower bound]
\label{thm:pointwise-lower}
Fix \(P,Q\in\mathcal P\) such that
$I^*:=D_{\mathrm{KL}}(Q\|P)\in(0,\infty),$
and fix \(\varepsilon\in(0,1)\). Let \(\alpha\downarrow0\), set
$L_\alpha:=\log(1/\alpha),$
and let \(\tau_\alpha\) be any PFA-\(\alpha\) stopping time. For any
deterministic changepoint \(T_\alpha\) and any integer \(d_\alpha\ge1\)
satisfying
$d_\alpha
    \le
    (1-\varepsilon)\frac{L_\alpha}{I^*},$
we have
\[
    \mathbb P_{P,Q}^{T_\alpha}
    \left(
        T_\alpha<\tau_\alpha\le T_\alpha+d_\alpha
    \right)
    \longrightarrow0.
\]
\end{theorem}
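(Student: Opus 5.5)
The plan is a change of measure between $\mathbb P_{P,Q}^{T_\alpha}$ and the no-change law $\Pinf{P}$, which is admissible because $P\in\calP$. The PFA constraint then gives $\Pinf{P}(\tau_\alpha<\infty)\le\alpha$, and in particular
\[
\Pinf{P}\bigl(T_\alpha<\tau_\alpha\le T_\alpha+d_\alpha\bigr)\le\alpha .
\]
The event $E_\alpha:=\{T_\alpha<\tau_\alpha\le T_\alpha+d_\alpha\}$ lies in $\calF_{T_\alpha+d_\alpha}$. On this $\sigma$-field, the two laws agree on the first $T_\alpha$ coordinates. Their likelihood ratio is therefore
\[
\Lambda_\alpha:=\prod_{i=T_\alpha+1}^{T_\alpha+d_\alpha}\frac{dQ}{dP}(X_i),
\]
which requires $Q\ll P$; this follows from $I^*<\infty$. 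Under $\mathbb P_{P,Q}^{T_\alpha}$, $\log\Lambda_\alpha$ is a sum of $d_\alpha$ i.i.d.\ copies of $\log\frac{dQ}{dP}(X)$ with $X\sim Q$, each having mean $I^*$.

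The key step is the standard split. Take a threshold $c_\alpha:=(1-\varepsilon/2)L_\alpha$ and write
\[
\mathbb P_{P,Q}^{T_\alpha}(E_\alpha)\le \mathbb P_{P,Q}^{T_\alpha}(E_\alpha,\ \log\Lambda_\alpha\le c_\alpha)+\mathbb P_{P,Q}^{T_\alpha}(\log\Lambda_\alpha>c_\alpha).
\]
The first term is at most
\[
e^{c_\alpha}\,\Pinf{P}(E_\alpha)\le e^{c_\alpha}\alpha=\alpha^{\varepsilon/2}\to0 .
\]
The second term is the probability that a sum of $d_\alpha$ i.i.d.\ variables with mean $I^*$ exceeds $(1-\varepsilon/2)L_\alpha$. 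Since $d_\alpha I^*\le(1-\varepsilon)L_\alpha$, the sum would have to exceed its expected value by roughly a factor $(1-\varepsilon/2)/(1-\varepsilon)>1$.

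The main obstacle is this second term. The delay $d_\alpha$ need not tend to infinity: it may stay bounded, even equal to $1$, while $L_\alpha\to\infty$. Also, only the first moment of the log-likelihood ratio is assumed, and the negative part of $\log\frac{dQ}{dP}$ is integrable under $Q$ but not necessarily bounded. I would handle this by splitting on $d_\alpha$.

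If $d_\alpha\to\infty$ along a subsequence, I would use a maximal weak law of large numbers. Only $\E_Q|\log\frac{dQ}{dP}|<\infty$ is needed, which holds because the negative part has $\E_Q[(\log\frac{dQ}{dP})^-]\le 1/e$. This gives $\frac1{d}\sum_{i\le d}\log\frac{dQ}{dP}(X_i)\to I^*$ in probability as $d\to\infty$. Then
\[
\mathbb P\Bigl(\textstyle\sum_{i\le d_\alpha}>(1+\delta)d_\alpha I^*\Bigr)\to0,
\qquad 1+\delta=\frac{1-\varepsilon/2}{1-\varepsilon}.
\]

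If $d_\alpha$ stays bounded along a subsequence, say $d_\alpha\le D$, the sum is a fixed finite collection of a.s.\ finite random variables. Its probability of exceeding $c_\alpha\to\infty$ therefore tends to $0$ by tightness.

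Combining the two regimes through a subsequence argument gives the claim, uniformly in the deterministic $T_\alpha$. Nothing about $T_\alpha$ is used beyond its role in the likelihood ratio.
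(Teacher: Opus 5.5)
Your proposal is correct and follows the paper's proof essentially step for step. Both use the same change of measure to $\Pinf{P}$ on $\calF_{T_\alpha+d_\alpha}$, the same threshold $c_\alpha=(1-\varepsilon/2)L_\alpha$, and the same subsequence split into a weak-law regime ($d_\alpha\to\infty$) and a tightness regime ($d_\alpha$ bounded). Your explicit checks that $Q\ll P$ and that $\E_Q[(\log\frac{dQ}{dP})^-]\le 1/e$ (so the weak law applies) fill in details the paper leaves implicit.
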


Let \(d_\alpha,T_{\max,\alpha}\in\mathbb N\), and define
\[
    N_\alpha
    :=
    \left\lfloor
        \frac{T_{\max,\alpha}}{d_\alpha+1}
    \right\rfloor,
    \qquad
    T_{j,\alpha}
    :=
    (j-1)(d_\alpha+1)+1,
    \quad
    j=1,\ldots,N_\alpha.
\]
Then the windows
\[
    (T_{j,\alpha},T_{j,\alpha}+d_\alpha],
    \qquad j=1,\ldots,N_\alpha,
\]
are pairwise disjoint and contained in
\(\{1,\ldots,T_{\max,\alpha}\}\).

Assume
\[
    d_\alpha\longrightarrow\infty,
    \qquad
    N_\alpha\longrightarrow\infty,
\]
and that, for some fixed \(\beta\in[0,1)\) and some
\(\rho_\alpha\to0\),
\begin{equation}
\label{eq:horizon-uniform-power}
    \inf_{1\le j\le N_\alpha}
    \mathbb P_{P,Q}^{T_{j,\alpha}}
    \left(
        T_{j,\alpha}<\tau_\alpha
        \le T_{j,\alpha}+d_\alpha
    \right)
    \ge
    1-\beta-\rho_\alpha.
\end{equation}

\begin{theorem}[Horizon-uniform PFA lower bound]
\label{thm:horizon-lower} Let \(\tau_\alpha\) be any PFA-\(\alpha\) stopping time.
Under the preceding conditions, for every fixed \(\eta>0\),
\begin{equation}
\label{eq:horizon-lower}
    d_\alpha
    \ge
    \frac{
        L_\alpha+\log N_\alpha+\log(1-\beta)+o(1)
    }{
        I^*+\eta
    }.
\end{equation}
In particular,
\[
    d_\alpha
    \ge
    \frac{
        L_\alpha+
        \log\left\lfloor
            T_{\max,\alpha}/(d_\alpha+1)
        \right\rfloor
        +O(1)
    }{
        I^*+\eta
    }.
\]
Moreover, the start-time price is asymptotically
\(\log(T_{\max,\alpha}/d_\alpha)\), and it equals
\(\log T_{\max,\alpha}\) to first order whenever
\[
    \log d_\alpha=o(\log T_{\max,\alpha}).
\]
\end{theorem}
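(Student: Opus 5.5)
The plan is to combine the $N_\alpha$ disjoint windows into a single change of measure against the common no-change law $\Pinf{P}$. The proof then uses only the global PFA constraint $\Pinf{P}(\tau_\alpha<\infty)\le\alpha$, which holds because $P\in\calP$.

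For each $j$, define the event
\[
E_j:=\{T_{j,\alpha}<\tau_\alpha\le T_{j,\alpha}+d_\alpha\}.
\]
The windows are disjoint, so the $E_j$ are disjoint events. Hence
\[
\sum_{j=1}^{N_\alpha}\Pinf{P}(E_j)\le \Pinf{P}(\tau_\alpha<\infty)\le\alpha.
\]

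The key step is to show that each $E_j$ retains substantial $\Pinf{P}$-probability. On $E_j$, the likelihood ratio between $\mathbb P^{T_{j,\alpha}}_{P,Q}$ and $\Pinf{P}$, restricted to $\calF_{\tau_\alpha}$, is at most
\[
\Lambda_j:=\prod_{i=T_{j,\alpha}+1}^{T_{j,\alpha}+d_\alpha}\frac{dQ}{dP}(X_i).
\]
This holds because only the post-change coordinates in the window contribute. Since $\tau_\alpha\le T_{j,\alpha}+d_\alpha$ on $E_j$, I will bound by the maximum of the partial log-likelihood sums over the window. Fix $\eta>0$ and define
\[
G_j:=\Big\{\max_{1\le k\le d_\alpha}\sum_{i=T_{j,\alpha}+1}^{T_{j,\alpha}+k}\log\tfrac{dQ}{dP}(X_i)\le d_\alpha(I^*+\eta)\Big\}.
\]
Under $\mathbb P^{T_{j,\alpha}}_{P,Q}$ these increments are i.i.d.\ with mean $I^*$. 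By the strong law, in its maximal form, $\max_{k\le d}S_k/d\to I^*$ a.s. Since $d_\alpha\to\infty$, it follows that $\mathbb P^{T_{j,\alpha}}_{P,Q}(G_j^c)=:\delta_\alpha\to0$. Moreover, $\delta_\alpha$ does not depend on $j$, because the window law is the same for every $j$. Combining this with the change of measure on $E_j\cap G_j$, I get
\[
\Pinf{P}(E_j)\ge \Pinf{P}(E_j\cap G_j)\ge e^{-d_\alpha(I^*+\eta)}\,\mathbb P^{T_{j,\alpha}}_{P,Q}(E_j\cap G_j)\ge e^{-d_\alpha(I^*+\eta)}(1-\beta-\rho_\alpha-\delta_\alpha),
\]
where the last inequality uses \eqref{eq:horizon-uniform-power}. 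A technical point is that the change of measure must be done on the stopped $\sigma$-field $\calF_{\tau_\alpha\wedge(T_{j,\alpha}+d_\alpha)}$, where the density is exactly the partial product up to $\tau_\alpha$. This is why I use the maximal partial sum rather than the full product.

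Summing over $j$ gives
\[
N_\alpha e^{-d_\alpha(I^*+\eta)}(1-\beta-o(1))\le\alpha.
\]
Taking logarithms yields
\[
d_\alpha(I^*+\eta)\ge L_\alpha+\log N_\alpha+\log(1-\beta)+o(1),
\]
which is \eqref{eq:horizon-lower}. The second display follows by substituting the definition of $N_\alpha$.

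For the start-time price, note that
\[
\log N_\alpha=\log(T_{\max,\alpha}/(d_\alpha+1))+O(1)=\log(T_{\max,\alpha}/d_\alpha)+o(1)+O(1).
\]
When $\log d_\alpha=o(\log T_{\max,\alpha})$, this equals $(1+o(1))\log T_{\max,\alpha}$, since $N_\alpha\to\infty$ forces $T_{\max,\alpha}\to\infty$.

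The main obstacle is the uniformity of $\delta_\alpha$ in $j$ together with the stopped change-of-measure argument. Both are handled by stationarity of the window law and the maximal strong law, so I expect the argument to go through cleanly.
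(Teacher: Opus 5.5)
Your proposal is correct and follows essentially the paper's argument. Both use the disjoint windows, change measure to $\mathbb P_P^\infty$, bound the window likelihood ratio by $e^{d_\alpha(I^*+\eta)}$ outside an event whose probability vanishes uniformly in $j$, and sum against the PFA budget $\alpha$. The one difference is your stopped-$\sigma$-field and maximal-partial-sum detour, which is unnecessary and does not itself justify your display: $G_j$ depends on data after $\tau_\alpha$, so it is not $\mathcal F_{\tau_\alpha\wedge(T_{j,\alpha}+d_\alpha)}$-measurable. Since $E_j\in\mathcal F_{T_{j,\alpha}+d_\alpha}$, the paper changes measure on that fixed $\sigma$-field with the full product $\Lambda_j$ and the weak law of large numbers, and this is what actually yields your displayed inequality.
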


\begin{theorem}[Early-change impossibility]
\label{thm:early-change-impossibility}
Let \(\tau_\alpha\) be any globally PFA-\(\alpha\) detector, and fix
\(P,Q\in\mathcal P\) with \(P\ll Q\). Then, for every \(T\in\mathbb N\),
\[
    \mathbb P_{P,Q}^{T}(\tau_\alpha<\infty)
    \le
    \min\left\{
        1,\,
        \frac{
            T D_{\mathrm{KL}}(P\|Q)+\log 2
        }{
            \log(1/\alpha)
        }
    \right\}.
\]
Consequently, for any sequence \(T_\alpha\),
$T_\alpha 
    =
    o\!\left(\log\frac1\alpha\right)$
implies
$\mathbb P_{P,Q}^{T_\alpha}(\tau_\alpha<\infty)\longrightarrow0,$ as $\alpha\downarrow0$.
\end{theorem}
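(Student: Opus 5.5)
The plan is a single change-of-measure argument that compares the change law with the no-change law $\Pinf{Q}$. The key point is that $Q\in\calP$, so $Q^\infty$ is itself an admissible no-change law. Global PFA validity therefore gives
\[
q:=\Pinf{Q}(\tau_\alpha<\infty)\le\alpha .
\]
Write $p:=\mathbb P_{P,Q}^{T}(\tau_\alpha<\infty)$. The event $E=\{\tau_\alpha<\infty\}$ lies in $\calF_\infty=\sigma(X_1,X_2,\ldots)$, so it is an event of the sequence space on which both laws live.

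The first step is to compute the divergence between the two laws on $\calF_\infty$. Since $\mathbb P_{P,Q}^T=P^T\otimes Q^\infty$ and $\Pinf{Q}=Q^T\otimes Q^\infty$ share the same tail factor, the chain rule for product measures gives
\[
\KL\bigl(\mathbb P_{P,Q}^{T}\,\big\|\,\Pinf{Q}\bigr)=T\,\KL(P\|Q).
\]
The hypothesis $P\ll Q$ is exactly what makes this well defined, and it is finite whenever $\KL(P\|Q)<\infty$. If $\KL(P\|Q)=\infty$, the claimed bound equals $1$ and holds trivially. The second step applies the data-processing inequality to the indicator of $E$, which yields
\[
\kl(p,q)\le T\,\KL(P\|Q),
\]
where $\kl$ is the binary KL divergence.

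The third step converts this into a bound on $p$. I would use the elementary inequality
\[
\kl(p,q)\ge p\log\frac1q-h(p)\ge p\log\frac1q-\log 2,
\]
where $h$ is the binary entropy. It follows from expanding $\kl(p,q)=-h(p)-p\log q-(1-p)\log(1-q)$ and dropping the nonnegative last term. Combining it with $q\le\alpha$ gives $p\log(1/\alpha)\le p\log(1/q)\le T\,\KL(P\|Q)+\log2$, and dividing by $\log(1/\alpha)$ gives the stated bound together with the trivial cap at $1$. This chain does not use the ordering of $p$ and $q$, so no separate case for $p<q$ is needed.

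The consequence is then immediate. If $T_\alpha=o(\log(1/\alpha))$ and $\KL(P\|Q)<\infty$, the numerator is $o(\log(1/\alpha))+\log 2$ while the denominator tends to infinity, so the bound tends to $0$. The only delicate point, and the one I expect to need the most care, is applying data processing on the infinite-horizon $\sigma$-field $\calF_\infty$ rather than on a finite $\calF_t$. I would handle it either by noting that the KL identity holds directly for the product measures, or by applying the argument to $\{\tau_\alpha\le t\}$ at each finite $t$ and letting $t\to\infty$ by monotone convergence of both probabilities.
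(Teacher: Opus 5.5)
Your proposal is correct and follows the paper's proof essentially step for step. Both compare $\mathbb P_{P,Q}^{T}$ with the admissible no-change law $\mathbb P_Q^\infty$, use $\KL(\mathbb P_{P,Q}^{T}\|\mathbb P_Q^\infty)=T\,\KL(P\|Q)$ and data processing on the indicator of $\{\tau_\alpha<\infty\}$, and then apply the bound $\kl(p,q)\ge p\log(1/q)-\log 2$ with $q\le\alpha$. Your added remarks on the case $\KL(P\|Q)=\infty$ and on working over the full sequence $\sigma$-field are harmless refinements that the paper leaves implicit.
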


An ARL analogue under the local finite-horizon guarantee
$\sup_R\Pinf R(\tau_A\le m)\le m/A$ is stated and proved in \Cref{app:proof-arl-early}.
Together, these results show that the upper bound established in the previous section is pointwise optimal and
that the PFA spending penalty has the correct horizon-uniform logarithmic
order.

\section{Experiments}
\label{sec:experiments-updated}

\subsection{Sub-Gaussian comparison with repeated CSs}
We first retain the original comparison with the repeated-CS detector of
\citet{shekhar2023reducing}.  Data are $N(0,1)$ up to $T=500$ and
$N(\delta,1)$ thereafter, with $A=1000$, $\rho^2=1$. For this model, the relevant information number is
$ I=D_{\mathrm{KL}}\bigl(N(\delta,1)\|N(0,1)\bigr)
    =
    \frac{\delta^2}{2},$
so the first-order delay benchmark is
$d_0:=\frac{\log A}{I}
    =
    \frac{2\log A}{\delta^2}.$
The empirical CADD is estimated from \(4000\) independent replications,
conditioning on the detector not having stopped before \(T\).

\begin{table}[H]
\centering
\small
\caption{Empirical CADD in the unit-variance Gaussian mean-shift experiment.}
\label{tab:subg-repeated-cs}
\begin{tabular}{cccc}
\toprule
$\delta$ & $\log(A)/I$ & Our e-detector & repeated CS \\
\midrule
$1.00$ & $13.82$ & $17.297$  &  $28.489$ \\
$1.25$ & $8.84$  & $11.297$ & $16.335$ \\
$1.50$ & $6.14$  & $8.129$  & $10.905$ \\
$2.00$ & $3.45$  & $5.177$ & $6.309$ \\
\bottomrule
\end{tabular}
\end{table}

We observe from \Cref{tab:subg-repeated-cs} that our ARL e-detector using \eqref{eq:D-def} is consistently and substantially faster. Its empirical delay is also substantially closer to
the information benchmark \(\log(A)/I\).

From the next experiment onwards, we only implement our e-detectors.

\subsection{Bounded observations: continuous universal portfolios}
Take $a=0.3$, $b=0.5$, and compare Bernoulli$(0.5)$,
$\operatorname{Beta}(10,10)$, and the two-point law on
$\{0.45,0.55\}$.  Their $I_{\rm bet}$ values are $0.087177$, $0.485201$, and
$0.505800$.  We compute the continuous Jeffreys UP exactly by
Gauss--Chebyshev quadrature, rather than by a fixed grid of bets. The results are reported in \Cref{tab:bounded-compact}. Panel A confirms both the optimal point-null rate and low-variance adaptivity.
Panel B uses $d_0=\log(A)/\kl(b,a)$ and
$T/d_0=(\log A)^2/5$.  At finite $T$, the relevant prediction is the smallest
$d$ such that
\[
 \inf_\theta\max\{d\kl(b,\theta),(T+d)\kl((Ta+db)/(T+d),\theta)\}
 \ge\log A.
\]
The ratio to this finite-prefix prediction decreases toward one.  Thus, the
moderate-threshold discrepancy is explained by the joint asymptotic and the
logarithmic mixture regret, not by a failure of the information constant.

\begin{table}[H]
\centering
\small
\caption{Bounded experiments.  Panel A gives exact point-null UP medians with
median$/\{\log(A)/I_{\rm bet}\}$ in parentheses ($4000$ replications).  Panel B
uses the conservative two-start composite statistic for the least-favorable
Bernoulli change and compares the median with the finite-prefix prediction
$d_{\rm FT}$ defined by the information-balance equation in the text.}
\label{tab:bounded-compact}
\begin{tabular}{rccc}
\toprule
\multicolumn{4}{c}{Panel A: exact point-null universal portfolio}\\
\midrule
$A$ & Bernoulli & Beta$(10,10)$ & two-point\\
\midrule
$10^{5}$  & $153\;(1.159)$ & $28\;(1.180)$ & $27\;(1.186)$\\
$10^{12}$ & $349\;(1.101)$ & $62\;(1.089)$ & $60\;(1.098)$\\
$10^{20}$ & $563\;(1.066)$ & $101\;(1.064)$& $97\;(1.065)$\\
\midrule
\multicolumn{4}{c}{Panel B: joint boundary/prefix asymptotic}\\
\midrule
$A$ & $T/d_0$ & $(d_0,d_{\rm FT})$ & median / $d_{\rm FT}$\\
\midrule
$10^{5}$  & $26.5$  & $(132.1,244.6)$ & $314.0/244.6=1.284$\\
$10^{8}$  & $67.9$  & $(211.3,290.2)$ & $330.5/290.2=1.139$\\
$10^{12}$ & $152.7$ & $(317.0,385.0)$ & $422.0/385.0=1.096$\\
$10^{20}$ & $424.2$ & $(528.3,589.7)$ & $635.5/589.7=1.078$\\
\bottomrule
\end{tabular}
\end{table}

\section{Conclusion}

We introduced a general framework for non-partitioned sequential change detection, in which both the pre-change and post-change distributions are unknown and belong to the same model class. Our construction combines point-null e-processes that start at each time point and then minimizes over candidate no-change laws, yielding detectors with finite-sample ARL or PFA guarantees. We developed explicit examples for sub-Gaussian means, bounded means, etc. Under suitable conditions, the resulting procedures attain first-order detection delays governed by the instance-specific information rate \(D_{\mathrm{KL}}(Q\|P)\). The lower bounds reveal a fundamental distinction between late changes, for which classical KL-type delay rates are attainable, and early changes, for which insufficient pre-change information can make reliable detection impossible.

\bibliographystyle{apalike}
\bibliography{ref}

\appendix

\section{Dependent Data Example: Change in a two-state Markov transition matrix}
\label{sec:markov-example}

The validity arguments underlying \Cref{thm:arl-control,thm:pfa-control}
do not require the observations to be independent.  They only require,
for every candidate no-change law and every restart time, a conditionally
valid delayed e-process.  We illustrate this point with a two-state Markov
chain.

Let \(X_0,X_1,\ldots\) take values in \(\{0,1\}\), and let
\[
    \mathcal F_t:=\sigma(X_0,\ldots,X_t).
\]
For \(p,q\in(0,1)\), write
\[
    \Markov(p,q)
\]
for the time-homogeneous two-state Markov chain satisfying
\[
    \mathbb P(X_t=1\mid\mathcal F_{t-1})
    =
    \begin{cases}
        p, & X_{t-1}=0,\\
        q, & X_{t-1}=1.
    \end{cases}
\]
Thus \(p\) is the probability of transitioning from state \(0\) to
state \(1\), while \(q\) is the probability of transitioning from
state \(1\) to state \(1\).

Fix \(\kappa\in(0,1/2)\), and consider the compact parameter space
\[
    \Theta_\kappa
    :=
    [\kappa,1-\kappa]^2.
\]
The no-change class is
\[
    \mathcal P_{\mathrm{MC}}
    :=
    \{
        \Markov(p,q):(p,q)\in\Theta_\kappa
    \}.
\]
All validity statements below hold conditionally on \(X_0\), and
therefore hold for any initial distribution of \(X_0\).

For a candidate parameter \(r=(p,q)\in\Theta_\kappa\), define
\[
    \rho_r(x)
    :=
    \begin{cases}
        p, & x=0,\\
        q, & x=1.
    \end{cases}
\]
For \(1\le s\le t\), let
\[
    N_{xy}^{s:t}
    :=
    \sum_{i=s}^t
    \mathbbm 1\{X_{i-1}=x,X_i=y\},
    \qquad x,y\in\{0,1\},
\]
and let
\[
    N_x^{s:t}:=N_{x0}^{s:t}+N_{x1}^{s:t}.
\]
The transition likelihood under the candidate parameter \(r=(p,q)\)
is
\[
    L_{s:t}^{p,q}
    :=
    p^{N_{01}^{s:t}}
    (1-p)^{N_{00}^{s:t}}
    q^{N_{11}^{s:t}}
    (1-q)^{N_{10}^{s:t}}.
\]

We interpret a changepoint \(T\) as a change in the transition kernel:
transitions \(1,\ldots,T\) are generated by the pre-change kernel,
while transitions \(T+1,T+2,\ldots\) are generated by the post-change
kernel.

\subsection{A restartable Markov e-process}

Let
\[
    \Pi_J(du)
    :=
    \frac{du}{\pi\sqrt{u(1-u)}},
    \qquad u\in(0,1),
\]
be the Jeffreys, or arcsine, distribution.  Independently mixing the
two transition probabilities against \(\Pi_J\), define
\begin{align}
    M_{s:t}^{p,q}
    :={}&
    \frac{
        B\left(
            N_{01}^{s:t}+\frac12,
            N_{00}^{s:t}+\frac12
        \right)
    }{
        B\left(\frac12,\frac12\right)
        p^{N_{01}^{s:t}}
        (1-p)^{N_{00}^{s:t}}
    }
    \nonumber\\
    &\times
    \frac{
        B\left(
            N_{11}^{s:t}+\frac12,
            N_{10}^{s:t}+\frac12
        \right)
    }{
        B\left(\frac12,\frac12\right)
        q^{N_{11}^{s:t}}
        (1-q)^{N_{10}^{s:t}}
    },
    \label{eq:markov-jeffreys-eprocess}
\end{align}
with
\[
    M_{s:s-1}^{p,q}:=1.
\]
Here \(B(\cdot,\cdot)\) denotes the beta function.

Equivalently, define the predictable Jeffreys estimators
\[
    \widehat\rho_{s,i-1}(x)
    :=
    \frac{
        N_{x1}^{s:i-1}+\frac12
    }{
        N_x^{s:i-1}+1
    },
    \qquad x\in\{0,1\},
\]
where all counts are zero when \(i=s\).  Then
\begin{equation}
\label{eq:markov-eprocess-predictive}
    M_{s:t}^{p,q}
    =
    \prod_{i=s}^t
    \frac{
        \widehat\rho_{s,i-1}(X_{i-1})^{X_i}
        \{1-\widehat\rho_{s,i-1}(X_{i-1})\}^{1-X_i}
    }{
        \rho_{(p,q)}(X_{i-1})^{X_i}
        \{1-\rho_{(p,q)}(X_{i-1})\}^{1-X_i}
    }.
\end{equation}

For the expectation bound below, it is convenient to add a fixed
baseline and define
\begin{equation}
\label{eq:markov-baseline-eprocess}
    \widetilde M_{s:t}^{p,q}
    :=
    \frac{1+M_{s:t}^{p,q}}{2}.
\end{equation}
This modification preserves e-validity and all positive first-order
growth rates, while ensuring
\[
    \widetilde M_{s:t}^{p,q}\ge\frac12
\]
on every sample path.

\begin{proposition}[Validity and entropy-rate growth]
\label{prop:markov-validity-growth}
For every \((p,q)\in\Theta_\kappa\) and every start time \(s\),
$\{\widetilde M_{s:t}^{p,q}\}_{t\ge s-1}$
is an \(s\)-delay e-process under \(\Markov(p,q)\).  In fact,
\(\{M_{s:t}^{p,q}\}_{t\ge s-1}\) is a test martingale:
\[
    \mathbb E_{p,q}
    \left[
        M_{s:t}^{p,q}
        \mid\mathcal F_{t-1}
    \right]
    =
    M_{s:t-1}^{p,q},
    \qquad t\ge s.
\]
For \(r=(p,q)\in\Theta_\kappa\), let
$  \varpi_r(0)
    =
    \frac{1-q}{1-q+p},
    \varpi_r(1)
    =
    \frac{p}{1-q+p}$
denote its stationary distribution.  For
$r_1=(p_1,q_1),
    r_0=(p_0,q_0),$
define the Markov relative-entropy rate
\begin{equation}
\label{eq:markov-information-rate}
    \mathcal I(r_1\|r_0)
    :=
    \varpi_{r_1}(0)
    \kl(p_1,p_0)
    +
    \varpi_{r_1}(1)
    \kl(q_1,q_0),
\end{equation}
where
$ \kl(x,y)
    :=
    x\log\frac{x}{y}
    +(1-x)\log\frac{1-x}{1-y}.$
If the observations are generated by \(\Markov(r_1)\), then, for
every \(r_0\in\Theta_\kappa\),
\[
    \frac1n
    \log\widetilde M_{1:n}^{r_0}
    \longrightarrow
    \mathcal I(r_1\|r_0)
    \qquad\text{almost surely}.
\]
Moreover, this convergence is uniform over \(r_0\in\Theta_\kappa\):
\begin{equation}
\label{eq:markov-uniform-growth}
    \sup_{r_0\in\Theta_\kappa}
    \left|
        \frac1n\log\widetilde M_{1:n}^{r_0}
        -
        \mathcal I(r_1\|r_0)
    \right|
    \longrightarrow0
    \qquad\text{almost surely}.
\end{equation}
\end{proposition}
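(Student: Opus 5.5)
The proof has three parts: the martingale property, the pointwise growth rate, and uniformity of that growth over \(\Theta_\kappa\).

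\textbf{Martingale property.} I would use the predictive form \eqref{eq:markov-eprocess-predictive}. Each factor has the form
\[
\frac{\widehat\rho^{X_i}(1-\widehat\rho)^{1-X_i}}{\rho^{X_i}(1-\rho)^{1-X_i}},
\]
where \(\widehat\rho=\widehat\rho_{s,i-1}(X_{i-1})\) and \(\rho=\rho_{(p,q)}(X_{i-1})\) are both \(\mathcal F_{i-1}\)-measurable. Under \(\Markov(p,q)\) we have \(\mathbb P(X_i=1\mid\mathcal F_{i-1})=\rho\). So the conditional expectation of the factor is \(\widehat\rho+(1-\widehat\rho)=1\), which gives \(\mathbb E_{p,q}[M_{s:t}\mid\mathcal F_{t-1}]=M_{s:t-1}\).

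To connect the two forms, I would note that the product of sequential Krichevsky--Trofimov predictive probabilities equals the Beta-function ratio. This holds separately for each source state \(x\), because the transitions out of state \(x\) form a subsequence and the KT product over that subsequence is \(B(N_{x1}+\tfrac12,N_{x0}+\tfrac12)/B(\tfrac12,\tfrac12)\).

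For \(j<s\) the process is identically \(1\). Optional stopping for nonnegative martingales (via Fatou) then gives \(\mathbb E[M_{s:\tau}\mid\mathcal F_{s-1}]\le1\) for every stopping time \(\tau\). Since \(\widetilde M=(1+M)/2\) is an average of two such processes, it is also an \(s\)-delay e-process. All of this is conditional on \(X_0\).

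\textbf{Growth rate.} I would write
\[
\log M_{1:n}^{r_0}=\log\frac{\text{KT mixture}}{\text{ML likelihood}}+\log\frac{\text{ML likelihood}}{L_{1:n}^{r_0}}.
\]
The standard KT regret bound says the first term is at least \(-\tfrac12\log N_0-\tfrac12\log N_1-O(1)\ge -\log n - O(1)\), so it is \(o(n)\). The second term equals
\[
N_0\,\kl(\hat p,p_0)+N_1\,\kl(\hat q,q_0),
\]
with \(\hat p=N_{01}/N_0\) and \(\hat q=N_{11}/N_1\). The KT term is also at most \(O(1)\) relative to the ML likelihood, so the decomposition gives matching upper and lower bounds up to \(O(\log n)\).

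By the ergodic theorem for the irreducible aperiodic chain \(\Markov(r_1)\) (its entries lie in \([\kappa,1-\kappa]\)), we get \(N_x/n\to\varpi_{r_1}(x)\), \(\hat p\to p_1\) and \(\hat q\to q_1\) almost surely. Continuity of \(\kl\) on the compact set \([0,1]\times[\kappa,1-\kappa]\) then gives \(\tfrac1n\log M_{1:n}^{r_0}\to\mathcal I(r_1\|r_0)\).

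For \(\widetilde M\), note that \(\log\widetilde M=\log M+O(1)\) when \(M\ge1\), and \(\log\widetilde M\in[-\log2,0]\) otherwise. If \(\mathcal I>0\), the limit is therefore \(\mathcal I\). If \(\mathcal I=0\), the regret bound gives \(\log M\ge-O(\log n)\), so \(\tfrac1n\log\widetilde M\to0\) as well.

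\textbf{Uniformity.} This is the step I expect to need the most care. The two error terms are the regret term, which is bounded by \(\log n+O(1)\) and does not depend on \(r_0\), and the difference
\[
\Bigl|\tfrac{N_0}{n}\kl(\hat p,p_0)-\varpi_{r_1}(0)\kl(p_1,p_0)\Bigr|
\]
together with its \(q\)-analogue. Because \(p_0\in[\kappa,1-\kappa]\), the map \((x,p_0)\mapsto\kl(x,p_0)\) is uniformly continuous on \([0,1]\times[\kappa,1-\kappa]\) and bounded by \(\log(1/\kappa)\). Hence \(\sup_{p_0}|\kl(\hat p,p_0)-\kl(p_1,p_0)|\to0\) whenever \(\hat p\to p_1\), which handles the difference term uniformly in \(r_0\).

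Passing from \(\log M\) to \(\log\widetilde M\) costs at most \(\log 2\) uniformly in \(r_0\), plus the case \(M<1\). In that case both \(\tfrac1n\log\widetilde M\) and the corresponding \(\tfrac1n\log M\) are within \(O(\log n/n)\) of \(\max\{\cdot,0\}\) of a quantity that is uniformly close to \(\mathcal I\ge0\). This yields \eqref{eq:markov-uniform-growth}.
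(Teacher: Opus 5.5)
Your proposal is correct and follows essentially the same route as the paper: the one-step conditional-expectation computation in the predictive form for the martingale property, the Jeffreys (KT) mixture regret sandwich $N_0\kl(\widehat p_n,p_0)+N_1\kl(\widehat q_n,q_0)-O(\log n)\le\log M_{1:n}^{r_0}\le N_0\kl(\widehat p_n,p_0)+N_1\kl(\widehat q_n,q_0)$, the Markov ergodic theorem, and uniform continuity of $\kl$ on $[0,1]\times[\kappa,1-\kappa]$ for uniformity in $r_0$. Your explicit handling of the passage from $M$ to $\widetilde M=(1+M)/2$ is in effect the bound $|\log\widetilde M-\max\{\log M,0\}|\le\log 2$, which also covers the case $\mathcal I(r_1\|r_0)=0$; this is slightly more careful than the paper's one-line remark, but it is the same argument.
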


\begin{proof}
Fix \(r=(p,q)\in\Theta_\kappa\).  From
\eqref{eq:markov-eprocess-predictive},
\[
\begin{aligned}
&\mathbb E_{p,q}
\left[
    \frac{
        \widehat\rho_{s,t-1}(X_{t-1})^{X_t}
        \{1-\widehat\rho_{s,t-1}(X_{t-1})\}^{1-X_t}
    }{
        \rho_r(X_{t-1})^{X_t}
        \{1-\rho_r(X_{t-1})\}^{1-X_t}
    }
    \,\middle|\,
    \mathcal F_{t-1}
\right]
\\
&\qquad=
\sum_{y\in\{0,1\}}
\widehat\rho_{s,t-1}(X_{t-1})^y
\{1-\widehat\rho_{s,t-1}(X_{t-1})\}^{1-y}
=1.
\end{aligned}
\]
Thus \(M_{s:t}^{p,q}\) is a nonnegative test martingale.  Since
\(\widetilde M=(1+M)/2\), the baseline-stabilized process is also an
\(s\)-delay e-process.

We next prove the growth assertion.  Put
\[
    \widehat p_n
    :=
    \frac{N_{01}^{1:n}}{N_0^{1:n}},
    \qquad
    \widehat q_n
    :=
    \frac{N_{11}^{1:n}}{N_1^{1:n}},
\]
whenever the corresponding denominator is nonzero.  Terms associated
with an unvisited state are interpreted as zero.  Standard bounds for
the Jeffreys beta mixture give a constant \(C<\infty\), independent
of the data and of \(r_0=(p_0,q_0)\in\Theta_\kappa\), such that
\begin{align}
&N_0^{1:n}\kl(\widehat p_n,p_0)
+
N_1^{1:n}\kl(\widehat q_n,q_0)
-C\log(n+1)
\nonumber\\
&\qquad\le
\log M_{1:n}^{p_0,q_0}
\le
N_0^{1:n}\kl(\widehat p_n,p_0)
+
N_1^{1:n}\kl(\widehat q_n,q_0).
\label{eq:markov-mixture-regret}
\end{align}
Under \(\Markov(r_1)\), the Markov ergodic theorem gives
\[
    \frac{N_0^{1:n}}n\to\varpi_{r_1}(0),
    \qquad
    \frac{N_1^{1:n}}n\to\varpi_{r_1}(1),
\]
and
\[
    \widehat p_n\to p_1,
    \qquad
    \widehat q_n\to q_1
\]
almost surely.  Since \(\Theta_\kappa\) is compact and binary relative
entropy is uniformly continuous on
\([0,1]\times[\kappa,1-\kappa]\), division of
\eqref{eq:markov-mixture-regret} by \(n\) gives
\[
    \sup_{r_0\in\Theta_\kappa}
    \left|
        \frac1n\log M_{1:n}^{r_0}
        -
        \mathcal I(r_1\|r_0)
    \right|
    \longrightarrow0
\]
almost surely.  Replacing \(M\) by \((1+M)/2\) does not change the
normalized logarithm, which proves
\eqref{eq:markov-uniform-growth}.
\end{proof}

\subsection{ARL- and PFA-valid Markov e-detectors}

Using the point-null processes in
\eqref{eq:markov-baseline-eprocess}, define
\begin{equation}
\label{eq:markov-arl-detector}
    D_t^{\mathrm{MC},\mathrm{ARL}}
    :=
    \inf_{(p,q)\in\Theta_\kappa}
    \sum_{s=1}^t
    \widetilde M_{s:t}^{p,q},
    \qquad
    \tau_A^{\mathrm{MC},\mathrm{ARL}}
    :=
    \inf\left\{
        t\ge1:
        D_t^{\mathrm{MC},\mathrm{ARL}}\ge A
    \right\}.
\end{equation}
For deterministic weights \(\pi_s>0\) satisfying
\[
    \sum_{s=1}^\infty\pi_s\le1,
\]
define
\begin{equation}
\label{eq:markov-pfa-detector}
    D_t^{\mathrm{MC},\mathrm{PFA}}
    :=
    \inf_{(p,q)\in\Theta_\kappa}
    \sum_{s=1}^t
    \pi_s\widetilde M_{s:t}^{p,q},
    \qquad
    \tau_\alpha^{\mathrm{MC},\mathrm{PFA}}
    :=
    \inf\left\{
        t\ge1:
        D_t^{\mathrm{MC},\mathrm{PFA}}
        \ge\frac1\alpha
    \right\}.
\end{equation}

\begin{corollary}[Finite-sample false-alarm validity]
\label{cor:markov-false-alarm}
For every \((p,q)\in\Theta_\kappa\),
\[
    \mathbb E_{p,q}^\infty
    \left[
        \tau_A^{\mathrm{MC},\mathrm{ARL}}
    \right]
    \ge A,
\]
and, for every \(m\in\mathbb N\),
\[
    \mathbb P_{p,q}^\infty
    \left(
        \tau_A^{\mathrm{MC},\mathrm{ARL}}\le m
    \right)
    \le\frac mA.
\]
Moreover,
\[
    \sup_{(p,q)\in\Theta_\kappa}
    \mathbb P_{p,q}^\infty
    \left(
        \tau_\alpha^{\mathrm{MC},\mathrm{PFA}}<\infty
    \right)
    \le\alpha.
\]
\end{corollary}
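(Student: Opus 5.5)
The plan is to reduce everything to \Cref{thm:arl-control,thm:pfa-control}. Their proofs use only two ingredients: for every candidate no-change law and every restart time, the restarted process is a conditionally valid $s$-delay e-process; and the detector is an infimum over candidates of a weighted sum of such processes. Independence is never used. So the first step is to check the hypotheses in the Markov setting. Take the filtration $\mathcal F_t=\sigma(X_0,\ldots,X_t)$, and regard the no-change laws as $\Markov(p,q)$ with $(p,q)\in\Theta_\kappa$, conditionally on $X_0$. By \Cref{prop:markov-validity-growth}, for every $(p,q)$ and every $s$, the process $\{\widetilde M_{s:t}^{p,q}\}_{t\ge s-1}$ is an $s$-delay e-process under $\Markov(p,q)$. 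Indeed $M^{p,q}_{s:\cdot}$ is a test martingale started at $1$ at time $s-1$, so optional stopping gives $\E_{p,q}[M^{p,q}_{s:\tau}\mid\mathcal F_{s-1}]\le1$ for every stopping time $\tau$, via Fatou on $\tau\wedge n$. The average $(1+M)/2$ inherits the same bound.

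For the ARL part, I would reproduce the e-detector argument with $R=\Markov(p,q)$ fixed. Bound the infimum by the value at the true parameter:
\[
D_t^{\mathrm{MC},\mathrm{ARL}}\le \sum_{s=1}^t\widetilde M^{p,q}_{s:t}.
\]
For any stopping time $\tau$, write $\sum_{s\le\tau}\widetilde M^{p,q}_{s:\tau}=\sum_{s\ge1}\ind\{\tau\ge s\}\widetilde M^{p,q}_{s:\tau}$. Since $\{\tau\ge s\}\in\mathcal F_{s-1}$, conditioning on $\mathcal F_{s-1}$ and applying the $s$-delay bound gives $\E[\cdot]\le\sum_s\bbP(\tau\ge s)=\E\tau$. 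The exchange of sum and expectation is justified by Tonelli, since all terms are nonnegative. This shows that $\sum_s\widetilde M_{s:t}$ is an e-detector, and hence so is $D^{\mathrm{MC},\mathrm{ARL}}$, which it dominates. Then use the standard consequences. If $\E\tau_A<\infty$ (otherwise the ARL bound is trivial), then
\[
A\le\E[D_{\tau_A}]\le\E\tau_A .
\]
For the local bound, apply the e-detector inequality to $\tau_A\wedge m$ and use Markov's inequality:
\[
A\,\bbP(\tau_A\le m)\le\E[D_{\tau_A\wedge m}]\le\E[\tau_A\wedge m]\le m.
\]
Equivalently, both conclusions can be cited from \Cref{thm:arl-control} once its hypothesis has been verified.

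For the PFA part, the same bound by the true parameter gives
\[
D_t^{\mathrm{MC},\mathrm{PFA}}\le Y_t:=\sum_{s\le t}\pi_s\widetilde M^{p,q}_{s:t}.
\]
Set $\widetilde M^{p,q}_{s:t}=1$ for $t<s$ and define the full process $Z_t=\sum_{s\ge1}\pi_s\widetilde M_{s:t}^{p,q}\ge Y_t$. For any stopping time $\tau$, the same conditioning argument gives $\E Z_\tau\le\sum_s\pi_s\le1$, so $Z$ is an e-process. Then apply Ville-type reasoning for e-processes: for each $n$, Markov's inequality at the stopping time $\tau_\alpha\wedge n$ gives
\[
\bbP(\tau_\alpha\le n)\le\alpha\,\E Z_{\tau_\alpha\wedge n}\le\alpha .
\]
Letting $n\to\infty$ yields the bound $\alpha$, uniformly over $(p,q)$.

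The only delicate point, and the main obstacle, is measurability: the infimum over the uncountable set $\Theta_\kappa$ must make $\tau$ a stopping time. Here it is immediate. Each $\widetilde M^{p,q}_{s:t}$ depends on the data only through finitely many transition counts, and it is continuous in $(p,q)$ on the compact set $\Theta_\kappa$. So the infimum equals an infimum over a countable dense subset, and is $\mathcal F_t$-measurable. A secondary subtlety is conditioning on $X_0$: every bound above holds under $\bbP(\cdot\mid X_0)$, and integrating over $X_0$ gives the result for any initial distribution.
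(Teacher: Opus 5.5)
Your proposal is correct and follows the paper's route. The paper's proof simply notes that the arguments for \Cref{thm:arl-control,thm:pfa-control} use only the restartable conditional e-process property, which \Cref{prop:markov-validity-growth} supplies, and never use independence; you verify exactly that and then replay those arguments. Your added details are harmless refinements: measurability via continuity in $(p,q)$ on the compact $\Theta_\kappa$, the direct $\E\tau_A<\infty$ case, and Markov's inequality at $\tau_\alpha\wedge n$ in place of citing Ville's inequality.
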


\begin{proof}
The proof is identical to those of
\Cref{thm:arl-control,thm:pfa-control}.  Those arguments use only the
restartable conditional e-process property and do not use independence.
\end{proof}

\subsection{Detection-delay guarantees}

Suppose the pre-change transition kernel is
\[
    P=\Markov(p_0,q_0),
\]
and the post-change transition kernel is
\[
    Q=\Markov(p_1,q_1),
\]
where both parameter pairs belong to \(\Theta_\kappa\) and
\((p_0,q_0)\ne(p_1,q_1)\).  Define
\begin{equation}
\label{eq:markov-Istar}
    I_{\mathrm{MC}}^*
    :=
    \mathcal I\bigl((p_1,q_1)\|(p_0,q_0)\bigr)
    >0.
\end{equation}

The following theorem is the Markov analogue of the previous
sub-Gaussian, bounded-mean, and Gaussian delay results.

For a transition-frequency array
\(\gamma=(\gamma_{xy})_{x,y\in\{0,1\}}\), write
\[
    \gamma_x:=\gamma_{x0}+\gamma_{x1}.
\]
Whenever \(\gamma_0,\gamma_1>0\), define
\[
    \Psi(\gamma,r)
    :=
    \sum_{x=0}^1
    \gamma_x
    \kl\left(
        \frac{\gamma_{x1}}{\gamma_x},
        \rho_r(x)
    \right).
\]
The Jeffreys-mixture regret bound \eqref{eq:markov-mixture-regret}
and \(\widetilde M\ge M/2\) imply
\[
    \log\widetilde M_{s:s+n-1}^{r}
    \ge
    n\Psi(\widehat\Gamma_{s:s+n-1},r)
    -C\log(n+1)-\log2.
\]

\begin{theorem}[Delay for a two-state Markov change]
\label{thm:markov-delay}
Let \(P=\Markov(p_0,q_0)\) and \(Q=\Markov(p_1,q_1)\) be distinct
members of \(\mathcal P_{\mathrm{MC}}\).
For the ARL detector, let \(A\to\infty\), put \(L=\log A\), and suppose
the changepoint \(T_A\) satisfies
$\frac{T_A}{L}\longrightarrow\infty,
    \log T_A=o(L).$
Then
\[
    \tau_A^{\mathrm{MC},\mathrm{ARL}}-T_A
    \le
    (1+o_{\mathbb P}(1))
    \frac{L}{I_{\mathrm{MC}}^*}.
\]
Moreover,
\[
    \operatorname{CADD}_{P,Q}^{T_A}
    \left(
        \tau_A^{\mathrm{MC},\mathrm{ARL}}
    \right)
    \le
    (1+o(1))
    \frac{L}{I_{\mathrm{MC}}^*}.
\]
For the PFA detector, let
$L_\alpha:=\log(1/\alpha),
    B_\alpha
    :=
    L_\alpha-\log\pi_{T_\alpha+1}.$
If $   \frac{T_\alpha}{B_\alpha}\longrightarrow\infty,$
then
\[
    \tau_\alpha^{\mathrm{MC},\mathrm{PFA}}-T_\alpha
    \le
    (1+o_{\mathbb P}(1))
    \frac{B_\alpha}{I_{\mathrm{MC}}^*}.
\]
For the canonical weights \eqref{eq:pi-weights},
this becomes
\[
    \tau_\alpha^{\mathrm{MC},\mathrm{PFA}}-T_\alpha
    \le
    (1+o_{\mathbb P}(1))
    \frac{
        \log(1/\alpha)
        +\log T_\alpha
        +2\log\log(eT_\alpha)
        +O(1)
    }{
        I_{\mathrm{MC}}^*
    }.
\]
\end{theorem}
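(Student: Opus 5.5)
We argue directly from the uniform regret bound rather than through the local witness machinery. Compactness of $\Theta_\kappa$ and the explicit Jeffreys-mixture bound \eqref{eq:markov-mixture-regret} give uniformity over candidates for free. We fix $\varepsilon>0$ and set $d=\lceil(1+\varepsilon)B/I_{\mathrm{MC}}^*\rceil$, where $B=L$ for the ARL detector and $B=B_\alpha$ for the PFA detector. It suffices to show that with probability tending to one, $D_{T+d}\ge A$. Since every $\widetilde M\ge 1/2$, the sum over starts is bounded below by any single weighted term. We therefore split $\Theta_\kappa$ into a neighbourhood $U_\delta=\{r_0:\mathcal I(r_1\|r_0)\ge I^*_{\mathrm{MC}}-\delta\}$ of $r_P=(p_0,q_0)$ and its complement. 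On $U_\delta$ we use the post-change block starting at $s=T+1$. On $U_\delta^c$ we use the block starting at $s=1$.

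\emph{Post-change block.} The transitions $X_{T+1},\dots,X_{T+d}$ form a $\Markov(r_1)$ chain started from $X_T$. The uniform convergence \eqref{eq:markov-uniform-growth}, which holds conditionally on the initial state, gives $\inf_{r_0\in U_\delta}\log\widetilde M^{r_0}_{T+1:T+d}\ge d(I^*_{\mathrm{MC}}-2\delta)$ with probability $\to1$, because $d\to\infty$. Choosing $\delta$ small relative to $\varepsilon$ makes $\log w_{T+1}+d(I^*-2\delta)\ge L$ eventually, since $d(I^*-2\delta)\ge(1+\varepsilon)(1-2\delta/I^*)B$ and $B\to\infty$.

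\emph{Prefix block.} The main obstacle is here. The raw process $M_{1:T+d}$ mixes pre- and post-change transitions, and the paper's adjuster device is not built into $\widetilde M$. I will handle this through the explicit form: $\log\widetilde M^{r_0}_{1:T+d}\ge (T+d)\Psi(\widehat\Gamma_{1:T+d},r_0)-C\log(T+d+1)-\log 2$. Since $d/T\to0$, the empirical transition frequencies $\widehat\Gamma_{1:T+d}$ differ by $O(d/T)=o(1)$ from $\widehat\Gamma_{1:T}$. By the ergodic theorem, $\widehat\Gamma_{1:T}$ converges in probability to the stationary frequency array of $r_P$, namely $\gamma^P_{x1}=\varpi_{r_P}(x)\rho_{r_P}(x)$. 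Because $\kl(\cdot,y)$ is uniformly continuous for $y\in[\kappa,1-\kappa]$, we get $\inf_{r_0\notin U_\delta}\Psi(\widehat\Gamma_{1:T+d},r_0)\ge c_\delta-o_{\mathbb P}(1)$, where $c_\delta=\inf_{r_0\notin U_\delta}\mathcal I(r_P\|r_0)$. The constant $c_\delta$ is positive: $U_\delta$ is a neighbourhood of $r_P$ by continuity of $\mathcal I(r_1\|\cdot)$, the complement $U_\delta^c$ is compact, and $\mathcal I(r_P\|r_0)=0$ only when $r_0=r_P$ (both stationary masses lie in $(0,1)$). Hence $\log w_1+\inf_{r_0\notin U_\delta}\log\widetilde M^{r_0}_{1:T+d}\ge T c_\delta/2-C\log T+\log w_1\ge L$ eventually.

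This last inequality uses $T/L\to\infty$. It also needs $\log w_1=O(1)$: this is trivial in the ARL case, and holds for the PFA weights since $\pi_1$ is fixed. Combining the two blocks gives $\mathbb P(\tau\le T+d)\to1$ for every $\varepsilon$, which is the $o_{\mathbb P}(1)$ statement for both detectors. The canonical-weights form then follows from \eqref{eq:pi-log}.

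\emph{CADD bound.} For the ARL detector we additionally need the expectation bound. By \Cref{cor:markov-false-alarm}, $\mathbb P(\tau\le T)\le T/A\to0$, because $\log T=o(L)$. For uniform integrability of $(\tau-T)^+/d$, we repeat the post-change argument on successive disjoint blocks of length $d$ starting at $T+kd+1$. Each block independently (by the Markov property and uniformity over the initial state) triggers with probability bounded away from zero; here we rely on the $s=T+kd+1$ term together with the prefix. This gives a geometric tail for $(\tau-T)^+/d$. Then $\E[(\tau-T)^+]\le(1+o(1))d$, and dividing by $\mathbb P(\tau>T)\to1$ yields the CADD bound.
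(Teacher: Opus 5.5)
Your high-probability bounds for both detectors are correct and essentially follow the paper's argument. The start-$(T+1)$ block rejects a neighbourhood of $r_P$, and the start-$1$ cross block rejects its complement, using $d/T\to0$, the Jeffreys regret bound and uniform continuity of $\Psi$. One slip: $U_\delta^c=\{r:\mathcal I(r_1\|r)<I^*_{\mathrm{MC}}-\delta\}$ is relatively open, not compact. Instead, obtain $c_\delta>0$ by minimising $\mathcal I(r_P\|\cdot)$ over the compact set $\{r:\mathcal I(r_1\|r)\le I^*_{\mathrm{MC}}-\delta\}$, which excludes $r_P$.

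The gap is in the CADD step. For block $k$ to trigger at time $T+(k+1)d$, you need the start-$1$ term to reject every $r\notin U_\delta$. That holds only while $(k+1)d\le cT$ for some small $c=c(\delta)$. Once $(k+1)d$ is comparable to $T$, the start-$1$ block has transition frequencies near $(\Gamma_{r_P}+\lambda\Gamma_{r_1})/(1+\lambda)$ with $\lambda\approx(k+1)d/T$. Candidates near the row-normalised kernel $\bar r_\lambda$ of this array are then not rejected by the start-$1$ term. For large $\lambda$ this kernel is close to $r_1$, so it lies outside $U_\delta$, and its $\mathcal I(r_1\|\cdot)$ is far too small for a length-$d$ block to reject it. So later blocks do not trigger by your mechanism. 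The blocks are also not independent: they all require the same prefix event, whose failure probability does not decay in $k$. Hence no geometric tail follows, and you have no control of $\E[(\tau-T)^+]$ on $\{\tau-T\gtrsim T\}$.

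Even on the usable range, a per-block success probability merely bounded away from zero is too weak. There are only $\asymp T/L$ usable blocks, and the only far-tail control available is the deterministic cap $\tau\le\lceil 2A\rceil$, which follows from $\widetilde M\ge1/2$ and hence $D_t\ge t/2$. A tail of order $\rho^{cT/L}$ does not beat $2A=e^{L+O(1)}$ when merely $T/L\to\infty$.

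To repair the argument along your lines, proceed as follows.
\begin{itemize}
\item By \Cref{lem:markov-transition-concentration}, each post-change block fails with probability at most $q:=C_\kappa e^{-c_\kappa\delta^2 d}=e^{-\Omega(L)}$, uniformly in its initial state.
\item The prefix event $\{\|\widehat\Gamma_{1:T}-\Gamma_{r_P}\|_\infty\le\delta/2\}$ fails with probability $p_T\le C_\kappa e^{-c_\kappa\delta^2T/4}$.
\item Hence $\mathbb P(\tau-T>kd)\le p_T+q^k$ whenever $kd\le cT$, which gives $\mathbb P(\tau-T>\lfloor cT/d\rfloor d)\le e^{-\Omega(T)}$.
\item With the cap, $\E[(\tau-T)^+]\le d+cT\,p_T+dq/(1-q)+\lceil 2A\rceil e^{-\Omega(T)}=d+o(1)$, because $T/L\to\infty$.
\end{itemize}

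The paper instead proves $\mathbb P(\tau>T+\lfloor T/2\rfloor)\le Ce^{-cT}$ in one step. It uses the compact arc $K=\{\bar r_\lambda:0\le\lambda\le1/2\}$ and an open $V\supset K$ with $r_1\notin\overline V$. The post-change block of length $\lfloor T/2\rfloor$ rejects $V$, and the cross block rejects $V^c$. It then bounds, with $Y=(\tau-T)^+$, $\E[Y]\le d+T\,\mathbb P(Y>d)+2A\,\mathbb P(Y>\lfloor T/2\rfloor)$, using $\mathbb P(Y>d)\le Ce^{-c_\varepsilon L}$ and $\log T=o(L)$.
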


\begin{proof}[Proof of \Cref{thm:markov-delay}]
Fix \(\varepsilon>0\). For generic weights \(w_s\), put
\[
    B:=L-\log w_{T+1},
    \qquad
    d
    :=
    \left\lceil
        (1+\varepsilon)
        \frac{B}{I_{\mathrm{MC}}^*}
    \right\rceil.
\]
Under the stated assumptions,
\[
    d\to\infty,
    \qquad
    \frac dT\to0.
\]

Write
\[
    r_0=(p_0,q_0),
    \qquad
    r_1=(p_1,q_1),
    \qquad
    I_{\mathrm{MC}}^*
    =
    \mathcal I(r_1\|r_0).
\]
Choose \(\eta>0\) sufficiently small that
\begin{equation}
\label{eq:markov-eta-choice}
    (1+\varepsilon)
    \frac{I_{\mathrm{MC}}^*-3\eta}
         {I_{\mathrm{MC}}^*}
    >1.
\end{equation}
By continuity, there is a compact relative neighborhood
\(U\subset\Theta_\kappa\) of \(r_0\) such that
\[
    \inf_{r\in U}\mathcal I(r_1\|r)
    \ge I_{\mathrm{MC}}^*-\eta.
\]
Also,
\[
    c_U
    :=
    \inf_{r\notin U}\mathcal I(r_0\|r)
    >0.
\]

Let \(\widehat\Gamma_{\mathrm{pre}}\) and
\(\widehat\Gamma_{\mathrm{post},d}\) be the empirical transition
frequencies in the first \(T\) and next \(d\) transitions,
respectively. Since \(\Psi\) is uniformly continuous on the relevant
compact sets, there is \(\delta>0\) such that
\begin{align}
    \|\widehat\Gamma_{\mathrm{post},d}-\Gamma_{r_1}\|_\infty
    \le\delta
    &\implies
    \inf_{r\in U}
    \Psi(\widehat\Gamma_{\mathrm{post},d},r)
    \ge I_{\mathrm{MC}}^*-2\eta,
    \label{eq:markov-post-good}\\
    \|\gamma-\Gamma_{r_0}\|_\infty\le\delta
    &\implies
    \inf_{r\notin U}\Psi(\gamma,r)
    \ge \frac{c_U}{2}.
    \label{eq:markov-pre-good}
\end{align}

On the event
\[
    G_d
    :=
    \left\{
        \|\widehat\Gamma_{\mathrm{post},d}
          -\Gamma_{r_1}\|_\infty\le\delta
    \right\}
    \cap
    \left\{
        \|\widehat\Gamma_{\mathrm{pre}}
          -\Gamma_{r_0}\|_\infty\le\frac{\delta}{2}
    \right\},
\]
the regret inequality and \eqref{eq:markov-post-good} give, for all
sufficiently large \(L\),
\[
    \inf_{r\in U}
    \log\widetilde M_{T+1:T+d}^{r}
    \ge
    d(I_{\mathrm{MC}}^*-3\eta).
\]
By \eqref{eq:markov-eta-choice},
\[
    \inf_{r\in U}
    \log\left\{
        w_{T+1}\widetilde M_{T+1:T+d}^{r}
    \right\}
    \ge L.
\]

Moreover,
\[
    \widehat\Gamma_{1:T+d}
    =
    \frac{T}{T+d}\widehat\Gamma_{\mathrm{pre}}
    +
    \frac{d}{T+d}\widehat\Gamma_{\mathrm{post},d}.
\]
Since \(d/T\to0\), on \(G_d\),
\[
    \|\widehat\Gamma_{1:T+d}-\Gamma_{r_0}\|_\infty
    \le\delta
\]
eventually. Hence \eqref{eq:markov-pre-good} gives
\[
    \inf_{r\notin U}
    \log\widetilde M_{1:T+d}^{r}
    \ge
    \frac{c_U}{3}(T+d)
\]
for all sufficiently large \(T\), where the logarithmic regret has
again been absorbed. Since
\[
    L-\log w_1=o(T),
\]
we obtain
\[
    \inf_{r\notin U}
    \log\left\{
        w_1\widetilde M_{1:T+d}^{r}
    \right\}
    \ge L.
\]
Thus \(G_d\) implies \(\tau^w\le T+d\).

By \Cref{lem:markov-transition-concentration}, uniformly over the
state at the changepoint,
\begin{align}
    \mathbb P_{P,Q}^{T}(\tau^w>T+d)
    &\le
    C_\kappa e^{-c_\kappa\delta^2d}
    +
    C_\kappa e^{-c_\kappa\delta^2T}
    \nonumber\\
    &\le
    C e^{-c_\varepsilon L},
    \label{eq:markov-short-tail}
\end{align}
because \(d\ge c_\varepsilon' L\) and \(T/L\to\infty\). This proves
the high-probability delay bound.

We now specialize to the ARL detector and prove the second,
macroscopic tail bound. Put
\[
    H:=\lfloor T/2\rfloor.
\]
For \(\lambda\in[0,1/2]\), define
\[
    \Gamma_\lambda
    :=
    \frac{\Gamma_{r_0}+\lambda\Gamma_{r_1}}
         {1+\lambda},
\]
and let \(\bar r_\lambda\) be the transition kernel obtained by
normalizing the two rows of \(\Gamma_\lambda\). The set
\[
    K:=\{\bar r_\lambda:0\le\lambda\le1/2\}
\]
is compact and does not contain \(r_1\). Indeed, each coordinate of
\(\bar r_\lambda\) is a strict weighted average of the corresponding
coordinates of \(r_0\) and \(r_1\) whenever those coordinates differ.

Choose a relatively open set \(V\subset\Theta_\kappa\) such that
\[
    K\subset V,
    \qquad
    r_1\notin\overline V.
\]
Then
\[
    a_V
    :=
    \inf_{r\in\overline V}
    \mathcal I(r_1\|r)
    >0.
\]
Furthermore,
\[
    b_V
    :=
    \inf_{\substack{0\le\lambda\le1/2\\r\notin V}}
    \Psi(\Gamma_\lambda,r)
    >0,
\]
because \(\Psi(\Gamma_\lambda,r)=0\) only when
\(r=\bar r_\lambda\in V\).

If both the pre-change transition frequencies and the \(H\)
post-change transition frequencies are sufficiently close to
\(\Gamma_{r_0}\) and \(\Gamma_{r_1}\), respectively, uniform
continuity and the regret bound imply
\[
    \inf_{r\in V}
    \log\widetilde M_{T+1:T+H}^{r}
    \ge
    \frac{a_VH}{2},
\]
and
\[
    \inf_{r\notin V}
    \log\widetilde M_{1:T+H}^{r}
    \ge
    \frac{b_V(T+H)}{2}
\]
for all sufficiently large \(T\). Since \(T/L\to\infty\), both
right-hand sides exceed \(L\). Thus the ARL detector stops by
\(T+H\) on this event. Another application of
\Cref{lem:markov-transition-concentration} therefore gives
\begin{equation}
\label{eq:markov-long-tail}
    \mathbb P_{P,Q}^{T}
    \left(
        \tau_A^{\mathrm{MC},\mathrm{ARL}}>T+H
    \right)
    \le
    C e^{-cT}.
\end{equation}

Finally, the baseline
\[
    \widetilde M_{s:t}^{r}\ge\frac12
\]
implies
\[
    D_t^{\mathrm{MC},\mathrm{ARL}}\ge\frac t2,
    \qquad
    \tau_A^{\mathrm{MC},\mathrm{ARL}}\le\lceil2A\rceil
\]
pathwise. Writing
\[
    Y
    :=
    \left(
        \tau_A^{\mathrm{MC},\mathrm{ARL}}-T
    \right)^+,
\]
we obtain from \eqref{eq:markov-short-tail} and
\eqref{eq:markov-long-tail}
\begin{align*}
    \mathbb E_{P,Q}^{T}[Y]
    &\le
    d
    +
    T\,\mathbb P_{P,Q}^{T}(Y>d)
    +
    2A\,\mathbb P_{P,Q}^{T}(Y>H)\\
    &\le
    d
    +
    CTe^{-c_\varepsilon L}
    +
    2CAe^{-cT}.
\end{align*}
Since
\[
    \log T=o(L),
    \qquad
    \frac TL\to\infty,
    \qquad
    A=e^L,
\]
the last two terms converge to zero. Hence, for every fixed
\(\varepsilon>0\),
\[
    \mathbb E_{P,Q}^{T}[Y]
    \le
    \left\lceil
        (1+\varepsilon)
        \frac{L}{I_{\mathrm{MC}}^*}
    \right\rceil
    +o(1).
\]
The usual fixed-\(\varepsilon\) \(\limsup\) argument gives
\[
    \mathbb E_{P,Q}^{T}[Y]
    \le
    (1+o(1))
    \frac{L}{I_{\mathrm{MC}}^*}.
\]

Finally, the local false-alarm guarantee gives
\[
    \mathbb P_{P,Q}^{T}
    \left(
        \tau_A^{\mathrm{MC},\mathrm{ARL}}\le T
    \right)
    =
    \mathbb P_{P}^{\infty}
    \left(
        \tau_A^{\mathrm{MC},\mathrm{ARL}}\le T
    \right)
    \le
    \frac TA
    =
    e^{\log T-L}
    \longrightarrow0.
\]
Therefore,
\[
    \operatorname{CADD}_{P,Q}^{T}
    \left(
        \tau_A^{\mathrm{MC},\mathrm{ARL}}
    \right)
    =
    \frac{
        \mathbb E_{P,Q}^{T}[Y]
    }{
        \mathbb P_{P,Q}^{T}
        (\tau_A^{\mathrm{MC},\mathrm{ARL}}>T)
    }
    \le
    (1+o(1))
    \frac{L}{I_{\mathrm{MC}}^*}.
\]
\end{proof}

\paragraph{Computation.}
For every start \(s\), the four transition counts
\[
    N_{00}^{s:t},\quad N_{01}^{s:t},\quad
    N_{10}^{s:t},\quad N_{11}^{s:t}
\]
can be updated in constant time after each new observation.  At a
fixed time \(t\), the objective
\[
    (p,q)
    \longmapsto
    \sum_{s=1}^t
    w_s\widetilde M_{s:t}^{p,q}
\]
is smooth and convex on the compact rectangle
\(\Theta_\kappa\), because each summand is a positive constant times
the exponential of a convex negative transition log-likelihood.
Thus the infimum is a two-dimensional convex optimization problem.
The exact all-start implementation costs \(O(t)\) per objective,
gradient, or Hessian evaluation and \(O(t)\) memory at time \(t\).
As in the preceding examples, pruning or geometric start grids can
reduce the all-start computational cost while preserving validity
when the resulting statistic is a pointwise lower bound on the full
detector.

\section{Proofs of the main results}\label{app:main-proofs}

\subsection{Finite-sample ARL and PFA validity}\label{app:proof-validity}

\begin{proof}[Proof of \Cref{thm:arl-control}]
Fix $R_0\in\calP$ and define the oracle SR statistic
\[
  S_t^{R_0}=\sum_{s=1}^t  M_{s:t}^{R_0}.
\]
Let $\tau$ be any almost surely finite stopping time.  By conditional e-process validity after time $s-1$,
\[
  \Einf{R_0}\!\left[\ind\{\tau\ge s\}M_{s:\tau}^{R_0}\right]
  \le
  \Pinf{R_0}(\tau\ge s).
\]
Summing over $s\ge1$ gives
\[
  \Einf{R_0}S_\tau^{R_0}
  \le
  \sum_{s\ge1}\Pinf{R_0}(\tau\ge s)
  =
  \Einf{R_0}\tau,
\]
proving the e-detector claim.
Apply this with $\tau=\tau_A^{\mathrm{ARL}}\wedge m$.  Since
\[
  D_t^{\mathrm{ARL}}\le S_t^{R_0},
\]
on the event $\{\tau_A^{\mathrm{ARL}}\le m\}$ we have
\[
  S_{\tau_A^{\mathrm{ARL}}\wedge m}^{R_0}
  =S_{\tau_A^{\mathrm{ARL}}}^{R_0}
  \ge D_{\tau_A^{\mathrm{ARL}}}^{\mathrm{ARL}}
  \ge A.
\]
Therefore
\[
  A\Pinf{R_0}(\tau_A^{\mathrm{ARL}}\le m)
  \le
  \Einf{R_0}S_{\tau_A^{\mathrm{ARL}}\wedge m}^{R_0}
  \le
  \Einf{R_0}(\tau_A^{\mathrm{ARL}}\wedge m)
  \le m.
\]
This proves the finite-horizon bound.  Letting $m\to\infty$ yields
\[
  A\Pinf{R_0}(\tau_A^{\mathrm{ARL}}<\infty)
  \le
  \Einf{R_0}\tau_A^{\mathrm{ARL}}.
\]
If $\Pinf{R_0}(\tau_A^{\mathrm{ARL}}<\infty)<1$, then $\Einf{R_0}\tau_A^{\mathrm{ARL}}=\infty$.  Otherwise the above inequality gives $$\Einf{R_0}\tau_A^{\mathrm{ARL}}\ge A.$$
\end{proof}

\begin{proof}[Proof of \Cref{thm:pfa-control}]
Fix $R_0\in\calP$.  For each $s$, define
\[
  Z_{s,t}^{R_0}=\begin{cases}
  1, & t<s,\\
  M_{s:t}^{R_0}, & t\ge s.
  \end{cases}
\]
By restartable validity, $(Z_{s,t}^{R_0})_{t\ge0}$ is an e-process under $\Pinf{R_0}$.  Hence
\[
  E_t^{R_0}
  =1-\sum_{s=1}^\infty\pi_s+
  \sum_{s=1}^\infty\pi_s Z_{s,t}^{R_0}
\]
is an e-process under $\Pinf{R_0}$.  Also,
\[
  D_t^{\mathrm{PFA}}
  \le
  \sum_{s=1}^t \pi_sM_{s:t}^{R_0}
  \le
  E_t^{R_0},
\]
hence is also an e-process as claimed.
Therefore, by Ville's inequality,
\[
  \Pinf{R_0}(\tau_\alpha^{\mathrm{PFA}}<\infty)
  \le
  \Pinf{R_0}\left(\sup_{t\ge0}E_t^{R_0}\ge\frac1\alpha\right)
  \le \alpha.
\]
Taking the supremum over $R_0$ proves the claim.
\end{proof}

\subsection{Generic late-change upper bounds}\label{app:proof-generic-upper}

\begin{proof}[Proof of \Cref{thm:delay-upper-generic}]
Define $t_\alpha=T_\alpha+d_\alpha$. Since we have assumed that
$\calP$ has a countable local REGROW witness basis, by
\Cref{thm:witness-implies-growth}, for the given
$\eta_{\mathrm{in}}\in(0,I^*)$, there exists
$B_{\eta_{\mathrm{in}}}\in\mathscr B$ such that
$P\in B_{\eta_{\mathrm{in}}}$ and,
we have
\[
  \Phi_{\overline B_{j_*}}(Q)
  \ge I^*-\eta_{\mathrm{in}},
  \qquad
  c_{\eta_{\mathrm{in}}}
  \defeq
  \Phi_{B_{j_*}^c}(P)>0.
\]
Moreover, for the given
$\eta_{\mathrm{out}}\in(0,c_{\eta_{\mathrm{in}}})$ and for each
$s$ and $R\in\calP$, there exists an $s$-delay e-process
$\{\overline M_{s:t}^R\}_{t\geq s-1}$ that is nondecreasing in $t$
and satisfies \eqref{eq:pre-growth} and \eqref{eq:post-growth} with
this $B_{j_*}$, $c_{\eta_{\mathrm{in}}}$,
$\eta_{\mathrm{out}}$, and $\eta_{\mathrm{in}}$.

On the pre-change block $1:T_\alpha$, the law is $P^{T_\alpha}$.
By \eqref{eq:pre-growth},
\begin{equation}
\label{eq:event-pre}
  \inf_{R\notin B_{j_*}}
  \log \overline M_{1:T_\alpha}^R
  \ge
  T_\alpha
  \bigl(c_{\eta_{\mathrm{in}}}-\eta_{\mathrm{out}}\bigr)
\end{equation}
holds with probability tending to one.  Since
$\overline M_{1:t}^R$ is nondecreasing in $t$,
\[
  \inf_{R\notin B_{j_*}}
  \log \overline M_{1:t_\alpha}^R
  \ge
  \inf_{R\notin B_{j_*}}
  \log \overline M_{1:T_\alpha}^R.
\]
Thus, on the event \eqref{eq:event-pre}, for every
$R\notin B_{j_*}$,
\[
  \log\left(w_1\overline M_{1:t_\alpha}^R\right)
  \ge
  \log w_1+
  T_\alpha
  \bigl(c_{\eta_{\mathrm{in}}}-\eta_{\mathrm{out}}\bigr)
  \ge L_\alpha
  =\log A_\alpha
\]
by \eqref{eq:anchor-condition}.  Hence, on the event
\eqref{eq:event-pre}, for every
$R\notin B_{j_*}$,
\begin{equation}
\label{eq:outside-large}
  \sum_{s=1}^{t_\alpha}
  w_s\overline M_{s:t_\alpha}^R
  \ge A_\alpha.
\end{equation}

On the post-change block
$T_\alpha+1:T_\alpha+d_\alpha$, the law is $Q^{d_\alpha}$.
By \eqref{eq:post-growth},
\begin{equation}
\label{eq:event-post}
  \inf_{R\in B_{j_*}}
  \log
  \overline M_{T_\alpha+1:T_\alpha+d_\alpha}^R
  \ge
  d_\alpha(I^*-\eta_{\mathrm{in}})
\end{equation}
holds with probability tending to one.  On this event, for every
$R\in B_{j_*}$,
\[
  \log\left(
    w_{T_\alpha+1}
    \overline M_{T_\alpha+1:T_\alpha+d_\alpha}^R
  \right)
  \ge
  \log w_{T_\alpha+1}
  +d_\alpha(I^*-\eta_{\mathrm{in}})
  \ge L_\alpha
\]
by \eqref{eq:post-condition}.  Therefore, on the event
\eqref{eq:event-post}, for every
$R\in B_{j_*}$,
\begin{equation}
\label{eq:inside-large}
  \sum_{s=1}^{t_\alpha}
  w_s\overline M_{s:t_\alpha}^R
  \ge A_\alpha.
\end{equation}

With probability tending to one, both
\eqref{eq:outside-large} and \eqref{eq:inside-large} hold.  On that
event, for every $R\in\calP$,
\[
  \sum_{s=1}^{t_\alpha}
  w_s\overline M_{s:t_\alpha}^R
  \ge A_\alpha.
\]
Taking the infimum over $R$ gives
$D_{t_\alpha}^w\ge A_\alpha$, hence
$\tau_\alpha^w\le t_\alpha$.  This proves the result.
\end{proof}

\begin{proof}[Proof of \Cref{cor:arl-delay}]
Fix $\eps>0$ and apply \Cref{thm:delay-upper-generic} with
$w_s\equiv1$. Then
\[
    B_\alpha=L,
    \qquad
    d_A=\left\lceil(1+\eps)\frac{L}{I^*}\right\rceil .
\]
Choose
\[
    \eta_{\mathrm{in}}
    :=
    \frac{\eps}{2(1+\eps)}I^*
    \in(0,I^*),
\]
let $c_{\eta_{\mathrm{in}}}>0$ be furnished by
\Cref{thm:witness-implies-growth}, and set
\[
    \eta_{\mathrm{out}}
    :=
    \frac{c_{\eta_{\mathrm{in}}}}{2}.
\]
Since $T_A/L\to\infty$,
\[
    T_A
    \bigl(c_{\eta_{\mathrm{in}}}-\eta_{\mathrm{out}}\bigr)
    +\log w_1
    =
    \frac{c_{\eta_{\mathrm{in}}}}{2}T_A
    \ge L
\]
eventually. Moreover,
\begin{align*}
    d_A(I^*-\eta_{\mathrm{in}})
    +\log w_{T_A+1}
    &\ge
    (1+\eps)\frac{L}{I^*}
    (I^*-\eta_{\mathrm{in}})\\
    &=
    L(1+\eps)
    \left(1-\frac{\eps}{2(1+\eps)}\right)\\
    &=
    L\left(1+\frac{\eps}{2}\right)
    \ge L.
\end{align*}
Thus both \eqref{eq:anchor-condition} and
\eqref{eq:post-condition} hold. Hence
\Cref{thm:delay-upper-generic} gives
\[
    \mathbb P_{P,Q}^{T_A}
    \bigl(\tau_A^{\mathrm{ARL}}\le T_A+d_A\bigr)
    \to1.
\]
Since
\[
    d_A
    \le
    (1+\eps)\frac{L}{I^*}+1,
\]
the first claim follows.

For the second claim, the finite-horizon false-alarm bound in
\Cref{thm:arl-control} gives
\[
    \Pinf{P}(\tau_A^{\mathrm{ARL}}\le T_A)
    \le \frac{T_A}{A}\to0.
\]
Before time $T_A$, the change law $\Pchg{P}{Q}{T_A}$ agrees with
$\Pinf{P}$, so
\[
    \Pchg{P}{Q}{T_A}
    (\tau_A^{\mathrm{ARL}}>T_A)\to1.
\]
Conditioning the first conclusion on this event therefore preserves
the probability-one limit, proving the conditional statement.
\end{proof}

\begin{proof}[Proof of \Cref{cor:pfa-delay}]
Apply \Cref{thm:delay-upper-generic} with $w_s=\pi_s$, so
\[
    B_\alpha=L_\alpha-\log\pi_{T_\alpha+1},
    \qquad
    d_\alpha=
    \left\lceil
        (1+\eps)\frac{B_\alpha}{I^*}
    \right\rceil.
\]
Choose
\[
    \eta_{\mathrm{in}}
    =\frac{\eps I^*}{2(1+\eps)},
    \qquad
    \eta_{\mathrm{out}}
    =\frac{c_{\eta_{\mathrm{in}}}}{2}.
\]
Since $T_\alpha/L_\alpha\to\infty$ and $\log\pi_1$ is constant,
\[
    T_\alpha
    (c_{\eta_{\mathrm{in}}}-\eta_{\mathrm{out}})
    +\log\pi_1
    =
    \frac{c_{\eta_{\mathrm{in}}}}{2}T_\alpha+\log\pi_1
    \ge L_\alpha
\]
eventually. Moreover,
\begin{align*}
    d_\alpha(I^*-\eta_{\mathrm{in}})
    +\log\pi_{T_\alpha+1}
    &\ge
    (1+\eps)\frac{B_\alpha}{I^*}
    (I^*-\eta_{\mathrm{in}})
    +\log\pi_{T_\alpha+1}\\
    &=
    \left(1+\frac{\eps}{2}\right)B_\alpha
    +\log\pi_{T_\alpha+1}\\
    &=
    L_\alpha+\frac{\eps}{2}B_\alpha
    \ge L_\alpha.
\end{align*}
Thus both conditions of \Cref{thm:delay-upper-generic} hold, and hence
\[
    \mathbb P_{P,Q}^{T_\alpha}
    \left(
        \tau_\alpha^{\mathrm{PFA}}-T_\alpha
        \le
        (1+\eps)
        \frac{L_\alpha-\log\pi_{T_\alpha+1}}{I^*}
        +1
    \right)
    \to1.
\]
The expression for the canonical weights follows from
\eqref{eq:pi-log}.

Finally, the change law agrees with $\Pinf{P}$ up to time
$T_\alpha$, so by \Cref{thm:pfa-control},
\[
    \mathbb P_{P,Q}^{T_\alpha}
    (\tau_\alpha^{\mathrm{PFA}}\le T_\alpha)
    =
    \Pinf{P}(\tau_\alpha^{\mathrm{PFA}}\le T_\alpha)
    \le \alpha.
\]
Therefore,
\[
    \mathbb P_{P,Q}^{T_\alpha}
    (\tau_\alpha^{\mathrm{PFA}}>T_\alpha)
    \ge 1-\alpha\to1.
\]
\end{proof}

\subsection{Information lower bounds}\label{app:proof-lower-bounds}

\begin{proof}[Proof of \Cref{thm:pointwise-lower}]
Let
\[
    A_\alpha
    :=
    \{T_\alpha<\tau_\alpha\le T_\alpha+d_\alpha\}.
\]
Since \(\tau_\alpha\) is PFA-\(\alpha\) valid,
\[
    \Pinf{P}(A_\alpha)
    \le
    \Pinf{P}(\tau_\alpha<\infty)
    \le \alpha.
\]
On \(\calF_{T_\alpha+d_\alpha}\), the likelihood ratio of
\(\Pchg{P}{Q}{T_\alpha}\) with respect to \(\Pinf{P}\) is
\[
    \Lambda_\alpha
    =
    \prod_{i=T_\alpha+1}^{T_\alpha+d_\alpha}
    \frac{dQ}{dP}(X_i).
\]
Therefore, for every \(c_\alpha\in\mathbb R\),
\begin{align}
    \mathbb P_{P,Q}^{T_\alpha}(A_\alpha)
    &=
    \mathbb E^\infty_P
    \left[
        \Lambda_\alpha\mathbf 1_{A_\alpha}
    \right]
    \nonumber\\
    &\le
    e^{c_\alpha}\Pinf{P}(A_\alpha)
    +
    \mathbb P_{P,Q}^{T_\alpha}
    (\log\Lambda_\alpha>c_\alpha).
    \label{eq:pointwise-lower-com}
\end{align}
Take
\[
    c_\alpha
    =
    \left(1-\frac{\varepsilon}{2}\right)L_\alpha.
\]
Then
\[
    e^{c_\alpha}\Pinf{P}(A_\alpha)
    \le
    e^{c_\alpha}\alpha
    =
    e^{-\varepsilon L_\alpha/2}
    \longrightarrow0.
\]

It remains to show that
\[
    \mathbb P_{P,Q}^{T_\alpha}
    (\log\Lambda_\alpha>c_\alpha)
    \longrightarrow0.
\]
We consider two cases.

First, along any subsequence for which \(d_\alpha\to\infty\), the
weak law of large numbers under \(\Pchg{P}{Q}{T_\alpha}\) gives
\[
    \log\Lambda_\alpha
    =
    d_\alpha I^*
    +
    o_{\mathbb P}(d_\alpha).
\]
Since
\[
    d_\alpha I^*
    \le
    (1-\varepsilon)L_\alpha
\]
and \(d_\alpha=O(L_\alpha)\), it follows that
\[
    \log\Lambda_\alpha
    \le
    (1-\varepsilon)L_\alpha
    +
    o_{\mathbb P}(L_\alpha).
\]
Consequently,
\begin{align*}
&\mathbb P_{P,Q}^{T_\alpha}
\left(
    \log\Lambda_\alpha
    >
    \left(1-\frac{\varepsilon}{2}\right)L_\alpha
\right)\\
&\qquad\le
\mathbb P_{P,Q}^{T_\alpha}
\left(
    o_{\mathbb P}(L_\alpha)
    >
    \frac{\varepsilon}{2}L_\alpha
\right)
\longrightarrow0.
\end{align*}

Second, consider a subsequence along which \(d_\alpha\) is bounded.
Since \(d_\alpha\) is integer-valued, it has a further subsequence on
which \(d_\alpha=d\) for some fixed \(d\ge1\). Along this subsequence,
\[
    \log\Lambda_\alpha
    \stackrel{d}{=}
    \sum_{i=1}^{d}
    \log\frac{dQ}{dP}(Y_i),
    \qquad
    Y_1,\ldots,Y_d\stackrel{\mathrm{iid}}{\sim}Q,
\]
whose distribution does not depend on \(\alpha\). Hence
\(\log\Lambda_\alpha\) is tight. Since
\[
    c_\alpha
    =
    \left(1-\frac{\varepsilon}{2}\right)L_\alpha
    \longrightarrow\infty,
\]
we again have
\[
    \mathbb P_{P,Q}^{T_\alpha}
    (\log\Lambda_\alpha>c_\alpha)
    \longrightarrow0.
\]

Thus every subsequence has a further subsequence along which the
second term in \eqref{eq:pointwise-lower-com} converges to zero.
Therefore, it converges to zero along the full sequence. Combining
the two terms in \eqref{eq:pointwise-lower-com} yields
\[
    \mathbb P_{P,Q}^{T_\alpha}
    \left(
        T_\alpha<\tau_\alpha
        \le T_\alpha+d_\alpha
    \right)
    \longrightarrow0.
\]
\end{proof}

\begin{proof}[Proof of \Cref{thm:horizon-lower}]
For \(j=1,\ldots,N_\alpha\), define
\[
    A_{j,\alpha}
    :=
    \left\{
        T_{j,\alpha}<\tau_\alpha
        \le T_{j,\alpha}+d_\alpha
    \right\}.
\]
Because the corresponding time windows are pairwise disjoint, the
events \(A_{1,\alpha},\ldots,A_{N_\alpha,\alpha}\) are pairwise
disjoint under every probability law.

On \(\mathcal F_{T_{j,\alpha}+d_\alpha}\), the likelihood ratio of
\(\mathbb P_{P,Q}^{T_{j,\alpha}}\) with respect to
\(\mathbb P_P^\infty\) is
\[
    \Lambda_{j,\alpha}
    :=
    \prod_{i=T_{j,\alpha}+1}^{T_{j,\alpha}+d_\alpha}
    \frac{dQ}{dP}(X_i).
\]
Since \(d_\alpha\to\infty\), the weak law of large numbers gives, for
every fixed \(\eta>0\),
\begin{equation}
\label{eq:horizon-lr-lln}
    u_\alpha
    :=
    \mathbb P_{P,Q}^{T_{j,\alpha}}
    \left(
        \log\Lambda_{j,\alpha}
        >
        d_\alpha(I^*+\eta)
    \right)
    \longrightarrow0.
\end{equation}
The quantity \(u_\alpha\) does not depend on \(j\), because under
\(\mathbb P_{P,Q}^{T_{j,\alpha}}\),
\[
    \log\Lambda_{j,\alpha}
    \stackrel{d}{=}
    \sum_{i=1}^{d_\alpha}
    \log\frac{dQ}{dP}(Y_i),
    \qquad
    Y_1,\ldots,Y_{d_\alpha}
    \stackrel{\mathrm{iid}}{\sim}Q.
\]
Thus the convergence in \eqref{eq:horizon-lr-lln} is uniform over
\(j\).

By change of measure,
\begin{align*}
    \mathbb P_P^\infty(A_{j,\alpha})
    &=
    \mathbb E_{P,Q}^{T_{j,\alpha}}
    \left[
        \Lambda_{j,\alpha}^{-1}
        \mathbf 1_{A_{j,\alpha}}
    \right]\\
    &\ge
    e^{-d_\alpha(I^*+\eta)}
    \mathbb P_{P,Q}^{T_{j,\alpha}}
    \left(
        A_{j,\alpha}
        \cap
        \left\{
            \log\Lambda_{j,\alpha}
            \le d_\alpha(I^*+\eta)
        \right\}
    \right)\\
    &\ge
    e^{-d_\alpha(I^*+\eta)}
    \left\{
        \mathbb P_{P,Q}^{T_{j,\alpha}}(A_{j,\alpha})
        -u_\alpha
    \right\}.
\end{align*}
Using \eqref{eq:horizon-uniform-power}, uniformly over
\(j=1,\ldots,N_\alpha\),
\[
    \mathbb P_P^\infty(A_{j,\alpha})
    \ge
    e^{-d_\alpha(I^*+\eta)}
    \{1-\beta-\rho_\alpha-u_\alpha\}.
\]

Since the \(A_{j,\alpha}\)'s are pairwise disjoint and
\(\tau_\alpha\) is PFA-\(\alpha\) valid,
\begin{align*}
    \alpha
    &\ge
    \mathbb P_P^\infty(\tau_\alpha<\infty)\\
    &\ge
    \sum_{j=1}^{N_\alpha}
    \mathbb P_P^\infty(A_{j,\alpha})\\
    &\ge
    N_\alpha
    e^{-d_\alpha(I^*+\eta)}
    \{1-\beta-\rho_\alpha-u_\alpha\}.
\end{align*}
The term in braces is positive eventually because
\(\beta<1\), \(\rho_\alpha\to0\), and \(u_\alpha\to0\).
Taking logarithms therefore gives
\[
    d_\alpha(I^*+\eta)
    \ge
    L_\alpha+\log N_\alpha
    +\log\{1-\beta-\rho_\alpha-u_\alpha\}.
\]
Moreover,
\[
    \log\{1-\beta-\rho_\alpha-u_\alpha\}
    =
    \log(1-\beta)+o(1),
\]
which proves \eqref{eq:horizon-lower}.

It remains to simplify the start-time term. Since
\[
    N_\alpha
    =
    \left\lfloor
        \frac{T_{\max,\alpha}}{d_\alpha+1}
    \right\rfloor
    \longrightarrow\infty,
\]
we have
\[
    \log N_\alpha
    =
    \log\left(
        \frac{T_{\max,\alpha}}{d_\alpha+1}
    \right)
    +o(1).
\]
Because \(d_\alpha\to\infty\),
\[
    \log(d_\alpha+1)=\log d_\alpha+o(1),
\]
and hence
\[
    \log N_\alpha
    =
    \log\left(
        \frac{T_{\max,\alpha}}{d_\alpha}
    \right)
    +o(1).
\]
Finally, if
\[
    \log d_\alpha=o(\log T_{\max,\alpha}),
\]
then
\[
    \log N_\alpha
    =
    \log T_{\max,\alpha}-\log d_\alpha+o(1)
    =
    (1+o(1))\log T_{\max,\alpha}.
\]
\end{proof}

\begin{proof}[Proof of \Cref{thm:early-change-impossibility}]
Let
\[
A:=\{\tau<\infty\},
\qquad
p:=\mathbb P_{P,Q}^{T}(A),
\qquad
q:=\mathbb P_Q^\infty(A).
\]
Since $Q\in\mathcal P$ and $\tau$ has PFA at most $\alpha$,
\[
q\leq\alpha.
\]

The laws $\mathbb P_{P,Q}^{T}=P^T\otimes Q^\infty$ and
$\mathbb P_Q^\infty=Q^\infty$ differ only in their first $T$
coordinates. Hence
\[
D_{\mathrm{KL}}
\bigl(
\mathbb P_{P,Q}^{T}\,\|\,\mathbb P_Q^\infty
\bigr)
=
T D_{\mathrm{KL}}(P\|Q).
\]
Applying the data-processing inequality to the measurable map
$\omega\mapsto\mathbbm 1_A(\omega)$ gives
\[
T D_{\mathrm{KL}}(P\|Q)
\geq
\operatorname{kl}(p,q),
\]
where
\[
\operatorname{kl}(p,q)
=
p\log\frac pq
+
(1-p)\log\frac{1-p}{1-q}
\]
denotes the binary relative entropy.
Moreover,
\begin{align*}
\operatorname{kl}(p,q)
&=
p\log\frac1q
+
p\log p
+
(1-p)\log(1-p)
-
(1-p)\log(1-q)\\
&\geq
p\log\frac1q-\log 2\\
&\geq
p\log\frac1\alpha-\log 2.
\end{align*}
The first inequality uses
 the fact that $-p\log p-(1-p)\log(1-p)\leq\log 2$
and $-\log(1-q)\geq0$, while the second uses $q\leq\alpha$.
Therefore,
\[
p\log\frac1\alpha
\leq
T D_{\mathrm{KL}}(P\|Q)+\log 2,
\]
which proves
\[
\mathbb P_{P,Q}^{T}(\tau<\infty)
\leq
\frac{
T D_{\mathrm{KL}}(P\|Q)+\log 2
}{
\log(1/\alpha)
}.
\]
Taking the minimum with one gives the stated nonasymptotic bound.

Finally, if
\[
T_\alpha 
=
o\bigl(\log(1/\alpha)\bigr),
\]
then
\[
\frac{
T_\alpha D_{\mathrm{KL}}(P\|Q)+\log 2
}{
\log(1/\alpha)
}
\longrightarrow0,
\]
which proves the final result.
\end{proof}

\subsection{Early-change lower bound under local ARL control}\label{app:proof-arl-early}

We now give analogues of \Cref{thm:pointwise-lower}
under the local finite-horizon ARL guarantee
\begin{equation}
\label{eq:local-arl-guarantee}
    \sup_{R\in\mathcal P}
    \mathbb P_R^\infty(\tau_A\le m)
    \le
    \frac{m}{A},
    \qquad m\in\mathbb N.
\end{equation}
This guarantee is stronger than the usual condition
\(\inf_{R\in\mathcal P}\mathbb E_R^\infty[\tau_A]\ge A\), and is
satisfied by the ARL e-detectors constructed in this paper.

\begin{theorem}[Pointwise lower bound under local ARL control]
\label{thm:pointwise-arl-lower}
Fix \(P,Q\in\mathcal P\) such that
\[
    I^*:=D_{\mathrm{KL}}(Q\|P)\in(0,\infty),
\]
and fix \(\varepsilon\in(0,1)\). Let \(A\to\infty\), and let
\(\tau_A\) satisfy \eqref{eq:local-arl-guarantee}. Suppose that
\(T_A,d_A\in\mathbb N\) satisfy \(T_A+d_A<A\) and
\[
    H_A
    :=
    \log\frac{A}{T_A+d_A}
    \longrightarrow\infty.
\]
If
\[
    d_A
    \le
    (1-\varepsilon)\frac{H_A}{I^*},
\]
then
\[
    \mathbb P_{P,Q}^{T_A}
    \left(
        T_A<\tau_A\le T_A+d_A
    \right)
    \longrightarrow0.
\]
\end{theorem}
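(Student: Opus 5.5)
The argument mirrors the proof of \Cref{thm:pointwise-lower}: a change of measure from $\Pchg{P}{Q}{T_A}$ to $\Pinf{P}$ on $\calF_{T_A+d_A}$. The only new ingredient is the size of the event under the null. Set $E_A:=\{T_A<\tau_A\le T_A+d_A\}$. Then $E_A\subseteq\{\tau_A\le T_A+d_A\}$, and \eqref{eq:local-arl-guarantee} with $m=T_A+d_A$ gives
\[
  \Pinf{P}(E_A)\le \frac{T_A+d_A}{A}=e^{-H_A}.
\]
So $e^{-H_A}$ plays the role that $\alpha=e^{-L_\alpha}$ played in the PFA case, and $H_A$ replaces $L_\alpha$ throughout.

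On $\calF_{T_A+d_A}$ the likelihood ratio is $\Lambda_A=\prod_{i=T_A+1}^{T_A+d_A}\frac{dQ}{dP}(X_i)$, and $E_A\in\calF_{T_A+d_A}$. Hence, for any $c_A$,
\[
  \Pchg{P}{Q}{T_A}(E_A)\le e^{c_A}\Pinf{P}(E_A)+\Pchg{P}{Q}{T_A}(\log\Lambda_A>c_A).
\]
Take $c_A=(1-\varepsilon/2)H_A$. The first term is then at most $e^{-\varepsilon H_A/2}\to0$, because $H_A\to\infty$.

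The second term needs the same subsequence argument as before. Under $\Pchg{P}{Q}{T_A}$, $\log\Lambda_A$ has the law of a sum of $d_A$ i.i.d.\ copies of $\log\frac{dQ}{dP}(Y)$ with $Y\sim Q$, and this summand has mean $I^*\in(0,\infty)$.
\begin{itemize}
\item Along a subsequence with $d_A\to\infty$, the weak law gives $\log\Lambda_A=d_AI^*+o_{\bbP}(d_A)$. Since $d_AI^*\le(1-\varepsilon)H_A$ and $d_A=O(H_A)$, the event $\log\Lambda_A>(1-\varepsilon/2)H_A$ forces the $o_{\bbP}(H_A)$ remainder to exceed $\varepsilon H_A/2$. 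That has vanishing probability.
\item Along a subsequence with $d_A$ bounded, pass to a further subsequence on which $d_A$ is constant. There the law of $\log\Lambda_A$ is fixed, hence tight, while $c_A\to\infty$.
\end{itemize}
Every subsequence therefore has a further subsequence along which the second term vanishes, so it vanishes along the full sequence.

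I expect no real obstacle, since this is a direct transcription of the PFA argument. The one point to check is that the local guarantee is applied at the horizon $m=T_A+d_A$ rather than at $T_A$ alone. This is why the effective boundary is $H_A=\log\{A/(T_A+d_A)\}$, and why the hypothesis $H_A\to\infty$, rather than $\log A\to\infty$, is exactly what kills the first term.
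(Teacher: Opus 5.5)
Your proposal is correct and follows essentially the same route as the paper's proof: the change of measure on $\calF_{T_A+d_A}$, the bound $\Pinf{P}(T_A<\tau_A\le T_A+d_A)\le e^{-H_A}$ from the local guarantee at horizon $m=T_A+d_A$, the cutoff $c_A=(1-\varepsilon/2)H_A$, and the subsequence split between $d_A\to\infty$ (weak law) and $d_A$ bounded (tightness). The one step you leave implicit, which the paper also glosses over, is that the weak law applies. This holds because $I^*<\infty$ gives $Q\ll P$, and $\bigl(\log\tfrac{dQ}{dP}\bigr)^-\le \tfrac{dP}{dQ}$ is $Q$-integrable, so the summand $\log\tfrac{dQ}{dP}(Y)$, $Y\sim Q$, is integrable.
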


\begin{proof}
Let
\[
    \mathcal A_A
    :=
    \{T_A<\tau_A\le T_A+d_A\}.
\]
By the local ARL guarantee,
\[
    \mathbb P_P^\infty(\mathcal A_A)
    \le
    \mathbb P_P^\infty(\tau_A\le T_A+d_A)
    \le
    \frac{T_A+d_A}{A}
    =
    e^{-H_A}.
\]
On \(\mathcal F_{T_A+d_A}\), the likelihood ratio of
\(\mathbb P_{P,Q}^{T_A}\) with respect to \(\mathbb P_P^\infty\) is
\[
    \Lambda_A
    :=
    \prod_{i=T_A+1}^{T_A+d_A}
    \frac{dQ}{dP}(X_i).
\]
Hence, for every \(c_A\in\mathbb R\),
\begin{align}
    \mathbb P_{P,Q}^{T_A}(\mathcal A_A)
    &=
    \mathbb E_P^\infty
    \left[
        \Lambda_A\mathbf 1_{\mathcal A_A}
    \right]
    \nonumber\\
    &\le
    e^{c_A}\mathbb P_P^\infty(\mathcal A_A)
    +
    \mathbb P_{P,Q}^{T_A}
    (\log\Lambda_A>c_A).
    \label{eq:pointwise-arl-change-measure}
\end{align}
Take
\[
    c_A
    :=
    \left(1-\frac{\varepsilon}{2}\right)H_A.
\]
Then
\[
    e^{c_A}\mathbb P_P^\infty(\mathcal A_A)
    \le
    e^{-\varepsilon H_A/2}
    \longrightarrow0.
\]

It remains to control the likelihood-ratio term. Along any
subsequence for which \(d_A\to\infty\), the weak law of large numbers
gives
\[
    \log\Lambda_A
    =
    d_AI^*+o_{\mathbb P}(d_A).
\]
Since
\[
    d_AI^*
    \le
    (1-\varepsilon)H_A
    \qquad\text{and}\qquad
    d_A=O(H_A),
\]
we obtain
\[
    \log\Lambda_A
    \le
    (1-\varepsilon)H_A+o_{\mathbb P}(H_A),
\]
and therefore
\[
    \mathbb P_{P,Q}^{T_A}
    \left(
        \log\Lambda_A>
        \left(1-\frac{\varepsilon}{2}\right)H_A
    \right)
    \longrightarrow0.
\]

Along any subsequence for which \(d_A\) is bounded, there is a further
subsequence on which \(d_A=d\) is constant. Along that subsequence,
\[
    \log\Lambda_A
    \stackrel{d}{=}
    \sum_{i=1}^d
    \log\frac{dQ}{dP}(Y_i),
    \qquad
    Y_i\stackrel{\mathrm{iid}}{\sim}Q,
\]
so \(\log\Lambda_A\) is tight. Since \(c_A\to\infty\), the same
probability again converges to zero. Thus the likelihood-ratio term
vanishes along the full sequence. Substitution into
\eqref{eq:pointwise-arl-change-measure} proves the result.
\end{proof}

\begin{corollary}
\label{cor:pointwise-arl-lower}
Under the conditions of \Cref{thm:pointwise-arl-lower}, if additionally
\[
    \log(T_A+d_A)=o(\log A),
\]
then, for every fixed \(\varepsilon\in(0,1)\),
\[
    d_A
    \le
    (1-\varepsilon)\frac{\log A}{I^*}
\]
implies
\[
    \mathbb P_{P,Q}^{T_A}
    \left(
        T_A<\tau_A\le T_A+d_A
    \right)
    \longrightarrow0.
\]
\end{corollary}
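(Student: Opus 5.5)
The corollary is a direct specialization of \Cref{thm:pointwise-arl-lower}, so I would only verify its hypotheses and then compare the two thresholds. Under $\log(T_A+d_A)=o(\log A)$ we have
\[
  H_A=\log A-\log(T_A+d_A)=(1-o(1))\log A\longrightarrow\infty .
\]
In particular $T_A+d_A<A$ holds eventually, which is the finiteness condition required by \Cref{thm:pointwise-arl-lower}. The standing hypotheses on $P$, $Q$, $I^*$ and the local ARL guarantee \eqref{eq:local-arl-guarantee} carry over unchanged. The only remaining task is to convert the hypothesis $d_A\le(1-\varepsilon)\log A/I^*$ into a bound of the form $d_A\le(1-\varepsilon')H_A/I^*$ for some fixed $\varepsilon'\in(0,1)$.

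To do this I would take $\varepsilon'=\varepsilon/2$. Since $H_A/\log A\to1$, eventually
\[
  (1-\varepsilon)\log A\le (1-\varepsilon/2)H_A ,
\]
and this requires only $H_A\ge\frac{1-\varepsilon}{1-\varepsilon/2}\log A$, which holds for all large $A$ because the ratio on the right is strictly less than one. Hence eventually
\[
  d_A\le(1-\varepsilon/2)\frac{H_A}{I^*}.
\]
Applying \Cref{thm:pointwise-arl-lower} with $\varepsilon/2$ in place of $\varepsilon$ then yields
\[
  \mathbb P_{P,Q}^{T_A}\bigl(T_A<\tau_A\le T_A+d_A\bigr)\to0 .
\]

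The one point needing care is that \Cref{thm:pointwise-arl-lower} is an asymptotic statement, while its hypotheses are verified here only for all sufficiently large $A$. Finitely many initial terms do not affect the limit, so I would note that the theorem can be applied along the tail of the sequence. Alternatively, the proof of \Cref{thm:pointwise-arl-lower}, namely the change-of-measure bound \eqref{eq:pointwise-arl-change-measure} combined with the subsequence argument, can be rerun verbatim with $c_A=(1-\varepsilon/4)H_A$. I do not expect a genuine obstacle: the argument is a bookkeeping check that $H_A$ and $\log A$ agree to first order.
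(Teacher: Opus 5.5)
Your proposal is correct and follows essentially the same route as the paper: show $H_A=(1+o(1))\log A$ and apply \Cref{thm:pointwise-arl-lower} with a slightly smaller $\varepsilon$ (you take $\varepsilon/2$). Your explicit check that $(1-\varepsilon)\log A\le(1-\varepsilon/2)H_A$ eventually holds, and your remark about applying the theorem only along the tail of the sequence, spell out the details the paper leaves implicit.
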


\begin{proof}
The assumption gives
\[
    H_A
    =
    \log A-\log(T_A+d_A)
    =
    (1+o(1))\log A.
\]
Thus, after replacing \(\varepsilon\) by any slightly smaller positive
constant, the conclusion follows from
\Cref{thm:pointwise-arl-lower}.
\end{proof}

\begin{theorem}[Early-change impossibility under ARL control with local guarantee]
\label{thm:early-change-local-arl}
Let $\tau_A$ be a stopping time satisfying
\begin{equation}
\label{eq:local-fa-control}
\sup_{R\in\mathcal P}
\mathbb P_R^\infty(\tau_A\le m)
\le \frac{m}{A}
\qquad
\text{for every }m\in\mathbb N.
\end{equation}
Fix $P,Q\in\mathcal P$ and $T,d\in\mathbb N$ such that
$T+d<A$, and suppose $P\ll Q$. Then
\begin{equation}
\label{eq:early-change-unconditional-bound}
\mathbb P_{P,Q}^{T}
\bigl(T<\tau_A\le T+d\bigr)
\le
\min\left\{
1,\,
\frac{
T D_{\mathrm{KL}}(P\|Q)+\log 2
}{
\log\!\left(A/(T+d)\right)
}
\right\}.
\end{equation}
Moreover, since
$\mathbb P_{P,Q}^{T}(\tau_A>T)
\ge 1-\frac{T}{A},$
we have
\begin{equation}
\label{eq:early-change-conditional-bound}
\mathbb P_{P,Q}^{T}
\left(
\tau_A-T\le d
\,\middle|\,
\tau_A>T
\right)
\le
\min\left\{
1,\,
\frac{
T D_{\mathrm{KL}}(P\|Q)+\log 2
}{
(1-T/A)\log\!\left(A/(T+d)\right)
}
\right\}.
\end{equation}
In particular, consider sequences $A\to\infty$, $T_A$, and $d_A$
such that
$T_A+d_A<A,
\frac{T_A}{A}\to0,$
and
\[
T_A 
=
o\left(
\log\frac{A}{T_A+d_A}
\right).
\]
Then
\[
\mathbb P_{P,Q}^{T_A}
\left(
\tau_A-T_A\le d_A
\,\middle|\,
\tau_A>T_A
\right)
\longrightarrow0.
\]
\end{theorem}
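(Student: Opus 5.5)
The plan is to mirror the proof of \Cref{thm:early-change-impossibility}, replacing the global PFA bound by the local guarantee \eqref{eq:local-fa-control} applied at horizon $m=T+d$. Set
\[
  \mathcal E:=\{T<\tau_A\le T+d\},\qquad p:=\mathbb P_{P,Q}^{T}(\mathcal E),\qquad q:=\mathbb P_Q^\infty(\mathcal E).
\]
Since $Q\in\calP$ and $\mathcal E\subseteq\{\tau_A\le T+d\}$, \eqref{eq:local-fa-control} gives $q\le (T+d)/A<1$. The event $\mathcal E$ lies in $\calF_{T+d}$, but it is simpler to compare the full laws. The laws $P^T\otimes Q^\infty$ and $Q^\infty$ differ only in the first $T$ coordinates, so $\KL(\mathbb P_{P,Q}^{T}\|\mathbb P_Q^\infty)=T\,\KL(P\|Q)$; this step uses $P\ll Q$. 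The data-processing inequality applied to $\ind_{\mathcal E}$ then yields $\kl(p,q)\le T\,\KL(P\|Q)$.

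Next, lower bound the binary divergence exactly as in the PFA proof:
\[
  \kl(p,q)\ge p\log(1/q)-\log 2\ge p\log\frac{A}{T+d}-\log 2.
\]
Here the binary entropy is at most $\log2$ and $-(1-p)\log(1-q)\ge0$. Since $T+d<A$, the logarithm is positive, so we can divide by it and take the minimum with $1$, which gives \eqref{eq:early-change-unconditional-bound}.

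For the conditional bound, $\{\tau_A\le T\}$ is $\calF_T$-measurable and the change law agrees with $\Pinf P$ on $\calF_T$. Hence \eqref{eq:local-fa-control} with $m=T$ gives $\mathbb P_{P,Q}^{T}(\tau_A>T)\ge1-T/A>0$. Dividing the unconditional bound by this quantity gives \eqref{eq:early-change-conditional-bound}, again capped at $1$.

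The asymptotic statement then follows directly. We have $T_A/A\to0$, so $1-T_A/A\to1$. Also $\log(A/(T_A+d_A))\to\infty$: it is nonnegative, and if it stayed bounded along a subsequence then $T_A=o(\cdot)$ would force $T_A\to0$, contradicting $T_A\ge1$. Together with $T_A=o(\log(A/(T_A+d_A)))$ and the fixed $\KL(P\|Q)$, which must be finite for the bound to be nontrivial, the ratio tends to $0$. I do not expect a real obstacle. The only care points are making sure the event is measurable with respect to a horizon where the local guarantee applies, and handling $\KL(P\|Q)=\infty$, in which case the bound is trivially $1$.
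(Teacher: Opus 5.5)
Your proposal is correct and is essentially the paper's proof. The shared steps are: the local guarantee at $m=T+d$ under $\mathbb P_Q^\infty$; the identity $\KL(\mathbb P_{P,Q}^{T}\|\mathbb P_Q^\infty)=T\,\KL(P\|Q)$ (the paper first restricts both laws to $\calF_{T+d}$, which changes nothing); data processing to $\kl(p,q)$; the bound $\kl(p,q)\ge p\log(1/q)-\log 2$; and division by $\mathbb P_{P,Q}^{T}(\tau_A>T)\ge 1-T/A$.

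Your explicit check that $\log\{A/(T_A+d_A)\}\to\infty$ fills in a step the paper leaves implicit. Your caveat about $\KL(P\|Q)=\infty$ deserves to be stated as a hypothesis rather than a remark: the final limit requires $\KL(P\|Q)<\infty$, which $P\ll Q$ alone does not ensure, and the paper's ``follows immediately'' uses this tacitly as well.
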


\begin{proof}
Let
\[
B:=\{T<\tau_A\le T+d\},
\]
which is measurable with respect to $\mathcal F_{T+d}$. Define
\[
p:=\mathbb P_{P,Q}^{T}(B),
\qquad
q:=\mathbb P_Q^\infty(B).
\]
Since $Q\in\mathcal P$, the local false-alarm guarantee
\eqref{eq:local-fa-control} gives
\[
q
\le
\mathbb P_Q^\infty(\tau_A\le T+d)
\le
\frac{T+d}{A}.
\]

Restricted to $\mathcal F_{T+d}$, the change law is
\[
P^T\otimes Q^d,
\]
whereas the all-$Q$ no-change law is
\[
Q^{T+d}.
\]
Because these laws differ only in their first $T$ coordinates,
\[
D_{\mathrm{KL}}
\left(
P^T\otimes Q^d
\,\middle\|\,
Q^{T+d}
\right)
=
T D_{\mathrm{KL}}(P\|Q).
\]
Applying the data-processing inequality to the map
$\omega\mapsto\mathbbm 1_B(\omega)$ yields
\[
T D_{\mathrm{KL}}(P\|Q)
\ge
\operatorname{kl}(p,q),
\]
where
\[
\operatorname{kl}(p,q)
=
p\log\frac pq
+
(1-p)\log\frac{1-p}{1-q}.
\]
Using
\[
\operatorname{kl}(p,q)
\ge
p\log\frac1q-\log 2,
\]
we obtain
\[
T D_{\mathrm{KL}}(P\|Q)
\ge
p\log\frac1q-\log 2
\ge
p\log\frac{A}{T+d}-\log 2.
\]
Rearranging proves \eqref{eq:early-change-unconditional-bound}.

Before time $T$, the change law agrees with the no-change law
$P^\infty$. Therefore,
\[
\mathbb P_{P,Q}^{T}(\tau_A\le T)
=
\mathbb P_P^\infty(\tau_A\le T)
\le
\frac{T}{A},
\]
and consequently
\[
\mathbb P_{P,Q}^{T}(\tau_A>T)
\ge
1-\frac{T}{A}.
\]
It follows that
\begin{align*}
\mathbb P_{P,Q}^{T}
\left(
\tau_A-T\le d
\,\middle|\,
\tau_A>T
\right)
&=
\frac{
\mathbb P_{P,Q}^{T}(T<\tau_A\le T+d)
}{
\mathbb P_{P,Q}^{T}(\tau_A>T)
} \\
&\le
\frac{
T D_{\mathrm{KL}}(P\|Q)+\log 2
}{
(1-T/A)\log\!\left(A/(T+d)\right)
},
\end{align*}
which proves \eqref{eq:early-change-conditional-bound}.
 The asymptotic conclusion follows
immediately.
\end{proof}

\section{Proofs for the concrete examples}\label{app:example-proofs}

\subsection{Sub-Gaussian mean-change example}\label{app:proof-sub-Gaussian}

\begin{proof}[Proof of \Cref{prop:gaussian-mixture}]
Fix $\theta\in\mathbb R$, a starting time $s\geq 1$, and a distribution
$P\in\mathcal P_\theta$. For each $\lambda\in\mathbb R$, define
\[
L_{s:t}^{\lambda}
:=
\exp\left\{
\lambda\sum_{i=s}^{t}(X_i-\theta)
-\frac{\lambda^2\sigma^2}{2}(t-s+1)
\right\},
\qquad t\geq s,
\]
with $L_{s:s-1}^{\lambda}=1$.

Since $P\in\mathcal P_\theta$, we have
\[
\mathbb E_P\!\left[
e^{\lambda(X_t-\theta)}
\right]
\leq
e^{\lambda^2\sigma^2/2}
\qquad
\text{for every }\lambda\in\mathbb R.
\]
Moreover, under $P^\infty$, $X_t$ is independent of
$\mathcal F_{t-1}$. Therefore, for every $t\geq s$,
\begin{align*}
\mathbb E_P^\infty
\left[
L_{s:t}^{\lambda}
\mid \mathcal F_{t-1}
\right]
&=
L_{s:t-1}^{\lambda}
\mathbb E_P
\left[
\exp\left\{
\lambda(X_t-\theta)
-\frac{\lambda^2\sigma^2}{2}
\right\}
\right] \\
&\leq
L_{s:t-1}^{\lambda}.
\end{align*}
Thus, for each fixed $\lambda\in\mathbb R$,
$\{L_{s:t}^{\lambda}\}_{t\geq s-1}$ is a nonnegative
test supermartingale under $P^\infty$.

Let $\Pi_\rho$ be the centered Gaussian distribution
$\mathcal N(0,\rho^2)$ on $\mathbb R$, and define the Gaussian mixture
\[
\widetilde M_{s:t}^{\theta}
:=
\int_{\mathbb R}
L_{s:t}^{\lambda}\,
\Pi_\rho(d\lambda),
\qquad
t\geq s-1.
\]
By Tonelli’s theorem and the preceding conditional supermartingale
inequality,
\begin{align*}
\mathbb E_P^\infty
\left[
\widetilde M_{s:t}^{\theta}
\mid\mathcal F_{t-1}
\right]
&=
\int_{\mathbb R}
\mathbb E_P^\infty
\left[
L_{s:t}^{\lambda}
\mid\mathcal F_{t-1}
\right]
\Pi_\rho(d\lambda) \\
&\leq
\int_{\mathbb R}
L_{s:t-1}^{\lambda}\,
\Pi_\rho(d\lambda) \\
&=
\widetilde M_{s:t-1}^{\theta}.
\end{align*}
Hence $\{\widetilde M_{s:t}^{\theta}\}_{t\geq s-1}$ is also a
nonnegative test supermartingale, with
$\widetilde M_{s:s-1}^{\theta}=1$.

It remains to evaluate the Gaussian integral. Write
\[
n_{s:t}:=t-s+1,
\qquad
\overline X_{s:t}
:=
\frac{1}{n_{s:t}}\sum_{i=s}^{t}X_i.
\]
Using the density of $\mathcal N(0,\rho^2)$, we obtain
\begin{align*}
\widetilde M_{s:t}^{\theta}
&=
\frac{1}{\sqrt{2\pi\rho^2}}
\int_{\mathbb R}
\exp\left\{
\lambda n_{s:t}
(\overline X_{s:t}-\theta)
-\frac{\lambda^2\sigma^2n_{s:t}}{2}
-\frac{\lambda^2}{2\rho^2}
\right\}
\,d\lambda \\
&=
\frac{1}{\sqrt{2\pi\rho^2}}
\int_{\mathbb R}
\exp\left\{
-\frac{1+\rho^2\sigma^2n_{s:t}}{2\rho^2}
\lambda^2
+
\lambda n_{s:t}
(\overline X_{s:t}-\theta)
\right\}
\,d\lambda.
\end{align*}
Completing the square and evaluating the Gaussian integral yields
\[
\widetilde M_{s:t}^{\theta}
=
\frac{1}{
\sqrt{1+\rho^2\sigma^2n_{s:t}}
}
\exp\left\{
\frac{
\rho^2n_{s:t}^2
(\overline X_{s:t}-\theta)^2
}{
2\left(1+\rho^2\sigma^2n_{s:t}\right)
}
\right\}.
\]
Thus $\widetilde M_{s:t}^{\theta}=M_{s:t}^{\theta}$, where
$M_{s:t}^{\theta}$ is the process defined in \eqref{eq:mixture-evalue}.

Finally, a nonnegative test supermartingale initialized at one is an
e-process.  Hence the process
$\{M_{s:t}^{\theta}:t\ge s-1\}$ is an $s$-delay e-process under every
$P\in\mathcal P_\theta$.
\end{proof}

\begin{proof}[Proof of \Cref{thm:upper-bdd-sub-Gaussian}]
\textbf{Part 1. High probability bound on delay:}

Fix $\varepsilon>0$ and set
\[
  d_\alpha=\left\lceil(1+\varepsilon)\frac{L}{I}\right\rceil,
  \qquad
  T=T_\alpha,
  \qquad
  d=d_\alpha.
  \]
Use the two intervals from Lemma~\ref{lem:sufficient-stopping} with $m=T$:
  \[
    J_0=(T+1):(T+d),
    \qquad
    J_1=1:(T+d).
    \]
Let
\[
  \bar X_{\rm pre}=\bar X_{1:T},
  \qquad
  \bar X_{\rm post}=\bar X_{T+1:T+d}.
  \]
Then
\[
  \bar X_0=\bar X_{\rm post},
  \qquad
  \bar X_1=\frac{T\bar X_{\rm pre}+d\bar X_{\rm post}}{T+d},
  \]
and hence
\begin{equation}
\label{eq:mean-separation-late}
\bar X_0-\bar X_1
=
  \frac{T}{T+d}(\bar X_{\rm post}-\bar X_{\rm pre}).
\end{equation}
Because $T/d\to\infty$,
\[
  \frac{T}{T+d}=1+o(1).
  \]
By Sub-Gaussian concentration,
\[
  \bar X_{\rm pre}=\mu+O_{\bbP}(T^{-1/2}),
  \qquad
  \bar X_{\rm post}=\nu+O_{\bbP}(d^{-1/2}),
  \]
so \eqref{eq:mean-separation-late} gives
\begin{equation}
\label{eq:sep-is-delta}
\bar X_0-\bar X_1
=
  (\nu-\mu)\{1+o_{\bbP}(1)\}.
\end{equation}
Also, by \eqref{eq:kappa-asymp},
\[
  \kappa_d^{-1/2}
  =
    \sqrt{\frac{2\sigma^2}{d}}\{1+o(1)\},
  \]
and, since $T/d\to\infty$,
\[
  \kappa_{T+d}^{-1/2}
  =
    \sqrt{\frac{2\sigma^2}{T+d}}\{1+o(1)\}
  =o(d^{-1/2}).
  \]
Therefore the quantity $B_{T,d}$ from \eqref{eq:Bmd-def} satisfies
\begin{align*}
B_{T,d}
&=
  \frac{(\bar X_0-\bar X_1)^2}
{(\kappa_d^{-1/2}+\kappa_{T+d}^{-1/2})^2} \\
&=
  \frac{(\nu-\mu)^2\{1+o_{\bbP}(1)\}}
{2\sigma^2/d}\nonumber\\
&=
  I d\{1+o_{\bbP}(1)\}.
\end{align*}
Since $d=(1+\varepsilon)L/I+O(1)$,
\[
  B_{T,d}=(1+\varepsilon)L\{1+o_{\bbP}(1)\}.
  \]
The mixture penalty is negligible because $\log(T+d)=o(L)$:
  \[
    \frac12\log(1+\rho^2\sigma^2(T+d))=o(L).
    \]
Thus
\[
  B_{T,d}
  -\frac12\log(1+\rho^2\sigma^2(T+d))
  \ge L
  \]
with probability tending to one.  By Lemma~\ref{lem:sufficient-stopping},
\[
  \bbP^T_{P,Q}(\tau_A^{\mathrm{ARL}}\le T+d)\to 1.
  \]
This proves
\[
  \tau_A^{\mathrm{ARL}}-T\le \left\lceil(1+\varepsilon)\frac{L}{I}\right\rceil\le(1+2\varepsilon)\frac{L}{I}
  \]
with probability tending to one.  Since $\varepsilon>0$ is arbitrary, the claimed in-probability upper bound follows.

\textbf{Part 2. CADD Bound:}
Let $Y = (\tau_A - T)^+$. To compute the expectation, we partition the tail into three regimes using $b_\alpha = \lfloor T/2 \rfloor$:
\begin{equation} \label{eq:expectation_split}
    \mathbb{E}^T_{P,Q}[Y] \le d + b_\alpha \mathbb{P}^T_{P,Q}(Y > d) + \tau_{max} \mathbb{P}^T_{P,Q}(Y > b_\alpha).
\end{equation}

\textit{Step A ($\tau_{max}$: A deterministic upper bound on the stopping time).}
Let \(a=\rho^2\sigma^2\). Since the exponential factor in
\(M^\theta_{s:t}\) is at least one, for every data sequence and every
\(\theta\in\mathbb R\),
\[
M^\theta_{s:t}
\ge
\frac{1}{\sqrt{1+a(t-s+1)}}.
\]
Consequently,
\[
D_t
=
\inf_{\theta\in\mathbb R}\sum_{s=1}^tM^\theta_{s:t}
\ge
\sum_{n=1}^t\frac{1}{\sqrt{1+an}}
\ge
\frac{t}{\sqrt{1+at}}
\ge
\sqrt{\frac{t}{1+a}},
\]
where the last inequality follows from \(t\ge1\). Therefore, with
$t_A=\left\lceil(1+\rho^2\sigma^2)A^2\right\rceil,$
we have \(D_{t_A}\ge A\) for every sample path, and hence
$\tau_A\le t_A=
\left\lceil(1+\rho^2\sigma^2)A^2\right\rceil.$
Thus, since \(A=e^L\), there is a constant
\(C'_{\rho,\sigma}>0\) such that
\[
\tau_A\le \tau_{\max}:=C'_{\rho,\sigma}\exp{(2L)}
\]
deterministically.

\textit{Step B (Bounding $\mathbb{P}^T_{P,Q}(Y > d)$):} 
By the definition of the stopping time, if the detector fails to stop by time $T+d$ (i.e., $\tau_A > T+d$), the statistic must remain below the threshold: $\log D_{T+d} < L$. 
Meanwhile, the first part of \Cref{lem:sufficient-stopping} establishes the deterministic lower bound $\log D_{T+d} \ge B_{T,d} - \frac{1}{2}\log(1+\rho^2\sigma^2(T+d))$. Chaining these inequalities yields:
\[
    B_{T,d} - \frac{1}{2}\log(1+\rho^2\sigma^2(T+d)) \le \log D_{T+d} < L.
\]
Because the penalty term is $o(L)$, this implies $B_{T,d} < L + o(L)$.

To see what this implies for the data, we expand $B_{T,d}$. Let $Z = \bar{X}_{post} - \bar{X}_{pre}$. From the definition of the detector's lower bound, we have:
\[
    B_{T,d} = \frac{Z^2}{2\sigma^2(1/d + 1/T)}\{1+o(1)\} = \frac{Z^2 d}{2\sigma^2}\{1+o(1)\},
\]
where the second equality follows because $d/T \to 0$. Next, we substitute our chosen horizon $d = \lceil(1+\epsilon)L/I\rceil \ge (1+\epsilon)L/I$, and recall the definition of the KL proxy $I = \frac{(\nu-\mu)^2}{2\sigma^2}$:
\[
    B_{T,d} \ge \frac{Z^2}{2\sigma^2} \frac{(1+\epsilon)L}{(\nu-\mu)^2 / (2\sigma^2)} \{1+o(1)\} = \frac{Z^2}{(\nu-\mu)^2}(1+\epsilon)L \{1+o(1)\}.
\]
The failure-to-stop condition $B_{T,d} < L + o(L)$ therefore requires:
\[
    \frac{Z^2}{(\nu-\mu)^2}(1+\epsilon)L \{1+o(1)\} < L + o(L).
\]
Dividing both sides by $L (1+\epsilon)$ yields:
\[
    \frac{Z^2}{(\nu-\mu)^2} \{1+o(1)\} < \frac{1 + o(1)}{1+\epsilon}.
\]
Because $\epsilon > 0$ is a strictly positive constant, the right-hand side is strictly less than 1 for all sufficiently large $L$. Therefore, for $\delta =\frac{\epsilon}{2(1+\epsilon)}$ (depending purely on $\epsilon$) and for all large $L$:
\[
    |Z| \le |\nu-\mu|(1-\delta).
\]
The true expected value of $Z$ is $\nu - \mu$. The above inequality demonstrates that to fail to stop by time $T+d$, the empirical difference $Z$ must deviate significantly below its expected magnitude. 

Under $\mathbb{P}^T_{P,Q}$, $Z$ is the difference of independent sub-Gaussian sample averages, with variance proxy $\sigma_*^2 = \sigma^2/d + \sigma^2/T = \frac{\sigma^2}{d}\{1+o(1)\}$. By Hoeffding's inequality for sub-Gaussian variables:
\[
    \mathbb{P}^T_{P,Q}\big(|Z| \le |\nu-\mu|(1-\delta)\big) \le \exp\left( - \frac{\delta^2 (\nu-\mu)^2}{2 \sigma_*^2} \right) = \exp\Big( - \delta^2 I d \{1+o(1)\} \Big).
\]
Since $d = \lceil(1+\epsilon)L/I\rceil$, the exponent is strictly of order $L$. Thus, for $c_\epsilon:=\frac{\epsilon^2}{4(1+\epsilon)}$, we have $\mathbb{P}^T_{P,Q}(Y > d) \le \exp(-c_\epsilon L + o(L))$.

\textit{Step C (Bounding $\mathbb{P}^T_{P,Q}(Y > b_\alpha)$):} 
We apply the exact same chaining of inequalities at the macroscopic time scale $b_\alpha = \lfloor T/2 \rfloor$. By the definition of the stopping time, failing to stop by $T + b_\alpha$ implies $\log D_{T+b_\alpha} < L$. 

Applying the deterministic lower bound from \Cref{lem:sufficient-stopping} at this horizon, we have:
\[
    B_{T,b_\alpha} - \frac{1}{2}\log(1+\rho^2\sigma^2(T+b_\alpha)) \le \log D_{T+b_\alpha} < L.
\]
Because $b_\alpha = \lfloor T/2 \rfloor$, the penalty term is $\frac{1}{2}\log(1+\rho^2\sigma^2(1.5T)) = O(\log T)$. By the theorem's assumption, $\log T = o(L)$, so the penalty term is $o(L)$. Rearranging gives:
\[
    B_{T,b_\alpha} < L + o(L).
\]

To see what this requires of the data, let $Z = \bar{X}_{post} - \bar{X}_{pre}$ be the empirical difference over these large blocks (where $\bar{X}_{post}$ is the average over $b_\alpha$ samples). Expanding $B_{T,b_\alpha}$, we get:
\[
    B_{T,b_\alpha} = \frac{Z^2}{2\sigma^2(1/b_\alpha + 1/T)}\{1+o(1)\}.
\]
Since $b_\alpha = \lfloor T/2 \rfloor$, we can bound the inverse sample sizes for all $T \ge 6$:
\[
    \frac{1}{b_\alpha} + \frac{1}{T} = \frac{1}{\lfloor T/2 \rfloor} + \frac{1}{T} \le \frac{4}{T}.
\]
Substituting this bound into the expansion for $B_{T,b_\alpha}$ yields:
\[
    B_{T,b_\alpha} \ge \frac{Z^2}{2\sigma^2 (4/T)} \{1+o(1)\} = \frac{Z^2 T}{8\sigma^2} \{1+o(1)\}.
\]
Therefore, the failure-to-stop condition $B_{T,b_\alpha} < L + o(L)$ mathematically forces:
\[
    \frac{Z^2 T}{8\sigma^2} \{1+o(1)\} < L + o(L).
\]
Rearranging this to isolate $Z^2$ gives:
\[
    Z^2 \le \frac{8\sigma^2 L}{T} \{1+o(1)\}.
\]
By the late-change assumption $T/L \to \infty$, we know that $L/T \to 0$. As a result, the right-hand side converges to zero. This implies that for all sufficiently large $A$ (and consequently large $T$), $Z^2 \le \frac{1}{4}(\nu-\mu)^2$, i.e., $|Z| \le \frac{1}{2}|\nu-\mu|$.

However, the true expected value of $Z$ under $\mathbb{P}^T_{P,Q}$ is $\nu - \mu$. For $|Z| \le \frac{1}{2}|\nu-\mu|$ to occur, the variable $Z$ must deviate from its expectation by at least $\frac{1}{2}|\nu-\mu|$. 

Under $\mathbb{P}^T_{P,Q}$, $Z$ is the difference of independent sub-Gaussian sample averages with variance proxy:
\[
    \sigma_*^2 = \frac{\sigma^2}{b_\alpha} + \frac{\sigma^2}{T} \le \frac{4\sigma^2}{T}.
\]
By Hoeffding's inequality for sub-Gaussian variables, the probability of this massive deviation is:
\[
    \mathbb{P}^T_{P,Q}\left( |Z - (\nu-\mu)| \ge \frac{1}{2}|\nu-\mu| \right) \le \exp\left( - \frac{(|\nu-\mu|/2)^2}{2\sigma_*^2} \right).
\]
Substituting $\sigma_*^2 \le 4\sigma^2/T$ into the exponent gives:
\[
    \exp\left( - \frac{(\nu-\mu)^2 / 4}{8\sigma^2 / T} \right) = \exp\left( - \frac{(\nu-\mu)^2 T}{32\sigma^2} \right).
\]
Recalling that $I = \frac{(\nu-\mu)^2}{2\sigma^2}$, the exponent is exactly $-I T / 16$. Thus, letting $c = 1/16$, we conclude:
\[
    \mathbb{P}^T_{P,Q}(Y > b_\alpha) \le \exp\left( - c I T \right).
\]
Since $IT \gg L$, this probability decays exponentially faster than $\exp(-L)$ and contributes negligibly to the expected delay.

\textit{Step D (Combining the bounds):} 
 Substituting these bounds into
\eqref{eq:expectation_split}, we obtain
\begin{align}
\mathbb E^T_{P,Q}[Y]
\le{}&
\left\lceil
(1+\epsilon)\frac{L}{I}
\right\rceil
+
T\exp\{-c_\epsilon L+o(L)\}
+
C'_{\rho,\sigma}
\exp\left(2L-\frac{IT}{16}\right).
\label{eq:expectation-combined}
\end{align}

We next show that the last two terms in
\eqref{eq:expectation-combined} vanish. Write the $o(L)$ term in the
second term as $r_{\alpha,\epsilon}$, where, for every fixed
$\epsilon>0$,
\[
\frac{r_{\alpha,\epsilon}}{L}\longrightarrow 0
\qquad\text{as }\alpha\downarrow0.
\]
Since $c_\epsilon>0$ is fixed, for all sufficiently small $\alpha$,
$r_{\alpha,\epsilon}
\le
\frac{c_\epsilon}{2}L.$
Therefore,
\begin{align*}
T\exp\{-c_\epsilon L+r_{\alpha,\epsilon}\}
&\le
T\exp\left(-\frac{c_\epsilon}{2}L\right)\\
&=
\exp\left(
\log T-\frac{c_\epsilon}{2}L
\right)\\
&=
\exp\left(
\log T-
\frac{\epsilon^2}{8(1+\epsilon)}L
\right).
\end{align*}
Because $\log T=o(L)$,
$\log T-
\frac{\epsilon^2}{8(1+\epsilon)}L
=
-\frac{\epsilon^2}{8(1+\epsilon)}L+o(L)
\longrightarrow-\infty.$
Hence
\[
T\exp\{-c_\epsilon L+o(L)\}\longrightarrow0.
\]

For the third term, since $I>0$ is fixed and $T/L\to\infty$,
$2L-\frac{IT}{16}
=
L\left(
2-\frac{I}{16}\frac{T}{L}
\right)
\longrightarrow-\infty.$
Consequently,
\[
C'_{\rho,\sigma}
\exp\left(2L-\frac{IT}{16}\right)
\longrightarrow0.
\]
It follows from \eqref{eq:expectation-combined} that, for every fixed
$\epsilon>0$,
\[
\mathbb E^T_{P,Q}[Y]
\le
\left\lceil
(1+\epsilon)\frac{L}{I}
\right\rceil
+o(1).
\]
Using $\lceil x\rceil\le x+1$, we obtain
\[
\mathbb E^T_{P,Q}[Y]
\le
(1+\epsilon)\frac{L}{I}+1+o(1).
\]
Dividing both sides by $L/I$ gives
\[
\frac{\mathbb E^T_{P,Q}[Y]}{L/I}
\le
1+\epsilon+\frac{I}{L}
+\frac{o(1)}{L/I}.
\]
Since $L\to\infty$,
\[
\limsup_{\alpha\downarrow0}
\frac{\mathbb E^T_{P,Q}[Y]}{L/I}
\le
1+\epsilon.
\]
The preceding inequality holds for every fixed $\epsilon>0$.
Therefore,
\[
\limsup_{\alpha\downarrow0}
\frac{\mathbb E^T_{P,Q}[Y]}{L/I}
\le
\inf_{\epsilon>0}(1+\epsilon)
=
1.
\]
Equivalently,
\[
\mathbb E^T_{P,Q}
\left[(\tau_A-T)^+\right]
\le
(1+o(1))\frac{L}{I}.
\]

Finally, by the local false-alarm guarantee \eqref{eq:lfa},
\[
\mathbb P^T_{P,Q}(\tau_A\le T)
=
\mathbb P^\infty_P(\tau_A\le T)
\le
\frac{T}{A}
=
\exp(\log T-L)
\longrightarrow0,
\]
where the last convergence follows from $\log T=o(L)$. Hence
\[
\mathbb P^T_{P,Q}(\tau_A>T)=1-o(1).
\]
Since $Y=(\tau_A-T)^+$, we have the exact identity
\[
\mathbb E^T_{P,Q}[Y]
=
\mathbb P^T_{P,Q}(\tau_A>T)
\,
\mathbb E^T_{P,Q}
[\tau_A-T\mid \tau_A>T].
\]
It follows that
\begin{align*}
\operatorname{CADD}_{P,Q}^{T}(\tau_A)
=
\mathbb E^T_{P,Q}
[\tau_A-T\mid \tau_A>T]
&=
\frac{
\mathbb E^T_{P,Q}[(\tau_A-T)^+]
}{
\mathbb P^T_{P,Q}(\tau_A>T)
}\\
&\le
\frac{
(1+o(1))L/I
}{
1-o(1)
}\\
&=
(1+o(1))\frac{L}{I}.
\end{align*}
\end{proof}

\begin{proof}[Proof of \Cref{thm:subg-upper-pfa}]
Fix $\varepsilon>0$.  Write $T=T_\alpha$ and
\[
  H_\alpha=L+h_\pi(T_\alpha),
  \qquad
  d=d_\alpha
  =
  \left\lceil(1+\varepsilon)\frac{H_\alpha}{I}\right\rceil,
  \qquad
  m=\lfloor T/2\rfloor.
\]
Assumption \eqref{eq:late-assumption-pfa} gives $d/T\to 0$, so for all sufficiently small $\alpha$, $m+d\le 2T$.  Therefore the deterministic penalty in Lemma~\ref{lem:weighted-sufficient-stopping} is bounded by $h_\pi(T)$.

Let
\[
  \bar X_{\rm pre}=\bar X_{T-m+1:T},
  \qquad
  \bar X_{\rm post}=\bar X_{T+1:T+d}.
\]
Then
\[
  \bar X_0=\bar X_{\rm post},
  \qquad
  \bar X_1=\frac{m\bar X_{\rm pre}+d\bar X_{\rm post}}{m+d},
\]
so
\begin{equation}
\label{eq:mean-sep-weighted}
  \bar X_0-\bar X_1
  =
  \frac{m}{m+d}(\bar X_{\rm post}-\bar X_{\rm pre}).
\end{equation}
Since $m/d\to\infty$,
\[
  \frac{m}{m+d}=1+o(1).
\]
Sub-Gaussian concentration gives
\[
  \bar X_{\rm pre}=\mu+O_{\bbP}(m^{-1/2}),
  \qquad
  \bar X_{\rm post}=\nu+O_{\bbP}(d^{-1/2}),
\]
and hence
\begin{equation}
\label{eq:sep-weighted-delta}
  \bar X_0-\bar X_1
  =
  (\nu-\mu)\{1+o_{\bbP}(1)\}.
\end{equation}
Also,
\[
  \kappa_d^{-1/2}
  =
  \sqrt{\frac{2\sigma^2}{d}}\{1+o(1)\},
  \qquad
  \kappa_{m+d}^{-1/2}
  =
  \sqrt{\frac{2\sigma^2}{m+d}}\{1+o(1)\}
  =o(d^{-1/2}).
\]
Thus
\[
  B_{m,d}
  =
  I d\{1+o_{\bbP}(1)\}
  =
  (1+\varepsilon)H_\alpha\{1+o_{\bbP}(1)\}.
\]
By Lemma~\ref{lem:weighted-sufficient-stopping},
\[
  \log D_{T+d}^{\mathrm{PFA}}
  \ge
  B_{m,d}-h_\pi(T)
  =
  (1+\varepsilon)H_\alpha\{1+o_{\bbP}(1)\}-h_\pi(T).
\]
Since $H_\alpha=L+h_\pi(T)$, the right side is at least $L$ with probability tending to one.  Hence
\[
  \bbP^T_{P,Q}(\tau_\alpha^{\mathrm{PFA}}\le T+d)\to 1.
\]
This proves that for every fixed $\varepsilon>0$,
\begin{equation}
\label{eq:upper-probability}
  \bbP^{T_\alpha}_{P,Q}
  \left(
    \tau_\alpha^{\mathrm{PFA}}-T_\alpha
    \le
    (1+\varepsilon)\frac{L+h_\pi(T_\alpha)}{I}
  \right)
  \to 1.
\end{equation}
\eqref{eq:upper-op} and \eqref{eq:upper-OlogT} follow immediately.
\end{proof}

\subsection{Bounded mean change example}
\label{app:bounded-UP}
Throughout this appendix, $\theta\in(0,1)$ and
$X_i\in[0,1]$.  Define
\begin{equation}
\label{eq:UP-one-step-factor}
 f_{\theta,p}(x)
 :=1+\frac{p-\theta}{\theta(1-\theta)}(x-\theta)
 =x\frac p\theta+(1-x)\frac{1-p}{1-\theta},
 \qquad p\in[0,1],
\end{equation}
and
\[
 W_{s:t}^{\theta,p}:=\prod_{i=s}^t f_{\theta,p}(X_i),
 \qquad
 U_{s:t}^{\theta}:=\int_0^1W_{s:t}^{\theta,p}\,\Pi_J(dp).
\]
For $u,v\in(0,1)$, write
\begin{equation}
\label{eq:binary-kl}
    \kl(u,v)
  :=u\log\frac uv+(1-u)\log\frac{1-u}{1-v}.
\end{equation}

\subsubsection{Validity, growth, and variance adaptivity proofs}
\label{app:prop-bddmean}
\begin{lemma}[Legal-bet validity]
\label{lem:UP-validity}
If $\E[X_i\mid\mathcal F_{i-1}]=\theta$, then for every fixed $p\in[0,1]$,
$(W_{s:t}^{\theta,p})_{t\ge s-1}$ is a nonnegative test martingale, and
$(U_{s:t}^{\theta})_{t\ge s-1}$ is a nonnegative test martingale.
\end{lemma}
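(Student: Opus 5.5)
The plan is to first check that each single-portfolio wealth $W^{\theta,p}_{s:t}$ is a nonnegative test martingale, and then pass to the arcsine mixture $U^\theta_{s:t}$ by conditional Fubini/Tonelli. Only the conditional-mean hypothesis $\E[X_i\mid\mathcal F_{i-1}]=\theta$ is used; independence plays no role.

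\emph{Nonnegativity.} Fix $p\in[0,1]$. By the second expression in \eqref{eq:UP-one-step-factor},
\[
f_{\theta,p}(x)=x\frac p\theta+(1-x)\frac{1-p}{1-\theta}.
\]
This is a convex combination, with weights $x$ and $1-x$ for $x\in[0,1]$, of the two nonnegative numbers $p/\theta$ and $(1-p)/(1-\theta)$. Hence $f_{\theta,p}(X_i)\ge0$. Equivalently, $\lambda_\theta(p)\in[-1/(1-\theta),1/\theta]$, so the bet is legal. Consequently $W^{\theta,p}_{s:t}\ge0$. It is also adapted, since each factor is a measurable function of $X_i$, and it is integrable, being bounded by $\max\{1/\theta,1/(1-\theta)\}^{t-s+1}$.

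\emph{Martingale property.} For $t\ge s$, the factor $W^{\theta,p}_{s:t-1}$ is $\mathcal F_{t-1}$-measurable, and $f_{\theta,p}$ is affine in $x$. Therefore
\[
\E[W^{\theta,p}_{s:t}\mid\mathcal F_{t-1}]
=W^{\theta,p}_{s:t-1}\Bigl(1+\tfrac{p-\theta}{\theta(1-\theta)}\bigl(\E[X_t\mid\mathcal F_{t-1}]-\theta\bigr)\Bigr)
=W^{\theta,p}_{s:t-1}.
\]
Together with $W^{\theta,p}_{s:s-1}=1$, this shows $W^{\theta,p}$ is a test martingale.

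\emph{The mixture.} The map $(p,\omega)\mapsto W^{\theta,p}_{s:t}(\omega)$ is jointly measurable, being a polynomial in $p$ with measurable coefficients, and it is nonnegative. Conditional Tonelli therefore applies: for any $G\in\mathcal F_{t-1}$,
\[
\E\Bigl[\ind_G\int W^{\theta,p}_{s:t}\,\Pi_J(dp)\Bigr]=\int\E\bigl[\ind_G W^{\theta,p}_{s:t}\bigr]\,\Pi_J(dp)=\int\E\bigl[\ind_G W^{\theta,p}_{s:t-1}\bigr]\,\Pi_J(dp)=\E\bigl[\ind_G U^\theta_{s:t-1}\bigr].
\]
Integrability of $U^\theta_{s:t}$ follows from the same uniform bound, since $\Pi_J$ is a probability measure. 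Finally, $U^\theta_{s:s-1}=\int1\,d\Pi_J=1$, so $U^\theta$ is a nonnegative test martingale.

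The only potentially delicate point is the joint measurability needed for the Fubini interchange. It is immediate here because the integrand is polynomial in $p$, so I do not expect a genuine obstacle.
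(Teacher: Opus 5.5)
Your proposal is correct and follows the same route as the paper. Nonnegativity comes from the convex-combination form of $f_{\theta,p}$ with endpoint values $p/\theta$ and $(1-p)/(1-\theta)$, the one-step identity $\E[f_{\theta,p}(X_t)\mid\mathcal F_{t-1}]=1$ comes from affinity in $x$, and the arcsine mixture is handled by (conditional) Tonelli; you additionally make the integrability and joint-measurability details explicit, which the paper leaves implicit.
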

\begin{proof}
The two endpoint values in \eqref{eq:UP-one-step-factor} are
$(1-p)/(1-\theta)$ and $p/\theta$, so the factor is nonnegative on
$[0,1]$.  Moreover,
\[
 \E[f_{\theta,p}(X_i)\mid\mathcal F_{i-1}]
 =1+\frac{p-\theta}{\theta(1-\theta)}
   \{\E[X_i\mid\mathcal F_{i-1}]-\theta\}=1.
\]
Multiplication gives the first claim.  Tonelli's theorem gives the second.
\end{proof}

\begin{lemma}[Pathwise regret against the best constant portfolio]
\label{lem:UP-pathwise-regret}
For every data sequence $x_1,\ldots,x_n\in[0,1]$,
\begin{equation}
\label{eq:UP-pathwise-regret}
 U_{1:n}^{\theta}
 \ge \frac{2}{\pi(n+1)}
      \sup_{p\in[0,1]}W_{1:n}^{\theta,p}.
\end{equation}
Consequently the log-regret is at most
$\log(n+1)+\log(\pi/2)$.
\end{lemma}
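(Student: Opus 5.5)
Fix the data $x_1,\ldots,x_n$ and write $k:=\sum_{i=1}^n x_i\in[0,n]$. The plan is to exploit that each factor $f_{\theta,p}(x)=x\,\frac p\theta+(1-x)\frac{1-p}{1-\theta}$ is linear in $x$ and a convex combination of $p/\theta$ and $(1-p)/(1-\theta)$, so that the problem reduces to the Bernoulli (Krichevsky--Trofimov) case.

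First, by concavity of $\log$ in the mixture weights, $\log f_{\theta,p}(x)\ge x\log\frac p\theta+(1-x)\log\frac{1-p}{1-\theta}$. Summing over $i$ gives the pathwise lower bound
\[
W_{1:n}^{\theta,p}\ge \frac{p^{k}(1-p)^{n-k}}{\theta^{k}(1-\theta)^{n-k}} .
\]
Integrating against $\Pi_J$ then yields
\[
U_{1:n}^\theta\ge \frac{B(k+\tfrac12,\,n-k+\tfrac12)}{\pi\,\theta^k(1-\theta)^{n-k}} .
\]

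Second, I need an upper bound on $\sup_p W_{1:n}^{\theta,p}$. The map $p\mapsto \log W^{\theta,p}_{1:n}$ is concave, but the linear-in-$x$ product need not collapse to Bernoulli form. I would therefore rely on the standard fact (Cover--Ordentlich) that for two-asset portfolios the ratio $\int W\,d\Pi_J / \sup_p W$ is minimized at extreme-point sequences $x_i\in\{0,1\}$.

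To justify that reduction, I would use the convexity of $(x_1,\ldots,x_n)\mapsto\sup_p W$: each $W$ is multilinear in $x$, so its supremum over $p$ is convex along each coordinate. Meanwhile $U$ is multilinear, hence affine in each coordinate. The ratio $U/\sup_p W$ is then quasi-concave in each coordinate, being affine divided by convex with both positive, so its minimum over $[0,1]$ in each coordinate is attained at an endpoint. Iterating coordinatewise reduces the claim to binary sequences.

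For $x\in\{0,1\}^n$ with $k$ ones, $W^{\theta,p}=p^k(1-p)^{n-k}/(\theta^k(1-\theta)^{n-k})$ exactly, and the supremum is attained at $p=k/n$. The claim then becomes the classical KT bound
\[
\frac{B(k+\tfrac12,n-k+\tfrac12)}{\pi}\ge \frac{2}{\pi(n+1)}\Bigl(\tfrac kn\Bigr)^k\Bigl(1-\tfrac kn\Bigr)^{n-k}.
\]
I would prove this by induction on $n$ via the ratio recursion of the KT estimator; in the sequential form the known bound is $\ge \frac{1}{2\sqrt{n}}\,\hat p^k(1-\hat p)^{n-k}$, which is stronger than the constant $\frac{2}{\pi(n+1)}$ needed here. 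Alternatively I would check $k=0$ directly, where $B(1/2,n+1/2)/\pi=\binom{2n}{n}4^{-n}\ge 1/(2\sqrt n)$, and handle the general case by Stirling.

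The main obstacle is the reduction to binary sequences. Quasi-concavity of a ratio of an affine function to a convex one needs care: the superlevel sets $\{U\ge c\sup W\}$ are convex, so the ratio is quasi-concave, and a quasi-concave function on an interval attains its minimum at an endpoint. That makes the step work, but it must be verified coordinate by coordinate while holding the others fixed. The log-regret statement follows by taking logarithms, since $\log\frac{\pi(n+1)}{2}=\log(n+1)+\log(\pi/2)$.
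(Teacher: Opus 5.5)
Your argument is correct, but it takes a different route from the paper's. The paper never leaves general $x\in[0,1]^n$. Each factor $f_{\theta,p}(x_i)$ is a nonnegative combination of $p$ and $1-p$, so $W^{\theta,p}_{1:n}=\sum_k a_k\binom nk p^k(1-p)^{n-k}$ is a Bernstein polynomial with $a_k\ge0$. The partition-of-unity property gives $\sup_pW\le\max_ka_k$, and each basis element integrates to $1/(n+1)$, so $\int_0^1W\,dp\ge\frac{1}{n+1}\sup_pW$. Combined with the arcsine density being at least $2/\pi$, this is a self-contained two-line proof that explains the constant $\frac{2}{\pi(n+1)}$ exactly.

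You instead reduce to binary sequences. Your coordinatewise quasi-concavity argument is sound: $U$ is affine in each $x_i$, $\sup_pW\ge1$ is convex in each $x_i$, so the ratio attains its minimum on $[0,1]$ at an endpoint, and iterating over coordinates is fine. You then invoke the Krichevsky--Trofimov/CTW bound. This buys a sharper conclusion: the ratio is at least $\frac{1}{2\sqrt n}$, i.e.\ log-regret $\frac12\log n+\log 2$. That is the optimal order, whereas the paper's bound loses a factor of order $\sqrt n$. The price is reliance on an external, more delicate inequality, or a Stirling computation that must be checked at small $n$ and boundary $k$.

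Two remarks. First, your opening AM--GM step plays no role once the reduction is in place, since for binary data $U$ equals the Beta integral exactly. On its own it could not be paired with the Bernoulli supremum, because AM--GM bounds $W$ from below, not $\sup_pW$ from above. Second, if you only want the stated constant, the binary case needs no KT lemma:
\[
\int_0^1p^k(1-p)^{n-k}\,\Pi_J(dp)\ge\frac{2}{\pi(n+1)\binom nk}\ge\frac{2}{\pi(n+1)}\Bigl(\frac kn\Bigr)^k\Bigl(1-\frac kn\Bigr)^{n-k},
\]
because a binomial probability is at most one. This is just the paper's argument specialized to binary data.
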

\begin{proof}
The arcsine density is at least $2/\pi$ on $[0,1]$, hence
\[
 U_{1:n}^{\theta}\ge\frac2\pi\int_0^1W_{1:n}^{\theta,p}\,dp.
\]
Each factor in \eqref{eq:UP-one-step-factor} is a nonnegative linear
combination of $p$ and $1-p$.  Expanding their product gives a Bernstein
polynomial
\[
 W_{1:n}^{\theta,p}
 =\sum_{k=0}^n a_k{n\choose k}p^k(1-p)^{n-k},
 \qquad a_k\ge0.
\]
Since the Bernstein basis sums to one,
$\sup_pW_{1:n}^{\theta,p}\le\max_k a_k$.  Since each basis element integrates
to $1/(n+1)$,
\[
 \int_0^1W_{1:n}^{\theta,p}\,dp
 =\frac1{n+1}\sum_{k=0}^na_k
 \ge\frac1{n+1}\max_ka_k\geq \frac1{n+1}\sup_pW_{1:n}^{\theta,p}.
\]
Combining the displays proves \eqref{eq:UP-pathwise-regret}.
\end{proof}

The next direct lower bound will be useful for prefix anchoring and for
controlling the expectation of the detection delay.  Put
$S_n=\sum_{i=1}^nX_i$ and $\bar X_n=S_n/n$.

\begin{lemma}[Direct KL lower bound for the universal portfolio]
\label{lem:UP-direct-KL}
There is a numerical constant $c_J>0$ such that, for all $n\ge1$, all data
$x_1,\ldots,x_n\in[0,1]$, and every $\theta\in(0,1)$,
\begin{equation}
\label{eq:UP-direct-KL}
 U_{1:n}^{\theta}
 \ge \frac{c_J}{\sqrt{n+1}}
       \exp\{n\kl(\bar X_n,\theta)\}.
\end{equation}
In particular, $U_{1:n}^{\theta}\ge c_J/\sqrt{n+1}$ pathwise.
\end{lemma}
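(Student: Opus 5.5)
The plan is to use the Bernstein-polynomial representation from the proof of \Cref{lem:UP-pathwise-regret}, but to keep more information about the coefficients. Write $S_n=n\bar X_n$. Each factor is $f_{\theta,p}(x)=x\,p/\theta+(1-x)(1-p)/(1-\theta)$, which is affine in $x$. Hence $W_{1:n}^{\theta,p}$ is multi-affine in $(x_1,\dots,x_n)$, and so is the integral $U_{1:n}^\theta$.

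The first step is a reduction to binary data. I would first treat $x_i\in\{0,1\}$ with $k=S_n$ ones. In that case
\[
W_{1:n}^{\theta,p}=\frac{p^k(1-p)^{n-k}}{\theta^k(1-\theta)^{n-k}},
\]
and the arcsine mixture can be computed exactly:
\[
U_{1:n}^\theta=\frac{B(k+\tfrac12,n-k+\tfrac12)}{\pi\,\theta^k(1-\theta)^{n-k}}.
\]
The standard Krichevsky--Trofimov bound gives $B(k+\tfrac12,n-k+\tfrac12)\ge c\,n^{-1/2}(k/n)^k(1-k/n)^{n-k}$, uniformly in $0\le k\le n$, with the boundary cases $k\in\{0,n\}$ checked directly. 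Combined with the identity $(k/n)^k(1-k/n)^{n-k}/(\theta^k(1-\theta)^{n-k})=\exp\{n\kl(k/n,\theta)\}$, this yields \eqref{eq:UP-direct-KL} on binary sequences with $c_J=c/\pi$.

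The second step, which I expect to be the main obstacle, is extending to general $x\in[0,1]^n$. Since $U$ is multi-affine, it is the expectation of its values at vertices: take $\xi_i\sim\mathrm{Bernoulli}(x_i)$ independent, so that $U(x)=\E\,U(\xi)$. By the first step,
\[
U(x)\ge \frac{c_J}{\sqrt{n+1}}\,\E\exp\{n\kl(K/n,\theta)\},\qquad K=\textstyle\sum_i\xi_i .
\]
The map $k\mapsto n\kl(k/n,\theta)$ is convex, and so is its exponential. Jensen's inequality with $\E K=S_n$ then gives $\E\exp\{n\kl(K/n,\theta)\}\ge\exp\{n\kl(\bar X_n,\theta)\}$, which is exactly the claim. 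This is where all the care goes: one must check that the vertex expansion genuinely uses independent Bernoulli variables with product weights (true because $U$ is affine in each coordinate separately), and that the exponential of a convex function remains convex.

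Finally, the pathwise bound $U_{1:n}^\theta\ge c_J/\sqrt{n+1}$ follows at once because $\kl\ge0$.
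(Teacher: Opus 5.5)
Your proof is correct, and it reaches the bound by a different route than the paper.

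\textbf{The paper's route.} It applies weighted AM--GM to each factor before integrating over $p$:
$f_{\theta,p}(x)\ge (p/\theta)^x\{(1-p)/(1-\theta)\}^{1-x}$.
This immediately gives
$U_{1:n}^\theta\ge \Gamma(S_n+\tfrac12)\Gamma(n-S_n+\tfrac12)/\{\pi\Gamma(n+1)\theta^{S_n}(1-\theta)^{n-S_n}\}$,
where $S_n$ may be non-integer. The remaining work is a Stirling lower bound for $\Gamma(z+\tfrac12)\Gamma(n-z+\tfrac12)$, uniform over real $z\in[0,n]$. That is \Cref{lem:uniform-robbins-stirling}, with its boundary case analysis.

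\textbf{Your route.} You integrate exactly at the vertices of $[0,1]^n$, where only the classical Krichevsky--Trofimov estimate at integer counts is needed. You then lift to general data using the exact identity $U(x)=\E\,U(\xi)$, with independent $\xi_i\sim\mathrm{Bernoulli}(x_i)$ (Tonelli justifies the interchange since everything is nonnegative). Jensen for the convex map $k\mapsto\exp\{n\kl(k/n,\theta)\}$ finishes the argument.

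\textbf{How they relate.} The two arguments are the same Jensen step placed at different points. Since $f_{\theta,p}(x)=\E[(p/\theta)^{\xi}\{(1-p)/(1-\theta)\}^{1-\xi}]$ for $\xi\sim\mathrm{Bernoulli}(x)$, the paper's AM--GM is Jensen applied to $W^{\theta,p}_{1:n}$ before mixing over $p$. Yours is applied after mixing and after the Stirling estimate.

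\textbf{What each buys.} The paper's ordering makes the reduction one line, but it forces the Gamma-function bound at real arguments. Your ordering requires checking multi-affinity and the convexity of $\exp\circ\kl$, but lets you cite the integer-count KT regret bound off the shelf. Both yield a universal constant $c_J$.
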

\begin{proof}
Weighted AM--GM applied to \eqref{eq:UP-one-step-factor} gives
\[
 f_{\theta,p}(x)
 \ge \left(\frac p\theta\right)^x
      \left(\frac{1-p}{1-\theta}\right)^{1-x}.
\]
Multiplying and integrating against the arcsine law yields
\begin{equation}
\label{eq:UP-arcsine-integral}
 U_{1:n}^{\theta}
 \ge
 \frac{\Gamma(S_n+1/2)\Gamma(n-S_n+1/2)}
      {\pi\Gamma(n+1)\theta^{S_n}(1-\theta)^{n-S_n}}.
\end{equation}
\Cref{lem:uniform-robbins-stirling} imply, for every $z\in[0,n]$,
\[
 \frac{\Gamma(z+1/2)\Gamma(n-z+1/2)}{\pi\Gamma(n+1)}
 \ge \frac{c_J}{\sqrt{n+1}}
       \left(\frac zn\right)^z
       \left(1-\frac zn\right)^{n-z},
\]
with the conventions $0^0=1$.  One obtains a common constant by applying
the usual upper and lower Stirling bounds when $z,n-z\ge1$ and checking the
two compact boundary ranges $z\le1$ and $n-z\le1$ directly using continuity
and $\Gamma(1/2)=\sqrt\pi$.  Substitute $z=S_n$ into
\eqref{eq:UP-arcsine-integral} and take logarithms to obtain
\eqref{eq:UP-direct-KL}.
\end{proof}

\label{app:bounded-UP-information}
\begin{lemma}[Completed reverse information projection]
\label{lem:UP-ripr}
For every law $Q$ on $[0,1]$ and $\theta\in(0,1)$,
\[
 I_{\rm bet}(Q,\theta)
 =\inf_{R:\E_RX=\theta}\KL(Q\|R).
\]
\end{lemma}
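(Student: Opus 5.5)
We prove the two inequalities separately, with $\Lambda_\theta=[-1/(1-\theta),1/\theta]$ the set of legal bets.

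\emph{Weak duality} ($\le$). Fix $R$ with $\E_R X=\theta$ and $\KL(Q\|R)<\infty$, so $Q\ll R$, and fix $\lambda\in\Lambda_\theta$. The function $g_\lambda(x)=1+\lambda(x-\theta)$ is nonnegative on $[0,1]$ and satisfies $\E_R g_\lambda=1$. By the Donsker--Varadhan (Gibbs) variational inequality,
\[
\E_Q\log g_\lambda\le \KL(Q\|R)+\log\E_R g_\lambda=\KL(Q\|R).
\]
If $g_\lambda$ vanishes on a set of positive $Q$-mass, the left side is $-\infty$ and the inequality holds trivially. We then take the supremum over $\lambda$ and the infimum over $R$.

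\emph{Strong duality} ($\ge$). We exhibit a near-optimal $R$ built from the optimal bet. The concave function $\phi(\lambda)=\E_Q\log\{1+\lambda(X-\theta)\}$ on the compact interval $\Lambda_\theta$ is upper semicontinuous, by Fatou applied to a bounded-above integrand. It therefore attains its maximum at some $\lambda^*$.

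\textbf{Interior case.} Suppose $\lambda^*$ lies in the interior and $\phi$ is finite near it. The first-order condition gives
\[
\E_Q\!\left[\frac{X-\theta}{1+\lambda^*(X-\theta)}\right]=0.
\]
Define $dR=dQ/g_{\lambda^*}$. A direct computation shows $\E_R 1=\E_Q[1/g_{\lambda^*}]=1$, using $1/g=1-\lambda^*(X-\theta)/g$. The same identity gives $\E_R(X-\theta)=\E_Q[(X-\theta)/g]=0$. Hence $R\in\mathcal P_\theta$, and
\[
\KL(Q\|R)=\E_Q\log g_{\lambda^*}=I_{\rm bet}(Q,\theta).
\]

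\textbf{Boundary case.} This is the main obstacle, and the reason the lemma is called \emph{completed}. When $\lambda^*$ sits at an endpoint, say $\lambda^*=1/\theta$ so that $g$ vanishes at $x=0$, the one-sided optimality condition gives only
\[
\E_Q[(X-\theta)/g]\ge 0.
\]
We then define $dR=dQ/g+c\,\delta_{0}$, choosing the atom mass $c\ge0$ to restore total mass one. We first show that this choice also restores mean $\theta$: the two defects equal $-\lambda^*\E_Q[(X-\theta)/g]$ and $\E_Q[(X-\theta)/g]$, so they are proportional with the right sign. The extra atom sits where $Q$ has no mass, because $\E_Q\log g>-\infty$. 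Consequently $\KL(Q\|R)=\E_Q\log g$ still holds. The endpoint $\lambda^*=-1/(1-\theta)$ is symmetric, with the atom at $x=1$.

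\textbf{Degenerate cases.} If $\phi\equiv-\infty$ off $\lambda=0$, then $I_{\rm bet}=0$. In that case either $Q$ has mean $\theta$ and we take $R=Q$, or we handle it by approximation: we mix $Q$ with $\delta_0$ or $\delta_1$ and take limits, using lower semicontinuity of KL. The fallback for any gap in the boundary analysis is to perturb $Q$ to $Q_\epsilon=(1-\epsilon)Q+\epsilon\,\mathrm{Unif}$. This makes the optimum interior, and we then pass $\epsilon\to0$ using continuity of $I_{\rm bet}$ in $Q$ along this path, which follows from concavity.
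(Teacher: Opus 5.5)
Your proof is correct and follows essentially the same route as the paper: weak duality via the Gibbs/Donsker--Varadhan inequality using $\E_R\{1+\lambda(X-\theta)\}=1$, then strong duality by taking $dR/dQ=1/g_{\lambda^*}$ at an interior maximizer. At an endpoint such as $\lambda^*=1/\theta$, both you and the paper use the left-derivative condition (equivalently $\E_Q[\theta/X]\le 1$) and place the leftover mass as an atom at the zero of $g_{\lambda^*}$. Your final ``degenerate cases'' paragraph and the perturbation fallback are unnecessary: $\phi$ is finite on the whole open interval $(-1/(1-\theta),1/\theta)$ and $\phi(\lambda^*)\ge\phi(0)=0$, so the interior and boundary cases already cover every possibility.
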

\begin{proof}
For $\lambda\in\Lambda_\theta$, set
$f_\lambda(x)=1+\lambda(x-\theta)$.  If $\E_RX=\theta$, then
$\E_Rf_\lambda=1$.  The variational inequality for relative entropy gives
\[
 \E_Q\log f_\lambda
 \le\KL(Q\|R)+\log\E_Rf_\lambda
 =\KL(Q\|R).
\]
Taking the supremum over $\lambda$ and then the infimum over $R$ proves one
direction.

Let $\lambda^*$ maximize the concave function
$\lambda\mapsto\E_Q\log f_\lambda$.  If $\lambda^*$ is interior, first-order
optimality gives
\[
 \E_Q\frac{X-\theta}{f_{\lambda^*}(X)}=0.
\]
Since $f_{\lambda^*}=1+\lambda^*(X-\theta)$, this also implies
$\E_Q[1/f_{\lambda^*}(X)]=1$.  Define
$dR^*/dQ=1/f_{\lambda^*}$.  Then $R^*$ is a probability law,
$\E_{R^*}X=\theta$, and
\[
 \KL(Q\|R^*)=\E_Q\log f_{\lambda^*}=I_{\rm bet}(Q,\theta).
\]

If $\lambda^*=1/\theta$, then
$f_{\lambda^*}(X)=X/\theta$.  The left derivative condition gives
$\E_Q[\theta/X]\le1$.  Put mass with density $\theta/X$ relative to $Q$ and
place the remaining mass at zero.  The resulting probability law has mean
$\theta$ and again attains
$\E_Q\log(X/\theta)$.  The case
$\lambda^*=-1/(1-\theta)$ is symmetric: use density
$(1-\theta)/(1-X)$ and place the remaining mass at one.  If an endpoint
objective equals $-\infty$, it cannot be the maximizer because
$\E_QX\ne\theta$ admits a sufficiently small bet of the correct sign with
positive expected log wealth.  This completes the proof.
\end{proof}

\begin{lemma}[Worst-case and local betting information]
\label{lem:UP-minimax-local}
Let \(Q\) be a distribution on \([0,1]\), and write
\[
    b:=\E_Q[X],
    \qquad
    v_Q:=\text{Var}_Q(X).
\]
Then, for every \(\theta\in(0,1)\),
\[
    I_{\rm bet}(Q,\theta)\ge \kl(b,\theta),
    \qquad
    \inf_{Q:\,\E_Q[X]=b} I_{\rm bet}(Q,\theta)
    =
    \kl(b,\theta).
\]
\end{lemma}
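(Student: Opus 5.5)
We need to prove two things: $I_{\rm bet}(Q,\theta)\ge\kl(b,\theta)$, and that equality of the infimum holds.

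For the lower bound, the plan is to restrict the supremum in \eqref{eq:bounded-betting-information} to the one-parameter family $\lambda_\theta(p)=(p-\theta)/\{\theta(1-\theta)\}$, $p\in[0,1]$. These bets are exactly the factors $f_{\theta,p}$ of \eqref{eq:UP-one-step-factor}, and they lie in $\Lambda_\theta=[-1/(1-\theta),1/\theta]$. Weighted AM--GM, as in the proof of \Cref{lem:UP-direct-KL}, gives
\[
\log f_{\theta,p}(x)\ge x\log\frac p\theta+(1-x)\log\frac{1-p}{1-\theta}.
\]
Equivalently, $\log$ is concave and $f_{\theta,p}(x)$ is the linear interpolation in $x$ between the endpoint values. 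Taking $\E_Q$ yields
\[
\E_Q\log f_{\theta,p}(X)\ge b\log\frac p\theta+(1-b)\log\frac{1-p}{1-\theta}.
\]
Choosing $p=b$ makes the right side equal to $\kl(b,\theta)$, which proves $I_{\rm bet}(Q,\theta)\ge\kl(b,\theta)$. The edge cases $b\in\{0,1\}$ are handled with the convention $0\log 0=0$: the bet $p=b$ is an endpoint of $\Lambda_\theta$, and the inequality still holds because $Q$ is then a point mass at $b$.

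For the infimum, the previous step already gives $\inf_{Q:\E_QX=b}I_{\rm bet}(Q,\theta)\ge\kl(b,\theta)$. It therefore suffices to exhibit one $Q$ attaining the value, and the natural candidate is $Q_b=\mathrm{Bernoulli}(b)$. For $X\in\{0,1\}$ the AM--GM step holds with equality, since $f_{\theta,p}(1)=p/\theta$ and $f_{\theta,p}(0)=(1-p)/(1-\theta)$. Hence for every $\lambda\in\Lambda_\theta$, writing $p=\theta+\lambda\theta(1-\theta)\in[0,1]$,
\[
\E_{Q_b}\log\{1+\lambda(X-\theta)\}=b\log\frac p\theta+(1-b)\log\frac{1-p}{1-\theta}.
\]
This map $\lambda\mapsto p$ is a bijection from $\Lambda_\theta$ onto $[0,1]$. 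By Gibbs' inequality the right side is maximized at $p=b$ with value $\kl(b,\theta)$. So $I_{\rm bet}(Q_b,\theta)=\kl(b,\theta)$, and the infimum is attained.

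As a cross-check, the same conclusion follows from \Cref{lem:UP-ripr}. Restricting $R$ to Bernoulli$(\theta)$ gives
\[
I_{\rm bet}(Q_b,\theta)\le\KL(\mathrm{Bern}(b)\|\mathrm{Bern}(\theta))=\kl(b,\theta).
\]

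Nothing here is a serious obstacle. The only care needed is in the boundary conventions for $b\in\{0,1\}$ and in checking that the reparametrization $p\leftrightarrow\lambda$ covers all of $\Lambda_\theta$.
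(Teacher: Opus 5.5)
Your proof is correct and follows essentially the same route as the paper's: the lower bound comes from the single bet $p=b$ combined with weighted AM--GM, and attainment comes from evaluating the Bernoulli$(b)$ objective, where you use Gibbs' inequality after the reparametrization $\lambda\leftrightarrow p$ in place of the paper's direct identification of the maximizer $\lambda=(b-\theta)/\{\theta(1-\theta)\}$. Your cross-check via \Cref{lem:UP-ripr} uses only the easy (variational) direction of that lemma, so it also works as an independent confirmation of the upper bound $I_{\rm bet}(\mathrm{Bern}(b),\theta)\le\kl(b,\theta)$.
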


\begin{proof}
Recall that
\[
    I_{\rm bet}(Q,\theta)
    =
    \sup_{\lambda\in\Lambda_\theta}
    \E_Q\log\{1+\lambda(X-\theta)\},
    \qquad
    \Lambda_\theta
    =
    \left[-\frac{1}{1-\theta},\frac{1}{\theta}\right].
\]

Choosing \(p=b\) in \eqref{eq:UP-one-step-factor} and applying weighted
AM--GM gives
\[
    \log f_{\theta,b}(x)
    \ge
    x\log\frac{b}{\theta}
    +(1-x)\log\frac{1-b}{1-\theta}.
\]
Taking \(Q\)-expectations yields
\[
    I_{\rm bet}(Q,\theta)\ge \kl(b,\theta).
\]
For \(Q=\operatorname{Bern}(b)\), the objective equals
\[
    b\log\{1+\lambda(1-\theta)\}
    +(1-b)\log(1-\lambda\theta),
\]
whose maximizer is
\[
    \lambda=\frac{b-\theta}{\theta(1-\theta)}.
\]
Its maximum value is \(\kl(b,\theta)\). The cases \(b\in\{0,1\}\) follow
directly. Hence
\[
    \inf_{Q:\,\E_QX=b}I_{\rm bet}(Q,\theta)=\kl(b,\theta).
\]

\end{proof}

\begin{proof}[Proof of
\Cref{prop:bounded-UP-growth-main}]
E-process validity is \Cref{lem:UP-validity}.  For any fixed $p\in(0,1)$, the function
$x\mapsto\log f_{\theta,p}(x)$ is bounded and continuous on $[0,1]$, so the
strong law gives
\[
 \frac1n\log W_{1:n}^{\theta,p}
 \longrightarrow \E_Q\log f_{\theta,p}(X)
 \qquad Q^\infty\text{-a.s.}
\]
Together with \Cref{lem:UP-pathwise-regret}, this implies
\[
 \liminf_{n\to\infty}\frac1n\log U_{1:n}^{\theta}
 \ge \sup_{p\in(0,1)}\E_Q\log f_{\theta,p}(X)
 =I_{\rm bet}(Q,\theta).
\]
The last equality follows because $p\mapsto\lambda_\theta(p)$ maps
$[0,1]$ onto $\Lambda_\theta$ and interior $p$ values approximate either
endpoint.

The last two lemmas prove all information claims in
\Cref{prop:bounded-UP-growth-main}.

\end{proof}

\subsubsection{Detection-delay proofs}
\label{app:bounded-UP-delay}


\begin{lemma}[Uniform reduction to one post-change random walk]
\label{lem:UP-one-good-bet-uniform}
Fix $a\in(0,1)$ and a law $Q$ on $[0,1]$ such that
$I:=I_{\rm bet}(Q,a)>0$.  For every $\eta\in(0,I/2)$ there exist
$p_\eta\in(0,1)$, $\delta_\eta\in(0,\min\{a,1-a\})$, and a bounded
measurable function $Y_\eta:[0,1]\to\mathbb R$ such that, writing
\[
 U_\eta=[a-\delta_\eta,a+\delta_\eta],
 \qquad
 \mu_\eta:=\E_QY_\eta(X),
\]
we have
\begin{equation}
\label{eq:UP-one-good-bet-drift}
 \mu_\eta\ge I-2\eta>0,
\end{equation}
and, for every $n\ge1$, every data sequence $x_1,\ldots,x_n\in[0,1]$,
and every $\theta\in U_\eta$,
\begin{equation}
\label{eq:UP-one-good-bet-pathwise}
 U_{1:n}^{\theta}(x_1,\ldots,x_n)
 \ge \frac{2}{\pi(n+1)}
       \exp\left\{\sum_{i=1}^nY_\eta(x_i)\right\}.
\end{equation}
\end{lemma}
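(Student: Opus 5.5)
The plan is to reduce everything to a single fixed portfolio $p_\eta$ and then use Lemma~\ref{lem:UP-pathwise-regret} to pass from the universal portfolio to that portfolio's wealth.

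\emph{Step 1: choose $p_\eta$.} By the proof of Proposition~\ref{prop:bounded-UP-growth-main}, $I=I_{\rm bet}(Q,a)=\sup_{p\in(0,1)}\E_Q\log f_{a,p}(X)$, because interior $p$ approximate the endpoint bets. So there is an interior $p_\eta\in(0,1)$ with
\[
\E_Q\log f_{a,p_\eta}(X)\ge I-\eta .
\]
The only delicate point is when the optimal bet is an endpoint, say $\lambda^*=1/a$, which corresponds to $p=1$. There I would use concavity of $p\mapsto\log f_{a,p}(x)$ along the segment toward $p=1$. Together with monotone convergence applied to the increasing difference quotients, this gives $\E_Q\log f_{a,p}\to\E_Q\log f_{a,1}$ as $p\uparrow1$. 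The value is finite at the optimum since $I<\infty$ is assumed. I expect this approximation step to be the main technical obstacle.

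\emph{Step 2: make the bet uniform over $\theta$.} With $p=p_\eta$ fixed in $(0,1)$, the factor
\[
f_{\theta,p}(x)=x\,p/\theta+(1-x)(1-p)/(1-\theta)
\]
is jointly continuous on $[0,1]\times[a-\delta_0,a+\delta_0]$ for any fixed $\delta_0<\min\{a,1-a\}$. It is bounded below there by $\min\{p/\theta,(1-p)/(1-\theta)\}>0$. Hence $\log f_{\theta,p}(x)$ is uniformly continuous and bounded on this compact set. Choose $\delta_\eta\le\delta_0$ so that
\[
|\log f_{\theta,p}(x)-\log f_{a,p}(x)|\le\eta
\quad\text{for all } x\in[0,1],\ |\theta-a|\le\delta_\eta .
\]
Define
\[
Y_\eta(x):=\inf_{\theta\in U_\eta}\log f_{\theta,p_\eta}(x).
\]
By continuity this infimum equals the infimum over rational $\theta$ in $U_\eta$, so $Y_\eta$ is Borel. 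It is bounded, being squeezed between the bounds above. It also satisfies $Y_\eta\ge\log f_{a,p_\eta}-\eta$, so
\[
\mu_\eta\ge I-2\eta>0 ,
\]
where positivity uses $\eta<I/2$. This proves \eqref{eq:UP-one-good-bet-drift}.

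\emph{Step 3: pathwise bound.} For any data and any $\theta\in U_\eta$, Lemma~\ref{lem:UP-pathwise-regret} gives
\[
U^\theta_{1:n}\ge\frac{2}{\pi(n+1)}\sup_pW^{\theta,p}_{1:n}\ge\frac{2}{\pi(n+1)}W^{\theta,p_\eta}_{1:n}.
\]
Moreover,
\[
W^{\theta,p_\eta}_{1:n}=\exp\Bigl\{\sum_{i=1}^n\log f_{\theta,p_\eta}(x_i)\Bigr\}\ge\exp\Bigl\{\sum_{i=1}^n Y_\eta(x_i)\Bigr\}
\]
by the definition of $Y_\eta$ as an infimum over $\theta$. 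Combining the two displays yields \eqref{eq:UP-one-good-bet-pathwise}, uniformly over $\theta\in U_\eta$ and over all data sequences.
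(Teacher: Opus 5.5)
Your proposal is correct and follows the paper's proof essentially step for step. You pick an interior portfolio $p_\eta$ with $\E_Q\log f_{a,p_\eta}(X)\ge I-\eta$, handling an endpoint optimum via concavity along the segment toward $a$; the paper uses the equivalent weighted AM--GM bound $f_{a,p_\rho}\ge f_{a,p}^{1-\rho}$ where you use monotone convergence. You then use uniform continuity in $\theta$ to define $Y_\eta$ as an infimum over $U_\eta$, and Lemma~\ref{lem:UP-pathwise-regret} for the pathwise bound, exactly as the paper does. One small point: finiteness of $I$ is not an assumption of this lemma, but it is automatic, since $f_{a,p}\le\max\{1/a,1/(1-a)\}$.
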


\begin{proof}
By the definition of $I_{\rm bet}(Q,a)$, choose an interior portfolio
$p_\eta\in(0,1)$ such that
\begin{equation}
\label{eq:UP-good-p-choice}
 \E_Q\log f_{a,p_\eta}(X)\ge I-\eta.
\end{equation}
It is enough to optimize over interior portfolios.  Indeed, if an endpoint
$p\in\{0,1\}$ has finite positive expected log wealth, then for
$p_\rho=(1-\rho)p+\rho a\in(0,1)$, weighted AM--GM gives
\[
 f_{a,p_\rho}(x)
 =(1-\rho)f_{a,p}(x)+\rho
 \ge f_{a,p}(x)^{1-\rho},
\]
so $\E_Q\log f_{a,p_\rho}(X)$ approaches the endpoint value from below as
$\rho\downarrow0$.  An endpoint with expected log wealth $-\infty$ cannot be
needed when $I>0$.

Because $p_\eta\in(0,1)$, the function
$(\theta,x)\mapsto\log f_{\theta,p_\eta}(x)$ is continuous and bounded on a
compact neighborhood of $\{a\}\times[0,1]$.  Shrinking
$\delta_\eta>0$ if necessary, we may arrange that
\begin{equation}
\label{eq:UP-uniform-continuity}
 \sup_{\substack{|\theta-a|\le\delta_\eta\\x\in[0,1]}}
 \left|
   \log f_{\theta,p_\eta}(x)-\log f_{a,p_\eta}(x)
 \right|
 \le\eta.
\end{equation}
Define
\[
 Y_\eta(x):=
 \inf_{\theta\in U_\eta}\log f_{\theta,p_\eta}(x).
\]
This function is bounded and measurable.  By
\eqref{eq:UP-good-p-choice}--\eqref{eq:UP-uniform-continuity},
\[
 \mu_\eta
 \ge \E_Q\log f_{a,p_\eta}(X)-\eta
 \ge I-2\eta,
\]
which proves \eqref{eq:UP-one-good-bet-drift}.

For every $\theta\in U_\eta$,
\[
 W_{1:n}^{\theta,p_\eta}
 =\exp\left\{\sum_{i=1}^n
       \log f_{\theta,p_\eta}(x_i)\right\}
 \ge\exp\left\{\sum_{i=1}^nY_\eta(x_i)\right\}.
\]
Applying the pathwise regret bound in
\Cref{lem:UP-pathwise-regret} and retaining the single portfolio
$p_\eta$ proves \eqref{eq:UP-one-good-bet-pathwise}.
\end{proof}

\begin{lemma}[A bounded random walk crossing a logarithmic boundary]
\label{lem:UP-log-boundary-hitting}
Let $Y_1,Y_2,\ldots$ be i.i.d., with
$y_-\le Y_i\le y_+$ almost surely and mean $\mu>0$.  Put
$S_n=\sum_{i=1}^nY_i$.  For $B>0$ and a fixed $c\ge0$, define
\begin{equation}
\label{eq:UP-log-boundary-stopping}
 \sigma_B:=\inf\{n\ge1:S_n\ge B+\log(n+1)+c\}.
\end{equation}
Then $\sigma_B<\infty$ almost surely, $\E\sigma_B<\infty$, and
\begin{equation}
\label{eq:UP-log-boundary-expectation}
 \frac{\E\sigma_B}{B}\longrightarrow\frac1\mu.
\end{equation}
Moreover, for every fixed $\varepsilon>0$, there is $c_\varepsilon>0$ such
that, for all sufficiently large $B$,
\begin{equation}
\label{eq:UP-log-boundary-fixed-tail}
 \bbP\left\{
   \sigma_B>
   \left\lceil(1+\varepsilon)\frac{B}{\mu}\right\rceil
 \right\}
 \le e^{-c_\varepsilon B}.
\end{equation}
If $H_B/B\to\infty$, then there is $c_H>0$ such that, for all sufficiently
large $B$,
\begin{equation}
\label{eq:UP-log-boundary-long-tail}
 \bbP(\sigma_B>H_B)\le e^{-c_HH_B}.
\end{equation}
\end{lemma}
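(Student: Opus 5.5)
The plan is to treat $\sigma_B$ as a perturbation of the linear-boundary crossing time. Throughout, the logarithmic term $\log(n+1)+c$ is absorbed into a small fraction of the drift over the relevant horizon. Write $\bar Y_n=S_n/n$, and use Hoeffding's inequality for the bounded increments: $\bbP(|\bar Y_n-\mu|\ge x)\le 2e^{-2nx^2/(y_+-y_-)^2}$.

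\textbf{Almost-sure finiteness and tails.} Since $S_n/n\to\mu>0$ almost surely while $(B+\log(n+1)+c)/n\to0$, we get $\sigma_B<\infty$ almost surely.

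For the fixed tail \eqref{eq:UP-log-boundary-fixed-tail}, set $n_B=\lceil(1+\varepsilon)B/\mu\rceil$. Observe that
\[
\{\sigma_B>n_B\}\subseteq\{S_{n_B}<B+\log(n_B+1)+c\}.
\]
For large $B$, the right-hand threshold is at most $n_B\mu(1+\varepsilon/2)/(1+\varepsilon)$, because $\log(n_B+1)=o(B)$. The event therefore forces $\bar Y_{n_B}\le\mu-\mu\varepsilon/(2(1+\varepsilon))$. By Hoeffding this has probability at most $\exp\{-c\,n_B\}\le e^{-c_\varepsilon B}$.

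The long tail \eqref{eq:UP-log-boundary-long-tail} is identical with $n=H_B$. Since $H_B/B\to\infty$, the threshold satisfies $(B+\log(H_B+1)+c)/H_B\to0$, so the event forces $\bar Y_{H_B}\le\mu/2$ eventually. This gives $e^{-c_HH_B}$ with $c_H=\mu^2/(2(y_+-y_-)^2)$.

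\textbf{Integrability.} Summing tails gives $\E\sigma_B=\sum_n\bbP(\sigma_B>n)\le\sum_n\bbP(S_n<B+\log(n+1)+c)$. Once $n$ is large enough that $B+\log(n+1)+c\le n\mu/2$, each term is at most $e^{-c_Hn}$. Hence the sum is finite, and more precisely $\E\sigma_B\le n_B+\sum_{n>n_B}\bbP(S_n<B+\log(n+1)+c)$.

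\textbf{Upper bound in \eqref{eq:UP-log-boundary-expectation}.} For $n>n_B$ we have $(B+\log(n+1)+c)/n\le\mu/(1+\varepsilon/2)$ for all large $B$, uniformly in $n$. The first step of this check uses that $B/n\le\mu/(1+\varepsilon)$, and the second uses $\log(n+1)/n\le\log(n_B+1)/n_B\to0$. So each term is bounded by $e^{-c'n}$ with $c'$ depending only on $\varepsilon$. The remaining sum is then at most $e^{-c'n_B}/(1-e^{-c'})=o(1)$, giving $\E\sigma_B\le(1+\varepsilon)B/\mu+O(1)$.

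\textbf{Lower bound in \eqref{eq:UP-log-boundary-expectation}.} Since $\log(n+1)+c\ge0$, we have $\sigma_B\ge\nu_B:=\inf\{n:S_n\ge B\}$. Wald's identity gives $\E S_{\nu_B}=\mu\E\nu_B$, and overshoot is bounded, so $S_{\nu_B}\ge B$ implies $\E\nu_B\ge B/\mu$. Hence $\E\sigma_B\ge B/\mu$. Letting $\varepsilon\downarrow0$ yields $\E\sigma_B/B\to1/\mu$.

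\textbf{Main obstacle.} The delicate step is the uniformity in the sum over $n>n_B$: the log term must be dominated uniformly for all $n$ beyond $n_B$, not just at $n_B$. The plan handles this via the monotonicity of $\log(n+1)/n$, which reduces the uniform control to its value at $n_B$.
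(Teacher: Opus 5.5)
Your proof is correct. The almost-sure finiteness, integrability, both tail bounds and the lower bound essentially reproduce the paper's argument: Hoeffding at $n_B$ and at $H_B$, and Wald's identity for $\E\sigma_B\ge B/\mu$. Your detour through $\nu_B$ is harmless but unnecessary, since Wald applies directly to $\sigma_B$ once $\E\sigma_B<\infty$ is known. Bounded overshoot also plays no role in that lower bound.

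The genuine difference is in the upper bound $\limsup_B\E\sigma_B/B\le1/\mu$.

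\emph{Paper's route.} The paper absorbs the logarithm deterministically. For fixed $\xi\in(0,\mu)$ it sets $C_{\xi,c}=\sup_n\{\log(n+1)+c-\xi n\}$. It then dominates $\sigma_B$ pathwise by the linear-boundary crossing time $\rho_B$ of the reduced-drift walk $\sum_i(Y_i-\xi)$. Wald's identity with overshoot at most $(y_+-\xi)_+$ gives
\[
\E\sigma_B\le\frac{B+C_{\xi,c}+(y_+-\xi)_+}{\mu-\xi}
\]
for every sufficiently large $B$.

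\emph{Your route.} You sum Hoeffding tails beyond $n_B$. Monotonicity of $\log(n+1)/n$ makes the exponent uniform over $n\ge n_B$.

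\emph{What each buys.} Your route reuses the single concentration computation behind the tail bounds. It gives $\E\sigma_B\le(1+\varepsilon)B/\mu+O(1)$ with an exponentially small remainder. The paper's route is concentration-free for the expectation and yields an explicit non-asymptotic bound. It uses only Wald's identity and the upper bound $y_+$ through the overshoot, not a two-sided tail inequality.

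Two small corrections. The tail-sum split should read $\E\sigma_B\le n_B+\sum_{n\ge n_B}\bbP(\sigma_B>n)$; the extra $n=n_B$ term is also exponentially small. The degenerate case $y_+=y_-$ should be set aside before invoking Hoeffding.
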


\begin{proof}
The strong law gives $S_n/n\to\mu$ almost surely, while
$\{B+\log(n+1)+c\}/n\to0$.  Hence $\sigma_B<\infty$ almost surely.
To see integrability explicitly, suppose first that $y_+>y_-$ and put
$r_Y=y_+-y_-$.  For every sufficiently large multiple $n$ of $B$,
$B+\log(n+1)+c\le \mu n/2$.  Hence
\[
 \bbP(\sigma_B>n)
 \le \bbP(S_n< B+\log(n+1)+c)
 \le \bbP(S_n-\mu n\le-\mu n/2)
 \le \exp\left\{-\frac{\mu^2n}{2r_Y^2}\right\}.
\]
The tail-sum formula now gives $\E\sigma_B<\infty$.  If $y_+=y_-$,
the random walk is deterministic and all the claims are immediate.

Since the boundary in \eqref{eq:UP-log-boundary-stopping} is at least $B$,
Wald's identity yields
\[
 \mu\E\sigma_B=\E S_{\sigma_B}\ge B,
\]
and therefore
\begin{equation}
\label{eq:UP-hitting-lower}
 \liminf_{B\to\infty}\frac{\E\sigma_B}{B}\ge\frac1\mu.
\end{equation}
For the converse, fix $\xi\in(0,\mu)$ and put
\[
 C_{\xi,c}:=
 \sup_{n\ge1}\{\log(n+1)+c-\xi n\}<\infty.
\]
Let $Z_i=Y_i-\xi$, whose mean is $\mu-\xi>0$, and define
\[
 \rho_B:=\inf\left\{n\ge1:
   \sum_{i=1}^nZ_i\ge B+C_{\xi,c}
 \right\}.
\]
At time $\rho_B$,
\[
 S_{\rho_B}
 =\sum_{i=1}^{\rho_B}Z_i+\xi\rho_B
 \ge B+C_{\xi,c}+\xi\rho_B
 \ge B+\log(\rho_B+1)+c,
\]
so $\sigma_B\le\rho_B$ pathwise.  If
$z_+:=(y_+-\xi)_+$, the overshoot of the $Z$-random walk is at most $z_+$.
Wald's identity therefore gives
\[
 (\mu-\xi)\E\rho_B
 =\E\sum_{i=1}^{\rho_B}Z_i
 \le B+C_{\xi,c}+z_+.
\]
Thus
\[
 \limsup_{B\to\infty}\frac{\E\sigma_B}{B}
 \le\frac1{\mu-\xi}.
\]
Letting $\xi\downarrow0$ and combining with
\eqref{eq:UP-hitting-lower} proves
\eqref{eq:UP-log-boundary-expectation}.

For \eqref{eq:UP-log-boundary-fixed-tail}, put
$n_B=\lceil(1+\varepsilon)B/\mu\rceil$.  Since
$\log(n_B+1)+c=o(B)$, for all sufficiently large $B$,
\[
 \mu n_B-
 \{B+\log(n_B+1)+c\}\ge \frac{\varepsilon B}{2}.
\]
The implication
$\{\sigma_B>n_B\}\subseteq
 \{S_{n_B}<B+\log(n_B+1)+c\}$ and Hoeffding's inequality give
\eqref{eq:UP-log-boundary-fixed-tail}.  If $H_B/B\to\infty$, then
$\{B+\log(H_B+1)+c\}/H_B\to0$.  Hence, eventually, failure to cross by
$H_B$ implies $S_{H_B}-\mu H_B\le-\mu H_B/2$, and another application of
Hoeffding proves \eqref{eq:UP-log-boundary-long-tail}.
\end{proof}

\begin{proof}[Proof of \Cref{thm:bounded-main-delay}]
Write
\[
 I:=I_{\rm bet}(Q,a)>0.
\]
We first prove a generic weighted high-probability bound that contains the
ARL and PFA claims as special cases.  Let $A_\gamma\to\infty$,
$L_\gamma=\log A_\gamma$, let $T_\gamma$ be the changepoint, and let
$w_{s,\gamma}\in(0,1]$ be deterministic start weights.  Define the effective
post-change boundary
\begin{equation}
\label{eq:UP-effective-boundary-detailed}
 B_\gamma
 :=L_\gamma-\log w_{T_\gamma+1,\gamma}.
\end{equation}
Assume
\begin{equation}
\label{eq:UP-generic-late-change-assumptions}
 \frac{T_\gamma}{B_\gamma}\longrightarrow\infty,
 \qquad
 L_\gamma-\log w_{1,\gamma}=o(T_\gamma).
\end{equation}
For the ARL detector, $w_s\equiv1$ and $B_\gamma=L_\gamma$.  For the PFA
detector, $w_s=\pi_s$ and
$B_\gamma=\log(1/\alpha)-\log\pi_{T_\alpha+1}$; because $\pi_1$ is a fixed
positive constant, the second condition in
\eqref{eq:UP-generic-late-change-assumptions} follows from the first.

Fix $\varepsilon>0$.  Choose $\eta\in(0,I/2)$ sufficiently small that
\begin{equation}
\label{eq:UP-eta-choice-detailed}
 (1+\varepsilon)(I-2\eta)>I.
\end{equation}
Apply \Cref{lem:UP-one-good-bet-uniform}, and abbreviate
$p_\eta=p$, $\delta_\eta=\delta$, $U_\eta=U$,
$Y_\eta=Y$, and $\mu_\eta=\mu$.  In particular,
$\mu\ge I-2\eta>0$.  Under the post-change law, define
\[
 S_n:=\sum_{i=1}^nY(X_{T_\gamma+i}),
 \qquad
 c_0:=\log(\pi/2),
\]
and the post-change crossing time
\begin{equation}
\label{eq:UP-post-crossing-time-detailed}
 \sigma_\gamma
 :=\inf\{n\ge1:
   S_n\ge B_\gamma+\log(n+1)+c_0
 \}.
\end{equation}
By \Cref{lem:UP-one-good-bet-uniform}, at time
$T_\gamma+\sigma_\gamma$, for every $\theta\in U$,
\begin{align}
 \log\left\{
   w_{T_\gamma+1,\gamma}
   U_{T_\gamma+1:T_\gamma+\sigma_\gamma}^{\theta}
 \right\}
 &\ge
 \log w_{T_\gamma+1,\gamma}
 -\log(\sigma_\gamma+1)-c_0+S_{\sigma_\gamma}
 \\
 &\ge L_\gamma.
\label{eq:UP-inner-threshold-detailed}
\end{align}
Thus the pure post-change start rules out every candidate mean in $U$ once
$\sigma_\gamma$ occurs.

It remains to anchor the candidates outside $U$.  Put $T=T_\gamma$ and
consider the prefix event
\begin{equation}
\label{eq:UP-prefix-good-event}
 G_T:=\left\{|\bar X_{1:T}-a|\le\delta/4\right\}.
\end{equation}
Hoeffding's inequality gives
\begin{equation}
\label{eq:UP-prefix-good-probability}
 \bbP_{P,Q}^{T}(G_T^c)
 =\bbP_{P}^{\infty}(G_T^c)
 \le2\exp(-\delta^2T/8).
\end{equation}
Let
\begin{equation}
\label{eq:UP-anchor-horizon}
 H_T:=\left\lfloor\frac{\delta T}{4}\right\rfloor.
\end{equation}
On $G_T$, for every $0\le n\le H_T$, boundedness of the observations gives
\begin{align}
 |\bar X_{1:T+n}-a|
 &\le
 \frac{T}{T+n}|\bar X_{1:T}-a|
 +\frac1{T+n}\left|\sum_{i=T+1}^{T+n}(X_i-a)\right|
 \\
 &\le \frac{\delta}{4}+\frac nT
 \le\frac\delta2.
\label{eq:UP-cross-average-deterministic}
\end{align}
Consequently, if $\theta\notin U$, then
$|\theta-\bar X_{1:T+n}|\ge\delta/2$.  Pinsker's inequality for Bernoulli
relative entropy yields
\[
 \kl(\bar X_{1:T+n},\theta)
 \ge2|\bar X_{1:T+n}-\theta|^2
 \ge\frac{\delta^2}{2}.
\]
The direct KL lower bound in \Cref{lem:UP-direct-KL} therefore gives,
uniformly over $\theta\notin U$ and $0\le n\le H_T$,
\begin{equation}
\label{eq:UP-outer-anchor-detailed}
 w_{1,\gamma}U_{1:T+n}^{\theta}
 \ge
 \frac{w_{1,\gamma}c_J}
      {\sqrt{(1+\delta/4)T+1}}
 \exp\left\{\frac{\delta^2T}{2}\right\}.
\end{equation}
Here and in the detector definition, the endpoint values $\theta\in\{0,1\}$
are understood by lower-semicontinuous extension.  On $G_T$ the empirical
mean is bounded away from both endpoints, so
$\kl(\bar X_{1:T+n},\theta)\to\infty$ as $\theta$ approaches an
endpoint; hence the same uniform conclusion includes $\theta=0,1$.

By the second condition in
\eqref{eq:UP-generic-late-change-assumptions}, the logarithm of the
right-hand side of \eqref{eq:UP-outer-anchor-detailed} is at least
$L_\gamma$ for all sufficiently large $\gamma$.  Hence, on $G_T$, every
candidate $\theta\notin U$ is ruled out throughout the entire time interval
$[T,T+H_T]$ by the start-$1$ term.

Combining \eqref{eq:UP-inner-threshold-detailed} and
\eqref{eq:UP-outer-anchor-detailed} gives the following pathwise implication
for all sufficiently large $\gamma$:
\begin{equation}
\label{eq:UP-pathwise-delay-reduction}
 G_T\cap\{\sigma_\gamma\le H_T\}
 \quad\Longrightarrow\quad
 \tau_{A_\gamma}^{\rm UP,w}
 \le T_\gamma+\sigma_\gamma.
\end{equation}
This is the non-partitioned analogue of reducing an SR detector to one good
post-change betting component: the extra prefix term is used only to rule
out candidate means away from $a$.

Set
\[
 d_\gamma
 :=\left\lceil(1+\varepsilon)\frac{B_\gamma}{I}\right\rceil.
\]
The first condition in
\eqref{eq:UP-generic-late-change-assumptions} implies
$d_\gamma\le H_T$ eventually.  By \eqref{eq:UP-eta-choice-detailed},
$\varepsilon_\eta:=(1+\varepsilon)\mu/I-1$ is strictly positive.
Thus, up to the immaterial integer rounding,
$d_\gamma\ge(1+\varepsilon_\eta)B_\gamma/\mu$.
\Cref{lem:UP-log-boundary-hitting} therefore gives
\[
 \bbP_{P,Q}^{T_\gamma}(\sigma_\gamma>d_\gamma)\to0.
\]
Together with \eqref{eq:UP-prefix-good-probability} and
\eqref{eq:UP-pathwise-delay-reduction}, this yields
\begin{equation}
\label{eq:UP-generic-high-probability-delay}
 \bbP_{P,Q}^{T_\gamma}
 \left\{
   \tau_{A_\gamma}^{\rm UP,w}-T_\gamma
   \le
   \left\lceil(1+\varepsilon)\frac{B_\gamma}{I}\right\rceil
 \right\}
 \longrightarrow1.
\end{equation}
Taking $w_s\equiv1$ proves the ARL high-probability claim.  Taking
$A_\gamma=1/\alpha$ and $w_s=\pi_s$ proves the PFA claim.

We now prove the ARL expectation and CADD statements.  Here $w_s\equiv1$,
$L=\log A$, and $B=L$.  The pathwise lower bound in
\Cref{lem:UP-direct-KL} implies, for a numerical constant $c_*>0$,
\[
 D_t^{\rm UP,ARL}
 \ge c_J\sum_{n=1}^t(n+1)^{-1/2}
 \ge c_*\sqrt t.
\]
Thus there is a numerical $C_*<\infty$ such that
\begin{equation}
\label{eq:UP-deterministic-cap-detailed}
 \tau_A^{\rm UP,ARL}\le C_*A^2
 \qquad\text{on every sample path.}
\end{equation}
Let $N_A=C_*A^2$.  From
\eqref{eq:UP-pathwise-delay-reduction} and
\eqref{eq:UP-deterministic-cap-detailed},
\begin{equation}
\label{eq:UP-expectation-decomposition-detailed}
 \E_{P,Q}^{T}\big[(\tau_A^{\rm UP,ARL}-T)^+\big]
 \le
 \E_Q\sigma_L
 +N_A\left\{
   \bbP_{P,Q}^{T}(G_T^c)+\bbP_Q(\sigma_L>H_T)
 \right\}.
\end{equation}
The first error probability is bounded by
$2e^{-\delta^2T/8}$.  Since $H_T\asymp T$ and $T/L\to\infty$,
\Cref{lem:UP-log-boundary-hitting} gives
$\bbP_Q(\sigma_L>H_T)\le e^{-cH_T}$ for some $c>0$ and all sufficiently
large $A$.  Therefore
\[
 N_A\left\{
   \bbP_{P,Q}^{T}(G_T^c)+\bbP_Q(\sigma_L>H_T)
 \right\}
 \longrightarrow0,
\]
because $\log N_A=2L+O(1)=o(T)$.  On the other hand,
\Cref{lem:UP-log-boundary-hitting} gives
\[
 \frac{\E_Q\sigma_L}{L}\longrightarrow\frac1\mu
 \le\frac1{I-2\eta}.
\]
Since $\eta>0$ can be arbitrarily small,
\begin{equation}
\label{eq:UP-unconditional-delay-detailed}
 \limsup_{A\to\infty}
 \frac{
   \E_{P,Q}^{T_A}[(\tau_A^{\rm UP,ARL}-T_A)^+]
 }{\log A}
 \le\frac1I.
\end{equation}
Finally, the finite-horizon false-alarm inequality in
\Cref{thm:arl-control} gives
\[
 \bbP_{P,Q}^{T_A}(\tau_A^{\rm UP,ARL}>T_A)
 \ge1-\frac{T_A}{A}=1-o(1),
\]
where the last step uses $\log T_A=o(\log A)$.  Since
\[
 \CADD_{P,Q}^{T_A}(\tau_A^{\rm UP,ARL})
 =
 \frac{
   \E_{P,Q}^{T_A}[(\tau_A^{\rm UP,ARL}-T_A)^+]
 }{
   \bbP_{P,Q}^{T_A}(\tau_A^{\rm UP,ARL}>T_A)
 },
\]
\eqref{eq:UP-unconditional-delay-detailed} proves the CADD assertion.

By \Cref{lem:UP-ripr}, $I_{\rm bet}(Q,a)$ is the reverse information
projection of $Q$ onto the bounded mean-$a$ class.  Consequently, when the
delay criterion is worst-cased over pre-change laws with mean $a$, the
usual change-of-measure lower bound has the same leading constant.  For a
fixed fully specified pre-change law $P$, the sharper information number
$\KL(Q\|P)$ may be larger.  Finally, by
\Cref{lem:UP-minimax-local}, worst-casing also over post-change laws with
mean $b$ replaces $I_{\rm bet}(Q,a)$ by $\kl(b,a)$, attained by the
Bernoulli pair.
\end{proof}

\subsection{Gaussian unknown-variance mean-change example}
\label{app:proof-gaussian}

\begin{proof}[Proof of \Cref{prop:UI-t-eprocess}]
For a fixed $r>0$, let
\[
  q_{s:i-1}(x)
  :=\phi_{\widetilde\mu_{s:i-1},
           \widetilde\sigma^2_{s:i-1}}(x)
\]
and define
\[
  \Lambda_{s:t}^{\theta,r}
  :=\prod_{i=s}^t
  \frac{q_{s:i-1}(X_i)}{\phi_{\theta,r}(X_i)}.
\]
Under $N(\theta,r)^\infty$, conditional integration of the predictive density
$q_{s:i-1}$ shows that this is a test martingale.  Maximizing the Gaussian
null likelihood over $r>0$ gives
\[
  \sup_{r>0}\prod_{i=s}^t\phi_{\theta,r}(X_i)
  =(2\pi e\widehat r_\theta)^{-n/2},
  \qquad
  \widehat r_\theta
  =\frac1n\sum_{i=s}^t(X_i-\theta)^2.
\]
Therefore
\[
  R_{s:t}^{\theta}
  =\inf_{r>0}\Lambda_{s:t}^{\theta,r}.
\]
For the true null variance $r_0$, this implies
$R_{s:t}^{\theta}\le\Lambda_{s:t}^{\theta,r_0}$ at every time.  Hence
$R^\theta$ is dominated by a test martingale under each member of
$\mathcal N_\theta$, which proves the e-process assertion.

For the growth calculation, write
\begin{align*}
  \frac1n\log R_{1:n}^{\theta}
  =&\ \frac12\log\left\{
       \frac1n\sum_{i=1}^n(X_i-\theta)^2
     \right\}+\frac12
     -\frac1n\sum_{i=1}^n\log\widetilde\sigma_{i-1}\\
   &-\frac1{2n}\sum_{i=1}^n
       \frac{(X_i-\widetilde\mu_{i-1})^2}
            {\widetilde\sigma^2_{i-1}}.
\end{align*}
Under the conditions of \citet[Proposition~3.3]{wang2024anytime}, the last
two averages converge almost surely to
$\frac12\log v$ and $1/2$, respectively.  The first empirical moment
converges to $v+(m-\theta)^2$.  This proves
\eqref{eq:UI-t-growth}.  For the concrete estimators
\eqref{eq:gaussian-regularized-predictors}, the required convergence follows
from the strong law and standard positive and inverse moments of a chi-square
variable; the fixed ridge $v_0$ handles the finitely many small sample sizes.

Finally,
\[
  \KL\{N(m,v)\|N(\theta,r)\}
  =\frac12\left\{
    \log\frac rv+\frac{v+(m-\theta)^2}{r}-1
  \right\}.
\]
The right side is minimized at
$r=v+(m-\theta)^2$, giving
\eqref{eq:t-reverse-I-projection}.
\end{proof}

\begin{proof}[Proof of \Cref{thm:t-delay-unified}]
Write $T=T_\alpha$, $B=B_\alpha^t$, $d=d_\alpha$, and $I=I_t(P,Q)$.
Then $d/T\to0$.  Choose $\eta>0$ so small that
$(1+\varepsilon)(I-2\eta)/I>1$.  By continuity of
$\theta\mapsto J_{b,v}(\theta)$, choose $\delta>0$ such that
\[
  \inf_{|\theta-a|\le\delta}J_{b,v}(\theta)\ge I-\eta.
\]
The compact-uniform version of the growth calculation in
\Cref{prop:UI-t-eprocess}, applied to the pure post-change block, gives
\[
  \inf_{|\theta-a|\le\delta}
  \log R_{T+1:T+d}^{\theta}
  \ge d(I-2\eta)
\]
with probability tending to one.  Hence, uniformly on this inner set,
\[
  \log\{w_{T+1,\alpha}R_{T+1:T+d}^{\theta}\}
  \ge L_\alpha
\]
eventually with probability tending to one, because
$d(I-2\eta)>B+o(B)$.

For the exterior set, \Cref{lem:UI-cross-stability} gives, with
$c_\delta=\frac12\log(1+\delta^2/u)>0$,
\[
  \inf_{|\theta-a|\ge\delta}
  \log R_{1:T+d}^{\theta}
  \ge (T+d)c_\delta/2
\]
with probability tending to one.  The final condition in
\eqref{eq:t-late-change-condition} then implies
\[
  \inf_{|\theta-a|\ge\delta}
  \log\{w_{1,\alpha}R_{1:T+d}^{\theta}\}
  \ge L_\alpha
\]
eventually.  Thus every $\theta$ has one detector summand at least
$A_\alpha$, proving \eqref{eq:t-delay-hp}.  The ARL and PFA displays follow
by substituting $w_s\equiv1$ and \eqref{eq:pi-weights}.  Conditional versions
follow by dividing by the probability of no pre-change alarm, which tends to
one by \Cref{thm:arl-control,thm:pfa-control} under the stated regimes.
\end{proof}

\begin{proof}[Proof of \Cref{thm:full-Gaussian-delay}]
The martingale assertion for \eqref{eq:full-Gaussian-martingale} follows from
\[
  \E_{N(\theta,r)}
  \left[
    \frac{q_{s:i-1}(X_i)}{\phi_{\theta,r}(X_i)}
    \biggm|\mathcal F_{i-1}
  \right]
  =\int q_{s:i-1}(x)\,dx=1.
\]
Under $N(m,v)$, consistency of the predictive density and the strong law
give
\[
  \frac1n\sum_{i=1}^n\log q_{i-1}(X_i)
  \to \E_{N(m,v)}\log\phi_{m,v}(X),
\]
whereas
$n^{-1}\sum_i\log\phi_{\theta,r}(X_i)
\to\E_{N(m,v)}\log\phi_{\theta,r}(X)$.  Their difference is
\eqref{eq:full-Gaussian-growth}.

We now prove the delay bound.  Let $I=\KL(Q\|P)$ and fix $\eta>0$ small
enough that $(1+\varepsilon)(I-2\eta)/I>1$.  By continuity of Gaussian KL,
there is a compact parameter neighborhood $U$ of $P=(a,u)$ such that
\begin{equation}
\label{eq:full-G-proof-inside}
  \inf_{R\in U}\KL(Q\|R)\ge I-\eta.
\end{equation}
Moreover,
\begin{equation}
\label{eq:full-G-proof-outside}
  c_U:=\inf_{R\notin U}\KL(P\|R)>0.
\end{equation}
To see this, use the explicit Gaussian KL formula: it is continuous, vanishes
only at $R=P$, and tends to infinity as the candidate mean diverges or the
candidate variance tends to $0$ or $\infty$.

On a pure i.i.d. Gaussian block, the explicit representation
\[
  \frac1n\log L_{1:n}^{\theta,r}
  =C_n+\frac12\log(2\pi r)
    +\frac{\widehat v_n+(\bar X_n-\theta)^2}{2r},
\]
where $C_n$ is independent of $(\theta,r)$, shows that the convergence in
\eqref{eq:full-Gaussian-growth} is uniform on compact parameter sets.  The
same display and coercivity show uniform convergence of the infimum over the
closed exterior $U^c$.  It remains valid for the $T+d$ cross block when
$d=o(T)$, because its empirical mean, variance, and predictive score converge
to their $P$ limits exactly as in \Cref{lem:UI-cross-stability}.  Therefore,
with probability tending to one,
\begin{align*}
  \inf_{R\in U}
  \log L_{T+1:T+d}^{R}&\ge d(I-2\eta),\\
  \inf_{R\notin U}
  \log L_{1:T+d}^{R}&\ge (T+d)c_U/2.
\end{align*}
The first bound, the definition
$d=\lceil(1+\varepsilon)B_\alpha^{\rm G}/I\rceil$, and
$B_\alpha^{\rm G}=L_\alpha-\log w_{T+1,\alpha}$ imply that the weighted
post-change summand exceeds $A_\alpha$ uniformly on $U$.  The second bound
and $L_\alpha-\log w_{1,\alpha}=o(T_\alpha)$ imply that the weighted cross
summand exceeds $A_\alpha$ uniformly on $U^c$.  Thus the infimum in
\eqref{eq:full-Gaussian-detector} is at least $A_\alpha$ at time $T+d$ with
probability tending to one, proving \eqref{eq:full-Gaussian-delay-hp}.
\end{proof}

\section{Adjusters and local REGROW witnesses}\label{app:proof-local-witnesses}

\begin{proof}[Proof of \Cref{lem:adjusted-max}]
Let $\tau$ be any almost surely finite stopping time.  By Ville's inequality applied to the stopped process,
\[
  \mathbb P(E_\tau^*\ge x)\le \frac1x,
  \qquad x\ge1.
\]
Thus $p_\tau=1/E_\tau^*$ is a p-value in the sense that
\[
  \mathbb P(p_\tau\le u)\le u,
  \qquad 0\le u\le1.
\]
Define $b(u)=a(1/u)$ for $u\in(0,1]$.  Since $a$ is increasing, $b$ is decreasing, and the change of variables $x=1/u$ gives
\[
  \int_0^1 b(u)\,du
  =\int_0^1 a(1/u)\,du
  =\int_1^\infty \frac{a(x)}{x^2}\,dx\le1.
\]
For completeness, we recall the calibration step.  If $p$ is superuniform and $b$ is decreasing, nonnegative, and integrable, then
\[
  \mathbb E b(p)
  =\int_0^\infty \mathbb P\{b(p)>y\}\,dy
  \le \int_0^\infty \lambda\{u\in[0,1]: b(u)>y\}\,dy
  = \int_0^1 b(u)\,du,
\]
where $\lambda$ denotes Lebesgue measure.  Applying this to $p_\tau$ yields
\[
  \mathbb E a(E_\tau^*)=\mathbb E b(p_\tau)\le1.
\]
Since this holds at every almost surely finite stopping time $\tau$, $(a(E_t^*))$ is an e-process.
\end{proof}

\begin{proof}[Proof of \Cref{cor:valid-retention}]
Validity follows by applying \cref{lem:adjusted-max} conditionally after the restart time $s$.  The inequalities follow from $M_{s:t}^{R,*}\ge M_{s:u}^R$ and monotonicity of $a$.
\end{proof}

\begin{proof}[Proof of \Cref{lem:neighborhoods}]
The map $(S,R)\mapsto\KL(S\|R)$ is lower semicontinuous for weak convergence on Polish spaces by the Donsker--Varadhan variational formula.  Hence $R\mapsto\KL(Q\|R)$ is weakly lower semicontinuous, and the strict superlevel set $G_\eta$ is weakly open.  Since $\KL(Q\|P)=I^*>I^*-\eta$, $P\in G_\eta$.

It remains to prove the positive separation outside an arbitrary weak neighborhood $U$ of $P$.  If $c_U=0$, there exist $R_n\in\calP\setminus U$ such that $\KL(P\|R_n)\to0$.  Pinsker's inequality gives $\|P-R_n\|_{\mathrm{TV}}\to0$, hence $R_n\Rightarrow P$ weakly.  Since $U$ is a weak neighborhood of $P$, eventually $R_n\in U$, a contradiction.  Thus $c_U>0$.  The displayed inequalities follow immediately from the definitions.
\end{proof}

\begin{proof}[Proof of \Cref{lem:weakly-compact-regrow}]
By the REGROW result of \citet[Theorem 2]{ram2026power}, there
exists a single e-process
\[
    E^K=\{E_n^K\}_{n\ge0},
    \qquad E_0^K=1,
\]
which is valid under every $R\in K$ and satisfies, for every
$Q\notin K$,
\begin{equation}
\label{eq:ram-ramdas-regrow}
    \liminf_{n\to\infty}
    \frac1n\log E_n^K
    \ge
    \Phi_K(Q),
    \qquad
    Q^\infty\text{-a.s.}
\end{equation}
Importantly, $E^K$ depends only on $K$, not on $Q$.

Let $a$ be a fixed growth-preserving e-process adjuster, and
define
\[
    (E^K)_n^*
    :=
    \max_{0\le t\le n}E_t^K,
    \qquad
    \widetilde E_n^K
    :=
    a\bigl((E^K)_n^*\bigr).
\]
By \Cref{lem:adjusted-max}, $\widetilde E^K$ is a nondecreasing
e-process valid under every $R\in K$.

Moreover, \[ \Phi_K(Q)>0. \] Indeed, if \(\Phi_K(Q)=0\), there would exist \(R_m\in K\) such that \(\KL(Q\|R_m)\to0\). Pinsker's inequality would then imply \(R_m\to Q\) in total variation, and hence weakly. By weak compactness of \(K\), some subsequence converges weakly to an element of \(K\); uniqueness of the weak limit would then imply \(Q\in K\), a contradiction.
Since $a$ is growth preserving,
\Cref{lem:growth-preserved} gives
\[
    \liminf_{n\to\infty}
    \frac1n\log \widetilde E_n^K
    \ge
    \Phi_K(Q),
    \qquad
    Q^\infty\text{-a.s.}
\]

Neither $E^K$ nor the adjuster $a$ depends on $Q$. Therefore,
the same nondecreasing e-process $\widetilde E^K$ has the stated
REGROW guarantee for every $Q\notin K$.
\end{proof}

\begin{proof}[Proof of \Cref{lem:compact-witness-basis}]
Let $\Delta(\mathcal{X})$ denote the space of all probability
measures on $\mathcal{X}$ equipped with the topology of weak
convergence. Because $\mathcal{X}$ is a Polish space,
$\Delta(\mathcal{X})$ is metrizable, for example by the
L\'evy--Prokhorov metric. Consequently, $\mathcal{P}$, being a
weakly compact subset of a metrizable space, is a compact metric
space under the relative weak topology.

Because compact metric spaces are second countable, the relative
weak topology on $\mathcal{P}$ admits a countable basis, which we
denote by $\mathcal{B}$. Furthermore, compact metric spaces are
regular. Therefore, for any $P\in\mathcal{P}$ and any relatively
weakly open neighborhood $G$ of $P$, there exists a relatively
weakly open set $V$ such that
\[
    P\in V\subseteq\overline V\subseteq G,
\]
where $\overline V$ denotes the closure in the relative weak
topology of $\mathcal{P}$. Because $\mathcal{B}$ is a
topological basis, there exists a basis element
$B\in\mathcal{B}$ such that
\[
    P\in B\subseteq V.
\]
It follows that
\[
    P\in B\subseteq\overline B
    \subseteq\overline V\subseteq G.
\]
Thus, $\mathcal{B}$ satisfies Condition~\textup{(i)} of
Definition~\ref{def:witness-basis}.

To verify Condition~\textup{(ii)}, fix $B\in\mathcal{B}$. Its
relative closure $\overline B$ is a relatively closed subset of
the weakly compact space $\mathcal{P}$. Hence $\overline B$ is
weakly compact as a subset of $\Delta(\mathcal X)$. By
Lemma~\ref{lem:weakly-compact-regrow}, there exists a single
nondecreasing e-process, valid under every
$R\in\overline B$, whose REGROW guarantee holds simultaneously
for every $S\notin\overline B$. Therefore, $\overline B$ is
simultaneously REGROW-regular over
$\mathcal{P}\setminus\overline B$, and
Condition~\textup{(ii)} holds.

Finally, we verify Condition~\textup{(iii)}. For any
$B\in\mathcal{B}$, define the exterior class
\[
    B^c:=\mathcal{P}\setminus B.
\]
Because $B$ is relatively weakly open in $\mathcal{P}$, its
complement $B^c$ is relatively weakly closed in $\mathcal{P}$.
Since $\mathcal{P}$ is weakly compact, $B^c$ is also weakly
compact. Applying Lemma~\ref{lem:weakly-compact-regrow} to
$B^c$, there exists a single nondecreasing e-process, valid
under every $R\in B^c$, whose REGROW guarantee holds
simultaneously for every $S\notin B^c$. Since
\[
    \mathcal{P}\setminus B^c=B,
\]
this guarantee holds simultaneously for every $S\in B$.
Therefore, $B^c$ is simultaneously REGROW-regular over $B$, and
Condition~\textup{(iii)} holds.

Having constructed a countable family $\mathcal{B}$ satisfying
all three required conditions, we conclude that $\mathcal{P}$
admits a countable local REGROW witness basis.
\end{proof}

\begin{proof}[Proof of \Cref{prop:gaussian-local-witness}]
We divide the proof into two parts.
\paragraph{Part 1: $\mathcal P$ is not weakly compact.}

By Prokhorov's theorem, a family of probability measures on a
Polish space is relatively weakly compact if and only if it is
uniformly tight.

Let $K\subseteq\mathbb R$ be compact. Then there exists $M>0$
such that
\[
    K\subseteq[-M,M].
\]
Consider the sequence
\[
    P_n=N(n,1)\in\mathcal P,
    \qquad n\ge1.
\]
Writing $\Phi$ for the standard Gaussian distribution function,
we have
\[
\begin{aligned}
    P_n(K)
    &\le
    P_n([-M,M])\\
    &=
    \Phi(M-n)-\Phi(-M-n)
    \longrightarrow0
\end{aligned}
\]
as $n\to\infty$. Consequently, for every compact
$K\subseteq\mathbb R$,
\[
    \inf_{P\in\mathcal P}P(K)=0.
\]
In particular, there is no compact $K$ such that
\[
    \inf_{P\in\mathcal P}P(K)\ge\frac12.
\]
Thus $\mathcal P$ is not uniformly tight. By Prokhorov's
theorem, $\mathcal P$ is not relatively weakly compact and hence
is not weakly compact.

\paragraph{Part 2: Construction of a countable local
REGROW witness basis.}

Consider the parameterization
\[
    f:\mathbb R\longrightarrow\mathcal P,
    \qquad
    f(\theta)=P_\theta.
\]
We first verify that $f$ is a homeomorphism from $\mathbb R$,
equipped with its usual topology, onto $\mathcal P$, equipped
with the relative weak topology.

If $\theta_n\to\theta$, then the characteristic functions satisfy
\[
    \exp\left\{
        i u\theta_n-\frac{u^2}{2}
    \right\}
    \longrightarrow
    \exp\left\{
        i u\theta-\frac{u^2}{2}
    \right\}
\]
for every $u\in\mathbb R$. Hence, by L\'evy's continuity theorem,
\[
    P_{\theta_n}\Rightarrow P_\theta.
\]
Thus $f$ is continuous.

Conversely, suppose
\[
    P_{\theta_n}\Rightarrow P_\theta.
\]
Because the distribution function of $P_\theta$ is continuous
at $\theta$, weak convergence implies
\[
\begin{aligned}
    \Phi(\theta-\theta_n)
    &=
    P_{\theta_n}((-\infty,\theta])\\
    &\longrightarrow
    P_\theta((-\infty,\theta])
    =
    \frac12.
\end{aligned}
\]
Since $\Phi$ is continuous and strictly increasing,
\[
    \theta-\theta_n\longrightarrow0,
\]
and therefore $\theta_n\to\theta$. Since the weak topology on
$\mathcal P$ is metrizable, this proves continuity of $f^{-1}$.
Hence $f$ is a homeomorphism.

Define
\[
    \mathscr B
    :=
    \left\{
        B_{q,r}:
        q,r\in\mathbb Q,\ q<r
    \right\},
\]
where
\[
    B_{q,r}
    :=
    \{P_\theta\in\mathcal P:q<\theta<r\}.
\]
The collection $\mathscr B$ is countable. We verify the three
conditions in \Cref{def:witness-basis}.

\medskip

\noindent
\emph{Condition \textup{(i)}: local refinement.}

Fix $P_{\theta_0}\in\mathcal P$, and let $G$ be a relatively
weakly open neighborhood of $P_{\theta_0}$. Since $f$ is a
homeomorphism,
\[
    f^{-1}(G)
\]
is an open neighborhood of $\theta_0$ in $\mathbb R$. Hence
there exists $\epsilon>0$ such that
\[
    [\theta_0-\epsilon,\theta_0+\epsilon]
    \subseteq
    f^{-1}(G).
\]
By density of the rational numbers, we can choose
$q,r\in\mathbb Q$ such that
\[
    \theta_0-\epsilon<q<\theta_0<r<\theta_0+\epsilon.
\]
Then
\[
    P_{\theta_0}\in B_{q,r}.
\]
Moreover, because $f$ is a homeomorphism,
\[
    \overline{B_{q,r}}^{\,\mathcal P}
    =
    \{P_\theta:q\le\theta\le r\}
    =
    f([q,r]),
\]
where the closure is taken relative to $\mathcal P$. Therefore,
\[
    P_{\theta_0}
    \in
    B_{q,r}
    \subseteq
    \overline{B_{q,r}}^{\,\mathcal P}
    \subseteq
    G.
\]
Thus Condition~\textup{(i)} holds.

\medskip

\noindent
\emph{Condition \textup{(ii)}: simultaneous inner witnesses.}

Fix $q,r\in\mathbb Q$ with $q<r$. The relative weak closure of
$B_{q,r}$ is
\[
    K_{q,r}^{\mathrm{in}}
    :=
    \overline{B_{q,r}}^{\,\mathcal P}
    =
    \{P_\theta:q\le\theta\le r\}
    =
    f([q,r]).
\]
Since $[q,r]$ is compact and $f$ is continuous,
$K_{q,r}^{\mathrm{in}}$ is weakly compact as a subset of
$\mathcal M_1(\mathbb R)$.

By \Cref{lem:weakly-compact-regrow}, there exists a single
e-process
\[
    V^{q,r,\mathrm{in}}
    =
    \{V_t^{q,r,\mathrm{in}}\}_{t\ge0}
\]
which is valid under every $R\in K_{q,r}^{\mathrm{in}}$ and
satisfies, for every
$Q\notin K_{q,r}^{\mathrm{in}}$,
\[
    \liminf_{t\to\infty}
    \frac1t
    \log V_t^{q,r,\mathrm{in}}
    \ge
    \Phi_{K_{q,r}^{\mathrm{in}}}(Q),
    \qquad
    Q^\infty\text{-a.s.}
\]
Importantly, the same process $V^{q,r,\mathrm{in}}$ works for
every $Q\notin K_{q,r}^{\mathrm{in}}$.

If $V^{q,r,\mathrm{in}}$ is not already nondecreasing, let $a$
be a fixed growth-preserving e-process adjuster and define
\[
    \bigl(V^{q,r,\mathrm{in}}\bigr)_t^*
    :=
    \max_{0\le u\le t}
    V_u^{q,r,\mathrm{in}},
\]
and
\[
    E_t^{q,r,\mathrm{in}}
    :=
    a\left(
        \bigl(V^{q,r,\mathrm{in}}\bigr)_t^*
    \right).
\]
By \Cref{lem:adjusted-max},
$E^{q,r,\mathrm{in}}$ is a nondecreasing e-process valid under
every $R\in K_{q,r}^{\mathrm{in}}$.

We briefly verify preservation of the simultaneous growth rate.
Fix
\[
    Q\notin K_{q,r}^{\mathrm{in}}.
\]
Since $K_{q,r}^{\mathrm{in}}$ is weakly compact and
$Q\notin K_{q,r}^{\mathrm{in}}$,
\[
    \Phi_{K_{q,r}^{\mathrm{in}}}(Q)>0.
\]
Furthermore,
\[
    \bigl(V^{q,r,\mathrm{in}}\bigr)_t^*
    \ge
    V_t^{q,r,\mathrm{in}},
\]
so
\[
    \liminf_{t\to\infty}
    \frac1t
    \log
    \bigl(V^{q,r,\mathrm{in}}\bigr)_t^*
    \ge
    \Phi_{K_{q,r}^{\mathrm{in}}}(Q)
    >0,
    \qquad
    Q^\infty\text{-a.s.}
\]
In particular,
\[
    \log
    \bigl(V^{q,r,\mathrm{in}}\bigr)_t^*
    \longrightarrow\infty
\]
almost surely under $Q^\infty$. Since $a$ is growth preserving,
\[
    \frac{\log a(e^y)}{y}\longrightarrow1
    \qquad\text{as }y\to\infty.
\]
It follows that
\[
    \liminf_{t\to\infty}
    \frac1t
    \log E_t^{q,r,\mathrm{in}}
    \ge
    \Phi_{K_{q,r}^{\mathrm{in}}}(Q),
    \qquad
    Q^\infty\text{-a.s.}
\]
The process $E^{q,r,\mathrm{in}}$ is independent of $Q$.
Therefore $K_{q,r}^{\mathrm{in}}$ is simultaneously
REGROW-regular over
\[
    \mathcal P\setminus K_{q,r}^{\mathrm{in}}.
\]
This proves Condition~\textup{(ii)}.

\medskip

\noindent
\emph{Condition \textup{(iii)}: simultaneous exterior
witnesses.}

Fix $q,r\in\mathbb Q$ with $q<r$, and write
\[
    K_{q,r}^{\mathrm{out}}
    :=
    B_{q,r}^c
    =
    \{P_\theta:\theta\le q\}
    \cup
    \{P_\theta:\theta\ge r\},
\]
where the complement is taken relative to $\mathcal P$.

We construct a single e-process, depending only on $(q,r)$,
which is valid under every distribution in
$K_{q,r}^{\mathrm{out}}$ and achieves the REGROW rate under
every $P_{\theta_0}\in B_{q,r}$.

Enumerate the countable dense subset
\[
    \mathbb Q\cap(q,r)
    =
    \{h_m:m\ge1\},
\]
and set
\[
    \rho_m:=2^{-m},
    \qquad m\ge1.
\]
Then
\[
    \rho_m>0,
    \qquad
    \sum_{m=1}^\infty\rho_m=1.
\]

For each $h\in(q,r)$, define
\begin{align}
    L_t^{(q,h)}
    &:=
    \prod_{i=1}^t
    \frac{p_h(X_i)}{p_q(X_i)}
    =
    \exp\left\{
        (h-q)\sum_{i=1}^tX_i
        -
        \frac t2(h^2-q^2)
    \right\},
    \label{eq:gaussian-left-lr}\\
    L_t^{(r,h)}
    &:=
    \prod_{i=1}^t
    \frac{p_h(X_i)}{p_r(X_i)}
    =
    \exp\left\{
        (h-r)\sum_{i=1}^tX_i
        -
        \frac t2(h^2-r^2)
    \right\},
    \label{eq:gaussian-right-lr}
\end{align}
with
\[
    L_0^{(q,h)}=L_0^{(r,h)}=1.
\]
Define
\[
    E_t^{(h)}
    :=
    \min\left\{
        L_t^{(q,h)},
        L_t^{(r,h)}
    \right\},
    \qquad
    E_0^{(h)}=1.
\]

We first verify that $E^{(h)}$ is an e-process for the
composite null $K_{q,r}^{\mathrm{out}}$.

If $X\sim P_\theta$, then
\begin{align*}
    \mathbb E_\theta
    \left[
        \frac{p_h(X)}{p_q(X)}
    \right]
    &=
    \exp\left\{
        (h-q)\theta
        +
        \frac12(h-q)^2
        -
        \frac12(h^2-q^2)
    \right\}\\
    &=
    \exp\{(h-q)(\theta-q)\}.
\end{align*}
Because $h>q$, this quantity is at most one whenever
$\theta\le q$. Hence
\[
    \mathbb E_\theta
    \left[
        L_t^{(q,h)}
        \mid
        \mathcal F_{t-1}
    \right]
    \le
    L_{t-1}^{(q,h)}
\]
for every $\theta\le q$. Thus $L^{(q,h)}$ is a nonnegative test
supermartingale under every $P_\theta$ with $\theta\le q$.

Similarly,
\[
    \mathbb E_\theta
    \left[
        \frac{p_h(X)}{p_r(X)}
    \right]
    =
    \exp\{(h-r)(\theta-r)\}.
\]
Because $h-r<0$, this quantity is at most one whenever
$\theta\ge r$. Therefore $L^{(r,h)}$ is a nonnegative test
supermartingale under every $P_\theta$ with $\theta\ge r$.

Let $\tau$ be an arbitrary stopping time. By optional stopping
for nonnegative supermartingales, with the usual Fatou argument
when $\tau$ is unbounded, if $\theta\le q$, then
\[
\begin{aligned}
    \mathbb E_\theta
    \left[
        E_\tau^{(h)}
    \right]
    &\le
    \mathbb E_\theta
    \left[
        L_\tau^{(q,h)}
    \right]\\
    &\le1.
\end{aligned}
\]
If $\theta\ge r$, then
\[
\begin{aligned}
    \mathbb E_\theta
    \left[
        E_\tau^{(h)}
    \right]
    &\le
    \mathbb E_\theta
    \left[
        L_\tau^{(r,h)}
    \right]\\
    &\le1.
\end{aligned}
\]
Thus, for every $h\in(q,r)$,
\[
    \sup_{P_\theta\in K_{q,r}^{\mathrm{out}}}
    \mathbb E_\theta
    \left[
        E_\tau^{(h)}
    \right]
    \le1
\]
for every stopping time $\tau$. Hence $E^{(h)}$ is an e-process
for the composite null $K_{q,r}^{\mathrm{out}}$.

Now define the fixed mixture
\begin{equation}
\label{eq:gaussian-universal-mixture}
    V_t^{q,r,\mathrm{out}}
    :=
    \sum_{m=1}^\infty
    \rho_m E_t^{(h_m)}.
\end{equation}
This process depends only on the interval $(q,r)$ and not on the
true interior parameter.

For every fixed $t$ and every realized
$(X_1,\ldots,X_t)$, the functions
\[
    h\longmapsto L_t^{(q,h)}
    \quad\text{and}\quad
    h\longmapsto L_t^{(r,h)}
\]
are continuous and bounded on the compact interval $[q,r]$.
Consequently, the series in
\eqref{eq:gaussian-universal-mixture} is finite.

Moreover, for every stopping time $\tau$ and every
$P_\theta\in K_{q,r}^{\mathrm{out}}$, Tonelli's theorem gives
\begin{align*}
    \mathbb E_\theta
    \left[
        V_\tau^{q,r,\mathrm{out}}
    \right]
    &=
    \sum_{m=1}^\infty
    \rho_m
    \mathbb E_\theta
    \left[
        E_\tau^{(h_m)}
    \right]\\
    &\le
    \sum_{m=1}^\infty\rho_m\\
    &=1.
\end{align*}
Thus $V^{q,r,\mathrm{out}}$ is an e-process under every
distribution in $K_{q,r}^{\mathrm{out}}$.

We next prove that this single mixture achieves the desired
growth rate under every $P_{\theta_0}\in B_{q,r}$. Fix an
arbitrary
\[
    \theta_0\in(q,r).
\]
For every fixed $h\in(q,r)$, the strong law of large numbers
gives
\[
    \frac1t\sum_{i=1}^tX_i
    \longrightarrow
    \theta_0,
    \qquad
    P_{\theta_0}^\infty\text{-a.s.}
\]
It follows from \eqref{eq:gaussian-left-lr} that
\begin{align*}
    \frac1t\log L_t^{(q,h)}
    &\longrightarrow
    (h-q)\theta_0-\frac12(h^2-q^2)\\
    &=
    \frac12(\theta_0-q)^2
    -
    \frac12(\theta_0-h)^2,
\end{align*}
and from \eqref{eq:gaussian-right-lr} that
\begin{align*}
    \frac1t\log L_t^{(r,h)}
    &\longrightarrow
    (h-r)\theta_0-\frac12(h^2-r^2)\\
    &=
    \frac12(\theta_0-r)^2
    -
    \frac12(\theta_0-h)^2,
\end{align*}
almost surely under $P_{\theta_0}^\infty$. Therefore,
\begin{equation}
\label{eq:gaussian-component-rate}
    \frac1t\log E_t^{(h)}
    \longrightarrow
    I_{q,r}(\theta_0)
    -
    \frac12(\theta_0-h)^2,
    \qquad
    P_{\theta_0}^\infty\text{-a.s.},
\end{equation}
where
\begin{align}
    I_{q,r}(\theta_0)
    &:=
    \min\left\{
        \frac12(\theta_0-q)^2,
        \frac12(\theta_0-r)^2
    \right\}
    \nonumber\\
    &=
    \inf_{P_\theta\in K_{q,r}^{\mathrm{out}}}
    D_{\mathrm{KL}}(P_{\theta_0}\|P_\theta)
    \nonumber\\
    &=
    \Phi_{K_{q,r}^{\mathrm{out}}}(P_{\theta_0}).
    \label{eq:gaussian-exterior-kl}
\end{align}
Since the set $\{h_m:m\ge1\}$ is countable,
\eqref{eq:gaussian-component-rate} holds simultaneously for all
$m$ on an event of $P_{\theta_0}^\infty$-probability one.

For each fixed $m$,
\[
    V_t^{q,r,\mathrm{out}}
    \ge
    \rho_m E_t^{(h_m)}.
\]
Hence
\begin{align*}
    \liminf_{t\to\infty}
    \frac1t
    \log V_t^{q,r,\mathrm{out}}
    &\ge
    \lim_{t\to\infty}
    \left\{
        \frac{\log\rho_m}{t}
        +
        \frac1t\log E_t^{(h_m)}
    \right\}\\
    &=
    I_{q,r}(\theta_0)
    -
    \frac12(\theta_0-h_m)^2
\end{align*}
almost surely under $P_{\theta_0}^\infty$.

Because this inequality holds for every $m$,
\[
\begin{aligned}
    \liminf_{t\to\infty}
    \frac1t
    \log V_t^{q,r,\mathrm{out}}
    &\ge
    \sup_{m\ge1}
    \left\{
        I_{q,r}(\theta_0)
        -
        \frac12(\theta_0-h_m)^2
    \right\}\\
    &=
    I_{q,r}(\theta_0),
\end{aligned}
\]
where the last equality follows from density of
$\{h_m:m\ge1\}$ in $(q,r)$. Together with
\eqref{eq:gaussian-exterior-kl}, this gives
\[
    \liminf_{t\to\infty}
    \frac1t
    \log V_t^{q,r,\mathrm{out}}
    \ge
    \Phi_{K_{q,r}^{\mathrm{out}}}(P_{\theta_0}),
    \qquad
    P_{\theta_0}^\infty\text{-a.s.}
\]

The process $V^{q,r,\mathrm{out}}$ is fixed independently of
$\theta_0$, and the preceding conclusion holds for every
$\theta_0\in(q,r)$. Thus it is a simultaneous exterior witness,
except that it need not be nondecreasing.

To obtain a nondecreasing process, define
\[
    \bigl(V^{q,r,\mathrm{out}}\bigr)_t^*
    :=
    \max_{0\le u\le t}
    V_u^{q,r,\mathrm{out}},
\]
and
\[
    E_t^{q,r,\mathrm{out}}
    :=
    a\left(
        \bigl(V^{q,r,\mathrm{out}}\bigr)_t^*
    \right),
\]
where $a$ is the same fixed growth-preserving adjuster as above.
By \Cref{lem:adjusted-max},
$E^{q,r,\mathrm{out}}$ is a nondecreasing e-process under every
$P_\theta\in K_{q,r}^{\mathrm{out}}$.

Finally, fix $\theta_0\in(q,r)$. Since
\[
    \Phi_{K_{q,r}^{\mathrm{out}}}(P_{\theta_0})
    =
    I_{q,r}(\theta_0)
    >0,
\]
we have
\[
    \liminf_{t\to\infty}
    \frac1t
    \log
    \bigl(V^{q,r,\mathrm{out}}\bigr)_t^*
    \ge
    I_{q,r}(\theta_0)>0.
\]
Therefore,
\[
    \log
    \bigl(V^{q,r,\mathrm{out}}\bigr)_t^*
    \longrightarrow\infty
\]
almost surely under $P_{\theta_0}^\infty$. The
growth-preserving property of $a$ then yields
\[
    \liminf_{t\to\infty}
    \frac1t
    \log E_t^{q,r,\mathrm{out}}
    \ge
    I_{q,r}(\theta_0)
    =
    \Phi_{K_{q,r}^{\mathrm{out}}}(P_{\theta_0}),
    \qquad
    P_{\theta_0}^\infty\text{-a.s.}
\]
Since the same process $E^{q,r,\mathrm{out}}$ works for every
$\theta_0\in(q,r)$,
$K_{q,r}^{\mathrm{out}}=B_{q,r}^c$ is simultaneously
REGROW-regular over $B_{q,r}$. This proves
Condition~\textup{(iii)}.

We have verified all three conditions of
\Cref{def:witness-basis}. Therefore,
\[
    \mathscr B
    =
    \left\{
        B_{q,r}:q,r\in\mathbb Q,\ q<r
    \right\}
\]
is a countable local REGROW witness basis for
$\mathcal P$.
\end{proof}

\begin{proof}[Proof of \Cref{thm:witness-implies-growth}]
We first construct the entire family of point-null e-processes
without fixing $P$, $Q$, $\eta_{\mathrm{in}}$, or
$\eta_{\mathrm{out}}$.

For every $j\ge1$, let
\[
    E^{j,\mathrm{in}}
    =
    \{E_n^{j,\mathrm{in}}\}_{n\ge0}
\] 
be a
nondecreasing e-process
which is valid under every $R\in\overline B_j$ and satisfies,
for every $S\notin\overline B_j$,
\[
    \liminf_{n\to\infty}
    \frac1n\log E_n^{j,\mathrm{in}}
    \ge
    \Phi_{\overline B_j}(S),
    \qquad
    S^\infty\text{-a.s.}
\]
Existence of such an e-process follows directly from
Condition~\textup{(ii)} of
\Cref{def:witness-basis} (because weak compactness of $B_j$ implies that it is simultaneously REGROW-regular). Similarly, it follows from
Condition~\textup{(iii)} of
\Cref{def:witness-basis} that we have
a nondecreasing e-process
\[
    E^{j,\mathrm{out}}
    =
    \{E_n^{j,\mathrm{out}}\}_{n\ge0}
\]
which is valid under every $R\in B_j^c$ and satisfies, for every
$S\in B_j$,
\[
    \liminf_{n\to\infty}
    \frac1n\log E_n^{j,\mathrm{out}}
    \ge
    \Phi_{B_j^c}(S),
    \qquad
    S^\infty\text{-a.s.}
\]

For every starting time $s$, let
\[
    E_{s:t}^{j,\mathrm{in}}
    \quad\text{and}\quad
    E_{s:t}^{j,\mathrm{out}},
    \qquad t\in\N,
\]
with $E_{s:t}^{j,\mathrm{in}}
   =
    E_{s:t}^{j,\mathrm{out}}=1, $ for $t< s$
denote the corresponding $s$ delay processes obtained by
applying the original processes to
$X_s,\ldots,X_t$ for $t\geq s$.
Under any i.i.d.\ no-change law, these are $s$-delay
e-processes for their respective null classes. We use the
inactive convention that every delayed process equals one for
$t<s$. Monotonicity is asserted only after activation, that is,
for $t\ge s$.

For each $j\ge1$, $R\in\calP$, and $t\ge s-1$, define
\begin{equation}
\label{eq:component-point-null}
    \overline M_{s:t}^{R,j}
    :=
    \begin{cases}
    1,
        & t<s,\\[0.5em]
    \displaystyle
    \frac{1+E_{s:t}^{j,\mathrm{in}}}{2},
        & t\ge s,\ R\in B_j,\\[1.1em]
    \displaystyle
    \frac{1+E_{s:t}^{j,\mathrm{out}}}{2},
        & t\ge s,\ R\notin B_j.
    \end{cases}
\end{equation}
Fix a sequence $\{\gamma_j\}_{j\in\N}$ such that $\gamma_j> 0$ and $\sum_j\gamma_j=1$.
Now define the universal point-null process
\begin{equation}
\label{eq:universal-point-null-mixture}
    \overline M_{s:t}^{R}
    :=
    \sum_{j=1}^\infty
    \gamma_j\overline M_{s:t}^{R,j}.
\end{equation}
This construction depends only on the fixed basis, its fixed
simultaneous witnesses, and the fixed weights
$\{\gamma_j\}_{j\ge1}$.

We first verify validity. Fix $R\in\calP$, $s\ge1$, and
$j\ge1$. If $R\in B_j$, then
\[
    R\in B_j\subseteq\overline B_j,
\]
and hence $E_{s:t}^{j,\mathrm{in}}$ is an $s$-delay e-process
under $R^\infty$. Therefore,
\[
    \overline M_{s:t}^{R,j}
    =
    \frac{1+E_{s:t}^{j,\mathrm{in}}}{2},
    \qquad t\ge s,
\]
together with the inactive value one before time $s$, is an
$s$-delay e-process under $R^\infty$.

If $R\notin B_j$, then $R\in B_j^c$, and the same argument,
using $E^{j,\mathrm{out}}$, shows that
$\overline M^{R,j}_{s:t}$ is an $s$-delay e-process under
$R^\infty$.

Since all summands in \eqref{eq:universal-point-null-mixture}
are nonnegative, conditional Tonelli's theorem gives, for every
stopping time $\tau$,
\[
\begin{aligned}
    \mathbb E_R^\infty
    \left[
        \overline M_{s:\tau}^{R}
        \,\middle|\,
        \mathcal F_{s-1}
    \right]
    &=
    \sum_{j=1}^\infty
    \gamma_j
    \mathbb E_R^\infty
    \left[
        \overline M_{s:\tau}^{R,j}
        \,\middle|\,
        \mathcal F_{s-1}
    \right]  \\
    &\le
    \sum_{j=1}^\infty\gamma_j
    =1.
\end{aligned}
\]
Thus
$\{\overline M_{s:t}^{R}\}_{t\ge s-1}$ is an $s$-delay
point-null e-process under $R^\infty$. Since all active component
processes are nondecreasing, so is
$t\mapsto\overline M_{s:t}^{R}$ for $t\ge s$.

We now fix $P,Q\in\calP$ with
\[
    I^*=\KL(Q\|P)\in(0,\infty)
\]
and fix $\eta_{\mathrm{in}}\in(0,I^*)$. Define
\[
    G_{\eta_{\mathrm{in}}}
    :=
    \left\{
        R\in\calP:
        \KL(Q\|R)>
        I^*-\frac{\eta_{\mathrm{in}}}{2}
    \right\}.
\]
By \Cref{lem:neighborhoods},
$G_{\eta_{\mathrm{in}}}$ is a relatively weakly open
neighborhood of $P$. By Condition~\textup{(i)} of
\Cref{def:witness-basis}, there exists an index $j_*$ such that
\[
    P\in B_{j_*}
    \subseteq
    \overline B_{j_*}
    \subseteq
    G_{\eta_{\mathrm{in}}}.
\]
Set
\[
    U_{\eta_{\mathrm{in}}}:=B_{j_*},
    \qquad
    K_{\eta_{\mathrm{in}},\mathrm{in}}
    :=
    \overline B_{j_*},
    \qquad
    K_{\eta_{\mathrm{in}},\mathrm{out}}
    :=
    B_{j_*}^c.
\]

Since
$\overline B_{j_*}\subseteq G_{\eta_{\mathrm{in}}}$,
\[
\begin{aligned}
    \Phi_{K_{\eta_{\mathrm{in}},\mathrm{in}}}(Q)
    &=
    \inf_{R\in\overline B_{j_*}}
    \KL(Q\|R)\\
    &\ge
    I^*-\frac{\eta_{\mathrm{in}}}{2}.
\end{aligned}
\]
Also, because $U_{\eta_{\mathrm{in}}}=B_{j_*}$ is a relatively
weakly open neighborhood of $P$,
\Cref{lem:neighborhoods} gives
\[
    c_{\eta_{\mathrm{in}}}
    :=
    \inf_{R\notin U_{\eta_{\mathrm{in}}}}
    \KL(P\|R)
    >0.
\]
Since
$K_{\eta_{\mathrm{in}},\mathrm{out}}
=B_{j_*}^c$, this is exactly
\[
    c_{\eta_{\mathrm{in}}}
    =
    \Phi_{K_{\eta_{\mathrm{in}},\mathrm{out}}}(P).
\]

Notice also that
\[
    Q\notin\overline B_{j_*}.
\]
Indeed,
\[
    \KL(Q\|Q)=0
    <
    I^*-\frac{\eta_{\mathrm{in}}}{2},
\]
so $Q\notin G_{\eta_{\mathrm{in}}}$, whereas
$\overline B_{j_*}\subseteq G_{\eta_{\mathrm{in}}}$.
Consequently, the simultaneous inner-witness guarantee applies
to $Q$, while $P\in B_{j_*}$ ensures that the simultaneous
outer-witness guarantee applies to $P$.

For brevity, define the fixed mixture penalty
\[
    C_{j_*}
    :=
    \log\frac{2}{\gamma_{j_*}}
    <\infty.
\]

We first prove the exterior growth statement. For every
$R\notin U_{\eta_{\mathrm{in}}}=B_{j_*}$,
\eqref{eq:universal-point-null-mixture} and
\eqref{eq:component-point-null} imply
\[
\begin{aligned}
    \overline M_{1:n}^{R}
    &\ge
    \gamma_{j_*}
    \overline M_{1:n}^{R,j_*}\\
    &=
    \gamma_{j_*}
    \frac{1+E_n^{j_*,\mathrm{out}}}{2}\\
    &\ge
    \frac{\gamma_{j_*}}{2}
    E_n^{j_*,\mathrm{out}}.
\end{aligned}
\]
The right-hand side is independent of $R$, and hence
\begin{equation}
\label{eq:universal-outer-lower}
    \inf_{R\notin U_{\eta_{\mathrm{in}}}}
    \log\overline M_{1:n}^{R}
    \ge
    \log E_n^{j_*,\mathrm{out}}
    -
    C_{j_*}.
\end{equation}

By simultaneous exterior REGROW regularity and
$P\in B_{j_*}$,
\[
    \liminf_{n\to\infty}
    \frac1n\log E_n^{j_*,\mathrm{out}}
    \ge
    \Phi_{B_{j_*}^c}(P)
    =
    c_{\eta_{\mathrm{in}}},
    \qquad
    P^\infty\text{-a.s.}
\]
Therefore, for every
$\eta_{\mathrm{out}}\in(0,c_{\eta_{\mathrm{in}}})$,
\[
    \mathbb P_P^\infty
    \left(
        \frac1n
        \log E_n^{j_*,\mathrm{out}}
        \ge
        c_{\eta_{\mathrm{in}}}
        -
        \frac{\eta_{\mathrm{out}}}{2}
    \right)
    \longrightarrow1.
\]
Since $C_{j_*}$ is fixed,
\[
    \frac{C_{j_*}}{n}
    \le
    \frac{\eta_{\mathrm{out}}}{2}
\]
for all sufficiently large $n$. Combining this with
\eqref{eq:universal-outer-lower} gives
\[
    \mathbb P_P^\infty
    \left(
        \inf_{R\notin U_{\eta_{\mathrm{in}}}}
        \log\overline M_{1:n}^{R}
        \ge
        n\bigl(
            c_{\eta_{\mathrm{in}}}
            -
            \eta_{\mathrm{out}}
        \bigr)
    \right)
    \longrightarrow1.
\]
This proves \eqref{eq:pre-growth}.

We next prove the inner growth statement. For every
$R\in U_{\eta_{\mathrm{in}}}=B_{j_*}$,
\[
\begin{aligned}
    \overline M_{1:n}^{R}
    &\ge
    \gamma_{j_*}
    \overline M_{1:n}^{R,j_*}\\
    &=
    \gamma_{j_*}
    \frac{1+E_n^{j_*,\mathrm{in}}}{2}\\
    &\ge
    \frac{\gamma_{j_*}}{2}
    E_n^{j_*,\mathrm{in}}.
\end{aligned}
\]
Thus
\begin{equation}
\label{eq:universal-inner-lower}
    \inf_{R\in U_{\eta_{\mathrm{in}}}}
    \log\overline M_{1:n}^{R}
    \ge
    \log E_n^{j_*,\mathrm{in}}
    -
    C_{j_*}.
\end{equation}

By simultaneous inner REGROW regularity and
$Q\notin\overline B_{j_*}$,
\[
\begin{aligned}
    \liminf_{n\to\infty}
    \frac1n\log E_n^{j_*,\mathrm{in}}
    &\ge
    \Phi_{\overline B_{j_*}}(Q)\\
    &\ge
    I^*-\frac{\eta_{\mathrm{in}}}{2},
    \qquad
    Q^\infty\text{-a.s.}
\end{aligned}
\]
It follows that
\[
    \mathbb P_Q^\infty
    \left(
        \frac1n\log E_n^{j_*,\mathrm{in}}
        \ge
        I^*-\frac{3\eta_{\mathrm{in}}}{4}
    \right)
    \longrightarrow1.
\]
Since $C_{j_*}$ is fixed,
\[
    \frac{C_{j_*}}{n}
    \le
    \frac{\eta_{\mathrm{in}}}{4}
\]
for all sufficiently large $n$. Combining this with
\eqref{eq:universal-inner-lower} gives
\[
    \mathbb P_Q^\infty
    \left(
        \inf_{R\in U_{\eta_{\mathrm{in}}}}
        \log\overline M_{1:n}^{R}
        \ge
        n\bigl(I^*-\eta_{\mathrm{in}}\bigr)
    \right)
    \longrightarrow1.
\]
This proves \eqref{eq:post-growth}.

The family
$\{\overline M_{s:t}^{R}\}$ was constructed before fixing
$(P,Q)$ or either slack parameter. Therefore, the same detector
works simultaneously for every pair $(P,Q)$ and for every fixed
choice of $\eta_{\mathrm{in}}$ and
$\eta_{\mathrm{out}}$; only the basis element used in the
analysis changes.
\end{proof}

\section{Auxiliary Lemmas}

\begin{lemma}[Balancing two quadratic terms]
\label{lem:two-quadratics}
For $a,b\in\R$ and $K_0,K_1>0$,
\[
  \inf_{\theta\in\R}
  \max\{K_0(\theta-a)^2,K_1(\theta-b)^2\}
  =
  \frac{(a-b)^2}{(K_0^{-1/2}+K_1^{-1/2})^2}.
\]
\end{lemma}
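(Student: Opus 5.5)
Since the right-hand side does not depend on the order of the two terms, the plan is a direct elementary minimax computation. I may swap roles and, by translating $\theta$, assume $a\le b$ (the case $a=b$ is trivial: both sides vanish at $\theta=a$). Write $f_0(\theta)=K_0(\theta-a)^2$ and $f_1(\theta)=K_1(\theta-b)^2$, and set $g(\theta)=\max\{f_0,f_1\}$.

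\emph{Restriction to $[a,b]$.} For $\theta<a$, both $f_0$ and $f_1$ are decreasing. Hence $g(\theta)\ge g(a)$ there.

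For $\theta>b$, both $f_0$ and $f_1$ are increasing. Hence $g(\theta)\ge g(b)$ there.

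It therefore suffices to minimize $g$ over $[a,b]$.

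\emph{The balancing point.} On $[a,b]$ we have $\sqrt{f_0}=\sqrt{K_0}(\theta-a)$, which is increasing from $0$. We also have $\sqrt{f_1}=\sqrt{K_1}(b-\theta)$, which is decreasing to $0$.

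Consequently $\sqrt g=\max\{\sqrt{f_0},\sqrt{f_1}\}$ is minimized where the two lines cross. That crossing is at
\[
\theta^*=\frac{\sqrt{K_0}\,a+\sqrt{K_1}\,b}{\sqrt{K_0}+\sqrt{K_1}},
\]
which lies in $[a,b]$. To the left of $\theta^*$ the decreasing branch dominates, and to the right the increasing branch dominates. So $\theta^*$ is the minimizer.

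\emph{Evaluation.} At $\theta^*$,
\[
\theta^*-a=\frac{\sqrt{K_1}(b-a)}{\sqrt{K_0}+\sqrt{K_1}},
\]
so
\[
\sqrt{g(\theta^*)}=\frac{\sqrt{K_0K_1}\,(b-a)}{\sqrt{K_0}+\sqrt{K_1}}=\frac{b-a}{K_0^{-1/2}+K_1^{-1/2}}.
\]
Squaring gives the claimed value $(a-b)^2/(K_0^{-1/2}+K_1^{-1/2})^2$.

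There is no real obstacle here. The only point needing care is justifying that the minimum of the maximum of a decreasing and an increasing continuous function occurs at their crossing, and this is immediate from the monotonicity of each branch.
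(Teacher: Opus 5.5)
Your proof is correct, but it takes a different route from the paper's. The paper argues through sublevel sets. For a level $u>0$, the constraint $\max\{K_0(\theta-a)^2,K_1(\theta-b)^2\}\le u$ is feasible exactly when the intervals $[a-\sqrt{u/K_0},a+\sqrt{u/K_0}]$ and $[b-\sqrt{u/K_1},b+\sqrt{u/K_1}]$ overlap. That happens exactly when $\sqrt{u}\,(K_0^{-1/2}+K_1^{-1/2})\ge|a-b|$, and the infimum is the smallest such $u$. You instead work in the primal: you reduce to $[a,b]$ by monotonicity outside it, take square roots to get two linear branches, and locate their crossing. Your route needs a WLOG ordering, which comes from swapping $(a,K_0)$ with $(b,K_1)$; the translation of $\theta$ you mention does not affect the order of $a$ and $b$. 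It also needs the reduction step. In return it gives the explicit minimizer $\theta^*=(\sqrt{K_0}\,a+\sqrt{K_1}\,b)/(\sqrt{K_0}+\sqrt{K_1})$ and shows the infimum is attained. The paper's version is shorter and symmetric in $a,b$. Its phrase ``first intersect'' rests on the evident fact that the overlap condition is monotone in $u$.
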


\begin{proof}
For $u>0$, the inequalities
\[
  K_0(\theta-a)^2\le u,
  \qquad
  K_1(\theta-b)^2\le u
\]
hold simultaneously if and only if the intervals
\[
  \left[a-\sqrt{u/K_0},a+\sqrt{u/K_0}\right]
  \quad\text{and}\quad
  \left[b-\sqrt{u/K_1},b+\sqrt{u/K_1}\right]
\]
intersect.  They first intersect when
\[
  |a-b|=\sqrt{u/K_0}+\sqrt{u/K_1}.
\]
Solving for $u$ gives the claim.
\end{proof}

Fix a changepoint $T$ and a delay $d\ge 1$.  At time $t=T+d$, define the post-change interval
\[
  J_0=(T+1):(T+d),
  \qquad |J_0|=d,
\]
and, for a pre-change block length $m\le T$, define the crossing interval
\[
  J_1=(T-m+1):(T+d),
  \qquad |J_1|=m+d.
\]
Let
\[
  \bar X_0=\bar X_{J_0},
  \qquad
  \bar X_1=\bar X_{J_1}.
\]
Define
\begin{equation}
\label{eq:Bmd-def}
  B_{m,d}
  =
  \frac{(\bar X_0-\bar X_1)^2}
       {(\kappa_d^{-1/2}+\kappa_{m+d}^{-1/2})^2}.
\end{equation}
and
\begin{equation}
\label{eq:kappa-c-def}
  c_n=(1+\rho^2\sigma^2 n)^{-1/2},
  \qquad
  \kappa_n=\frac{\rho^2 n^2}{2(1+\rho^2\sigma^2 n)}.
\end{equation}
Then, we have
\[
  M_{s:t}^{\theta}=c_{n_{s:t}}\exp\{\kappa_{n_{s:t}}(\bar X_{s:t}-\theta)^2\}.
\]
As $n\to\infty$,
\begin{equation}
\label{eq:kappa-asymp}
  \kappa_n=\frac{n}{2\sigma^2}\{1+o(1)\},
  \qquad
  \log c_n=-\frac12\log(1+\rho^2\sigma^2 n).
\end{equation}
\begin{lemma}[A sufficient condition for stopping]
\label{lem:sufficient-stopping}
At time $T+d$,
\begin{equation}
\label{eq:logD-lower-B}
  \log D_{T+d}
  \ge
  B_{m,d}
  -\frac12\log(1+\rho^2\sigma^2(m+d)).
\end{equation}
Consequently, if
\begin{equation}
\label{eq:finite-sample-detection-condition}
  B_{m,d}
  \ge
  L+\frac12\log(1+\rho^2\sigma^2(m+d)),
\end{equation}
then $\tau_A\le T+d$.
\end{lemma}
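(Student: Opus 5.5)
The plan is to bound the infimum over $\theta$ from below by splitting into two candidate starts, $s=T+1$ (interval $J_0$) and $s=T-m+1$ (interval $J_1$). Since every summand $M^\theta_{s:t}$ is nonnegative, for every $\theta$ we have
\[
  \sum_{s=1}^{T+d}M^\theta_{s:T+d}\ \ge\ \max\bigl\{M^\theta_{T+1:T+d},\,M^\theta_{T-m+1:T+d}\bigr\}.
\]
If $m=0$ the two starts coincide; I will read the lemma for $m\ge1$, or note that then $B_{0,d}=0$ and the claim reduces to the trivial bound below. Using the representation $M^\theta_{s:t}=c_{n}\exp\{\kappa_{n}(\bar X_{s:t}-\theta)^2\}$, this maximum is at least
\[
  \min\{c_d,c_{m+d}\}\exp\bigl\{\max\{\kappa_d(\theta-\bar X_0)^2,\ \kappa_{m+d}(\theta-\bar X_1)^2\}\bigr\}.
\]
Since $n\mapsto c_n=(1+\rho^2\sigma^2n)^{-1/2}$ is decreasing, the minimum equals $c_{m+d}$, so its logarithm is exactly $-\tfrac12\log(1+\rho^2\sigma^2(m+d))$.

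Next, take the infimum over $\theta\in\R$. The prefactor does not depend on $\theta$, and $\exp$ is increasing, so it suffices to lower-bound
\[
  \inf_\theta \max\{\kappa_d(\theta-\bar X_0)^2,\ \kappa_{m+d}(\theta-\bar X_1)^2\}.
\]
By \Cref{lem:two-quadratics} with $K_0=\kappa_d$, $K_1=\kappa_{m+d}$ (both positive since $d\ge1$), this infimum equals
\[
  \frac{(\bar X_0-\bar X_1)^2}{(\kappa_d^{-1/2}+\kappa_{m+d}^{-1/2})^2}=B_{m,d}.
\]
Combining the two steps gives $\log D_{T+d}\ge B_{m,d}-\tfrac12\log(1+\rho^2\sigma^2(m+d))$, which is \eqref{eq:logD-lower-B}. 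One care point: the infimum of a product is at least the product of the (constant) lower bound and the infimum of the exponential factor. This holds because the prefactor lower bound is uniform in $\theta$, so there is no interchange issue.

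For the consequence, \eqref{eq:finite-sample-detection-condition} together with \eqref{eq:logD-lower-B} gives $\log D_{T+d}\ge L=\log A$, i.e.\ $D_{T+d}\ge A$. By the definition of $\tau_A$ as a first hitting time, this yields $\tau_A\le T+d$. I do not expect a genuine obstacle. The only points needing attention are that $c_n$ is monotone, so the weaker of the two penalties is the one indexed by $m+d$, and that \Cref{lem:two-quadratics} applies with positive coefficients.
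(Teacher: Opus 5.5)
Your proposal is correct and follows essentially the same route as the paper: keep only the two summands indexed by $J_0$ and $J_1$, bound their sum below by their maximum, replace both prefactors by the smaller one $c_{m+d}$ using monotonicity of $c_n$, and evaluate the remaining $\inf_\theta\max$ of two quadratics via \Cref{lem:two-quadratics}. The consequence $\tau_A\le T+d$ then follows from $D_{T+d}\ge A$ and the definition of $\tau_A$, exactly as in the paper.
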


\begin{proof}
Since $D_{T+d}$ sums over all intervals ending at $T+d$, it is at least the sum of the two terms indexed by $J_0$ and $J_1$:
\[
  D_{T+d}
  \ge
  \inf_\theta\{M_{J_0}^\theta+M_{J_1}^\theta\}
  \ge
  \inf_\theta\max\{M_{J_0}^\theta,M_{J_1}^\theta\}.
\]
Because $c_d\ge c_{m+d}$,
\[
  M_{J_0}^\theta
  \ge
  c_{m+d}\exp\{\kappa_d(\theta-\bar X_0)^2\},
\]
and
\[
  M_{J_1}^\theta
  =
  c_{m+d}\exp\{\kappa_{m+d}(\theta-\bar X_1)^2\}.
\]
Therefore
\[
  \log D_{T+d}
  \ge
  \log c_{m+d}
  +
  \inf_\theta\max\{\kappa_d(\theta-\bar X_0)^2,
                    \kappa_{m+d}(\theta-\bar X_1)^2\}.
\]
Lemma~\ref{lem:two-quadratics} gives the second term as $B_{m,d}$, while
\[
  \log c_{m+d}
  =-\frac{1}{2}\log(1+\rho^2\sigma^2(m+d)).
\]
This proves \eqref{eq:logD-lower-B}.  If the right side is at least $L=\log A$, then $D_{T+d}\ge A$ and hence $\tau_A\le T+d$.
\end{proof}
 Define the two-start spending penalty
\begin{equation}
\label{eq:ell-T-m}
  \ell_\pi(T,m)
  :=
  \max\{-\log\pi_{T+1},-\log\pi_{T-m+1}\}.
\end{equation}
\begin{lemma}[Weighted sufficient condition for stopping]
\label{lem:weighted-sufficient-stopping}
At time $T+d$,
\begin{equation}
\label{eq:weighted-logD-lower}
  \log D_{T+d}^{\mathrm{PFA}}
  \ge
  B_{m,d}
  -\ell_\pi(T,m)
  -\frac12\log(1+\rho^2\sigma^2(m+d)).
\end{equation}
Consequently, if the right side is at least $L$, then $\tau_\alpha^{\mathrm{PFA}}\le T+d$.
\end{lemma}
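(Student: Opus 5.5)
The plan is to mirror the proof of Lemma~\ref{lem:sufficient-stopping}, keeping track of the weights attached to the two starts we retain. Here $D_{T+d}^{\mathrm{PFA}}=\inf_\theta\sum_{s=1}^{T+d}\pi_sM_{s:T+d}^\theta$. All summands are nonnegative, so for every $\theta$ we may discard all starts except $s=T+1$ (interval $J_0$) and $s=T-m+1$ (interval $J_1$). If $m=0$ the two starts coincide; this degenerate case is harmless, and in the application $m=\lfloor T/2\rfloor\ge1$ anyway. This gives
\[
  D_{T+d}^{\mathrm{PFA}}
  \ge
  \inf_\theta\bigl\{\pi_{T+1}M_{J_0}^\theta+\pi_{T-m+1}M_{J_1}^\theta\bigr\}
  \ge
  \min\{\pi_{T+1},\pi_{T-m+1}\}\,
  \inf_\theta\max\{M_{J_0}^\theta,M_{J_1}^\theta\}.
\]
Taking logarithms, the prefactor contributes $\log\min\{\pi_{T+1},\pi_{T-m+1}\}=-\ell_\pi(T,m)$, which is exactly the penalty in \eqref{eq:ell-T-m}.

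For the remaining infimum I would reuse the computation from Lemma~\ref{lem:sufficient-stopping} verbatim. Write $M^\theta_{J}=c_{|J|}\exp\{\kappa_{|J|}(\bar X_J-\theta)^2\}$. Since $c_n$ is decreasing in $n$, we have $c_d\ge c_{m+d}$, so both terms are bounded below with the common factor $c_{m+d}$. Then
\[
  \log\inf_\theta\max\{M_{J_0}^\theta,M_{J_1}^\theta\}
  \ge
  \log c_{m+d}+\inf_\theta\max\{\kappa_d(\theta-\bar X_0)^2,\kappa_{m+d}(\theta-\bar X_1)^2\}.
\]
Lemma~\ref{lem:two-quadratics} evaluates the infimum of the maximum as $B_{m,d}$. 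The first term is $\log c_{m+d}=-\tfrac12\log(1+\rho^2\sigma^2(m+d))$. Combining the three pieces yields \eqref{eq:weighted-logD-lower}.

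For the consequence: if the right side of \eqref{eq:weighted-logD-lower} is at least $L=\log(1/\alpha)$, then $D_{T+d}^{\mathrm{PFA}}\ge1/\alpha$. By the definition of $\tau_\alpha^{\mathrm{PFA}}$ in \eqref{eq:ordinary-pfa-detector}, this gives $\tau_\alpha^{\mathrm{PFA}}\le T+d$.

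There is no genuine obstacle in this argument. The only point requiring care is to pull the weights out via the minimum before exchanging sum and maximum; that is where the maximum of $-\log\pi$ over the two starts enters. One also needs the monotonicity $c_d\ge c_{m+d}$, which holds because $c_n=(1+\rho^2\sigma^2n)^{-1/2}$ is decreasing in $n$.
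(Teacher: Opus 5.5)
Your proposal is correct and matches the paper's proof: both keep only the starts $s=T+1$ and $s=T-m+1$, pass from the sum to the maximum, bound the weights below by $e^{-\ell_\pi(T,m)}=\min\{\pi_{T+1},\pi_{T-m+1}\}$ and both prefactors below by $c_{m+d}$, and then apply Lemma~\ref{lem:two-quadratics}. The differences are cosmetic: you factor out the minimum weight before passing to the maximum, whereas the paper bounds each weighted term separately afterwards, and your remark on the degenerate case $m=0$ is a harmless addition.
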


\begin{proof}
The weighted detector contains the two terms corresponding to $J_0$ and $J_1$, hence
\[
  D_{T+d}^{\mathrm{PFA}}
  \ge
  \inf_\theta\{\pi_{T+1}M_{J_0}^\theta+
                 \pi_{T-m+1}M_{J_1}^\theta\}
  \ge
  \inf_\theta\max\{\pi_{T+1}M_{J_0}^\theta,
                   \pi_{T-m+1}M_{J_1}^\theta\}.
\]
By definition of $\ell_\pi(T,m)$ and because $c_d\ge c_{m+d}$,
\[
  \pi_{T+1}M_{J_0}^\theta
  \ge
  \exp\{-\ell_\pi(T,m)\}
  c_{m+d}\exp\{\kappa_d(\theta-\bar X_0)^2\},
\]
and
\[
  \pi_{T-m+1}M_{J_1}^\theta
  \ge
  \exp\{-\ell_\pi(T,m)\}
  c_{m+d}\exp\{\kappa_{m+d}(\theta-\bar X_1)^2\}.
\]
Taking logarithms and applying Lemma~\ref{lem:two-quadratics} yields
\[
  \log D_{T+d}^{\mathrm{PFA}}
  \ge
  -\ell_\pi(T,m)+\log c_{m+d}+B_{m,d}.
\]
Since $\log c_{m+d}=-\frac12\log(1+\rho^2\sigma^2(m+d))$, the result follows.
\end{proof}

\begin{lemma}[Uniform Robbins--Stirling lower bound]
\label{lem:uniform-robbins-stirling}
There exists a universal numerical constant \(c_J>0\) such that, for every
integer \(n\ge 1\) and every \(z\in[0,n]\),
\begin{equation}
\label{eq:uniform-robbins-stirling}
\frac{
    \Gamma(z+1/2)\Gamma(n-z+1/2)
}{
    \pi\Gamma(n+1)
}
\ge
\frac{c_J}{\sqrt{n+1}}
\left(\frac zn\right)^z
\left(1-\frac zn\right)^{n-z},
\end{equation}
where we use the convention \(0^0=1\).
\end{lemma}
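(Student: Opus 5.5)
We need to prove the uniform Robbins--Stirling lower bound: $\frac{\Gamma(z+1/2)\Gamma(n-z+1/2)}{\pi\Gamma(n+1)} \ge \frac{c_J}{\sqrt{n+1}}(z/n)^z(1-z/n)^{n-z}$ for all $z\in[0,n]$.

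The plan is to work with logarithms, writing $w=n-z$ and $g(x):=\log\Gamma(x+1/2)-x\log x+x$ (with $0\log 0=0$), so that the claim becomes
\[
  g(z)+g(w)-\log\Gamma(n+1)+n\log n-n-\log\pi\ \ge\ \log c_J-\tfrac12\log(n+1).
\]
The bound is then obtained by controlling $g$ from below and $\log\Gamma(n+1)$ from above, separately. For the denominator, the Robbins form of Stirling's formula gives
\[
  \log\Gamma(n+1)\le n\log n-n+\tfrac12\log(2\pi n)+\tfrac1{12n},
\]
so that $-\log\Gamma(n+1)+n\log n-n\ge-\tfrac12\log(2\pi n)-\tfrac1{12}$. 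This already supplies the needed $-\tfrac12\log(n+1)$ factor, and it therefore suffices to show that $g$ is bounded below by an absolute constant on $[0,\infty)$.

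\textbf{Step 1 (large arguments).} For $x\ge1$, I would use the lower Stirling bound $\log\Gamma(y)\ge (y-\tfrac12)\log y-y+\tfrac12\log(2\pi)$ with $y=x+\tfrac12$. This gives
\[
  g(x)\ge x\log\bigl(1+\tfrac1{2x}\bigr)-\tfrac12+\tfrac12\log(2\pi)\ \ge\ \tfrac12\log(2\pi)-\tfrac12,
\]
since $x\log\bigl(1+\tfrac1{2x}\bigr)\ge0$. The same route also works without relying on the usual Stirling constant being sharp, because log-convexity of $\Gamma$ yields a comparable lower bound on $\Gamma(x+1/2)/(x^x e^{-x})$.

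\textbf{Step 2 (the compact range $x\in[0,1]$).} Here $g$ is continuous, and $g(0)=\log\Gamma(1/2)=\tfrac12\log\pi$. Its minimum over $[0,1]$ is therefore a finite constant $m_0$, which is where the convention $0^0=1$ and the identity $\Gamma(1/2)=\sqrt\pi$ enter, exactly as the paper indicates. Taking $m:=\min\{m_0,\tfrac12\log(2\pi)-\tfrac12\}$, we get $g\ge m$ everywhere on $[0,\infty)$.

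\textbf{Step 3 (combining the bounds).} Adding the estimates gives
\[
  \text{LHS}\ \ge\ 2m-\log\pi-\tfrac12\log(2\pi)-\tfrac1{12}-\tfrac12\log n,
\]
and since $-\tfrac12\log n\ge-\tfrac12\log(n+1)$, the claim follows with $c_J:=\exp\{2m-\log\pi-\tfrac12\log(2\pi)-\tfrac1{12}\}>0$.

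The main obstacle is keeping every estimate uniform in $z$ across both the interior and the boundary regimes. The decomposition through $g$ handles this, because the two endpoint ranges $z\le1$ and $n-z\le1$ are absorbed into the compact-range minimum $m_0$ rather than treated by case analysis.
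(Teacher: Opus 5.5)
Your proof is correct and uses the same ingredients as the paper: the Robbins upper bound on $\Gamma(n+1)$ to produce the $\sqrt{n}$ factor, a Stirling lower bound $\Gamma(x+1/2)\ge c\,x^x e^{-x}$ for $x\ge 1$, and a compactness minimum on $[0,1]$, where the paper additionally invokes $e^a\ge a^a$. The only difference is packaging: by proving the single uniform bound $g\ge m$ on $[0,\infty)$ and then multiplying, you absorb the paper's three cases (both arguments $\ge 1$, one argument $\le 1$, and $n=1$) into one estimate, which is slightly cleaner.
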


\begin{proof}
Write
\[
    a:=z,
    \qquad
    b:=n-z,
\]
so that \(a,b\ge 0\) and \(a+b=n\).

We first record uniform Stirling bounds that will be used below. The standard
Stirling formula with remainder gives, for every \(y>0\),
\begin{equation}
\label{eq:stirling-gamma-bounds}
\sqrt{2\pi}\,
y^{y+1/2}e^{-y}
\le
\Gamma(y+1)
\le
\sqrt{2\pi}\,
y^{y+1/2}e^{-y+1/(12y)}.
\end{equation}
In particular, for every \(x\ge1\), applying the lower bound in
\eqref{eq:stirling-gamma-bounds} with \(y=x-\frac12\) gives
\begin{align}
\Gamma\left(x+\frac12\right)
&\ge
\sqrt{2\pi}
\left(x-\frac12\right)^x
e^{-(x-1/2)}
\nonumber\\
&=
\sqrt{2\pi}\,e^{1/2}
x^xe^{-x}
\left(1-\frac{1}{2x}\right)^x.
\label{eq:half-gamma-lower-pre}
\end{align}
By Bernoulli's inequality, since \(x\ge1\),
\[
\left(1-\frac{1}{2x}\right)^x
\ge
1-\frac{x}{2x}
=
\frac12.
\]
Consequently,
\begin{equation}
\label{eq:half-gamma-lower}
\Gamma\left(x+\frac12\right)
\ge
c_-x^xe^{-x},
\qquad x\ge1,
\end{equation}
where
\[
    c_-:=\frac{\sqrt{2\pi}\,e^{1/2}}{2}>0.
\]

Similarly, applying the upper bound in
\eqref{eq:stirling-gamma-bounds} with \(y=n\ge1\) yields
\begin{equation}
\label{eq:factorial-upper}
\Gamma(n+1)
\le
c_+\sqrt n\,n^ne^{-n},
\qquad
c_+:=\sqrt{2\pi}\,e^{1/12}.
\end{equation}

We now divide the proof into cases.

\medskip
\noindent
\textbf{Case 1: \(a\ge1\) and \(b\ge1\).}

Using \eqref{eq:half-gamma-lower} twice and
\eqref{eq:factorial-upper}, we obtain
\begin{align*}
\frac{
    \Gamma(a+1/2)\Gamma(b+1/2)
}{
    \pi\Gamma(n+1)
}
&\ge
\frac{
    c_-^2a^ab^be^{-(a+b)}
}{
    \pi c_+\sqrt n\,n^ne^{-n}
}.
\end{align*}
Since \(a+b=n\), the exponential factors cancel, and hence
\begin{align}
\frac{
    \Gamma(a+1/2)\Gamma(b+1/2)
}{
    \pi\Gamma(n+1)
}
&\ge
\frac{c_-^2}{\pi c_+\sqrt n}
\frac{a^ab^b}{n^n}
\nonumber\\
&=
\frac{c_-^2}{\pi c_+\sqrt n}
\left(\frac an\right)^a
\left(\frac bn\right)^b
\nonumber\\
&\ge
\frac{c_-^2}{\pi c_+\sqrt{n+1}}
\left(\frac an\right)^a
\left(\frac bn\right)^b.
\label{eq:interior-gamma-bound}
\end{align}

\medskip
\noindent
\textbf{Case 2: \(0\le a\le1\) and \(b\ge1\).}

Define
\[
    m_0
    :=
    \min_{0\le x\le1}
    \Gamma\left(x+\frac12\right).
\]
Since the Gamma function is continuous and strictly positive on the compact
interval \([1/2,3/2]\), we have \(m_0>0\). Therefore,
\[
    \Gamma\left(a+\frac12\right)\ge m_0.
\]
Using this inequality, \eqref{eq:half-gamma-lower} for \(b\), and
\eqref{eq:factorial-upper}, we obtain
\begin{align*}
\frac{
    \Gamma(a+1/2)\Gamma(b+1/2)
}{
    \pi\Gamma(n+1)
}
&\ge
\frac{
    m_0c_-b^be^{-b}
}{
    \pi c_+\sqrt n\,n^ne^{-n}
}\\
&=
\frac{m_0c_-}{\pi c_+\sqrt n}
\frac{e^ab^b}{n^n}
\\
&=
\frac{m_0c_-}{\pi c_+\sqrt n}
\frac{e^a}{n^a}
\left(\frac bn\right)^b.
\end{align*}
For \(a\in[0,1]\), with the convention \(0^0=1\), we have
\[
    e^a\ge a^a.
\]
Indeed, for \(a\in(0,1]\),
\[
    \log\left(\frac{e^a}{a^a}\right)
    =
    a-a\log a
    \ge0,
\]
and the assertion also holds at \(a=0\) by continuity and the convention
\(0^0=1\). It follows that
\begin{align}
\frac{
    \Gamma(a+1/2)\Gamma(b+1/2)
}{
    \pi\Gamma(n+1)
}
&\ge
\frac{m_0c_-}{\pi c_+\sqrt n}
\left(\frac an\right)^a
\left(\frac bn\right)^b
\nonumber\\
&\ge
\frac{m_0c_-}{\pi c_+\sqrt{n+1}}
\left(\frac an\right)^a
\left(\frac bn\right)^b.
\label{eq:left-boundary-gamma-bound}
\end{align}

The case \(a\ge1\) and \(0\le b\le1\) follows symmetrically.

\medskip
\noindent
\textbf{Case 3: \(n=1\).}

In this case \(a,b\in[0,1]\). Hence
\[
    \Gamma\left(a+\frac12\right)\ge m_0,
    \qquad
    \Gamma\left(b+\frac12\right)\ge m_0.
\]
Since \(\Gamma(2)=1\),
\[
\frac{
    \Gamma(a+1/2)\Gamma(b+1/2)
}{
    \pi\Gamma(2)
}
\ge
\frac{m_0^2}{\pi}.
\]
Moreover,
\[
    a^ab^b\le1,
\]
because \(a,b\in[0,1]\). Therefore,
\begin{equation}
\label{eq:n-one-gamma-bound}
\frac{
    \Gamma(a+1/2)\Gamma(b+1/2)
}{
    \pi\Gamma(2)
}
\ge
\frac{\sqrt2\,m_0^2}{\pi\sqrt{2}}
a^ab^b.
\end{equation}

Finally, define
\[
c_J
:=
\min\left\{
    \frac{c_-^2}{\pi c_+},
    \frac{m_0c_-}{\pi c_+},
    \frac{\sqrt2\,m_0^2}{\pi}
\right\}.
\]
This is a strictly positive universal constant. Combining
\eqref{eq:interior-gamma-bound},
\eqref{eq:left-boundary-gamma-bound}, its symmetric counterpart, and
\eqref{eq:n-one-gamma-bound}, we conclude that
\[
\frac{
    \Gamma(a+1/2)\Gamma(b+1/2)
}{
    \pi\Gamma(n+1)
}
\ge
\frac{c_J}{\sqrt{n+1}}
\left(\frac an\right)^a
\left(\frac bn\right)^b.
\]
Substituting \(a=z\) and \(b=n-z\) proves
\eqref{eq:uniform-robbins-stirling}.
\end{proof}

\begin{lemma}[Positive lower growth is preserved]
\label{lem:growth-preserved}
Let \(a\) be a growth-preserving adjuster, and let \(Y_n\ge1\) be random
variables. If, for some \(r>0\),
\[
    \liminf_{n\to\infty}\frac{1}{n}\log Y_n
    \ge r
    \qquad\text{almost surely},
\]
then 
\[
    \liminf_{n\to\infty}\frac{1}{n}\log a(Y_n)
    \ge r
    \qquad\text{almost surely}.
\]
\end{lemma}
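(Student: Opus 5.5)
We argue pathwise: fix an outcome $\omega$ in the almost-sure event on which
\[
  \liminf_{n\to\infty}\frac1n\log Y_n(\omega)\ge r ,
\]
and write $y_n:=\log Y_n(\omega)\ge0$. Everything reduces to a deterministic statement about the real sequence $(y_n)$ and the function $g(y):=\log a(e^y)$. The growth-preserving hypothesis in \Cref{def:adjuster} says exactly $g(y)=y-o(y)$ as $y\to\infty$, that is, $g(y)/y\to1$.

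\textbf{Step 1: $y_n\to\infty$.} Since $r>0$, for any $\delta\in(0,r)$ we have $y_n\ge n(r-\delta)$ for all sufficiently large $n$. In particular $y_n\to\infty$; this is where positivity of $r$ is used.

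\textbf{Step 2: transfer the growth through $g$.} Fix $\epsilon\in(0,1)$. By growth preservation there is $y_0$ such that $g(y)\ge(1-\epsilon)y$ for all $y\ge y_0$. By Step 1, eventually $y_n\ge y_0$, and hence
\[
  \frac1n\log a(Y_n)=\frac{g(y_n)}{n}\ge(1-\epsilon)\frac{y_n}{n}.
\]
Taking $\liminf$ gives $\liminf_n \frac1n\log a(Y_n)\ge(1-\epsilon)r$. Since $\epsilon$ is arbitrary, the bound $r$ follows.

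Two technicalities need care. First, $\log a(Y_n)$ must be well defined. Because $a$ is increasing and $a(Y_n)\ge(1-\epsilon)$-multiples of large values eventually, $a(Y_n)>0$ once $y_n\ge y_0$, so early indices where $a$ might vanish do not affect the $\liminf$. Second, the case $Y_n=\infty$ is handled by monotonicity, giving $a(\infty)\ge a(Y_m)$ for finite $Y_m$; it can also be excluded, since the hypotheses concern growth.

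The only real obstacle is making sure the $o(y)$ term is applied only for large arguments. Step 1 guarantees that $y_n$ eventually lies in that range, so nothing else is needed.
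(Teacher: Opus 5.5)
Your proof is correct and follows essentially the same route as the paper's. Both work pathwise on the almost-sure event, use $r>0$ to force $\log Y_n\to\infty$, apply $\log a(e^y)\ge(1-\varepsilon)y$ for large $y$, take the $\liminf$, and let $\varepsilon\downarrow0$. Your extra remarks about early indices where $a$ may vanish and about $Y_n=\infty$ are harmless; the phrase about ``$(1-\epsilon)$-multiples'' is garbled, but all you need there is $\log a(Y_n)\ge(1-\epsilon)\log Y_n>0$ eventually.
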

\begin{proof}
Because \(a\) is growth preserving,
\[
    \frac{\log a(e^y)}{y}\longrightarrow1
    \qquad (y\to\infty).
\]
Fix \(\varepsilon\in(0,1)\). Then there exists \(y_\varepsilon<\infty\)
such that
\[
    \log a(e^y)\ge(1-\varepsilon)y,
    \qquad y\ge y_\varepsilon.
\]

Work on the probability-one event on which
\[
    \liminf_{n\to\infty}\frac{1}{n}\log Y_n\ge r.
\]
For all sufficiently large \(n\),
\[
    \frac{1}{n}\log Y_n\ge r-\varepsilon>0,
\]
where \(\varepsilon<r\) may be assumed. Consequently,
\[
    \log Y_n\ge n(r-\varepsilon)\longrightarrow\infty,
\]
so \(Y_n\ge1\) and \(\log Y_n\ge y_\varepsilon\) eventually. Therefore,
for all sufficiently large \(n\),
\[
    \frac{1}{n}\log a(Y_n)
    \ge
    (1-\varepsilon)\frac{1}{n}\log Y_n.
\]
Taking lower limits gives
\[
    \liminf_{n\to\infty}\frac{1}{n}\log a(Y_n)
    \ge
    (1-\varepsilon)
    \liminf_{n\to\infty}\frac{1}{n}\log Y_n
    \ge
    (1-\varepsilon)r.
\]
Letting \(\varepsilon\downarrow0\) proves the result.
\end{proof}

\begin{lemma}[Stability under a vanishing contaminated suffix]
\label{lem:UI-cross-stability}
Let $T_n\to\infty$, $d_n\to\infty$, and $d_n/T_n\to0$.  Suppose the first
$T_n$ observations are i.i.d. $N(a,u)$ and the remaining $d_n$ observations
are i.i.d. from a fixed Gaussian law, independently.  Put $n=T_n+d_n$ and
use \eqref{eq:gaussian-regularized-predictors}.  Then, for every $\delta>0$,
\begin{equation}
\label{eq:UI-cross-outside}
  \inf_{|\theta-a|\ge\delta}
  \frac1n\log R_{1:n}^{\theta}
  \longrightarrow
  \frac12\log\left(1+\frac{\delta^2}{u}\right)
\end{equation}
in probability.  Moreover, for every compact $K\subset\mathbb R$,
\begin{equation}
\label{eq:UI-cross-compact}
  \sup_{\theta\in K}
  \left|
    \frac1n\log R_{1:n}^{\theta}-J_{a,u}(\theta)
  \right|
  \longrightarrow0
\end{equation}
in probability.
\end{lemma}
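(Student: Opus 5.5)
The plan is to write $\frac1n\log R_{1:n}^\theta$ explicitly and separate the $\theta$-dependent part from a $\theta$-free remainder. With $n=T_n+d_n$,
\[
  \frac1n\log R_{1:n}^{\theta}
  =\frac12\log\Bigl\{\widehat v_n+(\bar X_{1:n}-\theta)^2\Bigr\}+\frac12-\Xi_n,
\]
where $\widehat v_n=\frac1n\sum_{i\le n}(X_i-\bar X_{1:n})^2$ and
\[
  \Xi_n:=\frac1n\sum_{i=1}^n\Bigl\{\log\widetilde\sigma_{i-1}+\frac{(X_i-\widetilde\mu_{i-1})^2}{2\widetilde\sigma^2_{i-1}}\Bigr\}.
\]
Only the first logarithm depends on $\theta$, and it depends only through $(\bar X_{1:n}-\theta)^2$. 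So it suffices to prove three convergences in probability:
\[
  \bar X_{1:n}\to a,\qquad \widehat v_n\to u,\qquad \Xi_n\to\tfrac12\log u+\tfrac12.
\]
With these in hand, $\frac1n\log R^\theta_{1:n}-J_{a,u}(\theta)$ equals $\frac12\log\frac{\widehat v_n+(\bar X_{1:n}-\theta)^2}{u+(a-\theta)^2}$ plus an $o_{\bbP}(1)$ term that does not depend on $\theta$.

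For the first two limits I write $\sum_{i\le n}X_i$ and $\sum_{i\le n}X_i^2$ as a pre-change part over $T_n$ terms plus a suffix over $d_n$ terms. The law of large numbers handles the pre-change part. The suffix part is $O_{\bbP}(d_n)$, because its summands have finite second moments under the fixed post-change Gaussian. Dividing by $n$ and using $d_n/n\to0$ gives both limits.

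Given the first two limits, \eqref{eq:UI-cross-compact} follows from uniform continuity of $(m,v,\theta)\mapsto\frac12\log(v+(m-\theta)^2)$ on $[a-1,a+1]\times[u/2,2u]\times K$. For \eqref{eq:UI-cross-outside}, the map $\theta\mapsto\frac12\log(\widehat v_n+(\bar X_{1:n}-\theta)^2)$ is increasing in $|\theta-\bar X_{1:n}|$. Its infimum over $\{|\theta-a|\ge\delta\}$ is therefore attained at the endpoint $a\pm\delta$ nearer to $\bar X_{1:n}$, with value $\frac12\log\{\widehat v_n+(\delta-|\bar X_{1:n}-a|)^2\}$ once $|\bar X_{1:n}-a|<\delta$. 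This value converges to $\frac12\log(u+\delta^2)$. Combining it with the $\Xi_n$ limit gives $\frac12\log(1+\delta^2/u)$.

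The main obstacle is the predictive term $\Xi_n$. The predictors $\widetilde\mu_{i-1},\widetilde\sigma^2_{i-1}$ in \eqref{eq:gaussian-regularized-predictors} are built from the full prefix, so during the suffix they are contaminated by post-change data. I would split $\Xi_n=\frac{T_n}{n}\Xi^{\rm pre}_{T_n}+\frac1n\sum_{i=T_n+1}^n\xi_i$. The first piece is a pure i.i.d.\ $N(a,u)$ average, and the argument in the proof of \Cref{prop:UI-t-eprocess} (strong law plus chi-square moments, via \citet[Proposition~3.3]{wang2024anytime}) gives $\Xi^{\rm pre}_{T_n}\to\frac12\log u+\frac12$ almost surely. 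For the suffix I need $\frac1n\sum_{i>T_n}\xi_i=o_{\bbP}(1)$. On the event $\{T_n\text{ large},\ |\widetilde\mu_{T_n}-a|\le1,\ \widetilde\sigma^2_{T_n}\in[u/2,2u]\}$, whose probability tends to one, every later predictor is a running average that includes at least $T_n\gg d_n$ pre-change points. Hence, uniformly for $T_n<i\le n$,
\[
  \widetilde\sigma^2_{i-1}\ge\frac{T_n}{i-1+\nu_0}\cdot\frac u2\ge\frac u4,
\]
and $\widetilde\sigma^2_{i-1}$ and $|\widetilde\mu_{i-1}|$ are bounded above by constants plus $O(1)\cdot\frac1{T_n}\sum_{j>T_n}(1+X_j^2)=O_{\bbP}(1)$. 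Then $|\xi_i|\le C(1+X_i^2)$ with a random $C=O_{\bbP}(1)$, so $\frac1n\sum_{i>T_n}|\xi_i|=O_{\bbP}(d_n/n)\to0$, which finishes the proof. The uniform lower bound on $\widetilde\sigma^2$ is the delicate point, since it is what prevents the $\log\widetilde\sigma$ and inverse-variance terms from blowing up.
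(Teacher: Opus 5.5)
Your proposal is correct and follows essentially the same route as the paper. Both arguments use the same decomposition of $\frac1n\log R_{1:n}^\theta$ into $\frac12\log\{\widehat v_n+(\bar X_{1:n}-\theta)^2\}$ plus a $\theta$-free predictive score, split that score into prefix and suffix (prefix limit from \Cref{prop:UI-t-eprocess}, suffix negligible via bounded predictors and $d_n/n\to0$), and compute the exterior infimum as a distance to a closed set. The only difference is technical: the paper couples the rows to argue almost surely with a maximal strong law for uniform predictor stability, whereas you work in probability using your (valid) deterministic bound $\widetilde\sigma^2_{k}\ge \frac{T_n}{k+\nu_0}\widetilde\sigma^2_{T_n}$ for $k\ge T_n$.
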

\begin{proof}[Proof of \Cref{lem:UI-cross-stability}]
Write
\[
N_n:=T_n+d_n,
\]
so that \(N_n\) is denoted by \(n\) in the statement.  Let the
contaminating Gaussian law be \(N(b,v)\), where \(v>0\).

Since only convergence in probability is claimed, we may realize all
rows of the triangular array on a common probability space as follows.
Let
\[
Z_1,Z_2,\ldots \stackrel{\mathrm{iid}}{\sim} N(a,u),
\qquad
Y_1,Y_2,\ldots \stackrel{\mathrm{iid}}{\sim} N(b,v),
\]
with the two sequences independent, and, in the \(n\)-th row, set
\[
X_i=
\begin{cases}
Z_i, & 1\le i\le T_n,\\
Y_{i-T_n}, & T_n<i\le N_n.
\end{cases}
\]
Each row then has exactly the distribution specified in the lemma.
We prove the desired convergence almost surely under this coupling,
which implies the stated convergence in probability.

For \(k\ge1\), let
\[
\bar X_k:=\frac1k\sum_{i=1}^k X_i,
\qquad
\widehat v_k
:=
\frac1k\sum_{i=1}^k(X_i-\bar X_k)^2.
\]

\medskip
\noindent
\textit{Step 1: Uniform stability of the empirical moments and
predictors over the suffix.}

We first show that
\begin{equation}
\label{eq:uniform-cross-moments}
\sup_{0\le j\le d_n}
\left|\bar X_{T_n+j}-a\right|
\longrightarrow0,
\qquad
\sup_{0\le j\le d_n}
\left|\widehat v_{T_n+j}-u\right|
\longrightarrow0
\end{equation}
almost surely.

Indeed, for \(0\le j\le d_n\),
\[
\bar X_{T_n+j}-a
=
\frac{
\sum_{i=1}^{T_n}(Z_i-a)
+
\sum_{\ell=1}^{j}(Y_\ell-a)
}{
T_n+j
}.
\]
Consequently,
\begin{align*}
\sup_{0\le j\le d_n}
|\bar X_{T_n+j}-a|
&\le
\left|
\frac1{T_n}\sum_{i=1}^{T_n}(Z_i-a)
\right|
+
\frac1{T_n}
\max_{0\le j\le d_n}
\left|
\sum_{\ell=1}^{j}(Y_\ell-a)
\right|.
\end{align*}
The first term converges to zero almost surely by the strong law.  For
the second term, write
\[
Y_\ell-a=(Y_\ell-b)+(b-a).
\]
Thus
\begin{align*}
\frac1{T_n}
\max_{0\le j\le d_n}
\left|
\sum_{\ell=1}^{j}(Y_\ell-a)
\right|
&\le
\frac{d_n}{T_n}|b-a|
+
\frac{d_n}{T_n}
\frac1{d_n}
\max_{0\le j\le d_n}
\left|
\sum_{\ell=1}^{j}(Y_\ell-b)
\right|.
\end{align*}
Since
\[
\frac1m
\max_{0\le j\le m}
\left|
\sum_{\ell=1}^{j}(Y_\ell-b)
\right|
\longrightarrow0
\qquad\text{a.s.},
\]
and \(d_n/T_n\to0\), the second term also converges to zero almost
surely.  This proves the first assertion in
\eqref{eq:uniform-cross-moments}.

The same argument applied to the second moments gives
\[
\sup_{0\le j\le d_n}
\left|
\frac1{T_n+j}
\sum_{i=1}^{T_n+j}X_i^2-(u+a^2)
\right|
\longrightarrow0
\qquad\text{a.s.}
\]
Indeed, decompose
\[
Y_\ell^2-(u+a^2)
=
\bigl\{Y_\ell^2-\E[Y_\ell^2]\bigr\}
+
\bigl\{\E[Y_\ell^2]-(u+a^2)\bigr\},
\]
and use the same maximal strong-law argument, noting that
\(\E[Y_1^2]<\infty\).  Since
\[
\widehat v_k
=
\frac1k\sum_{i=1}^kX_i^2-\bar X_k^2,
\]
the second assertion in \eqref{eq:uniform-cross-moments} follows.

For \(k\ge1\), abbreviate the regularized predictors after observing
\(X_1,\ldots,X_k\) by
\[
\widetilde\mu_k:=\bar X_k,
\qquad
\widetilde\sigma_k^2
:=
\frac{
v_0+\sum_{i=1}^k(X_i-\bar X_k)^2
}{
k+\nu_0
}
=
\frac{v_0+k\widehat v_k}{k+\nu_0}.
\]
It follows from \eqref{eq:uniform-cross-moments} that
\begin{equation}
\label{eq:uniform-predictor-stability}
\sup_{T_n\le k\le N_n}
|\widetilde\mu_k-a|
\longrightarrow0,
\qquad
\sup_{T_n\le k\le N_n}
|\widetilde\sigma_k^2-u|
\longrightarrow0
\end{equation}
almost surely.  For the variance predictor, this follows from
\[
\widetilde\sigma_k^2-u
=
\frac{k}{k+\nu_0}(\widehat v_k-u)
+
\frac{v_0-u\nu_0}{k+\nu_0}.
\]

\medskip
\noindent
\textit{Step 2: The suffix contributes \(o(N_n)\) to the predictive
score.}

Define
\[
\ell_i
:=
\log\widetilde\sigma_{i-1}
+
\frac{(X_i-\widetilde\mu_{i-1})^2}
     {2\widetilde\sigma_{i-1}^2},
\]
with the prescribed initial predictors used when \(i=1\).  By
\eqref{eq:uniform-predictor-stability}, almost surely, for all
sufficiently large \(n\), there are deterministic constants
\(0<c<C<\infty\) such that
\[
|\widetilde\mu_{T_n+j-1}|\le C,
\qquad
c\le\widetilde\sigma_{T_n+j-1}^2\le C,
\qquad
1\le j\le d_n.
\]
Hence, for another finite constant \(C_1\),
\[
|\ell_{T_n+j}|
\le
C_1(1+Y_j^2),
\qquad 1\le j\le d_n.
\]
Therefore,
\begin{align}
\frac1{N_n}
\left|
\sum_{j=1}^{d_n}\ell_{T_n+j}
\right|
&\le
C_1\frac{d_n}{N_n}
\left(
1+\frac1{d_n}\sum_{j=1}^{d_n}Y_j^2
\right)
\longrightarrow0
\label{eq:suffix-score-negligible}
\end{align}
almost surely, because \(d_n/N_n\to0\) and
\[
\frac1{d_n}\sum_{j=1}^{d_n}Y_j^2
\longrightarrow
\E[Y_1^2]
\qquad\text{a.s.}
\]

\medskip
\noindent
\textit{Step 3: Identification of the \(\theta\)-independent term.}

For any sample size \(m\), write
\[
C_m
:=
\frac12-\frac1m\sum_{i=1}^m\ell_i.
\]
Then the definition of \(R_{1:m}^{\theta}\) gives the exact
decomposition
\begin{equation}
\label{eq:UI-cross-exact-decomp}
\frac1m\log R_{1:m}^{\theta}
=
C_m
+
\frac12
\log\left\{
\widehat v_m+(\bar X_m-\theta)^2
\right\}.
\end{equation}

Let \(C_{T_n}^{P}\) denote the corresponding quantity computed from
the uncontaminated observations \(Z_1,\ldots,Z_{T_n}\).  Proposition
\ref{prop:UI-t-eprocess}, applied under \(N(a,u)\) with \(\theta=a\),
gives
\[
\frac1{T_n}\log R_{1:T_n}^{a}
\longrightarrow
J_{a,u}(a)=0
\qquad\text{a.s.}
\]
On the other hand,
\[
\frac1{T_n}\log R_{1:T_n}^{a}
=
C_{T_n}^{P}
+
\frac12
\log\left\{
\widehat v_{T_n}
+
(\bar Z_{T_n}-a)^2
\right\}.
\]
Since
\[
\widehat v_{T_n}\longrightarrow u,
\qquad
\bar Z_{T_n}\longrightarrow a
\qquad\text{a.s.},
\]
we obtain
\begin{equation}
\label{eq:pure-prefix-C-limit}
C_{T_n}^{P}
\longrightarrow
-\frac12\log u
\qquad\text{a.s.}
\end{equation}

The prefix predictors in the contaminated row coincide exactly with
those computed from \(Z_1,\ldots,Z_{T_n}\).  Hence
\[
C_{N_n}
=
\frac{T_n}{N_n}C_{T_n}^{P}
+
\frac{d_n}{2N_n}
-
\frac1{N_n}
\sum_{j=1}^{d_n}\ell_{T_n+j}.
\]
Using \(T_n/N_n\to1\), \(d_n/N_n\to0\),
\eqref{eq:suffix-score-negligible}, and
\eqref{eq:pure-prefix-C-limit}, we conclude that
\begin{equation}
\label{eq:Cn-cross-limit}
C_{N_n}
\longrightarrow
-\frac12\log u
\qquad\text{a.s.}
\end{equation}

Also, by \eqref{eq:uniform-cross-moments} with \(j=d_n\),
\begin{equation}
\label{eq:full-cross-moment-limit}
\bar X_{N_n}\longrightarrow a,
\qquad
\widehat v_{N_n}\longrightarrow u
\qquad\text{a.s.}
\end{equation}

\medskip
\noindent
\textit{Step 4: Uniform convergence on compact sets.}

Let \(K\subset\mathbb R\) be compact.  From
\eqref{eq:full-cross-moment-limit},
\begin{align*}
&\sup_{\theta\in K}
\left|
\widehat v_{N_n}
+
(\bar X_{N_n}-\theta)^2
-
\left\{
u+(a-\theta)^2
\right\}
\right|\\
&\qquad\le
|\widehat v_{N_n}-u|
+
|\bar X_{N_n}-a|
\sup_{\theta\in K}
|\bar X_{N_n}+a-2\theta|
\longrightarrow0
\end{align*}
almost surely.  Moreover, since \(u>0\),
\[
\inf_{\theta\in K}
\left\{
\widehat v_{N_n}
+
(\bar X_{N_n}-\theta)^2
\right\}
\ge
\widehat v_{N_n}
\ge
\frac u2
\]
for all sufficiently large \(n\), almost surely.  The logarithm is
therefore uniformly Lipschitz on the relevant range.  Combining this
with \eqref{eq:UI-cross-exact-decomp} and
\eqref{eq:Cn-cross-limit} yields
\begin{align*}
\sup_{\theta\in K}
\left|
\frac1{N_n}\log R_{1:N_n}^{\theta}
-
\left[
-\frac12\log u
+
\frac12\log\{u+(a-\theta)^2\}
\right]
\right|
\longrightarrow0
\end{align*}
almost surely.  Since
\[
-\frac12\log u
+
\frac12\log\{u+(a-\theta)^2\}
=
\frac12\log\left(
1+\frac{(a-\theta)^2}{u}
\right)
=
J_{a,u}(\theta),
\]
this proves \eqref{eq:UI-cross-compact}.

\medskip
\noindent
\textit{Step 5: Uniform convergence over the exterior set.}

Because the logarithm is increasing and \(C_{N_n}\) does not depend
on \(\theta\),
\begin{align*}
\inf_{|\theta-a|\ge\delta}
\frac1{N_n}\log R_{1:N_n}^{\theta}
&=
C_{N_n}
+
\frac12\log\left\{
\widehat v_{N_n}
+
\inf_{|\theta-a|\ge\delta}
(\bar X_{N_n}-\theta)^2
\right\}.
\end{align*}
The distance from \(\bar X_{N_n}\) to the closed set
\(\{\theta:|\theta-a|\ge\delta\}\) is
\[
\left\{
\delta-|\bar X_{N_n}-a|
\right\}_+,
\]
so
\[
\inf_{|\theta-a|\ge\delta}
(\bar X_{N_n}-\theta)^2
=
\left\{
\delta-|\bar X_{N_n}-a|
\right\}_+^2
\longrightarrow\delta^2
\qquad\text{a.s.}
\]
Together with \eqref{eq:Cn-cross-limit} and
\eqref{eq:full-cross-moment-limit}, this gives
\[
\inf_{|\theta-a|\ge\delta}
\frac1{N_n}\log R_{1:N_n}^{\theta}
\longrightarrow
-\frac12\log u
+
\frac12\log(u+\delta^2)
=
\frac12\log\left(1+\frac{\delta^2}{u}\right)
\]
almost surely under the coupling, and hence in probability under the
original triangular-array laws.  This proves
\eqref{eq:UI-cross-outside}.
\end{proof}

\begin{lemma}[Concentration of empirical transition frequencies]
\label{lem:markov-transition-concentration}
For \(r=(p,q)\in\Theta_\kappa\), let \(K_r\) be its transition
matrix, let \(\varpi_r\) be its stationary distribution, and define
the stationary transition frequencies
\[
    \Gamma_r(x,y)
    :=
    \varpi_r(x)K_r(x,y),
    \qquad x,y\in\{0,1\}.
\]
For a block of \(n\) transitions, write
\[
    \widehat\Gamma_n(x,y)
    :=
    \frac{N_{xy}^{1:n}}{n}.
\]
There exist constants \(C_\kappa,c_\kappa>0\), depending only on
\(\kappa\), such that, uniformly over \(r\in\Theta_\kappa\), the
initial state, \(n\ge1\), and \(z\in(0,1)\),
\[
    \mathbb P_r
    \left(
        \|\widehat\Gamma_n-\Gamma_r\|_\infty>z
    \right)
    \le
    C_\kappa e^{-c_\kappa nz^2}.
\]
\end{lemma}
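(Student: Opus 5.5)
The plan is to reduce the concentration of each entry $\widehat\Gamma_n(x,y)$ to the Azuma--Hoeffding inequality for a bounded-increment martingale, via a Poisson-equation decomposition. The constants will depend on $r$ only through the spectral gap, which is uniformly controlled on $\Theta_\kappa$. A union bound over the four pairs $(x,y)\in\{0,1\}^2$ then costs only a factor $4$ in $C_\kappa$, so it suffices to fix one pair $(x,y)$. Set $f(x',x''):=\mathbbm 1\{x'=x,x''=y\}$, so that $n\widehat\Gamma_n(x,y)=\sum_{i=1}^n f(X_{i-1},X_i)$ and $\Gamma_r(x,y)=\E_{\varpi_r}f(X_0,X_1)$.

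\emph{Step 1 (martingale part).} Put $h_r(x'):=\E_r[f(X_0,X_1)\mid X_0=x']=\mathbbm 1\{x'=x\}K_r(x,y)$. Then $\xi_i:=f(X_{i-1},X_i)-h_r(X_{i-1})$ is a martingale difference sequence with $|\xi_i|\le1$. Azuma--Hoeffding gives
\[
\mathbb P_r\Bigl(\Bigl|\sum_{i=1}^n\xi_i\Bigr|>nz/2\Bigr)\le 2e^{-nz^2/8},
\]
uniformly over $r$ and the initial state.

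\emph{Step 2 (Poisson equation).} It remains to control $\sum_{i=1}^n\{h_r(X_{i-1})-\varpi_r(h_r)\}$. For the two-state chain, $K_r$ has eigenvalues $1$ and $\lambda_r=q-p$. On $\Theta_\kappa$ we have $|\lambda_r|\le1-2\kappa$, and this is the only place where the uniformity in $r$ enters. The centered function $\bar h=h_r-\varpi_r(h_r)$ therefore admits the bounded solution
\[
g_r=\sum_{k\ge0}K_r^k\bar h,\qquad \|g_r\|_\infty\le \frac{1}{2\kappa},
\]
of the Poisson equation $g_r-K_rg_r=\bar h$. Telescoping gives
\[
\sum_{i=1}^n\bar h(X_{i-1})=\sum_{i=1}^n\{g_r(X_i)-K_rg_r(X_{i-1})\}+g_r(X_0)-g_r(X_n).
\]
The first sum on the right is again a martingale, now with increments bounded by $2\|g_r\|_\infty\le1/\kappa$. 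The boundary term is at most $1/\kappa$ in absolute value, whatever the initial state. Azuma--Hoeffding then yields, whenever $nz/4\ge 1/\kappa$,
\[
\mathbb P_r\Bigl(\Bigl|\sum_{i=1}^n\bar h(X_{i-1})\Bigr|>nz/2\Bigr)\le 2e^{-\kappa^2nz^2/32}.
\]

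\emph{Step 3 (small $nz$ and assembly).} The main obstacle is getting a single exponential form uniformly, because of the boundary term when $nz$ is small; I expect this to be the step needing the most care. I plan to dispose of it by enlarging $C_\kappa$. If $nz<4/\kappa$, then, since $z<1$, we have $nz^2<4/\kappa$, so $e^{-c_\kappa nz^2}\ge e^{-4c_\kappa/\kappa}$. Hence any $C_\kappa\ge e^{4c_\kappa/\kappa}$ makes the bound trivially true in that regime.

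Combining Steps 1 and 2 by a triangle inequality, and then taking the union over the four entries, gives the claim with $c_\kappa=\kappa^2/32$ (this is at most $1/8$, since $\kappa<1/2$) and $C_\kappa=\max\{16,\,e^{4c_\kappa/\kappa}\}$. All constants depend only on $\kappa$, as required.
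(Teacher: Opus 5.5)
Your proposal is correct and follows essentially the same route as the paper: a Poisson-equation decomposition with $\|g_r\|_\infty\le 1/(2\kappa)$ coming from $|q-p|\le 1-2\kappa$, followed by Azuma--Hoeffding and a union bound over the four entries. The only differences are cosmetic. You apply Azuma separately to the two martingale pieces, whereas the paper uses a single combined increment $\xi_i$. You also explicitly absorb the boundary term $g_r(X_0)-g_r(X_n)$ for small $nz$ by enlarging $C_\kappa$, a detail the paper leaves implicit.
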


\begin{proof}
Fix \(x,y\in\{0,1\}\), and put
\[
    f(u,v):=\mathbbm 1\{u=x,v=y\},
    \qquad
    h(u):=\mathbb E_r[f(X_{i-1},X_i)\mid X_{i-1}=u].
\]
Then
\[
    \varpi_r h=\Gamma_r(x,y).
\]
The Dobrushin contraction coefficient of \(K_r\) is
\[
    \delta(K_r)=|p-q|\le1-2\kappa.
\]
Consequently, the Poisson series
\[
    g
    :=
    \sum_{m=0}^\infty
    \left(
        K_r^m h-\varpi_r h
    \right)
\]
converges and satisfies
\[
    \|g\|_\infty\le\frac{1}{2\kappa}.
\]
Moreover,
\[
    g-K_rg=h-\varpi_rh.
\]

Writing
\[
    \xi_i
    :=
    f(X_{i-1},X_i)-h(X_{i-1})
    +g(X_i)-K_rg(X_{i-1}),
\]
we have
\[
    \mathbb E_r[\xi_i\mid\mathcal F_{i-1}]=0
\]
and
\[
    |\xi_i|
    \le
    1+2\|g\|_\infty
    \le
    1+\frac1\kappa.
\]
The Poisson decomposition gives
\[
    \sum_{i=1}^n
    \{f(X_{i-1},X_i)-\Gamma_r(x,y)\}
    =
    \sum_{i=1}^n\xi_i+g(X_0)-g(X_n).
\]
Azuma--Hoeffding therefore yields constants \(C_\kappa,c_\kappa>0\)
such that
\[
    \mathbb P_r
    \left(
        \left|
            \widehat\Gamma_n(x,y)-\Gamma_r(x,y)
        \right|>z
    \right)
    \le
    C_\kappa e^{-c_\kappa nz^2}.
\]
A union bound over the four pairs \((x,y)\) proves the result.
\end{proof}

\end{document}